\theoremstyle{remark}\newtheorem{proposition}{Proposition}
\theoremstyle{remark}\newtheorem{theorem}{Theorem}
\theoremstyle{remark}
\theoremstyle{remark}\newtheorem{definition}{Definition}
\theoremstyle{remark}
\theoremstyle{remark}\newtheorem*{remark}{Remark}
\theoremstyle{remark}
\theoremstyle{remark}\newtheorem{appxlemma}{Lemma}
\numberwithin{appxlemma}{section} 
\theoremstyle{remark}
\theoremstyle{remark}\newtheorem{appxcorollary}{Corollary}
\newtheoremstyle{case}{}{}{}{}{}{:}{ }{}
\theoremstyle{case}\newtheorem{case}{{\bf Case}}
\newtheoremstyle{subcase}{}{}{}{}{}{:}{ }{}
\theoremstyle{subcase}
\DeclareMathOperator*{\argmax}{arg\,max}
\newcommand{\ta}{\tilde{a}}
\newcommand{\ts}{\tilde{s}}
\newcommand{\cA}{\mathcal{A}}
\newcommand{\cB}{\mathcal{B}}
\newcommand{\cG}{\mathcal{G}}
\newcommand{\cI}{\mathcal{I}}
\renewcommand{\Pr}{\mathsf{P}}
\renewcommand{\Pr}{\mathsf{P}}
\newcommand{\FIGDIR}{figs}
\newcommand{\mycomment}[1]{%
}%
\begin{document}

\title{Design of a Stochastic Traffic Regulator for End-to-End Network Delay Guarantees}

\author{Massieh Kordi Boroujeny and Brian L. Mark\\
        Dept. of Electrical and Computer Engineering \\
        George Mason University, Fairfax, VA}

\maketitle

\begin{abstract}
Providing end-to-end network delay guarantees in packet-switched networks such as the Internet
is highly desirable for mission-critical and delay-sensitive data transmission, yet it
remains a challenging open problem.   Due to the looseness
of the deterministic bounds, various frameworks for stochastic network calculus have been
proposed to provide tighter, probabilistic bounds on network delay, at least in theory.  However, little attention has been devoted to the  problem of regulating traffic according to stochastic burstiness bounds, which is necessary in order to guarantee the delay bounds
in practice.  We design and analyze a stochastic traffic regulator that can be used in conjunction
with results from stochastic network calculus to provide probabilistic guarantees on end-to-end
network delay.  Numerical results are provided to demonstrate the performance
of the proposed traffic regulator.\footnote{This work was supported in part by the U.S.\ National Science Foundation under Grant No.\ 1717033.  A preliminary version of this work
was presented at the IEEE Int.\ Conf.\ on Communications (ICC'2020)~\cite{KordiICC:2020}.}
\end{abstract}

\begin{IEEEkeywords}
Stochastic network calculus, traffic shaper, end-to-end delay, traffic burstiness bounds.
\end{IEEEkeywords}

\section{Introduction}

Currently, the Internet does not provide end-to-end delay guarantees for traffic flows.  Even if the path
taken by a given traffic flow is fixed, e.g., via mechanisms such as software-defined networking or
multi-protocol label switching, network congestion arising from other flows can result in
highly variable delays.  The variability and random nature of traffic flows in a packet-switched network
make it very challenging to provide any type of performance guarantees.  The standard approach to providing
network performance guarantees consists of two basic elements:
\begin{enumerate}
\item {\em Admission control:}  A new flow should only be admitted to the network if sufficient resources
are available for the new flow, as well as existing flows, to maintain their performance guarantees. 
\item {\em Traffic regulation:}  The traffic flow must be regulated to ensure that it does not use
more resource than what was negotiated by the admission control scheme.  
\end{enumerate}
Admission control relies on a means of characterizing the traffic.  On the other hand,
the random and bursty nature of traffic flows in packet-switched networks make them difficult to characterize.
Even if the flows can be modeled as random arrival processes, the problem of developing a resource
allocation scheme to guarantee end-to-end performance based on such models is practically intractable.

In his seminal work, Cruz~\cite{Cruz1991a,Cruz1991b} proposed the so-called $(\sigma, \rho)$ characterization
of traffic, which imposes a deterministic bound on the burstiness of a traffic flow.  By bounding traffic
flows according to $(\sigma, \rho)$ parameters, Cruz developed a network calculus which determined how
these parameters propagate through network elements and from which end-to-end delay bounds
could be derived.  An important feature of the $(\sigma, \rho)$ characterization is that it could
be enforced by a traffic regulator. In practice, however, the $(\sigma, \rho)$
characterization leads to delay bounds that are very loose, which would lead to low 
network resource utilization. Nevertheless, the $(\sigma, \rho)$ characterization was the basis for further
research into stochastic bounds on traffic burstiness and stochastic network calculus to provide tighter,
probabilistic end-to-end delay guarantees.  

The development of stochastic network calculus and associated performance bounds remains an active topic of
research~\cite{Fidler:2015}.  However, little attention has been devoted to the problem of traffic regulation to ensure
that the input traffic of a network conforms to a stochastic traffic bound.  In the
deterministic network calculus of Cruz, the $(\sigma, \rho)$ traffic regulator is tightly
coupled to the $(\sigma, \rho)$ traffic characterization.  In effect, the $(\sigma, \rho)$ traffic characterization
is defined operationally in terms of a $(\sigma, \rho)$ traffic regulator.  To our knowledge, a traffic
regulator to enforce a stochastic traffic bound has not been addressed
previously, despite the fact that such a regulator
is necessary to provide traffic guarantees in real networks.

In this paper, we  develop a traffic regulator to enforce the so-called
{\em generalized Stochastically Bounded Burstiness} (gSBB) traffic bound in~\cite{Yin2002,JIANG20092011}.
We refer to our proposed regulator as a stochastic $(\sigma^*, \rho)$ regulator, since the burst size parameter can take on one of finite set of values. We describe the design and basic properties of the stochastic $(\sigma^*, \rho)$ regulator and develop practical implementations.  Our analytical
results establish that it enforces the gSBB bound. We demonstrate the operation of the
$(\sigma^*, \rho)$ regulator via numerical examples. 

The remainder of the paper is organized as follows.  
In Section~\ref{sec:background}, we review basic concepts in deterministic and
stochastic network calculus.
In Section~\ref{sec:deterministic_regulator}, we review key properties of the
deterministic $(\sigma, \rho)$ regulator and develop some new results for its analysis, which are applied in Section~\ref{sec:Stochastic_Regulator} to the design and implementation of the proposed stochastic $(\sigma^*, \rho)$
regulator.  Numerical results demonstrating the performance
of the $(\sigma^*, \rho)$ regulator are presented in Section~\ref{sec:numerical}.  
Concluding remarks are given in Section~\ref{sec:conclusion}.

\section{Background on Network Calculus}
\label{sec:background}

\subsection{Deterministic $(\sigma, \rho)$ Network Calculus}

\begin{figure}
\centering
\begin{adjustbox}{width=0.62\columnwidth}
\begin{tikzpicture}
\draw [thick,->](-1,0) -- (0,0);
\draw [thick,->](2,0) -- (5,0);
\node [above] at (3.5,0) {\begin{small}$R_{\rm o}\sim(\sigma+\delta,\rho)$
\end{small}};
\node [below] at (3.5,0) {\begin{small}$\delta=(1 - \rho/C)L_{\max}$
\end{small}};
\node [above] at (-0.5,0) {\begin{small}$R_{\rm i}$\end{small}};
\draw (0,-.5) -- (0,.5) -- (2,.5) -- (2,-.5) -- (0,-.5);
\node at (1,0) {$(\sigma,\rho)$ };
\end{tikzpicture}
\end{adjustbox}
\caption{$(\sigma,\rho)$ regulator with input/output links of capacity $C$.}
\label{fig:Deterministic Regulator}
\end{figure}
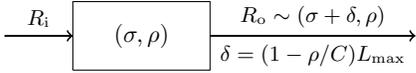

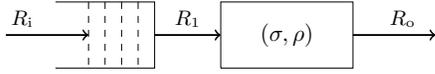
\begin{figure}[t]
    \centering
    \begin{adjustbox}{width=0.65\columnwidth}
   \begin{tikzpicture}
    \draw [thick,->](-2.75,0) -- (-1.5,0);
   \draw [thick,->](-0.5,0) -- (0.5,0);
    \draw [thick,->](2.5,0) -- (3.75,0);
    \draw (-2,.5) -- (-.5,.5) -- (-.5,-.5) -- (-2,-.5);
    \draw [dashed] (-1.5,.5)--(-1.5,-.5);
    \draw [dashed] (-1.25,.5)--(-1.25,-.5);
    \draw [dashed] (-1,.5)--(-1,-.5); 
    \draw [dashed] (-0.75,.5)--(-0.75,-.5);
    \draw (0.5,-.5) -- (0.5,0.5) -- (2.5,0.5) -- (2.5,-.5) -- (0.5,-.5);
    \node at (1.5,0){$(\sigma,\rho)$};
    \node [above] at (-2.5,0) {\begin{small}$R_{\rm i}$\end{small}};
    \node [above] at (0,0) {\begin{small}$R_1$\end{small}};
    \node [above] at (3.25,0) {\begin{small}$R_{\rm o}$\end{small}};
    \end{tikzpicture}
    \end{adjustbox}
    \caption{$(\sigma,\rho)$ traffic shaper with front-end buffer.}
    \label{fig:deterministic_shaper}
\end{figure}

Let $R = \{ R(t) : t \geq 0 \}$ denote a traffic process such that the amount of traffic
arriving in an interval $[s, t]$ is given by
\begin{align}
A(s, t; R) = \int_s^t R(\tau)\;\mathrm{d}\tau .
\end{align}
In this paper, we assume that traffic processes are
in continuous-time, although our development could also be framed in discrete-time.  
A traffic stream $R$ is said to be $(\sigma, \rho)$-bounded, 
denoted as $R \sim (\sigma, \rho)$, if  
\begin{equation}
  \label{eq:deterministic (sigma,rho)}
A(s, t; R)  \leq \rho(t-s)+\sigma, ~~\forall s \in [0, t], 
\end{equation}
where $\sigma, \rho \geq 0$.

In conjunction with traffic characterization~\eqref{eq:deterministic (sigma,rho)}, Cruz~\cite{Cruz1991a} introduced a traffic regulator to enforce conformance to the $(\sigma, \rho)$ parameters.
For an idealized fluid model of input traffic, a $(\sigma,\rho)$ traffic regulator ensures that the {\em output} traffic stream 
$R_{\rm o} \sim (\sigma,\rho)$ and traffic departs the regulator
in the same order as it arrives to the regulator, i.e., the service discipline is
first-come first-served (FCFS).  When the traffic consists of discrete
packets of maximum length $L_{\max}$ and the input/output links to the regulator have finite capacity $C$, the output
traffic stream satisfies $R_{\rm o} \sim (\sigma + \delta, \rho)$, where 
(see Fig.~\ref{fig:Deterministic Regulator})
\begin{align}
    \delta = (1 - \rho/C)L_{\max} .
\label{eq:delta}
\end{align} 

Traffic regulation can be accomplished by the dropping, tagging (as lower priority), or delaying of packets. In the first two cases, the traffic regulator is sometimes referred to as a {\em traffic policer} whereas in the third case it is referred to as a {\em traffic shaper}.  The traffic regulators discussed in this paper will be of the traffic shaper variety.  A traffic shaper includes a front-end buffer, which stores packets that are delayed
in the process of forcing the output traffic to conform to $(\sigma, \rho)$ (see Fig.~\ref{fig:deterministic_shaper}).

The $(\sigma, \rho)$ network calculus originally developed by Cruz has been extended
with more general notions of arrival envelopes, service curves, and 
min-plus algebra~\cite{Chang2000}.  However, practical implementations
of the network calculus have been based on the $(\sigma, \rho)$ characterization and its variants.
Deterministic network calculus has recently been applied to dynamic bandwidth allocation
for Software-Defined Networks (SDNs)~\cite{Lee:2018}.


\subsection{Stochastic Network Calculus}

The $(\sigma, \rho)$ bound in~\eqref{eq:deterministic (sigma,rho)} tends to be
rather loose for bursty traffic.  Similarly, end-to-end delay bounds derived via the deterministic network calculus will be loose in practical networking scenarios, since they are based on a worst-case analysis.  Moreover, the deterministic network
calculus cannot exploit the phenomenon of statistical multiplexing in networks.
These considerations motivated the development of {\em stochastic} traffic burstiness bounds,
and an associated stochastic network calculus to allow the derivation
of stochastic end-to-end delay bounds.
An early proposal for a stochastic traffic burstiness bound was the Exponentially Bounded Burstiness (EBB)
of Yaron and Sidi~\cite{Yaron1994}, which involves an exponential bounding function.  A 
related traffic bound based on moment generating functions was 
proposed by Chang~\cite{Chang1994}. 

In this paper, we focus on the 
{\em generalized} Stochastically Bounded Burstiness (gSBB)
proposed in~\cite{JIANG20092011}.  A  stochastic traffic  process $R$ is gSBB
with upper  rate $\rho$ and  bounding function
$f \in \mathcal{BF}$ if
\begin{align}
 \Pr \{ W_\rho(t; R) \geq \sigma \} \leq f(\sigma),~~~~ \forall t \geq 0,~ \forall \sigma \geq 0,  \label{eq:gSBB_defn}
\end{align}
where $\mathcal{BF}$ denotes the family of positive non-increasing real-valued functions and
$W_\rho(t; R)$ is the virtual workload at time~$t$ of a infinite-buffer 
FCFS (First Come First Served) queue with constant service rate $\rho$ with
input traffic $R$.  The virtual workload is given by
\begin{align}
W_\rho(t; R) = \max_{0\leq s\leq t} [  A(s, t; R) - \rho(t-s) ] .
\label{eq:W_rho}
\end{align}
The gSBB concept is based on Stochastically Bounded Burstiness (SBB)~\cite{Starobinski2000a}, 
which is a direct generalization of EBB.  The gSBB concept has two main advantages over 
SBB:  1)  The class of bounding functions $\mathcal{BF}$ for gSBB is less restrictive; 2) The gSBB is defined in terms of the virtual workload of a queue with constant 
service rate.  The second item is central to the development of our proposed 
stochastic traffic regulator.  

Analogous to the deterministic network calculus, a stochastic network calculus can be developed
based on a given a stochastic traffic burstiness bound~\cite{Yaron1994,Starobinski2000a,Chang2000},  By applying results from the stochastic network calculus based on gSBB (see \cite{JIANG20092011}), the admissibility of a given set of traffic flows with respect to a certain probabilistic end-to-end delay constraint can be determined. 
More general stochastic traffic bounds have since been developed in conjunction with notions of 
statistical arrival envelopes, service curves, and min-plus algebra in the context of stochastic network calculus~\cite{Fidler:2015}.  However, end-to-end delay guarantees via
stochastic network calculus can only be provided if the user traffic streams that offered as input to the network conform to their negotiated traffic burstiness bounds. The stochastic traffic regulator developed in this paper can be applied at the network edge to ensure that a user's traffic stream does not violate the traffic parameter provided
to the admission control unit.  Additional performance benefits can be obtained by applying stochastic traffic regulation in internal network elements.  

\section{Analysis of Deterministic $(\sigma, \rho)$ Regulator}
\label{sec:deterministic_regulator}

In Section~\ref{subsec:IO_analysis} we review results from~\cite{Cruz1991a} and 
then, in Section~\ref{subsec:workload_analysis},
we develop some new results for the $(\sigma, \rho)$ regulator, which we 
shall use in the design and analysis
of the stochastic $(\sigma^*, \rho)$ regulator in Section~\ref{sec:Stochastic_Regulator}.

\subsection{Input/Output Workload Analysis}
\label{subsec:IO_analysis}

Suppose a traffic stream $R$ is offered to an infinite-buffer FCFS system with constant service rate $\rho$.
Clearly, the virtual workload $W_\rho(t; R)$ is a decreasing function of $\rho$.
It can easily be shown that $R \sim (\sigma,\rho)$  if and only if  
\begin{equation}
 W_\rho(t; R)  \leq \sigma,~~~\forall t \geq 0.
 \label{eq:W_rho repeat}
\end{equation}
Equation~\eqref{eq:W_rho repeat} provides a useful alternative characterization
of a $(\sigma, \rho)$-bounded traffic stream.

Now suppose that the input and output traffic links to and
from a $(\sigma,\rho)$ regulator have a finite capacity $C > \rho$.
Consider an input traffic stream $R_{\rm i}$ to the regulator.
Let $s_j$ denote the arrival time of the $j$th packet, 
$t_j$ its departure time, and $L_j$ its length in bits. 
The $j$th packet begins arriving at time $s_j$ and is received completely
at the regulator at time $a_j := s_j + L_j/C$. 
We assume that a packet does not arrive when the previous one is being received.
i.e., $a_j < s_{j+1}$.  

The operation of the regulator
can be described in terms of the workload $W_\rho(s_j;R_{\rm i})$.
At time $s_j$, if $W_\rho(s_j;R_{\rm i})>\sigma$, the regulator delays the packet such that
at its departure time $t_j$, the condition $W_\rho(t_j;R_{\rm o})\leq\sigma$ holds. 
Hence, the departure time of the $j$th packet is derived as~\cite{Cruz1991a}
\begin{align}
\label{eq:delay_regulator}
t_j = [ W_\rho(s_j;R_{\rm i})-\sigma]^+ / \rho+s_j ,
\end{align}
where $[x]^+:=\max\{x,0\}$.   The packet completely departs the regulator
at time 
\begin{align}
    b_j = t_j + L_j/C. \label{eq:bj}
\end{align}
At times other than departures, the workload may not
necessarily be bounded by $\sigma$, but always satisfies~\cite{Cruz1991a} 
\begin{equation}
\label{eq: sigma rho regulator workload bound}
    W_\rho(t;R_{\rm o})\leq\sigma+(1-\rho/C)L_{\max}, ~~~~ \forall t \geq 0,
\end{equation}
Thus, $R_{\rm o} \sim (\sigma+\delta, \rho)$,
where $\delta$, given by \eqref{eq:delta},
can be viewed as  the maximum
error margin in regulating packetized traffic
when the input/output links have capacity $C$ (see Fig.~\ref{fig:Deterministic Regulator}). 

As shown Fig.~\ref{fig:deterministic regulator delay and workload}, when a packet is being
received by the regulator, e.g., during $[s_j, a_j]$, the workload $W_\rho(t; R_{\rm i})$
increases linearly with slope $C-\rho$.  Conversely, during the time between
the complete arrival of a packet and the initial arrival of the next packet to the system,
e.g., during $[a_j, s_{j+1}]$, the workload $W_\rho(t; R_{\rm i})$ decreases linearly with slope $-\rho$.
Similarly, when a packet departs the regulator, e.g., during $[t_j, b_j]$,
the workload $W_\rho(t; R_{\rm o})$
increases linearly with slope $C-\rho$.  When packets are not departing the system,
e.g., during $[b_j, t_{j+1}]$, $W_\rho(t; R_{\rm o})$ decreases linearly with slope $-\rho$.
Assume that the buffer of the regulator is empty at $t=s_1$.
Let 
\begin{align}
\delta_j := (1-\rho/C)L_j
\label{eq:delta_j}
\end{align}
denote the error margin due to regulating the $j$th packet.
We present the governing equations for a $(\sigma,\rho)$  regulator 
in terms of the workloads $W_\rho(t; R_{\rm i})$ and $W_\rho(t; R_{\rm o})$ as follows: 
\begin{align}
W_\rho(t;R_{\rm i}) &= [ W_\rho(a_{j-1};R_{\rm i})
-\rho(t-a_{j-1}) ] ^+, \nonumber\\
& \forall t\in[a_{j-1},s_j],
 \label{eq:Regulator development eq 1}
 \\
 W_\rho(t;R_{\rm i}) & = W_\rho(s_j;R_{\rm i})+(t-s_j)(C-\rho),
 \forall t\in[s_j,a_j],
 \label{eq:Regulator development eq 2}
 \\
 W_\rho(t_j;R_{\rm o}) &= \left\{\begin{array}{ll}
      \sigma , & \text{~if~~} W_\rho(s_j;R_{\rm i})>\sigma ,  \\
      W_\rho(s_j;R_{\rm i}),  & \text{~if~~} W_\rho(s_j;R_{\rm i})\leq \sigma , 
 \end{array}\right. 
 \label{eq:Regulator development eq 3}
 \\
 W_\rho(t;R_{\rm o}) &= W_\rho(t_j;R_{\rm o})+(t-t_j)(C-\rho),
 \forall t\in[t_j,b_j],
 \label{eq:Regulator development eq 4}
 \\
 W_\rho(t;R_{\rm o}) &=W_\rho(b_{j-1};R_{\rm i})-\rho(t-b_{j-1}),\nonumber\\
 &\text{if~~} W_\rho(s_j;R_{\rm i})>\sigma,~~ \forall t\in[b_{j-1},t_j],
 \label{eq:Regulator development eq 5}
\end{align}
for $j=1,2,\ldots$. 

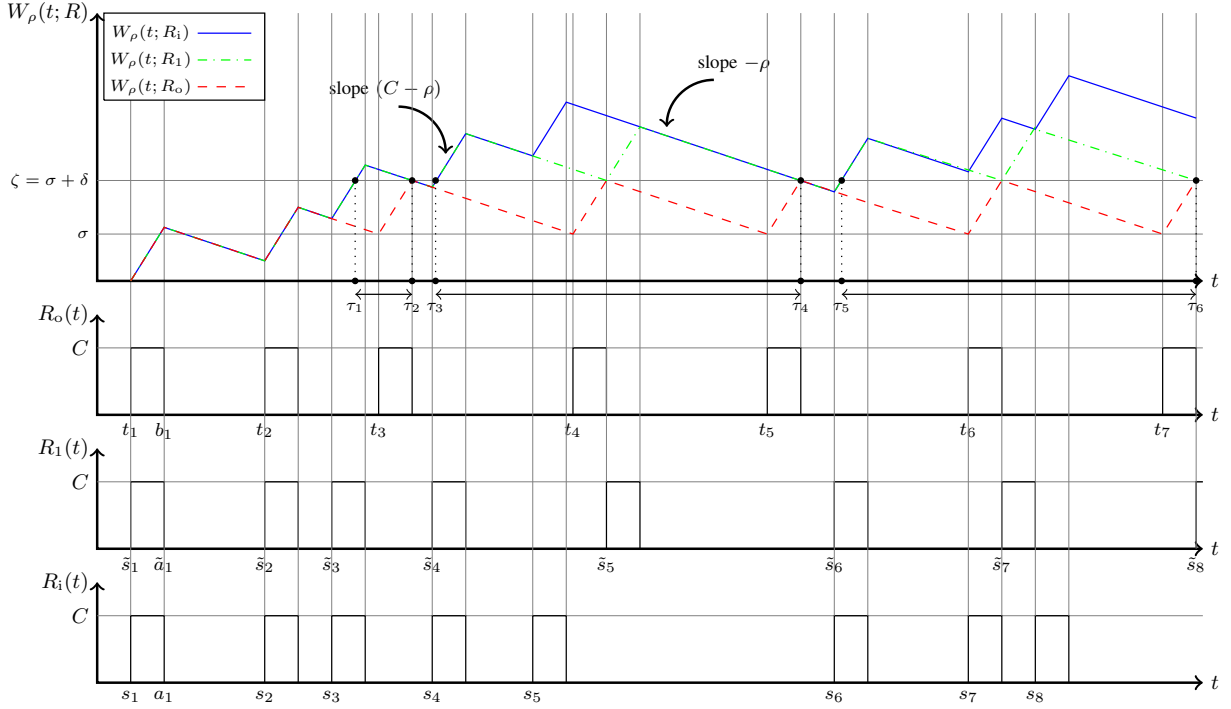
\begin{figure*}
   \centering
    \begin{adjustbox}{width=0.9\textwidth, center=\textwidth}
    \begin{tikzpicture}[scale=0.75, every node/.style={scale=0.75}]
    \draw [thick,<->] (-3.5,-0.5)--(-3.5,-2)--(13,-2);
    \node [right] at (13,-2){\begin{small}$t$\end{small}};
    \draw [help lines] (-3.5,-1)--(13,-1);
    \node [below,left] at (-3.5,-1) {\begin{small}$C$\end{small}};
    \node [above,left] at (-3.5,-0.5) {\begin{small}$R_{\rm i}(t)$\end{small}};
    \draw (-3,-2)--(-3,-1)--(-2.5,-1)--(-2.5,-2);
    \node [below] at (-3,-2) {\begin{footnotesize}$s_1$\end{footnotesize}};
    \node [below] at (-2.5,-2) {\begin{footnotesize}$a_1$\end{footnotesize}};
    \draw (-1,-2)--(-1,-1)--(-0.5,-1)--(-0.5,-2);
    \node [below] at (-1,-2) {\begin{footnotesize}$s_2$\end{footnotesize}};
    \draw (0,-2)--(0,-1)--(0.5,-1)--(0.5,-2);
    \node [below] at (0,-2) {\begin{footnotesize}$s_3$\end{footnotesize}};
    \draw (1.5,-2)--(1.5,-1)--(2,-1)--(2,-2);
    \node [below] at (1.5,-2) {\begin{footnotesize}$s_4$\end{footnotesize}};
    \draw (3,-2)--(3,-1)--(3.5,-1)--(3.5,-2);
    \node [below] at (3,-2) {\begin{footnotesize}$s_5$\end{footnotesize}};
    \draw (7.5,-2)--(7.5,-1)--(8,-1)--(8,-2);
    \node [below] at (7.5,-2) {\begin{footnotesize}$s_6$\end{footnotesize}};
    \draw (9.5,-2)--(9.5,-1)--(10,-1)--(10,-2);
    \node [below] at (9.5,-2) {\begin{footnotesize}$s_7$\end{footnotesize}};
    \draw (10.5,-2)--(10.5,-1)--(11,-1)--(11,-2);
    \node [below] at (10.5,-2) {\begin{footnotesize}$s_8$\end{footnotesize}};
    \draw [thick,<->] (-3.5,1.5)--(-3.5,0)--(13,0);
    \node [right] at (13,0){\begin{small}$t$\end{small}};
    \draw [help lines] (-3.5,1)--(13,1);
    \node [below,left] at (-3.5,1) {\begin{small}$C$\end{small}};
    \node [above,left] at (-3.5,1.5) {\begin{small}$R_1(t)$\end{small}};
    \draw (-3,0)--(-3,1)--(-2.5,1)--(-2.5,0);
    \node [below] at (-3,0) {\begin{footnotesize}$\Tilde{s}_1$\end{footnotesize}};
    \node [below] at (-2.5,0) {\begin{footnotesize}$\tilde{a}_1$\end{footnotesize}};
    \draw (-1,0)--(-1,1)--(-0.5,1)--(-0.5,0);
    \node [below] at (-1,0) {\begin{footnotesize}$\Tilde{s}_2$\end{footnotesize}};
    \draw (0,0)--(0,1)--(0.5,1)--(0.5,0);
    \node [below] at (0,0) {\begin{footnotesize}$\Tilde{s}_3$\end{footnotesize}};
    \draw (1.5,0)--(1.5,1)--(2,1)--(2,0);
    \node [below] at (1.5,0) {\begin{footnotesize}$\Tilde{s}_4$\end{footnotesize}};
    \draw (4.1,0)--(4.1,1)--(4.6,1)--(4.6,0);
    \node [below] at (4.1,0) {\begin{footnotesize}$\Tilde{s}_5$\end{footnotesize}};
    \draw (7.5,0)--(7.5,1)--(8,1)--(8,0);
    \node [below] at (7.5,0) {\begin{footnotesize}$\Tilde{s}_6$\end{footnotesize}};
    \draw (10,0)--(10,1)--(10.5,1)--(10.5,0);
    \node [below] at (10,0) {\begin{footnotesize}$\Tilde{s}_7$\end{footnotesize}};
    \draw (12.9,0)--(12.9,1)--(13,1);
    \node [below] at (12.9,0) {\begin{footnotesize}$\Tilde{s}_8$\end{footnotesize}};
    \draw [thick,<->] (-3.5,3.5)--(-3.5,2)--(13,2);
    \node [right] at (13,2){\begin{small}$t$\end{small}};
    \draw [help lines] (-3.5,3)--(13,3);
    \node [below,left] at (-3.5,3) {\begin{small}$C$\end{small}};
    \node [above,left] at (-3.5,3.5) {\begin{small}$R_{\rm o}(t)$\end{small}};
    \draw (-3,2)--(-3,3)--(-2.5,3)--(-2.5,2);
    \node [left,below] at (-3,2) {\begin{footnotesize}$t_1$\end{footnotesize}};
    \node [below] at (-2.5,2) {\begin{footnotesize}$b_1$\end{footnotesize}};
    \draw (-1,2)--(-1,3)--(-0.5,3)--(-0.5,2);
    \node [left,below] at (-1,2) {\begin{footnotesize}$t_2$\end{footnotesize}};
    \draw (0.7,2)--(0.7,3)--(1.2,3)--(1.2,2);
    \node [below] at (0.7,2) {\begin{footnotesize}$t_3$\end{footnotesize}};
    \draw (3.6,2)--(3.6,3)--(4.1,3)--(4.1,2);
    \node [left,below] at (3.6,2) {\begin{footnotesize}$t_4$\end{footnotesize}};
    \draw (6.5,2)--(6.5,3)--(7,3)--(7,2);
    \node [left,below] at (6.5,2) {\begin{footnotesize}$t_5$\end{footnotesize}};
    \draw (9.5,2)--(9.5,3)--(10,3)--(10,2);
    \node [left,below] at (9.5,2) {\begin{footnotesize}$t_6$\end{footnotesize}};
    \draw (12.4,2)--(12.4,3)--(12.9,3)--(12.9,2);
    \node [left,below] at (12.4,2) {\begin{footnotesize}$t_7$\end{footnotesize}};
   \draw [thick,<-] (1.7,5.9) arc [start angle=0, end angle=90, radius=20pt];
    \node [above] at (0.8,6.6) {\begin{footnotesize}slope $(C-\rho)$\end{footnotesize}};
    \draw [thick,<-] (5,6.3) arc [start angle=180, end angle=90, radius=20pt];
    \node [above] at (6,7) {\begin{footnotesize}slope $-\rho$\end{footnotesize}};
    \draw [help lines] (-3,-1)--(-3,6.7);
    \draw [help lines] (-2.5,-1)--(-2.5,6.7);
    \draw [help lines] (-1,-1)--(-1,8);
    \draw [help lines] (-0.5,-1)--(-0.5,8);
    \draw [help lines] (0,-1)--(0,8);
    \draw [help lines] (0.5,-1)--(0.5,8);
    \draw [help lines] (0.7,3)--(0.7,8);
    \draw [help lines] (1.2,3)--(1.2,8);
    \draw [help lines] (1.5,-1)--(1.5,8);
    \draw [help lines] (2,-1)--(2,8);
    \draw [help lines] (3,-1)--(3,8);
    \draw [help lines] (3.5,-1)--(3.5,8);
    \draw [help lines] (3.6,3)--(3.6,8);
    \draw [help lines] (4.1,1)--(4.1,8);
    \draw [help lines] (4.6,1)--(4.6,8);
    \draw [help lines] (6.5,3)--(6.5,8);
    \draw [help lines] (7,3)--(7,8);
    \draw [help lines] (7.5,-1)--(7.5,8);
    \draw [help lines] (8,-1)--(8,8);
    \draw [help lines] (9.5,-1)--(9.5,8);
    \draw [help lines] (10,-1)--(10,8);
    \draw [help lines] (10.5,-1)--(10.5,8);
    \draw [help lines] (11,-1)--(11,8);
    \draw [help lines] (12.4,3)--(12.4,8);
    \draw [help lines] (12.9,0)--(12.9,8);
    \draw [thick,<->] (-3.5,8)--(-3.5,4)--(13,4);
    \node [right] at (13,4){\begin{small}$t$\end{small}};
    \node [left] at (-3.5,8) {\begin{small}$W_\rho(t;R)$\end{small}};
    \draw [help lines] (-3.5,4.7)--(13,4.7);
    \node [left] at (-3.5,4.7){\begin{scriptsize}$\sigma$\end{scriptsize}};
    \draw [help lines] (-3.5,5.5)--(13,5.5);
    \node [left] at (-3.5,5.5){\begin{scriptsize}$\zeta=\sigma+\delta$\end{scriptsize}};
    \draw [blue] (-3,4)--(-2.5,4.8)--(-1,4.3)--(-0.5,5.1)--(0,4.93)--(0.5,5.73)--(1.2,5.5)--(1.5,5.4)--(2,6.2)--(3,5.867)--(3.5,6.67)--(7.5,5.33)--(8,6.13)--(9.5,5.63)--(10,6.43)--(10.5,6.264)--(11,7.064)--(12.9,6.432);
    \draw [green,dashdotted] (-3,4)--(-2.5,4.8)--(-1,4.3)--(-0.5,5.1)--(0,4.93)--(0.5,5.73)--(1.2,5.5)--(1.5,5.4)--(2,6.2)--(4.1,5.5)--(4.6,6.3)--(7.5,5.33)--(8,6.13)--(10,5.5)--(10.5,6.3)--(10.5,6.264)--(12.9,5.5);
    \draw [red,dashed](-3,4)--(-2.5,4.8)--(-1,4.3)--(-0.5,5.1)--(0,4.93)--(0.7,4.7)--(1.2,5.5)--(3.6,4.7)--(4.1,5.5)--(6.5,4.7)--(7,5.5)--(8,5.17)--(9.5,4.7)--(10.0,5.5)--(12.4,4.7)--(12.9,5.5);
    \draw (-3.4,6.7)--(-1,6.7)--(-1,8)--(-3.4,8)--(-3.4,6.7);
    \draw [blue](-2,7.7)--(-1.2,7.7);
    \draw [green,dashdotted](-2,7.3)--(-1.2,7.3);
    \draw [red,dashed] (-2,6.9)--(-1.2,6.9);
    \node at (-2.7,7.7){\begin{scriptsize}{$W_\rho(t;R_{\rm i})$}\end{scriptsize}};
   \node at (-2.7,7.3){\begin{scriptsize}{$W_\rho(t;R_1)$}\end{scriptsize}};
   \node at (-2.7,6.9){\begin{scriptsize}{$W_\rho(t;R_{\rm o})$}\end{scriptsize}};
    \draw [fill](0.35,5.5) circle [radius=0.04];
    \draw [fill](1.2,5.5) circle [radius=0.04];
    \draw [fill](1.55,5.5) circle [radius=0.04];
    \draw [fill](7,5.5) circle [radius=0.04];
    \draw [fill](7.61,5.5) circle [radius=0.04];
    \draw [fill](12.9,5.5) circle [radius=0.04];
    \draw [dotted] (0.35,5.5)--(0.35,4);
    \draw [dotted] (1.2,5.5)--(1.2,4);
    \draw [dotted] (1.55,5.5)--(1.55,4);
    \draw [dotted] (7,5.5)--(7,4);
    \draw [dotted] (7.61,5.5)--(7.61,4);
    \draw [dotted] (12.9,5.5)--(12.9,4);
    \draw [<->] (0.35,3.8)--(1.2,3.8);
   \draw [<->] (1.55,3.8)--(7,3.8);
    \draw [<->] (7.61,3.8)--(12.9,3.8);
    \node [below] at (0.35,3.8){\begin{scriptsize}{$\tau_1$}\end{scriptsize}};
    \node [below] at (1.2,3.8){\begin{scriptsize}{$\tau_2$}\end{scriptsize}};
    \node [below] at (1.55,3.8){\begin{scriptsize}{$\tau_3$}\end{scriptsize}};
    \node [below] at (7,3.8){\begin{scriptsize}{$\tau_4$}\end{scriptsize}};
    \node [below] at (7.61,3.8){\begin{scriptsize}{$\tau_5$}\end{scriptsize}};
    \node [below] at (12.9,3.8){\begin{scriptsize}{$\tau_6$}\end{scriptsize}};
    \draw [fill](0.35,4) circle [radius=0.04];
    \draw [fill](1.2,4) circle [radius=0.04];
    \draw [fill](1.55,4) circle [radius=0.04];
    \draw [fill](7,4) circle [radius=0.04];
    \draw [fill](7.61,4) circle [radius=0.04];
    \draw [fill](12.9,4) circle [radius=0.04];
    \end{tikzpicture}
    \end{adjustbox}
    \caption{Example of the operation of a $(\sigma,\rho)$ traffic regulator.}
    \label{fig:deterministic regulator delay and workload}
\end{figure*}
Equations~\eqref{eq:Regulator development eq 1}--\eqref{eq:Regulator development eq 5} provide a complete characterization of the virtual workloads of the traffic streams $R_{\rm i}$ and $R_{\rm o}$ and
can be used to construct the corresponding workload
curves in Fig.~\ref{fig:deterministic regulator delay and workload}.

\subsection{Internal Traffic Workload Analysis}
\label{subsec:workload_analysis}

To analyze the stochastic $(\sigma^*, \rho)$ regulator developed in Section~\ref{sec:Stochastic_Regulator}, it will be convenient
to introduce the internal traffic stream $R_1$ shown in Fig.~\ref{fig:deterministic_shaper}
for the $(\sigma, \rho)$ regulator and in Fig.~\ref{fig:stochastic traffic regulator schematic}
for the $(\sigma^*, \rho)$ regulator.  We shall develop some new results
for the $(\sigma, \rho)$ regulator involving the internal stream $R_1$, which
will be useful in the design of the $(\sigma^*, \rho)$ regulator.
Fig.~\ref{fig:deterministic_shaper} can be viewed
as a more detailed depiction of the $(\sigma, \rho)$ regulator shown as a single
box in Fig.~\ref{fig:Deterministic Regulator}. The diagrams 
in Figs.~\ref{fig:deterministic_shaper} and \ref{fig:stochastic traffic regulator schematic}
represent single-server, infinite buffer queueing systems.  The box 
represents the server, which imposes a variable service delay on an arriving packet.  The service delay will be zero
if no shaping is needed.  Only one packet can reside in the server at any given time.  A new
packet~$j$ can arrive to the server at the instant packet~$j-1$ leaves the server.
Packets that arrive when the server is occupied are stored in the front-end buffer in FCFS
order.  The traffic stream $R_1$ consists of the stream of packets arriving to the server.

Let $\tilde{s}_j$ denote the arrival time of the $j$th packet at the buffer and
let $\tilde{a}_j$ denote the complete arrival time to the buffer, i.e.,
$\ta_j := \ts_j + L_j/C$. The server
incurs a delay on the $j$th packet such that it begins departing the buffer at time~$t_j$ and 
completely leaves the regulator at time $b_j$. Since the front-end buffer delays each packet until 
the complete departure time of the previous packet from the regulator, we have
\begin{align}
\label{eq:sTildej}
 \Tilde{s}_j=\max\lbrace s_j,b_{j-1}\rbrace.
\end{align}
Therefore, the operation of $(\sigma, \rho)$ regulator
can also be described in terms of the workload $W_\rho(\tilde{s}_j;R_1)$. In other words, we have the following theorem which is proved in Appendix.
\begin{proposition}
\label{prop:deterministic delay w buffer}
The departure time $t_j$ for the $j$th packet in the $(\sigma, \rho)$ regulator is given by (cf.~\eqref{eq:delay_regulator}): 
\begin{equation}
    t_j=[W_\rho(\tilde{s}_j;R_1)-\sigma]^+/\rho+\tilde{s}_j. 
    \label{eq:tj_R1}
\end{equation}
\end{proposition}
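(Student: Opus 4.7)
The plan is to prove the proposition by induction on $j$, simultaneously establishing the stronger auxiliary identity $(\star)$:
\begin{equation*}
W_\rho(\tilde{s}_j; R_1) = W_\rho(s_j; R_{\rm i}) - \rho(\tilde{s}_j - s_j).
\end{equation*}
Once $(\star)$ is in hand, the desired formula~\eqref{eq:tj_R1} follows from~\eqref{eq:delay_regulator} by direct substitution: when $\tilde{s}_j > s_j$, one verifies that $W_\rho(s_j; R_{\rm i}) - \sigma \ge \rho(\tilde{s}_j - s_j)$, so the outer $[\,\cdot\,]^+$ brackets are inactive and the shift $\rho(\tilde{s}_j - s_j)$ cancels; when $\tilde{s}_j = s_j$, the two formulas coincide trivially.

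For the base case $j=1$, $\tilde{s}_1 = s_1$ and both workloads vanish, so $(\star)$ is immediate. For the inductive step, I would first observe that $R_1$ has no arrivals during $(\tilde{a}_{j-1}, \tilde{s}_j)$, so the analogues of~\eqref{eq:Regulator development eq 1}--\eqref{eq:Regulator development eq 2} for $R_1$ yield
\begin{equation*}
W_\rho(\tilde{s}_j; R_1) = \bigl[W_\rho(\tilde{s}_{j-1}; R_1) + \delta_{j-1} - \rho(\tilde{s}_j - \tilde{a}_{j-1})\bigr]^+.
\end{equation*}
The timing identity $\tilde{s}_j - \tilde{a}_{j-1} = (\tilde{s}_j - a_{j-1}) - (\tilde{s}_{j-1} - s_{j-1})$, combined with the inductive hypothesis applied to $j-1$, rewrites this purely in terms of $W_\rho(s_{j-1}; R_{\rm i})$ and the $R_{\rm i}$-timings.

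I would then split on whether the front-end buffer is empty at $s_j$. In Case A ($\tilde{s}_j = s_j$), the rewritten expression is precisely the recursion~\eqref{eq:Regulator development eq 1}--\eqref{eq:Regulator development eq 2} for $W_\rho(s_j; R_{\rm i})$, yielding $W_\rho(\tilde{s}_j; R_1) = W_\rho(s_j; R_{\rm i})$ as required. In Case B ($\tilde{s}_j = b_{j-1} > s_j$), I would apply~\eqref{eq:delay_regulator} to packet $j-1$ to obtain $b_{j-1} - a_{j-1} = (W_\rho(s_{j-1}; R_{\rm i}) - \sigma)/\rho$, substitute, and verify that both sides of $(\star)$ simplify to $\sigma + \delta_{j-1}$.

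The principal obstacle is the algebraic bookkeeping in Case B, where the residual regulator delay of packet $j-1$ (the quantity $b_{j-1} - a_{j-1}$) must precisely offset the inductive correction $\rho(\tilde{s}_{j-1} - s_{j-1})$ introduced when replacing $W_\rho(\tilde{s}_{j-1}; R_1)$ by $W_\rho(s_{j-1}; R_{\rm i})$. Along the way it is essential to confirm that the inner $[\,\cdot\,]^+$ operators drop harmlessly; this reduces to the observation that the hypothesis $b_{j-1} > s_j$ defining Case B forces $W_\rho(s_{j-1}; R_{\rm i}) - \sigma > \rho(s_j - a_{j-1}) \ge 0$, keeping all intermediate quantities nonnegative throughout the reduction.
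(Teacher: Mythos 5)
Your proposal is correct and takes essentially the same route as the paper: your auxiliary identity $(\star)$ is precisely the paper's Lemma~\ref{appxlem:workload}, proved there too by induction on $j$ with the same cancellation $\tilde{a}_j-\tilde{s}_j = a_j - s_j = L_j/C$, after which the proposition follows by substituting into \eqref{eq:delay_regulator} and checking that the $[\,\cdot\,]^+$ operators are inactive. The only cosmetic difference is that you split the final step on whether $\tilde{s}_j = s_j$ or $\tilde{s}_j = b_{j-1} > s_j$, whereas the paper splits on whether $W_\rho(\tilde{s}_j;R_1) \le \sigma$ or $> \sigma$.
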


An example sample path of the workloads of traffic streams $R_{\rm i}$, $R_1$, and $R_{\rm o}$
for a deterministic $(\sigma, \rho)$ regulator is shown in 
the top graph of Fig.~\ref{fig:deterministic regulator delay and workload}.
If the input traffic stream $R_{\rm i}$ conforms to the $(\sigma, \rho)$ traffic
burstiness parameter at arrival times, then the workloads of $R_{\rm i}$, $R_1$, and $R_{\rm o}$ will all
coincide, which occurs in the interval $[s_1, s_3]$ in 
the figure.  Within this interval, for packets
$j=1$ and $2$, we have $s_j = \ts_j = t_j$ and
$a_j = \ta_j = b_j$, since both packets arrive when
the workload $W_\rho(t; R_{\rm i}) \leq \sigma$.  
At time $s_3 = \ts_3$, the workloads of $R_1$ and $R_{\rm o}$ diverge because 
packet~$3$ arrives when $W_\rho(t; R_{\rm i}) > \sigma$.
Thus, the packet is delayed in the server and $t_3 > \ts_3$.  
However, the workloads of $R_1$ and $R_{\rm o}$ once again
coincide at time $b_3$, i.e., the complete departure time of packet~$3$
from the regulator.  

The workload curves of $R_1$ and $R_{\rm o}$ form a parallelogram in the
interval $[\ts_3, b_3]$.  The other points of this parallelogram
occur at $\ta_3$, i.e., when packet~$3$ completely
arrives to the server and at $t_3$, i.e., when packet~$3$ starts
to depart the server.  Then the two workload curves coincide
in the interval $[b_3, \ts_4]$.  In general, the workloads of $R_1$ and $R_{\rm o}$
form a (possibly degenerate) parallelogram during
the interval $[\ts_j, b_j]$ and coincide during the interval $[b_j, \ts_{j+1}]$,
for $j=1, 2, \ldots$.  

In Fig.~\ref{fig:deterministic regulator delay and workload},
we see that  the workload curves of $R_{\rm i}$ and $R_1$ coincide until
time $s_5$, which is the start time of the arrival of packet~$5$ to the regulator.
At this time, packet~$4$ is at the server, so packet~$5$ waits until time
time~$\ts_5 > s_5$ to go into service.  At time~$\ta_5$, when packet~$5$ has arrived
completely to the server, the two curves coincide once again.  In the interval
$[s_5, \ta_5]$, the two curves form a parallelogram.  This is not true in
general, but in the interval $[s_j, \ta_j]$ a (possibly degenerate) parallelogram can be formed
in which the sides consists of
$W_\rho(t;R_{\rm i})$ for $t \in [s_j, a_j]$, $W_\rho(t;R_1)$
for $t \in [\ts_j, \ta_j]$, $W_\rho(t;R_{\rm i})$ for $t \in [a_j, \ta_j]$, and $W_\rho(t;R_1)$ for $t \in [s_j, \ts_j]$ for $j=1, 2, \ldots$. Thus, the workload curves of $R_{\rm i}$ and $R_1$ are separated by
a {\em sequence} of possibly degenerate parallelograms.  Each such parallelogram 
corresponds to a packet delayed in the buffer of the regulator.
A similar type of relationship holds between the workload
curves of $R_{\rm i}$ and $R_{\rm o}$.  The workload curves of $R_1$ and $R_{\rm o}$
are separated by at most {\em one} parallelogram because the server can hold at most
one packet.

Based on the above analysis
and Proposition~\ref{prop:deterministic delay w buffer}, the  operation of the $(\sigma, \rho)$ regulator can
characterized in terms of the internal traffic
stream $R_1$ and the output traffic stream $R_{\rm o}$.
Analogous to equations \eqref{eq:Regulator development eq 1}--\eqref{eq:Regulator development eq 5}
the following equations involving $R_1$ can be derived:
\begin{align}
&W_\rho(t;R_{\rm o})= W_\rho(t;R_1)  =[W_\rho(b_{j-1};R_{\rm o})-\rho(t-b_{j-1})]^+, \nonumber
\\
&\quad\qquad\qquad\forall t\in [b_{j-1},\Tilde{s}_j],
 \label{eq:Stochastic Regulator eq 1}
 \\
 &W_\rho(t;R_{\rm 1})  = W_\rho(\Tilde{s}_j;R_{\rm 1})+(t-\tilde{s}_j)(C-\rho),
 ~\forall t\in[\Tilde{s}_j,\Tilde{a}_j],
 \label{eq:Stochastic Regulator eq 2}
 \\
 &W_\rho(t;R_{\rm 1}) =  W_\rho(\tilde{a}_{j};R_{\rm 1})
-\rho(t-\tilde{a}_{j}) , ~\forall t\in[\tilde{a}_{j},b_j]
 \label{eq:Stochastic Regulator eq 3}
 \\
 &W_\rho(t_j;R_{\rm o})  =\left\{\begin{array}{ll}
      \sigma ,  & \text{~if~~} W_\rho(\Tilde{s}_j;R_1)>\sigma, \\
      W_\rho(\tilde{s}_j;R_1),  & \text{~if~~} W_\rho(\Tilde{s}_j;R_1)\leq \sigma ,
  \end{array}\right.
 \label{eq:Stochastic Regulator eq 4}
 \\
 &W_\rho(t;R_{\rm o}) =W_\rho(t_j;R_{\rm o})+(t-t_j)(C-\rho),
  ~\forall t\in[t_j,b_j],
 \label{eq:Stochastic Regulator eq 5}
 \\
 &W_\rho(t;R_{\rm o}) =W_\rho(\Tilde{s}_j;R_{\rm 1})-\rho(t-\Tilde{s}_j),\nonumber
 \\
 &\quad\qquad\qquad\text{~if~~} W_\rho(\Tilde{s}_j;R_{\rm 1})>\sigma,~~ \forall t\in[\Tilde{s}_j,t_j],
 \label{eq:Stochastic Regulator eq 6}
 \end{align}
 for $j=1,2,\ldots$. Equation~\eqref{eq:Stochastic Regulator eq 1} follows
 from the following equality 
 \begin{equation}
 W_\rho(b_{j-1};R_{\rm o})=W_\rho(b_{j-1};R_1),
 \label{eq:W_R1eqW_Ro}
 \end{equation}
 which can be verified using~\eqref{eq:Stochastic Regulator eq 2}-\eqref{eq:Stochastic Regulator eq 6} and~\eqref{eq:bj}. Intuitively, \eqref{eq:W_R1eqW_Ro} holds because
 at most one packet is in the server of the regulator at any given time.
 
\section{Stochastic $(\sigma^*, \rho)$ Regulator}
\label{sec:Stochastic_Regulator}

\begin{figure}
    \centering
    \begin{adjustbox}{width=0.8\columnwidth}
   \begin{tikzpicture}
    \draw [thick,->](-2.75,0) -- (-1.5,0);
   \draw [thick,->](-0.5,0) -- (0.5,0);
    \draw [thick,->](4,0) -- (5.5,0);
    \draw (-2,.5) -- (-.5,.5) -- (-.5,-.5) -- (-2,-.5);
    \draw [dashed] (-1.5,.5)--(-1.5,-.5);
    \draw [dashed] (-1.25,.5)--(-1.25,-.5);
    \draw [dashed] (-1,.5)--(-1,-.5); 
    \draw [dashed] (-0.75,.5)--(-0.75,-.5);
    \draw (0.5,-2.5) -- (0.5,1) -- (4,1) -- (4,-2.5) -- (0.5,-2.5);
    \node at (2.25,-0.5){\begin{footnotesize}$
     \sigma^*(j)=\max \left \{ \sigma \in \Sigma \right \} $\end{footnotesize}};
    \node at (1.5,-1){\footnotesize{such that:}};
    \node at (2.25,-1.5){\begin{footnotesize}$\Pr\left\{W_{\rho}(t;R_{\rm o})\geq\gamma\right\} \leq f(\gamma),$\end{footnotesize}};
    \node at (2.25,-2){\begin{footnotesize}$\forall~t\geq 0,~~\forall \gamma\in[0,T]$. \end{footnotesize}};
    \node at (2.25,0.5){$(\sigma^*,\rho)$};
    \node [above] at (-2.5,0) {\begin{footnotesize}$R_{\rm i}$\end{footnotesize}};
    \node [above] at (0,0) {\begin{footnotesize}$R_1$\end{footnotesize}};
    \node [above] at (5,0) {\begin{small}$R_{\rm o}$\end{small}};
    \end{tikzpicture}
   \end{adjustbox}
    \caption{Idealized stochastic $(\sigma^*,\rho)$ traffic regulator.}
    \label{fig:stochastic traffic regulator schematic}
\end{figure}
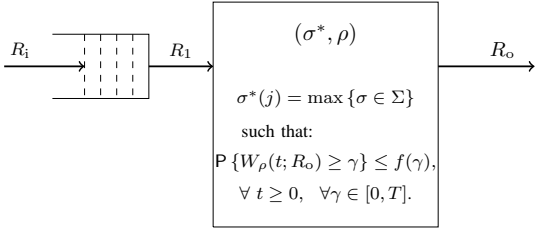

The $(\sigma,\rho)$ regulator may enforce a loose
deterministic bound on the traffic or incur unnecessarily
large delays on the traffic. To address these issues, we propose
a {\em stochastic} traffic regulator with the aim of enforcing
a probabilistic bound on the output traffic:
\begin{align}
    \label{eq: Stochastic (sigma,rho)}
\Pr\left\{W_{\rho}(t;R_{\rm o})\geq\gamma\right\} \leq f(\gamma), ~~~~\forall t \geq 0, ~\forall \gamma\in[0,T],
\end{align}
where $f$ is a non-increasing positive bounding function and $T$
is a limit on the tail distribution of the workload (see \cite{KordiCISS2019}). 
As $T \rightarrow \infty$, \eqref{eq: Stochastic (sigma,rho)} becomes equivalent to gSBB in~\eqref{eq:gSBB_defn}.

\subsection{Operational Principles}

For a stationary and ergodic input traffic stream, 
we shall show that tight enforcement of~\eqref{eq: Stochastic (sigma,rho)} can 
be achieved under steady-state conditions using a regulator with a constant 
rate parameter $\rho$ and a variable burstiness parameter $\sigma^*$ which is chosen from a finite set $\Sigma$
for each arriving packet.  We refer to such a regulator as
a stochastic $(\sigma^*, \rho)$ regulator.  
A schematic of an idealized $(\sigma^*, \rho)$ regulator is shown in Fig.~\ref{fig:stochastic traffic regulator schematic}.  The input and output links of the regulator
are assumed to have capacity $C$.
A buffer at the front-end of the regulator delays incoming packets until all previous packets
have departed, thus ensuring a FCFS service discipline.
Let $R_{\rm i}$ and $R_{\rm o}$ denote, respectively, the input traffic to and output traffic from
the regulator.  We denote the internal traffic stream departing from the front-end buffer as
$R_1$.  Let $s_j$ and $\ts_j$ denote, respectively, the arrival and departure time of the $j$th packet at 
the buffer.  

For each packet~$j$, the $(\sigma^*, \rho)$ regulator chooses a burstiness
parameter~$\sigma^*(j)$ such that a delay $d_j$ is incurred, where (cf.\ \eqref{eq:delay_regulator})
\begin{align}
d_j = t_j - s_j = [W_\rho(s_j; R_{\rm i}) - \sigma^*(j) ]^+ / \rho ,
\label{eq:tj_stochastic}
\end{align} 
and $t_j$ denotes the time at which the packet starts departing the traffic regulator.  The packet completely
leaves the regulator at time $b_j$.
The front-end buffer acts as in the deterministic $(\sigma, \rho)$ regulator (see Section~\ref{sec:deterministic_regulator}); therefore $\Tilde{s}_j$ can be derived from~\eqref{eq:sTildej}.
As in a deterministic $(\sigma,\rho)$ traffic regulator, the rate parameter $\rho$ must be greater than or equal to the long-term average input traffic rate, i.e., 
\begin{equation}
\rho \geq \lim_{t \rightarrow\infty} \frac{1}{t-s} \int_s^t R_{\rm i}(\tau)\;\mathrm{d}\tau , ~~~ \forall s \geq 0,
\end{equation}
to avoid incurring an unbounded packet delay in the long-term.
\subsection{Overshoot Probability and Overshoot Ratio}

\input{figs/Overshoot_Increment}

To design a practical $(\sigma^*, \rho)$ regulator, the {\em overshoot probability}
$\Pr\left\{W_{\rho}(t;R_{\rm o})\geq\gamma\right\}$ in \eqref{eq: Stochastic (sigma,rho)}
can be approximated by a time-averaged {\em overshoot ratio} assuming that the input traffic 
$R_{\rm i}$ is stationary and ergodic. 

\begin{definition}
\label{def:Overshoot Duration}
Given a threshold value $\zeta >0$ and a traffic stream $R$,
an {\em overshoot interval} with respect to $R$ and $\zeta$ is a maximal interval of time $\eta$
such that $W_\rho(\tau; R) \geq \zeta$ for all $ \tau \in \eta$.  Let $|\eta|$ denote the
length of interval $\eta$.  Let $\mathcal{O}(t)$ denote the set of overshoot intervals
contained in $[0, t]$.  Then the {\em overshoot duration} up to time~$t$ is defined as 
\begin{align}
O_{\zeta}(t;R) = \sum_{\eta \in \mathcal{O}(t)} |\eta| .
 \label{eq:total overshoot duration} 
\end{align}
\end{definition}
In Fig.~\ref{fig:deterministic regulator delay and workload}, the overshoot set with respect to threshold value $\zeta$ until the end of time domain depicted in the figure consists of three intervals $[\tau_1,\tau_2]$, $[\tau_3,\tau_4]$ and $[\tau_5,\tau_6]$. 
Given a time interval $[a, b]$, let $W_1 = W_\rho(a; R_{\rm o})$
and $W_2 = W_\rho(b; R_{\rm o})$.
We define the increment in overshoot duration when the workload of
the output process is {\em increasing}
due to a packet departure from the regulator as follows:
\begin{equation}
 \alpha(a,b,\zeta)=\left\{\begin{array}{ll}
b - a   ,              &    \zeta \leq W_1 ,\\
(W_2-\zeta)/(C-\rho) ,   &   W_1\leq  \zeta \leq W_2\\
0   ,                    &   W_2 < \zeta .
 \end{array}\right.   
 \label{eq:alpha_increment}   
\end{equation}
We define the increment in overshoot duration when the workload is {\em decreasing}
due to the packet inter-departure time as follows:
\begin{equation}
 \beta(a,b, \zeta)=\left\{\begin{array}{ll}
b - a ,                &   \zeta \leq W_2 , \\
(W_1- \zeta)/\rho,    &   W_2 \leq \zeta \leq W_1 , \\
0 ,                      &   W_1< \zeta .
 \end{array}\right.   
 \label{eq:beta_increment}   
\end{equation}
Figure~\ref{fig:alpha_function} illustrates
$\alpha(a, b, \zeta)$ and $\beta(a, b, \zeta)$.
The following proposition shows how to compute 
$O_{\zeta}(t;R_{\rm o})$ at time $t=b_j$
for packet~$j$.  
\begin{proposition}
\begin{align*}
O_{\zeta}(b_1;R_{\rm o}) &= \alpha(t_1,b_1,\zeta) \\ 
O_{\zeta}(b_j;R_{\rm o}) &= 
O_{\zeta}(b_{j-1};R_{\rm o}) \! + \!  \beta(b_{j-1},t_j,\zeta) \!  + \! \alpha(t_j,b_j,\zeta),
\end{align*}
for $j=2,3, \ldots$.
\label{prop:overshoot}
\end{proposition}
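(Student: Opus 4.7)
The plan is to decompose the time interval $[0,b_j]$ into a finite sequence of sub-intervals on which the workload $W_\rho(\cdot;R_{\rm o})$ is monotone and piecewise linear, compute the overshoot contribution on each sub-interval exactly, and then sum by induction on $j$.

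First I would argue that $W_\rho(t;R_{\rm o})$ has a very rigid structure between consecutive departure completions. From equations \eqref{eq:Stochastic Regulator eq 5} and \eqref{eq:Stochastic Regulator eq 1} (or equivalently \eqref{eq:Regulator development eq 4}--\eqref{eq:Regulator development eq 5}), on each departure interval $[t_j,b_j]$ the workload rises linearly with slope $C-\rho$, while on each quiet interval $[b_{j-1},t_j]$ it falls linearly with slope $-\rho$ (clipped at $0$). Since the buffer starts empty we also have $W_\rho(t;R_{\rm o})=0$ for $t\in[0,t_1]$, so the overshoot contribution on $[0,t_1]$ is zero for any $\zeta>0$. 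This gives the base case once I show that the segment $[t_1,b_1]$ contributes exactly $\alpha(t_1,b_1,\zeta)$.

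Next I would verify the two elementary formulas. On a rising segment $[a,b]$ with $W_1:=W_\rho(a;R_{\rm o})$ and $W_2:=W_\rho(b;R_{\rm o})$, the set $\{t\in[a,b]:W_\rho(t;R_{\rm o})\ge\zeta\}$ is either all of $[a,b]$ (when $\zeta\le W_1$), empty (when $\zeta>W_2$), or a right-end sub-interval $[a+(\zeta-W_1)/(C-\rho),\,b]$ of length $(W_2-\zeta)/(C-\rho)$ in the remaining crossing case. Matching these three cases to the piecewise definition \eqref{eq:alpha_increment} establishes that the overshoot contribution of $[t_j,b_j]$ equals $\alpha(t_j,b_j,\zeta)$. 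An entirely analogous trichotomy, with roles reversed, shows the overshoot contribution of $[b_{j-1},t_j]$ equals $\beta(b_{j-1},t_j,\zeta)$; the only subtlety is the clipping at $0$, but this only enlarges the sub-interval on which $W_\rho(\cdot;R_{\rm o})=0<\zeta$ and so does not affect the count.

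Finally, the induction itself is straightforward. For the inductive step I would write
\begin{align*}
O_\zeta(b_j;R_{\rm o})
&= O_\zeta(b_{j-1};R_{\rm o})
 + \bigl|\{t\in[b_{j-1},t_j]:W_\rho(t;R_{\rm o})\ge\zeta\}\bigr| \\
&\quad+ \bigl|\{t\in[t_j,b_j]:W_\rho(t;R_{\rm o})\ge\zeta\}\bigr| ,
\end{align*}
using the fact that overshoot intervals are precisely the maximal closed intervals on which the workload is at least $\zeta$, so their total length decomposes additively along any partition, and apply the two identities above. The base case $j=1$ reads $O_\zeta(b_1;R_{\rm o})=\alpha(t_1,b_1,\zeta)$ since the contribution on $[0,t_1]$ vanishes.

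The main obstacle is a bookkeeping one rather than a conceptual one: I must be careful that summing $\alpha$ and $\beta$ across the joint endpoints $t_j$ and $b_{j-1}$ does not double-count the threshold-crossing instants (these are measure-zero and so harmless, but it should be noted), and that the clipping at $0$ in the decreasing phase is handled correctly in the case $W_2=0<\zeta<W_1$, which is exactly the middle branch of \eqref{eq:beta_increment}.
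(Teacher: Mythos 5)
Your proof is correct and follows exactly the route the paper intends: the paper states this proposition without a written proof, treating it as immediate from the definitions of $\alpha$ and $\beta$ in \eqref{eq:alpha_increment}--\eqref{eq:beta_increment} as the overshoot increments on the rising segments $[t_j,b_j]$ (slope $C-\rho$) and falling segments $[b_{j-1},t_j]$ (slope $-\rho$) of the piecewise-linear output workload, which is precisely the decomposition and case-check you carry out. Your added care about the clipping at zero and the measure-zero crossing instants is sound and does not change the conclusion.
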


We define the {\em overshoot ratio} of the regulator at time~$t$ with respect to
a threshold $\zeta$ by
\begin{align}
    o_\zeta(t) = O_{\zeta}(t;R_{\rm o})/ t.
    \label{eq:overshootratio}
\end{align}
If the input traffic $R_{\rm i}$ is stationary and ergodic, the overshoot ratio asymptotically
approaches the overshoot probability, i.e.,
\begin{equation}
o_{\zeta}(t)
\sim \Pr\left\{W_{\rho}(t;R_{\rm o})  > \zeta \right \} ~\mbox{as}~t \rightarrow \infty .     
\label{eq:Prob_approx}
\end{equation}
Using the overshoot probability as a proxy for the
overshoot probability in~\eqref{eq: Stochastic (sigma,rho)}, we design a $(\sigma^*, \rho)$ regulator
that selects the burstiness parameters $\sigma^*(j)$, $j = 1, 2, \ldots$,
form a set $\Sigma$ such that 
\begin{align}
    \label{eq:Stochastic (sigma,rho) for jth packet}
 o_\gamma(t) \leq f(\gamma),~~\forall ~t \in [b_{j-1},b_j],~ \forall \gamma \in [0, T] ,
\end{align}
while minimizing the incurred packet delay.

\subsection{Piecewise-Linear Bounding Function}

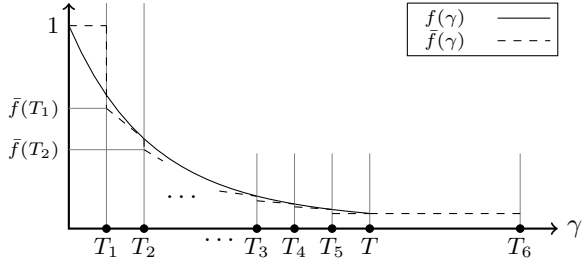
\begin{figure}
\centering
\begin{tikzpicture}
\draw [thick,<->] (6.5,0)--(0,0)--(0,3);
\node [left] at (0,2.7) {\begin{small}$1$\end{small}};
\node [right] at (6.5,0) {$\gamma$};
\draw [domain=0:4] plot (\x,{2.7*(0.8*exp(-1*\x)+0.2*exp(-0.3*\x))});
\draw [help lines] (6,0) -- (6,1);
\draw [help lines] (4,0) -- (4,1);
\draw [help lines] (3.5,0) -- (3.5,1);
\draw [help lines] (3,0) -- (3,1);
\draw [help lines] (2.5,0) -- (2.5,1);
\draw [help lines] (1,0) -- (1,3);
\draw [help lines] (0.5,0) -- (0.5,3);
\draw [fill](6,0) circle [radius=0.05];
\draw [fill](4,0) circle [radius=0.05];
\draw [fill](3.5,0) circle [radius=0.05];
\draw [fill](3,0) circle [radius=0.05];
\draw [fill](2.5,0) circle [radius=0.05];
\draw [fill](1,0) circle [radius=0.05];
\draw [fill](0.5,0) circle [radius=0.05];
\node [below] at (6,0){\begin{small}{$T_6$}\end{small}};
\node [below] at (4,0){\begin{small}{$T$}\end{small}};
\node [below] at (3.5,0){\begin{small}{$T_5$}\end{small}};
\node [below] at (3,0){\begin{small}{$T_4$}\end{small}};
\node [below] at (2.5,0){\begin{small}{$T_3$}\end{small}};
\node [below] at (1,0){\begin{small}{$T_{2}$}\end{small}};
\node [below] at (0.5,0){\begin{small}{$T_{1}$}\end{small}};
\node [below] at (2,0) {$\ldots$};
\draw [dashed] (0,2.7)--(0.5,2.7);
\draw [dashed] (0.5,2.7)--(0.5,1.6);
\draw [dashed] (0.5,1.6)--(1,1.195);
\draw [dashed] (1,1.195)--(1,1.05);
\draw [dashed] (1,1.05)--(1.25,0.9);
\draw [dashed] (2,0.5)--(2.5,0.43);
\draw [dashed] (2.5,0.37)--(3,0.327);
\draw [dashed] (3,0.29)--(3.5,0.254);
\draw [dashed] (3.5,0.20)--(4,0.2);
\draw [dashed] (4,0.2)--(6,0.2);
\draw [help lines] (0,1.6)--(0.5,1.6);
\node [left] at(0,1.6) {\scriptsize{$\bar{f}(T_1)$}};
\draw [help lines] (0,1.05)--(1,1.05);
\node [left] at(0,1.05) {\scriptsize{$\bar{f}(T_2)$}};
\node at (1.5,0.43) {$\ldots$};
\draw (4.5,3)--(6.5,3)--(6.5,2.3)--(4.5,2.3)--(4.5,3);
\draw (5.7,2.8)--(6.4,2.8);
\node at (5,2.8) {\begin{scriptsize}$f(\gamma)$\end{scriptsize}};
\draw [dashed] (5.7,2.5)--(6.4,2.5);
\node at (5,2.5) {\begin{scriptsize}$\bar{f}(\gamma)$ \end{scriptsize}};
\end{tikzpicture}
\caption{Piecewise linear approximating function for $f(\gamma)$, $M=6$.}
\label{fig:Regulator Bounding f}
\end{figure}

Next, we address the issues of selecting the set $\Sigma$ of burstiness parameter values
and verification of the condition~\eqref{eq:Stochastic (sigma,rho) for jth packet}. We replace the bounding function $f$
by a piecewise-linear function $\bar{f}$ defined
in terms of a set of values $T_1 < T_2 < \ldots < T_M$ and 
the value $\delta$ given by \eqref{eq:delta}
satisfying the following constraints:
\begin{align}
T-T_{M-1} \geq \delta; ~T_{M} \gg T , ~T_1 \geq \delta, 
~T_{i+1}-T_{i}\geq \delta,
    \label{eq:T_i}
\end{align}
for $i=1,2,\ldots,M-1$. For given $T$ and $\delta$, the maximum possible value
of $M$ is given by
\begin{align}
    M_{\max} = \lfloor T / \delta \rfloor - 1 .
\label{eq:M_max_value}
\end{align}
The values $\{ T_1, \ldots, T_M \}$
determine the set of burstiness parameter values
\begin{align}
    \Sigma = \{ \sigma_i := T_i - \delta: i = 1, \ldots, M \} .
\label{eq:sigma_i}
\end{align}
Note that $\sigma_1 < \sigma_2 < \ldots < \sigma_M$.

Without loss of generality, we assume $f(0) = 1$.
The function $\bar{f}$ is designed to be a close
lower bound to $f$ in the interval $[T_1, T]$
and an upper bound to $f$ in the interval $[0, T_1)$.
For the definition of $\bar{f}$ presented here,
we shall assume that $M$ is given by
\eqref{eq:M_max_value}.\footnote{For technical reasons,
a slightly different definition of $\bar{f}$ 
is used for smaller values of $M$ in the proofs of Theorems 1--3.}
In particular, we set $\bar{f}(\gamma) = f(0) = 1$ for $\gamma \in [0, T_1)$.
Since $\bar{f} \geq f$ in this interval, traffic regulation with respect to
$\bar{f}$ may result in violation of \eqref{eq: Stochastic (sigma,rho)}. However,
the violation probability is upper bounded by $T_1/T$, which can be made 
arbitrarily small by suitable choices of $T_1$ and/or $T$.  
We also set $\bar{f}(\gamma) = \bar{f}(T)$ for $T_{M}> \gamma \geq T_{M-1}$, and we choose a large value for $T_M$ such that the burst size of the output traffic is not limited by the stochastic $(\sigma^*,\rho)$ regulator. 

In the interval $[T_i, T_{i+1})$ let 
\begin{align}
    g_i(\gamma) := f(T_{i+1}) + \omega_i (\gamma - T_{i+1})
\end{align}
represent the line connecting the points $(T_i, f(T_i))$ and $(T_{i+1}, f(T_{i+1}))$
with slope 
\begin{align}
\omega_i := \frac{f(T_{i+1}) - f(T_i)}{T_{i+1} - T_i}
\end{align} 
for $i=1, \ldots, M-2$.  If $f(\gamma) \geq g_i(\gamma)$
for all $\gamma \in [T_i, T_{i+1})$ we set $\bar{f} = g_i$ in this interval.  Otherwise, we
set $\bar{f} = h_i$ on $[T_i, T_{i+1})$, where
\begin{align}
    h_i(\gamma) := f(T_{i+1}) + f'(T_{i+1}) (\gamma - T_{i+1}) .
\end{align}
This ensures that $\bar{f} \leq f$ on $[T_1, T_{M-1})$.  We then set $\bar{f}(\gamma) = f(T)$ for
$\gamma \in [T_{M-1}, T_{M}]$ and $\bar{f}(\gamma) = 0$ for $\gamma> T_{M}$.  To summarize,
we define
\begin{align}
\bar{f}(\gamma) :=\left\{\begin{array}{ll}
    1, &  \gamma \in [0, T_1), \\
   f(T_{i+1}) \! + \! m_i(\gamma \! - \! T_{i+1}),  &  \gamma \in [T_{i}, T_{i+1} ) , \\
    f(T) , &  \gamma \in [T_{M-1}, T_M ],  \\
    0,         &  \gamma > T_M,  
\end{array}\right.
\label{eq:f_bar}
\end{align}
for $i=1, \ldots, M-2$ and the slopes $m_i$ are given by
\begin{align}
    m_i =
    \left \{
      \begin{array}{ll}
         \omega_i,        & \mbox{if $f \geq g_i$ on $[T_i, T_{i+1})$} , \\
         f'(T_{i+1}), & \mbox{otherwise} ,
  \end{array}
  \right . 
\end{align}
for $i=1, \ldots, M-2$.

\subsection{Canonical $(\sigma^*, \rho)$ Regulator}

Based on the definition of $\bar{f}$ in \eqref{eq:f_bar}, we modify the constraint
in \eqref{eq:Stochastic (sigma,rho) for jth packet} to hold only for $\gamma \in [T_1, T]$, i.e.,
\begin{align}
      \label{eq:canonical_constraint}
 o_\gamma(t) \leq f(\gamma),~~\forall ~t \in [b_{j-1},b_j],~ \forall \gamma \in [T_1, T] .
\end{align}
Towards a practical implementation, we further replace the bounding 
function $f$ by $\bar{f}$ to obtain the following burstiness constraint:
\begin{align}
     o_\gamma(t) \leq \bar{f}(\gamma),~~\forall ~t \in [b_{j-1},b_j],~ \forall \gamma \in [T_1, T] .
     \label{eq:approx_stoch_regulator_constraint}
\end{align}
To incur minimal packet delay, $\sigma^*(j)$ should be chosen
as the largest value in $\Sigma$ such that the constraint \eqref{eq:approx_stoch_regulator_constraint}
is maintained.  We then define a {\em canonical} $(\sigma^*, \rho)$ regulator as follows:
 \begin{align}
   \cA_j &= \{ \sigma \in \Sigma :  o_\gamma(t) \leq \bar{f}(\gamma),
        \forall t \in [b_{j-1},b_j(\sigma)], \nonumber \\
       &~~~~~~~ \forall \gamma \in [T_1, T ] \} \nonumber \\
   \sigma^*(j) &= \left \{ 
         \begin{array}{ll}
                \sigma_{\max \cA_j}, & \mbox{if $\cA_j \neq \emptyset$}, \\
                \sigma_1        , & \mbox{otherwise}.
         \end{array}
   \right .
   \label{eq:canonical_regulator}
 \end{align}
Equations~\eqref{eq:tj_R1}-\eqref{eq:Stochastic Regulator eq 6} servcan be used as the governing equations 
for a stochastic $(\sigma^*,\rho)$ in which $\sigma$ is replaced by $\sigma^*(j)$
according to~\eqref{eq:canonical_regulator}.
The canonical regulator cannot be implemented directly, since the 
condition in \eqref{eq:canonical_regulator} cannot be verified practically
for all values of $t \in [b_{j-1}, b_j]$ and $\gamma \in [T_1, T]$. 
Next, we develop practical implementations 
of the canonical $(\sigma^*, \rho)$ regulator.

\subsection{Basic Implementation}
\label{subsec:Basic}

We aAssume that $T_M$ is chosen sufficiently large such that for every packet~$j$ the set 
\begin{equation}
\cB_j = \left \{ 1 \leq \ell \leq M : \sigma_\ell \geq W_\rho(\tilde{s}_j ;  R_{\rm 1} ) \right \},
\label{eq:cBj}
\end{equation}
is non-empty. Let 
\begin{align}
\cI_j = \left \{  2 \leq \ell \leq \min \cB_j: o_{T_{\ell-1}}(b_j(\sigma_\ell)) \leq \bar{f}(T_{\ell})\right \}
 \label{eq:Ij}
\end{align}
where $t_j(\sigma_\ell)$ and $b_j(\sigma_\ell)$ are given by~\eqref{eq:tj_R1}  and \eqref{eq:bj}, respectively.
Let
\begin{align}
  \sigma^*(j)   = \left \{
    \begin{array}{ll}
        \sigma_{\max \cI_j},  &  \mbox{if}~\cI_j \neq \emptyset , \\
        \sigma_1,                   & \mbox{otherwise.}
    \end{array}
    \right .
\label{eq:basic_implementation}
\end{align}
Equations~\eqref{eq:cBj}--\eqref{eq:basic_implementation} ardefine used to develop a approximatepractical 
implementations of the canonical $(\sigma^*, \rho)$ regulator given by~\eqref{eq:canonical_regulator}.
FIn this implementation, for a given value of $\sigma_\ell \in \Sigma$,
the condition in \eqref{eq:Ij}~\eqref{eq:canonical_constraint} is checked only
at $t = b_j(\sigma_\ell)$ and $\gamma = T_{\ell-1}$. Therefore, as shown in Section~\ref{sec:numerical}, the constraint in~\eqref{eq:canonical_constraint} may be violated for some values of $t$. However, tThese violations will not occur for sufficiently large values of $t$.

\begin{theorem} \label{thm:overshoot_soln1}
The $(\sigma^*, \rho)$ regulator defined by
\eqref{eq:cBj}--\eqref{eq:basic_implementation} produces an output traffic traffic
stream that satisfies~\eqref{eq:canonical_constraint}
for sufficiently large~$t$.
\end{theorem}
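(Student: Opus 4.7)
My plan is to reduce the constraint \eqref{eq:canonical_constraint} to a finite family of threshold-wise inequalities and then argue case by case that the basic implementation eventually maintains each of them. The first step exploits the piecewise structure of $\bar f$ together with the monotonicity of $o_\gamma(t)$ in $\gamma$: for any $\gamma\in[T_{i-1},T_i)$ one has $o_\gamma(t)\leq o_{T_{i-1}}(t)$, while $f(\gamma)\geq f(T_i)\geq\bar f(T_i)$ by construction of $\bar f$. Hence it suffices to verify $o_{T_{i-1}}(t)\leq\bar f(T_i)$ for every relevant index $i$ and for all sufficiently large $t$.

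The second step is to invoke Proposition~\ref{prop:deterministic delay w buffer} and the governing equations \eqref{eq:Stochastic Regulator eq 1}--\eqref{eq:Stochastic Regulator eq 6}, with $\sigma$ replaced by $\sigma^*(j)$, to establish the crucial workload bound: whenever $\sigma^*(j)=\sigma_{\ell^*}$, the output workload satisfies $W_\rho(t;R_{\rm o})\leq\sigma_{\ell^*}+\delta=T_{\ell^*}$ for all $t\in[b_{j-1},b_j]$. This splits the verification into three regimes of the index $i$. For $i>\ell^*$, the workload never exceeds $T_{i-1}$, so $O_{T_{i-1}}(\cdot;R_{\rm o})$ is constant on $[b_{j-1},b_j]$ and $o_{T_{i-1}}(t)=O_{T_{i-1}}(b_{j-1};R_{\rm o})/t$ is non-increasing; the inductive hypothesis at $b_{j-1}$ propagates to the whole interval. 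For $i=\ell^*$, the algorithm's own check $o_{T_{\ell^*-1}}(b_j(\sigma_{\ell^*}))\leq\bar f(T_{\ell^*})$ holds by the definition of $\cI_j$; to extend it to interior times I would use the elementary bound $o_\gamma(t)\leq (b_j/t)\,o_\gamma(b_j)$ together with $b_j/b_{j-1}\to 1$ under stationary ergodic input, absorbing the ratio defect into the slack $f(\gamma)-\bar f(T_i)$ available for $\gamma<T_i$.

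The remaining regime $i<\ell^*$ is the principal obstacle, since the algorithm never directly verifies these inequalities. Here I would bound the per-packet increment in $O_{T_{i-1}}$ by the deterministic constant $(T_{\ell^*}-T_{i-1})/\rho+(T_{\ell^*}-T_{i-1})/(C-\rho)$ derived from the workload bound $W_\rho\leq T_{\ell^*}$, and dividing by $b_j\to\infty$ shows that this per-packet contribution to $o_{T_{i-1}}$ vanishes. The limit of $o_{T_{i-1}}(t)$ is then controlled, via \eqref{eq:Prob_approx}, by the stationary overshoot probability at $T_{i-1}$, which in turn is driven down by the feedback built into \eqref{eq:basic_implementation}: any persistent tendency to exceed $\bar f$ at a checked threshold forces the algorithm to select a smaller $\sigma^*$, which lowers the overshoot rate at all lower thresholds simultaneously. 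The hardest technical step will be making this feedback argument rigorous---showing that the asymptotic selection distribution of $\ell^*$ is simultaneously consistent with every threshold check even though only one check is enforced per packet. Once that is established, for sufficiently large $t$ all threshold inequalities hold at once, and the reduction in the first paragraph yields \eqref{eq:canonical_constraint}.
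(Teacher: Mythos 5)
Your overall architecture---reduce \eqref{eq:canonical_constraint} to finitely many threshold inequalities, split the indices according to their position relative to the selected $\sigma^*(j)$, and invoke ergodicity for the thresholds the algorithm never checks---mirrors the paper's two-part proof, but three of your steps do not go through as stated. First, the workload bound $W_\rho(t;R_{\rm o})\leq T_{\ell^*}$ for \emph{all} $t\in[b_{j-1},b_j]$ is false: on $[b_{j-1},t_j]$ the output workload decays from $W_\rho(b_{j-1};R_{\rm o})$, which is controlled by $\sigma^*(j-1)$, not by $\sigma^*(j)$; if the previous packet was admitted with a large burst parameter and the current one is heavily delayed, the workload exceeds $T_{\ell^*}$ on an initial portion of the interval, so your claim that $O_{T_{i-1}}$ is constant there for $i>\ell^*$ fails. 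Second, your reduction demands $o_{T_{i-1}}(t)\leq\bar f(T_i)$, and you propose to pass from the checked time $b_j$ to interior times via $o_\gamma(t)\leq (b_j/t)\,o_\gamma(b_j)$, absorbing the factor into the slack $f(\gamma)-\bar f(T_i)$. That slack tends to $0$ as $\gamma\uparrow T_i$ (since $\bar f$ meets $f$ at the grid points), so no uniform-in-$\gamma$ absorption is possible. The paper avoids this by proving only the grid inequalities $o_{T_i}(t)\leq\bar f(T_i)$ and then covering intermediate $\gamma$ with the interpolation Lemma~\ref{appxlem:linearupperbound}, which exploits the sawtooth structure of $W_\rho(\cdot;R_{\rm o})$ (slopes $C-\rho$ and $-\rho$) to show $o_\gamma(t)$ lies below the chord between consecutive grid values; this geometric input is absent from your argument.

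Third and most importantly, the regime $i<\ell^*$ is exactly where the theorem's ``sufficiently large $t$'' qualifier earns its keep, and you leave it at the level of an acknowledged intention. The per-packet increment of $O_{T_{i-1}}$ being negligible after division by $b_j$ says nothing about the cumulative effect of a long run of packets that all keep the workload above an unchecked lower threshold; and the ``feedback'' you invoke does not exist in the basic implementation, which checks only the single threshold $T_{\ell-1}$ per packet---this is precisely why Algorithm~\ref{alg:Stochastic regulator} does exhibit violations at small $t$ (Fig.~\ref{fig:OvershootRatioFunctioninTime}) while Algorithms~\ref{alg:Stochastic regulator modified} and~\ref{alg:Stochastic regulator modified reduced complexity} must add checks at the lower thresholds to eliminate them. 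The paper closes this gap quantitatively: it introduces the threshold-violation distance $\text{Dist}_{\zeta,\alpha}$, shows that the time needed to push $o_{T_i}$ above $\bar f(T_i)$ at an unchecked threshold grows linearly in $b_j$ (see~\eqref{eq:MinimumConsecutiveDistancetoViolate}), and uses stationarity and ergodicity to show that the recurrence time of the ``workload drops below $T_\ell$'' events $\xi_k(T_\ell)$ remains bounded, so the linear budget is eventually never exhausted. Without an argument of this type your proof establishes the conclusion only for the threshold the algorithm actually checks, not for all $\gamma\in[T_1,T]$ as \eqref{eq:canonical_constraint} requires.
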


The proof of Theorem~\ref{thm:overshoot_soln1} can be found in~Appendix~\ref{appx:overshoot_soln1}.
A pseudo-code implementation of the stochastic $(\sigma^*,\rho)$ regulator is 
given in Algorithm~\ref{alg:Stochastic regulator}.
The input traffic stream $R_{\rm i}$ is represented as a sequence $\{ (s_1,L_1), \ldots, (s_N,L_N)\}$, where the $s_i$'s are the arrival times of the packets and the $L_i$'s are the packet lengths. The 
$(\sigma^*, \rho)$ regulator consists of the rate $\rho$, the bounding
function $f$, the time period $T$ over which the bound is applied, the 
set $\Sigma$, and the values $\{ T_1, \ldots, T_M \}$
which determine the piecewise-linear bounding function $\bar{f}$.  
The input and output links for 
the regulator are assumed to be of capacity $C > \rho$.  The output traffic stream $R_{\rm o}$ 
is represented by the sequence $\{ (t_1,L_1), \ldots, (t_N,L_N) \}$, where the $t_i$'s are packet departure times.  The {\bf for} loop starting in line~11
finds the largest $\ell \in \{2, \ldots, k=\min \cB_j \}$ such that the inequality
in \eqref{eq:Ij} is satisfied with $\sigma = \sigma_\ell$.  If such $\sigma_\ell$ exists,
then $\sigma^*(j) = \sigma_\ell$; otherwise, $\sigma^*(j) = \sigma_1$, in accordance
with~\eqref{eq:basic_implementation}.
 
Computation of  the departure time, $t_j$, of the $j$th
packet requires updates to $o_{T_{i}}(b_j)$ for $i=1,\ldots,M-1$. Once 
$t_j$ is determined, the values of $o_{T_i}(b_j)$,
for $i=1,\ldots,M-1$, need to be updated. Thus, the overall computational
complexity is $O(M)$ per packet. Using a parallel implementation of the {\bf for} loop at line 11,
the complexity per packet can be reduced to constant time, $O(1)$. 

\alglanguage{pseudocode}
\begin{algorithm}[t]
\caption{$(\sigma^*, \rho)$ stochastic regulator} 
\begin{algorithmic}[1]
\Require $R_{\rm i} = \{ (s_1,L_1), \ldots, (s_N,L_N) \}$; \Comment{Input traffic stream}
\Require $\rho$;  $f(\cdot)$; $T$; $M$; $L_{\max}$; $C$  \Comment{Regulator parameters}
\Ensure  $R_{\rm o} = \{ t_1, t_2, \ldots , t_N \}$ \Comment{Output traffic stream}
\State $\delta \gets (1 - \rho/C)L_{\max}$
\State Compute $T_i, \sigma_i$ for $i=1,2,\ldots,M$ \Comment{\eqref{eq:T_i},~\eqref{eq:sigma_i}}
\State Compute $\bar{f}(\cdot)$ \Comment{\eqref{eq:f_bar}}
\State  $t_1 \gets \tilde{s}_1 \gets s_1$; $b_1 \gets t_1 + L_1/C$ 
\State $W_\rho(\tilde{s}_1;R_1) \gets W_\rho(t_1;R_{\rm o}) \gets 0$
\State Compute $W_\rho(b_1;R_{\rm o})$ \Comment{\eqref{eq:Stochastic Regulator eq 5}}
\State Compute $o_{T_{i}}(b_1)$; $i=1,2,\ldots,M-1$ \Comment{Prop.~\ref{prop:overshoot}}
\For{$j= 1, \ldots, N$} \Comment{Packet~$j$ arrives at time $s_j$}
    \State Compute $\tilde{s}_j$, $W_\rho(\Tilde{s}_j;R_1)$, $\cB_j$ \Comment{\eqref{eq:sTildej},~\eqref{eq:Stochastic Regulator eq 1},~\eqref{eq:cBj}} \label{alg1:line10}
    \State $\mbox{found} \leftarrow \mbox{\bf false}$; $k \gets \min \cB_j$
    \For{$\ell = k, \ldots, 2$} \Comment{$k \geq 2$}
        \State  $\sigma \gets \sigma_\ell$; Compute $t_j(\sigma)$, $b_j(\sigma)$  \Comment{\eqref{eq:tj_R1},~\eqref{eq:bj}}
        \State Compute  $W_\rho(t_j; R_{\rm o})$, $W_\rho(b_j; R_{\rm o})$ 
            \Comment{\eqref{eq:Stochastic Regulator eq 4},~\eqref{eq:Stochastic Regulator eq 5}}\label{alg1:line13}
        \State Compute $o_{T_{\ell-1}}(b_j)$  \Comment{Prop.~\ref{prop:overshoot}}
        \If {$o_{T_{\ell-1}}(b_j) \leq \bar{f}(T_{\ell}$)}
        \Comment{\eqref{eq:Ij}}
            \State $\mbox{found} \gets \mbox{\bf true}$; {\bf break}
        \EndIf
    \EndFor
    \If{{\bf not}~$\mbox{found}$}
    	\State $\sigma \gets \sigma_{1}$; Compute $t_j(\sigma)$, $b_j(\sigma)$ \Comment{\eqref{eq:tj_R1},~\eqref{eq:bj}}
    \EndIf
	\State Compute $o_{T_{i}}(b_j)$; $i \! = \! 1, 2, \ldots, M \! - \! 1$ \Comment{Prop.~\ref{prop:overshoot}}
\EndFor
\end{algorithmic}
\label{alg:Stochastic regulator} 
\end{algorithm}

\subsection{Modified Implementations}
\label{subsec: Modified}

The requirement of sufficient large $t$ in Theorem~\ref{thm:overshoot_soln1} can be
avoided by modifying the definition of $\cI_j$ in \eqref{eq:Ij} to include additional
checks. Let $\cB_j$ be as defined in~\eqref{eq:cBj}. We 
re-define $\mathcal{I}_j$ as follows:
\begin{align}
\cI_j = \big \lbrace  \! 2 \leq \ell \leq \min \cB_j : & ~o_{T_{i}}(b_j(\sigma_\ell)) \leq \bar{f}(T_{i})-\epsilon_{i,j}(\sigma_\ell),
\nonumber \\
&~~~~~~~~ \forall i = 1, \ldots, \ell-1 \big \rbrace,
\label{eq:Ij_2}
\end{align}
where 
\begin{align}
    \label{eq:epsilonij}
    \epsilon_{i,j}(\sigma_\ell)\! := \!\left\lbrace\begin{array}{ll}
      \! \frac{W_\rho(b_{j}(\sigma_\ell);R_{\rm o})-T_i}{\rho b_{j}(\sigma_\ell)}(1-\bar{f}(T_i)),
        &  \!i \!= 1,\ldots,\ell\!-\!2 ,\\
      \!\bar{f}(T_{\ell-1})-\bar{f}(T_\ell),  & \!i\!=\ell-1 .
    \end{array}\right.
\end{align}
The modified definition of $\cI_j$ in \eqref{eq:Ij_2}
involves additional checks for the $j$th packet,
which may result in a smaller value of $\sigma^*(j)$ and
hence higher delay incurred on the packet. Interestingly, our numerical simulations show
that this results in slightly smaller {\em average} delay incurred on the input traffic. This can be explained as follows.  By incurring more delay on {\em some} input 
packets at an earlier stage, 
the output traffic may be better shaped to the desired bound; therefore, 
on average, less delay will need
to be incurred on future packets.

The overshoot ratio $o_{T_i}(t)$ at $t=b_j(\sigma_\ell)$ is checked against  $\bar{f}(T_i)-\epsilon_{i,j}(\sigma_\ell)$ rather than
$\bar{f}(T_i)$, for $i = 1, \ldots, \ell-2$.
The reasoning behind this stricter condition is illustrated in Fig.~\ref{fig:OverShootRatioIncrease}. 
In choosing $\sigma^*(j)=\sigma_\ell$, the overshoot ratios $o_{T_i}(t)$, for $i = 1, \ldots, \ell-2$, will be increasing functions of $t$, as shown in Fig.~\ref{fig:OverShootRatioIncrease}, up to time $t = t_{j+1}(i)$, 
which is defined as the time at which  
\begin{equation}
    W_\rho(t;R_{\rm o})=T_i~~\text{for}~~i=1,2,\ldots,T_{\ell-2},
\end{equation}
and the $(j \! + \! 1)$st packet arrives late enough such that $W_\rho(s_{j+1};R_{\rm o})=0$. Enforcing the condition in~\eqref{eq:Ij_2} with the lower values $\bar{f}(T_i)-\epsilon_{i,j}(\sigma_\ell)$ ensures that the overshoot ratio stays less than $\bar{f}(T_i)$ for all $t\geq b_{j}$.
In this implementation, for a given value of $\sigma_\ell \in \Sigma$,
the condition~\eqref{eq:canonical_constraint} is checked only
at $t = b_j(\sigma_\ell)$ and for $\gamma \in \{ T_1, \ldots, T_{\ell-1}\}$. These extra checks compared to Algorithm~\ref{alg:Stochastic regulator}, as stated in the following theorem and shown in Section~\ref{sec:numerical}, guarantee that there will be no violation of the constraint~\eqref{eq:canonical_constraint}.  

\begin{theorem}
The $(\sigma^*, \rho)$ regulator defined by 
\eqref{eq:cBj}, \eqref{eq:Ij_2}, and~\eqref{eq:basic_implementation}
produces an output traffic
stream that satisfies~\eqref{eq:canonical_constraint} for all $t \geq 0$.
\label{thm:overshoot_soln2}
\end{theorem}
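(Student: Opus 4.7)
The plan is to prove the slightly stronger statement $o_\gamma(t) \leq \bar f(\gamma)$ for every $(t,\gamma) \in [0,\infty) \times [T_1,T]$; since $\bar f \leq f$ on $[T_1,T]$ by the construction in \eqref{eq:f_bar}, this immediately yields \eqref{eq:canonical_constraint}. The argument is by strong induction on the packet index $j$, with hypothesis $\mathrm{(IH)}_j$ that the bound holds uniformly for $t \in [0,b_j]$ and $\gamma \in [T_1,T]$. The base case $j=1$ is immediate: Algorithm~\ref{alg:Stochastic regulator} leaves the first packet unshaped ($t_1 = s_1$), and by \eqref{eq: sigma rho regulator workload bound} the output workload on $[0,b_1]$ never exceeds $\delta \leq T_1$, so $o_\gamma \equiv 0$ on the relevant range.

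For the inductive step, write $\ell = \ell(j)$ for the index of $\sigma^*(j) = \sigma_\ell$ returned by \eqref{eq:basic_implementation}, and first establish $o_{T_i}(b_j) \leq \bar f(T_i)$ at every knot $T_i$. For $i \geq \ell$ this follows from $\mathrm{(IH)}_{j-1}$: equations \eqref{eq:Stochastic Regulator eq 4}--\eqref{eq:Stochastic Regulator eq 5} together with \eqref{eq: sigma rho regulator workload bound} keep the output workload below $T_\ell \leq T_i$ on $[t_j, b_j]$, so no fresh $T_i$-overshoot is generated during packet $j$'s own transmission, and any residual overshoot on $[b_{j-1}, t_j]$ belongs to the post-departure tail of packet $j{-}1$ that was already absorbed by the slack $\epsilon_{i,j-1}$. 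For $i \leq \ell-2$, membership $\ell \in \cI_j$ gives the sharpened bound $o_{T_i}(b_j) \leq \bar f(T_i) - \epsilon_{i,j}(\sigma_\ell)$, and for $i = \ell-1$ it gives $o_{T_{\ell-1}}(b_j) \leq \bar f(T_\ell) \leq \bar f(T_{\ell-1})$. The degenerate branch $\cI_j = \emptyset$ forces $\sigma^*(j) = \sigma_1$, reducing packet $j$'s shaping to a deterministic $(\sigma_1,\rho)$-regulator whose output workload is bounded by $T_1$, so no new overshoot is created for any $\gamma \geq T_1$.

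Next I would lift the bound at $t = b_j$ to every $t \in [b_j, b_{j+1}]$. On the post-departure sub-interval $[b_j, s_{j+1}]$ the workload decays linearly with slope $-\rho$, so the peak of $o_{T_i}(t)$ is attained at $t' = b_j + (W_\rho(b_j;R_{\rm o}) - T_i)^+/\rho$, when the workload reaches $T_i$ from above. A direct calculation of
\[
o_{T_i}(t') \;=\; \frac{b_j\, o_{T_i}(b_j) + (W_\rho(b_j;R_{\rm o}) - T_i)/\rho}{b_j + (W_\rho(b_j;R_{\rm o}) - T_i)/\rho}
\]
shows that $o_{T_i}(t') \leq \bar f(T_i)$ is algebraically equivalent to the first branch of \eqref{eq:epsilonij}; the second branch $\epsilon_{\ell-1,j} = \bar f(T_{\ell-1}) - \bar f(T_\ell)$ plays the analogous role at $i = \ell-1$, compensating for the additional rise of the workload up to $T_\ell$ during packet $j$'s transmission on $[t_j, b_j]$. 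This matches exactly the geometric reasoning sketched around Fig.~\ref{fig:OverShootRatioIncrease}. After $t'$ the overshoot ratio decays as $1/t$ until the arrival of packet $j{+}1$, at which point the next induction step takes over.

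The most delicate step, and the one I expect to be the main obstacle, is the discrete-to-continuous extension in the threshold coordinate: passing from $o_{T_i}(t) \leq \bar f(T_i)$ at the knots $T_i$ to $o_\gamma(t) \leq \bar f(\gamma)$ for every $\gamma \in [T_1, T]$. Monotonicity alone is insufficient, because $\bar f(T_i) > \bar f(\gamma)$ whenever $\gamma > T_i$, so the knot bound dominates $\bar f$ in the wrong direction inside $(T_i, T_{i+1})$. The plan is to exploit the fact that each triangular overshoot of peak $W^\ast$ contributes to $O_\gamma(t)$ a quantity $(W^\ast - \gamma)^+ \cdot C/[(C-\rho)\rho]$ that is affine decreasing in $\gamma$, so $\gamma \mapsto o_\gamma(t)$ is itself piecewise affine with explicit slopes; the construction of $\bar f$ in \eqref{eq:f_bar} as the sharper of the secant $g_i$ and the tangent $h_i$, selected according to the local concavity or convexity of $f$, is precisely what is needed to dominate this piecewise-affine $\gamma$-profile between adjacent knots. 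Executing this interval-matching rigorously will require partitioning the overshoots in $\mathcal{O}(t)$ by the location of their peaks within the partition $\{T_1 < \cdots < T_M\}$, and will likely occupy the bulk of the formal appendix argument.
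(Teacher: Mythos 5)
Your overall architecture --- induction over packets to establish the knot bounds $o_{T_i}(t)\leq\bar f(T_i)$ (with the two branches of $\epsilon_{i,j}$ accounting respectively for the post-departure decay to level $T_i$ and for the rise to $T_\ell$ during transmission), followed by an extension from the knots $T_i$ to all $\gamma\in[T_1,T]$ --- is the same as the paper's (Lemma~\ref{appxlem:overshoot_soln2_discreet T_i} followed by Lemma~\ref{appxlem:linearupperbound} and its corollaries), and your identification of the workload value $t_{j+1}(i)$ at which $o_{T_i}$ peaks matches the paper's computation. However, there is a genuine gap in the step you yourself flag as delicate. Your claim that each overshoot contributes $(W^\ast-\gamma)^+\,C/[(C-\rho)\rho]$ to $O_\gamma(t)$, so that $\gamma\mapsto o_\gamma(t)$ is a sum of affine-decreasing ramps dominated by the secant of $\bar f$ between adjacent knots, is only valid for excursions that cross level $\gamma$ both on the way up and on the way down. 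The configuration that actually controls the worst case is a sawtooth in which the output workload oscillates between $\sigma_{m+1}=T_{m+1}-\delta$ and $T_{m+1}$ without ever descending to $T_m$: each departure can raise the workload by up to $\delta$, so such an oscillation can persist indefinitely. For any $\gamma\in[T_m,\sigma_{m+1})$ this path spends \emph{all} of its time above $\gamma$, so $o_\gamma(t)$ can be as large as $o_{T_m}(t)$ with no decrease in $\gamma$ whatsoever; the profile of $o_\gamma(t)$ between knots is therefore flat on $[T_m,\sigma_{m+1})$ and only begins to decrease linearly on $[\sigma_{m+1},T_{m+1})$, with the interpolation anchored at $\sigma_{m+1}$ and $T_{m+1}$ rather than at $T_m$ and $T_{m+1}$. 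Since $f$ is decreasing, the constant bound $\bar f(T_m)$ on $[T_m,\sigma_{m+1})$ is \emph{not} automatically below $f(\gamma)$ there, so your secant/tangent argument does not close.

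The paper resolves this in two ways that your sketch would need to absorb: either take $M=M_{\max}$ so that $\sigma_{m+1}=T_m$ and the flat subinterval is empty (Corollary~\ref{appxcorollary:1}), or, for $M<M_{\max}$, replace $\bar f$ by the modified piecewise function of~\eqref{eq:f_barnew}, which is constant at $f(\sigma_{m+1})$ on $[T_m,\sigma_{m+1})$ precisely so that the flat worst-case profile is still dominated by $f$ (this is the ``slightly different definition of $\bar f$'' mentioned in the footnote to the main text). The paper's Lemma~\ref{appxlem:linearupperbound} formalizes the sawtooth worst case by introducing the two-threshold overshoot duration $O_{\zeta_1,\zeta_2}$ and maximizing $O_{\gamma,T_{m+1}}$ subject to a fixed $O_{T_m,T_{m+1}}=c$ over complete fluctuation units of length $\Delta t=(T_{m+1}-\sigma_{m+1})(1/\rho+1/(C-\rho))$; some version of that extremal argument is unavoidable, and your triangular-decomposition shortcut skips it.
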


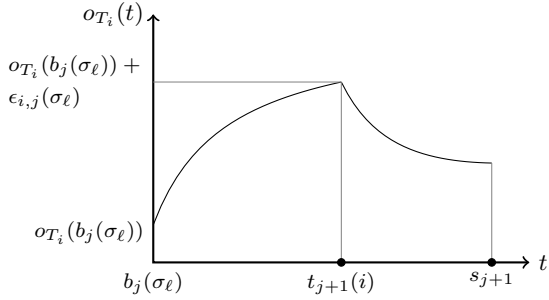
\begin{figure}
\centering
\begin{tikzpicture}
\draw [thick,<->] (5,0)--(0,0)--(0,3.3);
\draw [domain=0:2.5] plot (\x, {(5+31.6*\x)/(10+10*\x)});
\draw [domain=2.5:4.5] plot (\x, {30/(12.5+12*(\x-2.5))+0.25*(\x-2.5)});
\draw [help lines] (2.5,0) -- (2.5,2.4);
\draw [help lines] (4.5,0) -- (4.5,1.3219);
\draw [fill](2.5,0) circle [radius=0.05];
\draw [fill](4.5,0) circle [radius=0.05];
\draw [help lines] (0,2.4) -- (2.5,2.4);
\node [below] at (0,0) {\begin{footnotesize}$b_{j}(\sigma_\ell)$\end{footnotesize}};
\node [below] at (2.5,0) {\begin{footnotesize}$t_{j+1}(i)$\end{footnotesize}};
\node [below] at (4.5,0) {\begin{footnotesize}$s_{j+1}$\end{footnotesize}};
\node [left] at (0,0.4) {\begin{footnotesize}$o_{T_i}(b_j(\sigma_\ell))$\end{footnotesize}};
\node [left, text width=1.8cm] at (0,2.4) {\begin{footnotesize}$o_{T_i}(b_j(\sigma_\ell))+\epsilon_{i,j}(\sigma_\ell)$\end{footnotesize}};
\node [left] at (0,3.3) {\begin{small}$o_{T_i}(t)$\end{small}};
\node [right] at (5,0) {\begin{small}$t$\end{small}};
\end{tikzpicture}
\caption{Overshoot ratio $o_{T_i}(t)$ for $t>b_j$, when $W_\rho(s_{j+1};R_{\rm o})=0$.}
\label{fig:OverShootRatioIncrease}
\end{figure}

\alglanguage{pseudocode}
\begin{algorithm}
\caption{Replacement for lines 14--17 of Algorithm~\ref{alg:Stochastic regulator}} 
\begin{algorithmic}[1] 
\setcounterref{ALG@line}{alg1:line13}
\State Compute $o_{T_{i}}(b_j(\sigma_\ell))$; $i=1,\ldots,\ell-1$ \Comment{Prop.~\ref{prop:overshoot}}
\State Compute $\epsilon_{i,j}(\sigma_\ell)$ ; $i=1,\ldots,\ell-1$\Comment{\eqref{eq:epsilonij}}
\If {$o_{T_{i}}(b_j) \leq \bar{f}(T_{i})-\epsilon_{i,j}(\sigma_\ell)~\forall i\in\{1,\ldots,\ell \! - \! 1\}$}
\State $\mbox{found} \gets \mbox{\bf true}$; {\bf break}
\EndIf
\end{algorithmic}
\label{alg:Stochastic regulator modified} 
\end{algorithm}

See Appendix~\ref{appx:overshoot_soln2} for a proof of Theorem~\ref{thm:overshoot_soln2}.
 By modifying Algorithm~\ref{alg:Stochastic regulator} in accordance with Theorem~\ref{thm:overshoot_soln2},
we obtain an alternative implementation that
satisfies~\eqref{eq:canonical_constraint} for all $t \geq 0$
at the expense of some additional computation. 
The modified implementation is obtained by replacing lines 15--18 
in Algorithm~\ref{alg:Stochastic regulator} with the pseudo-code 
shown in  Algorithm~\ref{alg:Stochastic regulator modified}.
In lines 15 and 16, $\ell-1$ values of $o_{T_{i}}(b_j(\sigma_\ell))$ and $\epsilon_{i,j}(\sigma_\ell)$ need to be computed. Therefore, the complexity of the {\bf for} loop at line 12
in Algorithm~\ref{alg:Stochastic regulator} is $O(M^2)$ and the overall complexity of the modified algorithm is $O(M^2)$ per packet. Parallel implementations of the {\bf for} loop in line 11
of Algorithm~\ref{alg:Stochastic regulator}, and lines 15 and 16 in Algorithm~\ref{alg:Stochastic regulator modified}, can bring the overall time complexity down to $O(1)$ per packet.

With further algorithmic 
modifications, the complexity of Algorithm~\ref{alg:Stochastic regulator modified} 
can be reduced to $O(M)$, i.e., the same time complexity 
as Algorithm~\ref{alg:Stochastic regulator}.  Let $\cB_j$ again be as in \eqref{eq:cBj}.
Let $k = \min \cB_j$ and
\begin{align}
\mathcal{J}_j\!=\!\left\lbrace 1 \leq \ell \leq  k-1\!:\!o_{T_{\ell}}(b_j(\sigma_k)) \! \leq\! \bar{f}(T_{\ell})-\epsilon_{\ell,j}(\sigma_k)\right\rbrace\!,  
\label{eq:Jj}
\end{align}
where $\epsilon_{i,j}(\sigma_k)$ is defined in~\eqref{eq:epsilonij}. 
If $1\in\mathcal{J}_j$ let
\begin{align}
m = \max \left \{ \ell\in\mathcal{J}_j  : i \in \mathcal{J}_j, ~\forall 1 \leq i \leq \ell  \right \},
\label{eq:m}
\end{align}
and let
\begin{align}
 \mathcal{K}_j=\left\lbrace 2\leq \ell \leq  m+1: o_{T_{\ell-1}}(b_j(\sigma_\ell)) \leq \bar{f}(T_{\ell})\right\rbrace,  
 \label{eq:Kj}
\end{align}
where $b_j(\sigma_\ell)$ and $o_{T_{\ell-1}}(b_j(\sigma_\ell))$ given as follows:
\begin{align}
 &b_j(\sigma_\ell) =\tilde{s}_j+ (W_\rho(\tilde{s}_j;R_1)-\sigma_\ell)/\rho + L_j/C ,
 \label{eq:bj(sigma_l)} \\
 &b_j(\sigma_\ell)o_{T_{\ell-1}}(b_j(\sigma_\ell)) =  b_j(\sigma_k)o_{T_{\ell-1}}(b_j(\sigma_k))
 \!\nonumber\\
 &\hspace{12em}+ \!  (W_\rho(\tilde{s}_j;R_1) \! - \! \sigma_\ell)/\rho .
 \label{eq:Overshootattj(sigma_l)} 
\end{align}
We now present a third implementation of the 
canonical $(\sigma^*, \rho)$ regulator given by
\begin{align}
    \sigma^*(j) =\left\lbrace\begin{array}{ll}
        \sigma_{\max \mathcal{K}_j},  &  \mbox{if}~1\in\mathcal{J}_j~~\text{and}~~ \mathcal{K}_j\neq \emptyset, \\
        \sigma_1, & \text{otherwise}.
    \end{array}\right.
    \label{eq:implementation_3}
\end{align}

\begin{theorem}
\label{thm:reducing complexity}
The $(\sigma^*, \rho)$ regulator defined by 
\eqref{eq:cBj} and \eqref{eq:Jj}--\eqref{eq:implementation_3} 
produces the same output stream as the $(\sigma^*, \rho)$ regulator
of Theorem~\ref{thm:overshoot_soln2} for a given input stream and hence the output
stream satisfies~\eqref{eq:canonical_constraint} for all $t \geq 0$.
\end{theorem}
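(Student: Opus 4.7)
My plan is to proceed by induction on the packet index $j$. The inductive hypothesis is that the two regulators produce identical outputs through packet $j-1$; in particular $b_{j-1}$, $\tilde{s}_j$, $W_\rho(\tilde{s}_j;R_1)$, all workloads on $[0,b_{j-1}]$, and all overshoot durations $O_{T_i}(b_{j-1})$ are common to both, and the canonical constraint~\eqref{eq:canonical_constraint} holds on $[0,b_{j-1}]$ by Theorem~\ref{thm:overshoot_soln2}. The inductive step thus reduces to showing that $\sigma^*(j)$ is the same under both selection rules.

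The workhorse is a monotone invariance derived from \eqref{eq:bj(sigma_l)} together with the natural extension of \eqref{eq:Overshootattj(sigma_l)} to arbitrary thresholds $T_i$: for any $T_i<\sigma$ in the truncation regime, decreasing $\sigma$ postpones $b_j(\sigma)$ by $\rho^{-1}$ times the decrease in $\sigma$ and adds exactly that same amount to $O_{T_i}(b_j(\sigma))$, because the output workload stays above $T_i$ throughout the additional delay. Consequently the ``virtual'' downcrossing time $\tau_i$ at which $W_\rho(\cdot;R_{\rm o})$ would return to $T_i$ after packet~$j$ and the value $O_{T_i}(\tau_i)$ are both invariant in $\sigma$. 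A short algebraic rearrangement then shows that the inequality $o_{T_i}(b_j(\sigma))\leq\bar{f}(T_i)-\epsilon_{i,j}(\sigma)$, using the first branch of \eqref{eq:epsilonij}, is equivalent to the $\sigma$-independent inequality $o_{T_i}(\tau_i)\leq\bar{f}(T_i)$.

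Applying this to the defining system of $\cI_j$ in \eqref{eq:Ij_2} for a fixed candidate $\sigma_\ell$ decomposes membership as follows. The $\ell-2$ constraints with $i=1,\ldots,\ell-2$ invoke the first branch of \eqref{eq:epsilonij} and are each equivalent, via the invariance applied successively at $\sigma_\ell$ and at $\sigma_k$, to the condition $i\in\cJ_j$ of \eqref{eq:Jj}; the single remaining constraint at $i=\ell-1$ uses the second branch and, since $\bar{f}(T_{\ell-1})-\epsilon_{\ell-1,j}(\sigma_\ell)=\bar{f}(T_\ell)$, collapses to $o_{T_{\ell-1}}(b_j(\sigma_\ell))\leq\bar{f}(T_\ell)$, i.e., $\ell\in\cK_j$. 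Hence $\ell\in\cI_j$ is equivalent to $\{1,\ldots,\ell-2\}\subseteq\cJ_j$ together with $\ell\in\cK_j$.

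To finish, I would compare with the third implementation's rule \eqref{eq:implementation_3}, which selects the largest $\ell\in\cK_j$ with $\ell\leq m+1$; unpacking the definition of $m$ in \eqref{eq:m}, this restriction amounts to $\{1,\ldots,\ell-1\}\subseteq\cJ_j$, exceeding the $\cI_j$ requirement by only the single extra condition $\ell-1\in\cJ_j$. I would close this gap by using the $\cK_j$ bound $o_{T_{\ell-1}}(b_j(\sigma_\ell))\leq\bar{f}(T_\ell)<\bar{f}(T_{\ell-1})$ together with the inductive hypothesis that the canonical constraint holds at $b_{j-1}$, propagated through the idle interval and the packet-$j$ departure to the virtual time $\tau_{\ell-1}$, to conclude $o_{T_{\ell-1}}(\tau_{\ell-1})\leq\bar{f}(T_{\ell-1})$, which by the invariance is precisely $\ell-1\in\cJ_j$. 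I expect this propagation to be the main obstacle, since it requires a careful workload-curve accounting across $[b_{j-1},\tau_{\ell-1}]$ rather than an algebraic substitution. Once it is in hand, $\max\cI_j=\max(\cK_j\cap\{2,\ldots,m+1\})$, the two regulators pick the same $\sigma^*(j)$, the output streams coincide on $[0,b_j]$, the induction closes, and satisfaction of \eqref{eq:canonical_constraint} for all $t\geq 0$ is inherited from Theorem~\ref{thm:overshoot_soln2}.
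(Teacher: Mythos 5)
Your core machinery is the same as the paper's: the observation that lowering $\sigma$ delays $b_j(\sigma)$ by $\rho^{-1}$ times the decrease while adding exactly that amount to $O_{T_i}$ and subtracting it from $W_\rho(b_j(\sigma);R_{\rm o})$ is precisely the content of the paper's Lemma~C.1 (equations \eqref{eq:bjincrease}--\eqref{eq:WorkloadDecrease}), and your repackaging of it as ``the virtual downcrossing time $\tau_i$ and $O_{T_i}(\tau_i)$ are $\sigma$-invariant, so the $\epsilon$-shifted test at $b_j(\sigma)$ is equivalent to the plain test $o_{T_i}(\tau_i)\leq\bar f(T_i)$'' is a cleaner, two-directional statement of the same fact. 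Your decomposition $\ell\in\cI_j\iff\{1,\ldots,\ell-2\}\subseteq\cJ_j\ \wedge\ \ell\in\mathcal{K}_j$ and the identification of $\ell\leq m+1$ with $\{1,\ldots,\ell-1\}\subseteq\cJ_j$ are both correct and make the comparison between the two selection rules more transparent than the paper's two-case argument.

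The genuine gap is exactly where you predicted it, and your proposed repair does not close it. You need: if $\ell\in\mathcal{K}_j$ and $\{1,\ldots,\ell-2\}\subseteq\cJ_j$, then $\ell-1\in\cJ_j$ (otherwise the Theorem~2 rule could select a strictly larger index than $\max(\mathcal{K}_j\cap\{2,\ldots,m+1\})$). For $\ell<k$ this is, by your own invariance, the statement $o_{T_{\ell-1}}(\tau_{\ell-1})\leq\bar f(T_{\ell-1})$. From $\ell\in\mathcal{K}_j$ you get $o_{T_{\ell-1}}(b_j(\sigma_\ell))\leq\bar f(T_\ell)$, and moving from $b_j(\sigma_\ell)$ to $\tau_{\ell-1}$ increases the ratio by at most $\Delta/b_j$ with $\Delta=(\sigma_\ell+\delta_j-T_{\ell-1})/\rho\leq(T_\ell-T_{\ell-1})/\rho$; so the best you obtain is $o_{T_{\ell-1}}(\tau_{\ell-1})\leq\bar f(T_\ell)+\Delta/b_j$, which exceeds $\bar f(T_{\ell-1})$ unless $b_j\geq\Delta/(\bar f(T_{\ell-1})-\bar f(T_\ell))$. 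The other ingredient you invoke, the inductive bound $o_{T_{\ell-1}}(b_{j-1})\leq\bar f(T_{\ell-1})$, only degrades when propagated across $[b_{j-1},\tau_{\ell-1}]$ on which the workload may sit above $T_{\ell-1}$ throughout, so it cannot supply the missing margin either. Hence the implication you need is not provable by the sketched means for small $t$, and your proof of exact agreement of the two output streams (claimed for all $t\geq0$) is incomplete. It is worth noting that the paper's own proof is silent on this same direction: its Case~2 only verifies $n\in\cI_j$ (so the Theorem~2 rule selects \emph{at least} $\sigma_n$) and then asserts equality, so you have in fact isolated the one step that neither argument fully discharges; but as written your proposal does not supply a valid substitute for it.
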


\alglanguage{pseudocode}
\begin{algorithm}[h]
\caption{Replacement for lines 11--18 of Algorithm~\ref{alg:Stochastic regulator}} 
\begin{algorithmic}[1] 
\setcounterref{ALG@line}{alg1:line10}
\State $m \gets 0$; $\mbox{found} \leftarrow \mbox{\bf false}$
\For{$\ell = 1, \ldots, k-1$} \Comment{$k \geq 2$}
   \State  $\sigma \gets \sigma_k$; Compute $t_j(\sigma)$, $b_j(\sigma)$, \Comment{\eqref{eq:tj_R1},~\eqref{eq:bj}}
   \State Compute $\epsilon_{\ell,j}(\sigma)$, $o_{T_{\ell}}(b_j)$    \Comment{\eqref{eq:epsilonij},~Prop.~\ref{prop:overshoot}}
   \If {$o_{T_{\ell}}(b_j) > \bar{f}(T_{\ell})-\epsilon_{\ell,j}(\sigma)$} \Comment{\eqref{eq:Jj}}
       \State {\bf break}
   \EndIf
   \State $m\gets m+1$
\EndFor
\For {$\ell=m+1,\ldots,2$}
   \State $\sigma \gets \sigma_\ell$; Compute $b_j(\sigma)$,  $o_{T_{\ell-1}}(b_j)$ 
        \Comment{\eqref{eq:bj(sigma_l)},~\eqref{eq:Overshootattj(sigma_l)}}
   \If {$o_{T_{\ell-1}}(b_j) \leq \bar{f}(T_{\ell})$}  \Comment{\eqref{eq:Ij_2}}
        \State $\mbox{found} \gets \mbox{\bf true}$; {\bf break}
   \EndIf
\EndFor
\end{algorithmic}
\label{alg:Stochastic regulator modified reduced complexity} 
\end{algorithm}

A proof of Theorem~\ref{thm:reducing complexity} is given in Appendix~\ref{appx:reducing complexity}.
The $(\sigma^*, \rho)$ regulator corresponding to 
Theorem~\ref{thm:reducing complexity} can be implemented by
replacing lines 11-18 in Algorithm~\ref{alg:Stochastic regulator} with the lines shown in
in Algorithm~\ref{alg:Stochastic regulator modified reduced complexity}.  The {\bf for} loops at lines 11 and
19 in Algorithm~\ref{alg:Stochastic regulator modified reduced complexity} both have complexity $O(M)$.
Therefore, the overall complexity of Algorithm~\ref{alg:Stochastic regulator modified reduced complexity}
is $O(M)$ per packet. Similar to Algorithm~\ref{alg:Stochastic regulator} with a suitable 
parallel implementation, the complexity per packet can be further reduced to $O(1)$.

\section{Numerical Results}
\label{sec:numerical}

We consider a system in which the packets sizes $L_j$ are drawn randomly according to
\begin{equation}
L_j\sim \mbox{U} \{ L_{\rm min}, L_{\min} \! + \! 1, \ldots, L_{\rm max} \}  ,
\label{eq:packet size distribution}
\end{equation}
where $\mbox{U}(\mathcal{A})$ denotes a uniform distribution over the set $\mathcal{A}$.
The inter-arrival times of the packets, $s_{j+1}-s_j$, are determined as follows:
\begin{align}
s_{j+1}-s_j\sim U_j+ L_j/C ,
\label{eq:interarrival equation}
\end{align}
where $U_j \sim \mbox{Exp}(\lambda)$, i.e., $\{ U_j \}$ is an i.i.d.\ sequence of
exponentially distributed random variables with parameter $\lambda$.
By adopting~\eqref{eq:interarrival equation} to model the inter-arrival times,
we ensure that packets are received after the previous ones have been fully received, i.e.,
the packets will not overlap with each other.
In a system described by~\eqref{eq:packet size distribution}--\eqref{eq:interarrival equation}, $\rho^{-1}W_\rho(s_j;R_{\rm i})$
is equal to the waiting time 
experienced by the $j$th customer in a $G/G/1$ system in which 
the service time of the $j$th customer is given by 
$S_j=(\rho^{-1}-C^{-1})L_j$ and the inter-arrival time between the $j$th and
$(j+1)$st customer is $U_j$~\cite{Cruz1991a,Kleinrock:1976}. 

In this example, we set $L_{\rm min}=5$, $L_{\rm max}=10$, and $\lambda=0.25$,  and 
$\rho=0.65$. We use the following bounding function: 
\begin{equation}
 f(\sigma) := \left\{\begin{array}{ll}
    -2.5 \times 10^{-3} \sigma + 1,  & 0 \leq \sigma \leq 40 ,   \\
    -5\times 10^{-3} \sigma + 1.1,  & 40 < \sigma \leq T=200 . 
 \end{array} \right.
 \label{eq:f(sigma)}
\end{equation}
In Fig.~\ref{fig:Stochastic Traffi Regulator effect}, $\bar{f}$ is defined by approximating
$f$ by a piecewise-linear function 
according to~\eqref{eq:f_bar} 
with $M=20$, $T_M=400$ and $T_{i+1}-T_{i}=20$ for $i=1,\ldots,M-2$. Note that, as $f(\gamma)$ is also piecewise-linear, $\bar{f}(\gamma)=f(\gamma)$ for $\gamma\in[T_1,T]$. Observe that the output traffic is shaped to satisfy the desired bound. 

\begin{figure}
\centering
\includegraphics[scale=0.45]{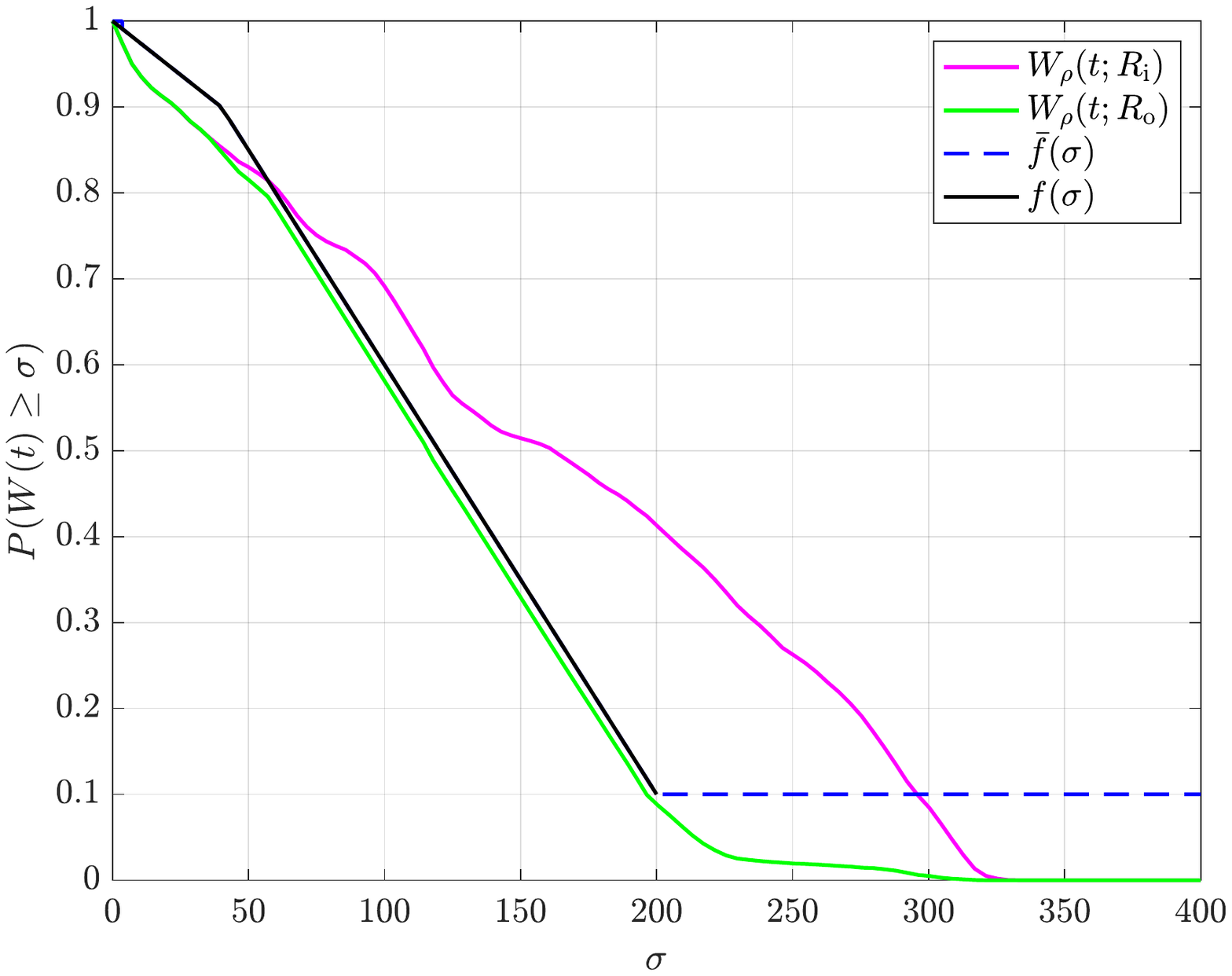}
\caption{Performance of the stochastic $(\sigma^*, \rho)$ traffic regulator.}
\label{fig:Stochastic Traffi Regulator effect}
\end{figure}

Using the same model for inter-arrival and packet lengths, we have investigated the impact
of the parameter $M$ on traffic shaping of the input traffic. From Fig.~\ref{fig:stochastic traffic regulator number of stages}, we see that as $M$ is increased, a tighter fit of the output traffic
to the desired bound can be achieved. In our example, the maximum possible value of $M$,
given by \eqref{eq:M_max_value}, is $M_{\max}=56$, for which a very tight fit to the bound 
is achieved.
Figs.~\ref{fig:Stochastic Traffi Regulator effect} and~\ref{fig:stochastic traffic regulator number of stages} were obtained using Algorithm~\ref{alg:Stochastic regulator modified reduced complexity}. 

Table~\ref{table:stochastic traffic regulator stages} presents
 the average delay and standard deviation of the delay for the packets 
 using Algorithms~\ref{alg:Stochastic regulator} and~\ref{alg:Stochastic regulator modified reduced complexity}. Note that as $M$ increases the average delay decreases and the standard deviation of the packet delay also decreases. These results are expected, since
an increase in $M$ implies that the delay incurred on a packet 
can increase in smaller increments, resulting in smaller overall variance. In addition,
a larger value of $M$ results in a smaller average delay 
since there are more smaller choices of delay
for a packet in order to maintain the burstiness bound. Algorithm~\ref{alg:Stochastic regulator modified reduced complexity} slightly outperforms Algorithm~\ref{alg:Stochastic regulator} for larger values of $M$,
in particular, $M=56$, as shown in Table~\ref{table:stochastic traffic regulator stages}.

\begin{figure}
\centering
\includegraphics[scale=0.45]{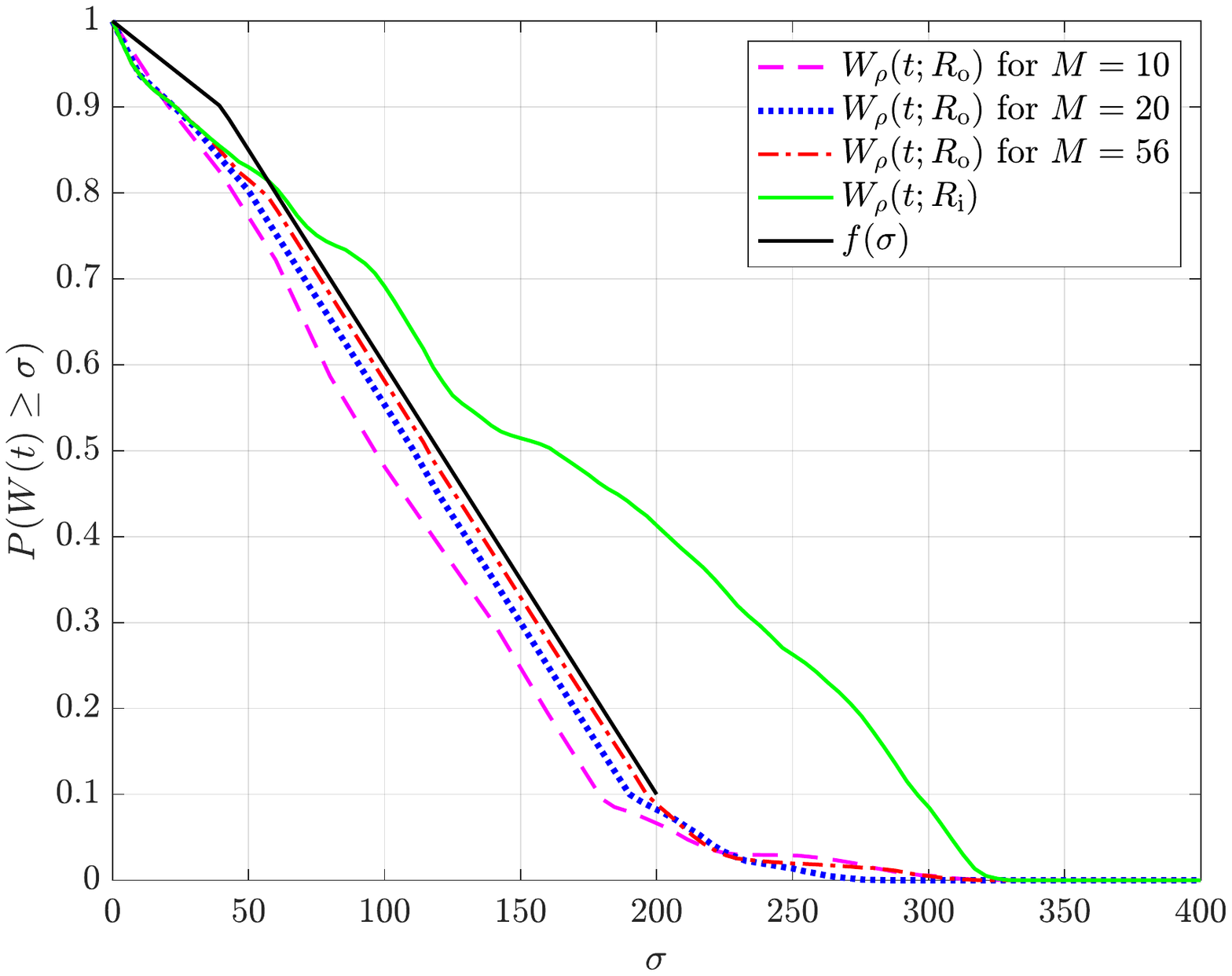}
\caption{Traffic regulator performance with different $M$ values.}
\label{fig:stochastic traffic regulator number of stages}
\end{figure}

\begin{table}
\centering
\caption{Traffic shaping delay with different $M$ values for Algorithm~\ref{alg:Stochastic regulator} and Algorithm~\ref{alg:Stochastic regulator modified reduced complexity}.}
\label{table:stochastic traffic regulator stages}
\begin{tabular}{|c|cc|cc|}
 \cline{2-5} 
\multicolumn{1}{c|}{} &  \multicolumn{2}{c}{Average Delay}& \multicolumn{2}{|c|}{Std.\ Dev.\ of Delay}\\
    \hline
    $M$ &Alg.\ 1 & Alg.\ 3 & Alg.\ 1 & Alg.\ 3 \\
    \hline
     10 & 89&  89 & 115&115 \\
     20 & 78&78  & 109&109   \\
     56 & 72&71  & 100&99\\
     \hline
\end{tabular}
\end{table}

The main advantage of Algorithm~\ref{alg:Stochastic regulator modified reduced complexity} is that
the constraint on the bounding function~$f$ is guaranteed to hold for all values of $t \geq 0$, whereas
some violations may occur using Algorithm~\ref{alg:Stochastic regulator} for small values of $t$.
On the other hand, Algorithm~\ref{alg:Stochastic regulator} is somewhat simpler from an implementation
point of view.
In Fig.~\ref{fig:OvershootRatioFunctioninTime} the overshoot ratio $o_{T_{17}}(t)$ vs.\ $t$ is shown for Algorithms~\ref{alg:Stochastic regulator} and~\ref{alg:Stochastic regulator modified reduced complexity}. In Fig.~\ref{fig:OvershootRatioFunctioninTime}, some violations of~\eqref{eq:Stochastic (sigma,rho) for jth packet} occur with Algorithm~\ref{alg:Stochastic regulator} but there are no violations with
Algorithm~\ref{alg:Stochastic regulator modified reduced complexity}.

\begin{figure}
\centering
\includegraphics[scale=0.45]{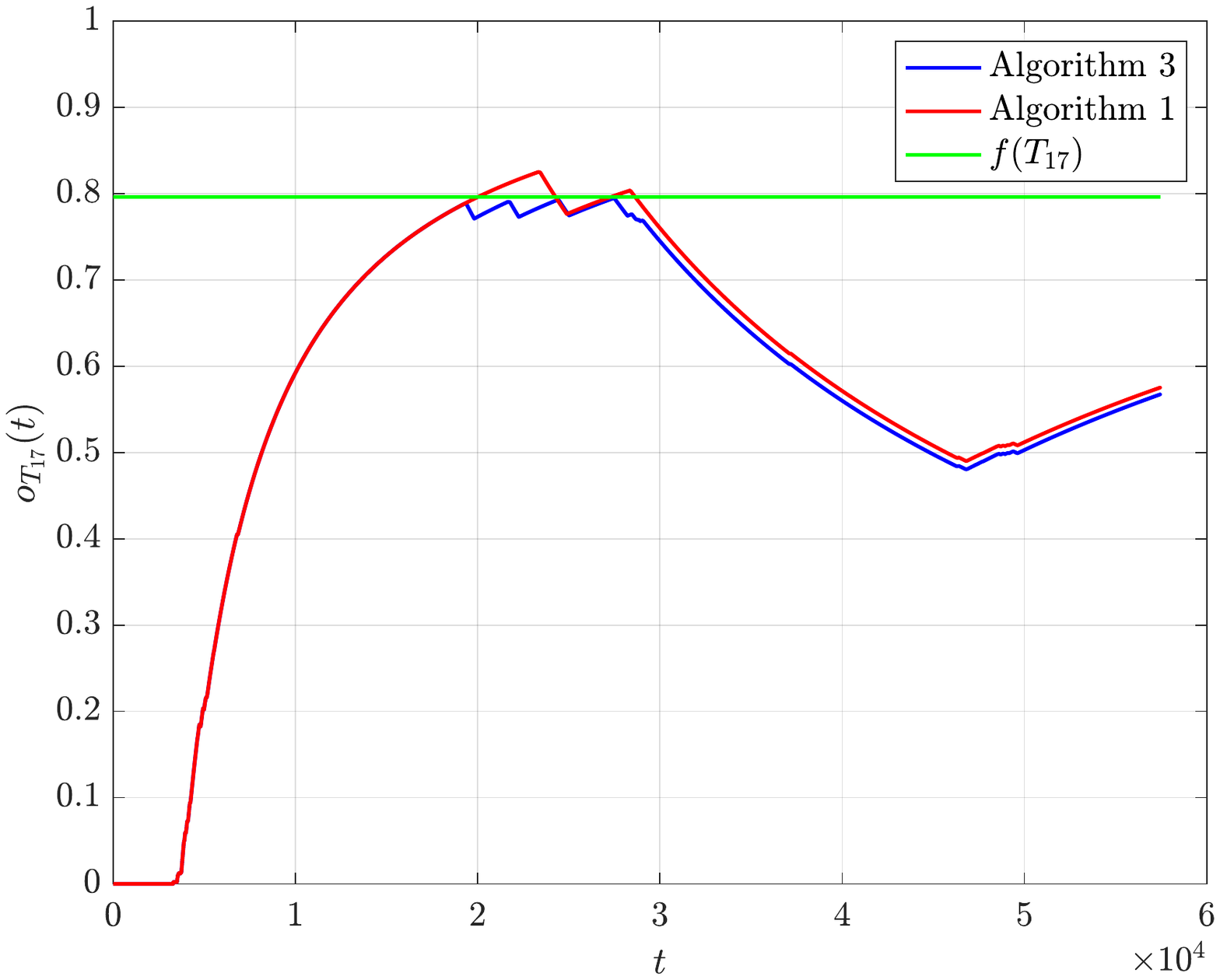}
\caption{Overshoot ratio $o_{T_{17}}(t)$ is shown vs.\ $t$ when $M=56$ for Algorithms~\ref{alg:Stochastic regulator} and~\ref{alg:Stochastic regulator modified reduced complexity}.}
\label{fig:OvershootRatioFunctioninTime}
\end{figure}

\section{Conclusion}
\label{sec:conclusion}

The stochastic traffic regulator developed in this paper addresses an open problem in the application
of stochastic network calculus to real networks.  The
validity of the stochastic end-to-end delay bounds determined via stochastic network calculus relies
on the assumption that all input traffic streams conform to certain stochastic burstiness bounds~\cite{Yaron1994,Starobinski2000a,JIANG20092011}.  Without a means of enforcing such bounds,
on the input traffic, the end-to-end delay bounds cannot be guaranteed in practice.  Given an input
traffic stream, our proposed $(\sigma^*, \rho)$ regulator inserts delays, as necessary, 
to ensure that the output traffic stream conforms to the following stochastic burstiness constraint:
when the traffic is offered to a queue with constant service rate $\rho$, the virtual workload of the queue
exceeds a threshold $\gamma$ with probability is less than $f(\gamma)$ for all $\gamma$ in a given range, 
where $f$ is a non-increasing bounding function. This is essentially the gSBB characterization from~\cite{JIANG20092011}.

Our proposed canonical $(\sigma^*, \rho)$ regulator consists
of a finite set $\Sigma = \{ \sigma_1, \ldots, \sigma_M \}$ of burstiness parameter
values and an associated piecewise-linear function $\bar{f}$ that
approximates the bounding function~$f$.  The operation of the $(\sigma^*, \rho)$ regulator is similar
to that of a deterministic $(\sigma, \rho)$ regulator, except that the
burstiness parameter is variable and is computed for each arriving packet in such a way as 
to achieve the desired bound.  Practical implementations of the canonical regulator were developed and their key properties were established.  Assuming stationarity and ergodicity
of the input traffic, all of the implementations enforce comformance of the output traffic to the stochastic burstiness bound~\eqref{eq: Stochastic (sigma,rho)} in steady-state.  Algorithm~\ref{alg:Stochastic regulator modified reduced complexity}
has the additional property that the overshoot ratio, i.e., the fraction of
time that virtual workload exceeds $\gamma$ is less than $f(\gamma)$ at {\em all} times
for all $\gamma$ in the considered range.  

The performance characteristics of our $(\sigma^*, \rho)$ regulator 
implementations were demonstrated through numerical examples
using a particular bounding function.  With larger $M$, the approximation of $f$ by $\bar{f}$
becomes more accurate and consequently, the mean and standard deviation of packet delay decreases.
The $(\sigma^*, \rho)$ regulator could also be applied in conjunction with 
the phase-type bounds proposed in~\cite{KordiCISS2019}.
A method for fitting a phase-type bounding function to a given
traffic source is developed in~\cite{KordiCISS2020}.

\appendices
\section{Proof of Proposition~\ref{prop:deterministic delay w buffer}}
\label{appx:deterministic delay w buffer}
\numberwithin{equation}{section}
\setcounter{equation}{0}

We first establish the following lemma\footnote{For notational convenience
we drop the subscript $\rho$ when referring to workload functions $W_\rho(\cdot; \cdot)$.}.

\begin{appxlemma}
\label{appxlem:workload}
\begin{align}
    W(\tilde{s}_j;R_1)=W(s_j;R_{\rm i})-(\tilde{s}_j-s_j)\rho
    \label{eq:sj_and_stildej}
\end{align}
\begin{proof}
We prove~\eqref{eq:sj_and_stildej} using induction. For $j=1$, i.e., the first
packet arrival, $\tilde{s}_1=s_1$ and $W(\ts_j; R_1) = W(s_j; R_{\rm i})=0$, so~\eqref{eq:sj_and_stildej}
holds in this case.  Assuming~\eqref{eq:sj_and_stildej} is valid for the $j$th packet, we now verify 
that it holds for the $(j+1)$st packet.
Note that in the interval $(\ts_j, \ta_j)$, the workload function $W(t; R_1)$ increases linearly with slope $C - \rho$
by an amount $\delta_j$ (see \eqref{eq:delta_j}) and then decreases linearly with slope $-\rho$ in the
interval $(\ta_j, \ts_{j+1})$ (see \eqref{eq:Stochastic Regulator eq 2} and \eqref{eq:Stochastic Regulator eq 3}).  Hence,
\begin{align}
W(\ts_{j+1}, R_1) =  W(\ts_j , R_1) + \delta_j - \rho(\ts_{j+1} - \ta_j) .
\label{eq:W_ts_R1}
\end{align}
By a similar argument (see \eqref{eq:Regulator development eq 1}-\eqref{eq:Regulator development eq 2}),
\begin{align}
W(s_{j+1}, R_{\rm i}) =  W(s_j , R_{\rm i}) + \delta_j - \rho(s_{j+1} - a_j) .
\label{eq:W_s_Ri}
\end{align}
Next, we apply first \eqref{eq:sj_and_stildej} and then \eqref{eq:W_s_Ri} into \eqref{eq:W_ts_R1} and re-arrange
terms to obtain
\begin{align}
 W(\ts_{j+1}, R_1) &=  
 W(s_{j+1}, R_{\rm i}) - \rho(\ts_{j+1} - s_{j+1}) \nonumber \\
 & ~~ + [(\ta_j - \ts_j) - (a_j - s_j)] \rho .
 \label{eq:W_ts_j1}
\end{align}
The last term in \eqref{eq:W_ts_j1} vanishes, since $\ta_j - \ts_j = a_j - s_j = L_j/C$.  Thus,
we have established \eqref{eq:sj_and_stildej} using mathematical induction.
\end{proof}
\end{appxlemma}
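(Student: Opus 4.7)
The plan is to prove the identity by induction on the packet index~$j$. The base case $j=1$ is immediate: since the front-end buffer is empty at time~$s_1$, the first packet enters the server without delay, so $\ts_1 = s_1$ and $W(\ts_1; R_1) = W(s_1; R_{\rm i}) = 0$, making both sides of \eqref{eq:sj_and_stildej} equal to zero.

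For the inductive step, I would assume \eqref{eq:sj_and_stildej} holds at packet~$j$ and track how each workload evolves from the $j$th ``reference time'' to the $(j{+}1)$st. The key observation is that the two workload trajectories have the same piecewise-linear structure, differing only in a time-shift corresponding to the buffer delay. Using \eqref{eq:Regulator development eq 1}-\eqref{eq:Regulator development eq 2} for $R_{\rm i}$, the workload climbs with net slope $C-\rho$ during the reception interval $[s_j, a_j]$, contributing the increment $\delta_j = (1-\rho/C)L_j$, and then decays at slope $-\rho$ on $[a_j, s_{j+1}]$, yielding
\begin{equation*}
W(s_{j+1}; R_{\rm i}) = W(s_j; R_{\rm i}) + \delta_j - \rho(s_{j+1} - a_j).
\end{equation*}
Applying \eqref{eq:Stochastic Regulator eq 2}-\eqref{eq:Stochastic Regulator eq 3} for $R_1$ gives the analogous identity
\begin{equation*}
W(\ts_{j+1}; R_1) = W(\ts_j; R_1) + \delta_j - \rho(\ts_{j+1} - \ta_j).
\end{equation*}

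Subtracting these two expressions, the $\delta_j$ terms cancel. Invoking the inductive hypothesis to replace $W(\ts_j; R_1) - W(s_j; R_{\rm i})$ by $-\rho(\ts_j - s_j)$, and using the basic fact $\ta_j - \ts_j = a_j - s_j = L_j/C$ (same packet length, same link capacity), the $L_j/C$ contributions cancel as well, and after collecting the $\rho$-terms one obtains $W(\ts_{j+1}; R_1) - W(s_{j+1}; R_{\rm i}) = -\rho(\ts_{j+1} - s_{j+1})$, completing the induction.

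The main delicate point I expect to contend with is that the recurrences for the two workloads above are written without the $[\,\cdot\,]^{+}$ in \eqref{eq:Regulator development eq 1}, so they are only strictly valid when neither workload is driven to zero in the idle interval. To handle this cleanly, I would argue that whenever $W(t; R_{\rm i})$ empties at some time in $[a_j, s_{j+1}]$, the corresponding $W(t; R_1)$ empties no later (since the inductive hypothesis gives $W(\ts_j; R_1) \leq W(s_j; R_{\rm i})$, and both decay at rate $\rho$ over aligned idle intervals whose lengths differ by exactly $(\ts_{j+1}-s_{j+1}) - (\ts_j - s_j)$); one can then restart the induction from the common idle epoch where both workloads equal zero and $\ts = s$. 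This case-split is the one piece of bookkeeping needed beyond the algebraic cancellation described above.
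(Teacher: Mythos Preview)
Your proposal is correct and follows essentially the same approach as the paper: induction on $j$, with the identical two recurrence relations for $W(\cdot;R_{\rm i})$ and $W(\cdot;R_1)$, followed by the cancellation via $\ta_j-\ts_j=a_j-s_j=L_j/C$. Your final paragraph about the $[\,\cdot\,]^{+}$ truncation actually goes beyond the paper's own proof, which writes the recurrences without the positive part and does not discuss the case where the workload empties; so your treatment is, if anything, slightly more careful than the original.
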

\begin{proof}[Proof of Proposition~\ref{prop:deterministic delay w buffer}]
First, suppose $W(\ts_j; R_1) \leq \sigma$.  Since
$W(\ts_j; R_1) = W(\ts_j; R_{\rm o})$  (see~\eqref{eq:Stochastic Regulator eq 4}), 
we have $W(\ts_j; R_{\rm o}) \leq \sigma$, i.e., the 
$(\sigma, \rho)$ constraint is satisfied by the output process at time $\ts_j$.
This implies that the $j$th packet departs the regulator starting
at time $t_j = \ts_j$, which confirms \eqref{eq:tj_R1} in this case.

Next, suppose $W(\ts_j; R_1) > \sigma$.  Then $W(s_j; R_{\rm i}) \geq W(s_j; R_1) \geq W(\ts_j; R_1) > \sigma$.
Thus,  in this case, we can remove the $[\cdot]^+$ operator
in both \eqref{eq:delay_regulator} and \eqref{eq:tj_R1}.
Applying Lemma~\ref{appxlem:workload} to the right-hand side of \eqref{eq:tj_R1}, we have
\begin{align*}
    [ & W(\tilde{s}_j;R_1)-\sigma]/\rho+\tilde{s}_j
    \\
    &=[ W(s_j;R_{\rm i})-(\tilde{s}_j-s_j)\rho-\sigma ] / \rho+\tilde{s}_j\\
    &= [ W(s_j;R_{\rm i}) - \sigma ]/\rho+s_j = t_j .
\end{align*}
This completes the proof of Proposition~\ref{prop:deterministic delay w buffer}.
\end{proof}

\section{Proof of Theorem~\ref{thm:overshoot_soln2}}
\label{appx:overshoot_soln2}

\numberwithin{equation}{section}
\setcounter{equation}{0}

The proof of Theorem~\ref{thm:overshoot_soln2} is based on the following two lemmas.
\begin{appxlemma}
\label{appxlem:overshoot_soln2_discreet T_i}
Let $\cB_j$ be as defined in~\eqref{eq:cBj}, let $k=\min \cB_j$, and $\mathcal{I}_j$ is defined as
\begin{align}
\cI_j = \big \lbrace  \! 2 \leq \ell \leq k : & ~o_{T_{i}}(b_j(\sigma_\ell)) \leq \bar{f}(T_{i})-\epsilon_{i,j}(\sigma_\ell),
\nonumber \\
&~~~~~~~~ \forall i = 1, \ldots, \ell-1 \big \rbrace,
\label{eq:Ij appendix}
\end{align}
, where 
\begin{align}
    \label{eq:epsilonij appendix}
    \epsilon_{i,j}(\sigma_\ell)\!=\!\left\lbrace\begin{array}{ll}
      \!\frac{W_\rho(b_{j}(\sigma_\ell);R_{\rm o})-T_i}{\rho b_{j}(\sigma_\ell)}(1-\bar{f}(T_i))   &  \!\!i \in\! \{ 1,\ldots,\ell\!-\!2 \}\\
      \!\bar{f}(T_{\ell-1})-\bar{f}(T_\ell)  & \!\!i=\ell-1,
    \end{array}\right.
\end{align}
where $t_j(\sigma_\ell)$ and $b_j(\sigma_\ell)$ are given by~\eqref{eq:tj_R1}  and \eqref{eq:bj}, respectively.  Set
\begin{align}
  i^*  = \left \{
    \begin{array}{ll}
        \max \cI_j,  &  \cI_j \neq \emptyset , \\
        1,                   & \mbox{otherwise.}
    \end{array}
    \right .
\label{eq:istar appendix}
\end{align}
If the burst parameter
$\sigma^*(j)$ is set as follows:
\begin{align}
    \sigma^*(j) = \sigma_{i^*} ,
 \label{eq:sigma_j_soln1 appendix}
\end{align}
then 
\begin{align}
&o_{T_i}(t)\leq  \bar{f}(T_i),\nonumber\\ &\forall~t\in[b_{j-1},b_j(\sigma_{i^*})] ~~~\forall~i\in \{1,2,\ldots,M\}.
\label{eq:OvershootRatioBoundforCurrentPacket}
\end{align}

\begin{proof}
We proof~\eqref{eq:OvershootRatioBoundforCurrentPacket} using induction. Note that, for $j=1$ we have $\mathcal{I}_j=\emptyset$ and according to~\eqref{eq:Stochastic Regulator eq 5}
\begin{align}
O_{T_i}(t;R_{\rm o})=0~~~\forall~i\in \{1,2,\ldots,M\}.    
\end{align}
Therefore~\eqref{eq:OvershootRatioBoundforCurrentPacket} holds for $j=1$. Lets assume~\eqref{eq:OvershootRatioBoundforCurrentPacket} is valid for the $j$th packet, we now verify it holds for the $(j+1)$th packet.
We assume $\sigma^*(j)=\sigma_m$, where $m\in \mathcal{I}_j$. Therefore $b_j=b_j(\sigma_m)$ and according to~\eqref{eq:Ij appendix} and assumption~\eqref{eq:OvershootRatioBoundforCurrentPacket} 
\begin{align}
&o_{T_i}(b_j)\leq  \bar{f}(T_i), ~~\forall~i\in \{1,2,\ldots,M\}, 
\label{eq:OvershootIneqaulity1}
\\
&o_{T_{i}}(b_j) \leq \bar{f}(T_{i})-\epsilon_{i,j}(\sigma_m),
~~\forall i \in \{ 1,2,\ldots,m-1 \},
\label{eq:OvershootIneqaulity2}
\end{align}
where $\epsilon_{i,j}(\sigma_m)$ is defined in~\eqref{eq:epsilonij appendix}. As $\sigma_{m+1}> W(b_{j};R_{\rm o})\geq W(t_{j+1};R_{\rm o})$, therefore $\sigma^*(j+1)$ can be chosen from $\{\sigma_1,\ldots,\sigma_{m},\sigma_{m+1}\}$. We define $\hat{W}(t;R_{\rm o})$ as the decreasing workload with slope $\rho$ from $W(b_j;R_{\rm o})$ as shown in Fig.~\ref{fig:WorkloadDecline}. Also, if $\sigma^*(j+1)$ is set as $\sigma^*(j+1)=\sigma_\ell$ for $\ell\in\{1,2,\ldots,m+1\}$ as in Fig.~\ref{fig:WorkloadDecline}, we have 
\begin{figure}[t]
    \centering
        \begin{adjustbox}{scale=1}
        \begin{tikzpicture}
           \draw [thick,<->] (5,0)--(0,0)--(0,4.5);
            \node [left] at (0,4.5) 
            {\begin{small}
            $W(t;R_{\rm o})$
            \end{small}};
            \node [left] at (0,3.4) 
            {\begin{footnotesize}
            $W(b_{j};R_{\rm o})$
            \end{footnotesize}};
            \node [right] at (5,0) {\small{$t$}};
            \node [below] at (0.7,0) 
            {\begin{scriptsize}
            $b_{j}$
            \end{scriptsize}};
            \node [left,below] at (2.62,0) 
            {\begin{scriptsize}
            $t_{j+1}$
            \end{scriptsize}};
            \node [right,below] at (3.22,0) 
            {\begin{scriptsize}
            $b_{j+1}$
            \end{scriptsize}};
            \node [right,below] at (1.37,0) 
            {\begin{scriptsize}
            $t_{j+1}(m)$
            \end{scriptsize}};
            \node [left] at (0,2.8) 
            {\begin{footnotesize}
            $T_{m}$
            \end{footnotesize}};
            \node [left] at (0,1.5) 
            {\begin{footnotesize}
            $\sigma^*(j+1)$
            \end{footnotesize}};
            \node [left] at (0,3.7) 
            {\begin{footnotesize}
            $T_{\ell}$
            \end{footnotesize}};
            \node [below] at (4.5,0) 
            {\begin{scriptsize}
            $t_{j+1}(0)$
            \end{scriptsize}};
            \node [left] at (0,4) 
            {\begin{footnotesize}
            $\sigma_{\ell+1}$
            \end{footnotesize}};
            \draw [dashed] (0.3,1.8)--(0.5,2.6);
            \draw (0.5,2.6)--(0.7,3.4);
            \draw (0.7,3.4)--(2.82,1.5);
            \draw [dashdotted] (0.7,3.4)--(4.5,0);
            \draw (2.82,1.5)--(3.02,2.3);
            \draw [help lines] (0,4)--(2.9,4);
            \draw [help lines] (0,3.7)--(2.9,3.7);
            \draw [help lines] (2.82,0)--(2.82,1.5);
            \draw [help lines] (3.02,0)--(3.02,2.3);
            \draw [help lines] (0,2.8) -- (1.37,2.8);
            \draw [help lines] (1.37,0) -- (1.37,2.8);
            \draw [help lines] (0.7,0) -- (0.7,3.4);
            \draw [help lines] (0,3.4) -- (0.7,3.4);
            \draw [help lines] (0,1.5) -- (2.82,1.5);
            \draw [fill](2.82,0) circle [radius=0.03];
            \draw [fill](3.02,0) circle [radius=0.03];
            \draw [fill](4.5,0) circle [radius=0.03];
            \draw [fill](1.37,0) circle [radius=0.03];
            \draw (2.9,4.5)--(5,4.5)--(5,3.7)--(2.9,3.7)--(2.9,4.5);
            \draw (4.25,4.3)--(4.8,4.3);
            \draw [dashdotted](4.25,3.9)--(4.8,3.9);
            \node [left] at (4.25,4.3) 
            {\begin{scriptsize}
            $W(t;R_{\rm o})$
            \end{scriptsize}};
            \node [left] at (4.25,3.9) 
            {\begin{scriptsize}
            $\hat{W}(t;R_{\rm o})$
            \end{scriptsize}};
            \end{tikzpicture}
        \end{adjustbox}
        \caption{$W(t;R_{\rm o})$ and $\hat{W}(t;R_{\rm o})$ when $\sigma^*(j+1)\in\{\sigma_1,\ldots,\sigma_m\}$.}
        \label{fig:WorkloadDecline}
    \end{figure}
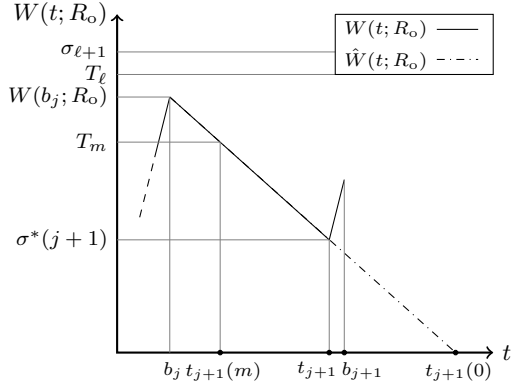
\begin{equation}
W(t;R_{\rm o})=\hat{W}(t;R_{\rm o}) , ~~~~\forall~t\in[b_j,t_{j+1}(\sigma_\ell)].
\end{equation}
We define $t_{j+1}(i)$ as the time that 
\begin{equation}
    \hat{W}(t_{j+1}(i);R_{\rm o})=T_i,~~\text{for}~~i\in\{0,1,\ldots,m\},
\end{equation}
with $T_0=0$. Therefore, according to~\eqref{eq:Stochastic Regulator eq 1}, we have
\begin{equation}
 t_{j+1}(i)=b_{j}+\frac{W(b_{j};R_{\rm o})-T_i}{\rho}.  
 \label{eq:tji}
\end{equation}
According to Proposition~\ref{prop:overshoot} and Fig.~\ref{fig:WorkloadDecline}, we have
\begin{align}
&O_{T_i}(t;R_{\rm o})= O_{T_i}(b_j;R_{\rm o})
\nonumber\\
&~~~\forall~t\in[b_j,b_{j+1}],~i\in\{m,m+1,\ldots,M\}.
\label{eq:OverShootDevelopment1}
\end{align}
Therefore, according to~\eqref{eq:OvershootIneqaulity1}
\begin{align}
 &o_{T_i}(t)\leq  \bar{f}(T_i),  
 \nonumber\\
&~~~\forall~t\in[b_j,b_{j+1}],~i\in\{m,m+1,\ldots,M\}.
\label{eq:OvershootIneqaulityNew1}   
\end{align}
On the other hand, as $\sigma^*(j+1)$ is chosen according to~\eqref{eq:istar appendix} we can have two following subcases 
\setcounter{case}{0}
\begin{case}
$\sigma^*(j+1)=\sigma_n\in\{\sigma_1,\ldots,\sigma_m\}$.
\end{case}
In this case, we have 
\begin{align}
&O_{T_i}(t;R_{\rm o})\!= \!\left\lbrace \begin{array}{ll}
    \!\!O_{T_i}(b_j;R_{\rm o})\!+\!(t-b_j) & \!\!\forall~t\in[b_j,t_{j+1}(i)], \\
    \!\!O_{T_i}(t_{j+1}(i);R_{\rm o}) & \!\!\forall~t\in(t_{j+1}(i),b_{j+1}],
\end{array}\right.
\label{eq:OverShootDevelopment2}
\end{align}
for $i\in\{n,n+1,\ldots,m-1\}$. Therefore, according to~\eqref{eq:tji} and~\eqref{eq:OverShootDevelopment2}, for $i\in\{n,n+1,\ldots,m-1\}$
\begin{align*}
\max_{t \in[b_{j},b_{j+1}]}& o_{T_i}(t)=o_{T_i}(t_{j+1}(i)).  
\end{align*}
But as we have~\eqref{eq:OvershootIneqaulity2} therefore,
\begin{align}
&o_{T_i}(t)\leq  \bar{f}(T_i),  
 \nonumber\\
&~~~\forall~t\in[b_j,b_{j+1}],~i\in\{n,n+1,\ldots,m-1\}.
\label{eq:OvershootIneqaulityNew2}       
\end{align}
It can be easily verified according to Proposition~\ref{prop:overshoot}, for $i\in\{1,2,\ldots,n-1\}$ we have 
\begin{align}
O_{T_i}(t;R_{\rm o})= 
    O_{T_i}(b_j;R_{\rm o})+(t-b_j),  ~~\forall~t\in[b_j,b_{j+1}].
\label{eq:OverShootDevelopment3}
\end{align}
Therefore, for $i\in\{1,2,\ldots,n-1\}$
\begin{align}
\max_{t \in[b_{j},b_{j+1}]}& o_{T_i}(t)=o_{T_i}(b_{j+1})
\label{eq:maxOvershootRatioj+1}
\end{align}
But as $\sigma^*(j+1)=\sigma_n$ is chosen using~\eqref{eq:istar appendix} we have
\begin{align}
&o_{T_{i}}(b_{j+1}) \leq \bar{f}(T_{i})-\epsilon_{i,j+1}(\sigma_n)
\nonumber\\
&\hspace{8em} \forall i \in \{ 1,2,\ldots,n-1 \},
\label{eq:OvershootIneqaulityassumption}
\end{align}
Therefore according to~\eqref{eq:maxOvershootRatioj+1} and~\eqref{eq:OvershootIneqaulityassumption}
\begin{align}
&o_{T_i}(t)<\bar{f}(T_i)
\nonumber\\
&~~~\forall~t\in[b_j,b_{j+1}],~i\in\{1,2,\ldots,n-1\}.
\label{eq:OvershootIneqaulityNew3}    
\end{align}
Therefore, for this case using~\eqref{eq:OvershootIneqaulityNew1},~\eqref{eq:OvershootIneqaulityNew2} and~\eqref{eq:OvershootIneqaulityNew3} 
\begin{align}
&o_{T_i}(t)<\bar{f}(T_i)
\nonumber\\
&~~~\forall~t\in[b_j,b_{j+1}],~i\in\{1,2,\ldots,M\}.
\label{eq:OvershootIneqaulityNew4}    
\end{align}
\begin{case}
$\sigma^*(j+1)=\sigma_{m+1}$.
\end{case}
In this case workload can be as in Fig.~\ref{fig:WorkloadDecline2} and can have subcases I, II, and III. In all subcases as 
\begin{align*}
W(t;R_{\rm o})<T_i~~\forall~t\in[b_j,b_{j+1}],~~i\in\{m+1,\ldots,M\}.
\end{align*}
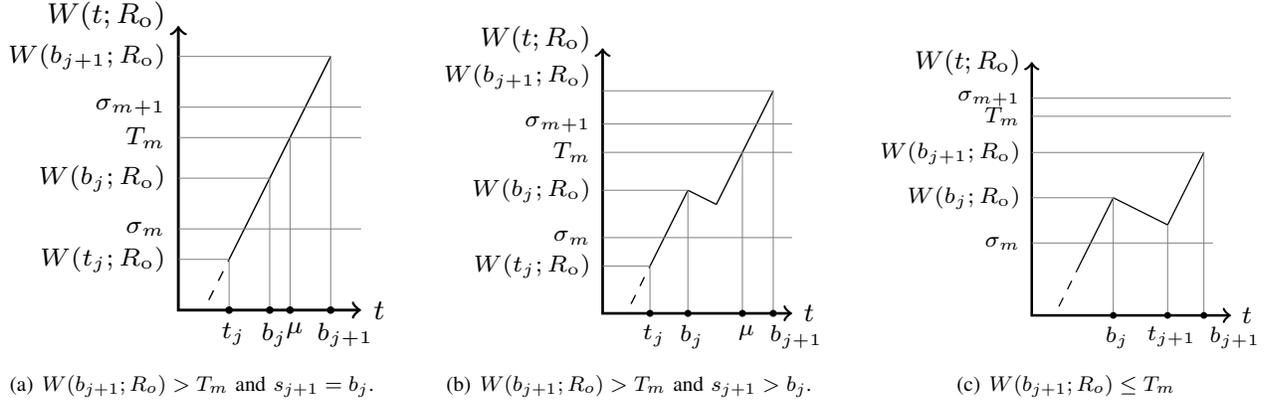
\begin{figure*}[t]
    \subfigure[$W(b_{j+1};R_o)>T_m$ and $s_{j+1}=b_j$.]{
    \begin{adjustbox}{width=0.3\textwidth}
        \begin{tikzpicture}
           \draw [thick,<->] (1.8,0)--(0,0)--(0,2.8);
            \node [left] at (0,2.9) 
            {\begin{footnotesize}
            $W(t;R_{\rm o})$
            \end{footnotesize}};
            \node [left] at (0,1.3)
            {\begin{scriptsize}
            $W(b_{j};R_{\rm o})$
            \end{scriptsize}};
            \node [left] at (0,0.5)
            {\begin{scriptsize}
            $W(t_{j};R_{\rm o})$
            \end{scriptsize}};
            \node [left] at (0,2.5)
            {\begin{scriptsize}
            $W(b_{j+1};R_{\rm o})$
            \end{scriptsize}};
            \node [right] at (1.8,0) {\footnotesize{$t$}};
            \node [below] at (0.9,0) 
            {\begin{scriptsize}
            $b_{j}$
            \end{scriptsize}};
            \node [below] at (0.5,0) 
            {\begin{scriptsize}
            $t_{j}$
            \end{scriptsize}};
            \node [right,below] at (1.6,0) 
            {\begin{scriptsize}
            $b_{j+1}$
            \end{scriptsize}};
            \node [left] at (0,1.7) 
            {\begin{scriptsize}
            $T_{m}$
            \end{scriptsize}};
            \node [left] at (0,2) 
            {\begin{scriptsize}
            $\sigma_{m+1}$
            \end{scriptsize}};
            \node [left] at (0,0.8) 
            {\begin{scriptsize}
            $\sigma_{m}$
            \end{scriptsize}};
            \node [below] at (1.1,0) 
            {\begin{scriptsize}
            $\mu$
            \end{scriptsize}};
            \draw [dashed] (0.3,0.1)--(0.5,0.5);
            \draw (0.5,0.5)--(0.9,1.3);
            \draw (0.9,1.3)--(1.5,2.5);
            \draw [help lines] (1.1,0)--(1.1,1.7);
            \draw [help lines] (0,1.7)--(1.8,1.7);
            \draw [help lines] (0,2)--(1.8,2);
            \draw [help lines] (0.9,0)--(0.9,1.3);
            \draw [help lines] (1.5,0)--(1.5,2.5);
            \draw [help lines] (0,1.3) -- (0.9,1.3);
            \draw [help lines] (0,2.5) -- (1.5,2.5);
            \draw [help lines] (0.5,0) -- (0.5,0.5);
            \draw [help lines] (0,0.5) -- (0.5,0.5);
            \draw [help lines] (0,0.8) -- (1.8,0.8);
            \draw [fill](0.5,0) circle [radius=0.03];
            \draw [fill](0.9,0) circle [radius=0.03];
            \draw [fill](1.5,0) circle [radius=0.03];
            \draw [fill](1.1,0) circle [radius=0.03];
            \end{tikzpicture}
        \end{adjustbox}
        \label{fig:WorkloadDecline2Subcase1}
        }
        \subfigure[$W(b_{j+1};R_o)>T_m$ and $s_{j+1}>b_j$.]{
    \begin{adjustbox}{width=0.3\textwidth}
        \begin{tikzpicture}
           \draw [thick,<->] (2,0)--(0,0)--(0,2.8);
            \node [left] at (0,2.9) 
            {\begin{footnotesize}
            $W(t;R_{\rm o})$
            \end{footnotesize}};
            \node [left] at (0,1.3)
            {\begin{scriptsize}
            $W(b_{j};R_{\rm o})$
            \end{scriptsize}};
            \node [left] at (0,0.5)
            {\begin{scriptsize}
            $W(t_{j};R_{\rm o})$
            \end{scriptsize}};
            \node [left] at (0,2.5)
            {\begin{scriptsize}
            $W(b_{j+1};R_{\rm o})$
            \end{scriptsize}};
            \node [right] at (2,0) {\footnotesize{$t$}};
            \node [below] at (0.9,0) 
            {\begin{scriptsize}
            $b_{j}$
            \end{scriptsize}};
            \node [below] at (0.5,0) 
            {\begin{scriptsize}
            $t_{j}$
            \end{scriptsize}};
            \node [right,below] at (2,0) 
            {\begin{scriptsize}
            $b_{j+1}$
            \end{scriptsize}};
            \node [left] at (0,1.7) 
            {\begin{scriptsize}
            $T_{m}$
            \end{scriptsize}};
            \node [left] at (0,2) 
            {\begin{scriptsize}
            $\sigma_{m+1}$
            \end{scriptsize}};
            \node [left] at (0,0.8) 
            {\begin{scriptsize}
            $\sigma_{m}$
            \end{scriptsize}};
            \node [below] at (1.475,0) 
            {\begin{scriptsize}
            $\mu$
            \end{scriptsize}};
            \draw [dashed] (0.3,0.1)--(0.5,0.5);
            \draw (0.5,0.5)--(0.9,1.3);
            \draw (0.9,1.3)--(1.2,1.15);
            \draw (1.2,1.15)--(1.8,2.35);
            \draw [help lines] (1.475,0)--(1.475,1.7);
            \draw [help lines] (0,1.7)--(2,1.7);
            \draw [help lines] (0,2)--(2,2);
            \draw [help lines] (0.9,0)--(0.9,1.3);
            \draw [help lines] (1.8,0)--(1.8,2.35);
            \draw [help lines] (0,1.3) -- (0.9,1.3);
            \draw [help lines] (0,2.35) -- (1.8,2.35);
            \draw [help lines] (0.5,0) -- (0.5,0.5);
            \draw [help lines] (0,0.5) -- (0.5,0.5);
            \draw [help lines] (0,0.8) -- (2,0.8);
            \draw [fill](0.5,0) circle [radius=0.03];
            \draw [fill](0.9,0) circle [radius=0.03];
            \draw [fill](1.475,0) circle [radius=0.03];
            \draw [fill](1.8,0) circle [radius=0.03];
            \end{tikzpicture}
        \end{adjustbox}
        \label{fig:WorkloadDecline2Subcase2}
        }
    \centering
        \subfigure[$W(b_{j+1};R_o)\leq T_m$]{
        \begin{adjustbox}{width=0.3\textwidth}
        \begin{tikzpicture}
           \draw [thick,<->] (2.2,0)--(0,0)--(0,2.8);
            \node [left] at (0,2.8) 
            {\begin{footnotesize}
            $W(t;R_{\rm o})$
            \end{footnotesize}};
            \node [left] at (0,1.3) 
            {\begin{scriptsize}
            $W(b_{j};R_{\rm o})$
            \end{scriptsize}};
            \node [left] at (0,1.8) 
            {\begin{scriptsize}
            $W(b_{j+1};R_{\rm o})$
            \end{scriptsize}};
            \node [right] at (2.2,0) {\footnotesize{$t$}};
            \node [below] at (0.9,0) 
            {\begin{scriptsize}
            $b_{j}$
            \end{scriptsize}};
            \node [left,below] at (1.5,0) 
            {\begin{scriptsize}
            $t_{j+1}$
            \end{scriptsize}};
            \node [right,below] at (2.2,0) 
            {\begin{scriptsize}
            $b_{j+1}$
            \end{scriptsize}};
            \node [left] at (0,2.2) 
            {\begin{scriptsize}
            $T_{m}$
            \end{scriptsize}};
            \node [left] at (0,2.4) 
            {\begin{scriptsize}
            $\sigma_{m+1}$
            \end{scriptsize}};
            \node [left] at (0,0.8) 
            {\begin{scriptsize}
            $\sigma_{m}$
            \end{scriptsize}};
            \draw [dashed] (0.3,0.1)--(0.5,0.5);
            \draw (0.5,0.5)--(0.9,1.3);
            \draw (0.9,1.3)--(1.5,1);
            \draw (1.5,1)--(1.9,1.8);
            \draw [help lines] (0,0.8) -- (2,0.8);
            \draw [help lines] (0.9,0) -- (0.9,1.3);
            \draw [help lines] (1.5,0) -- (1.5,1);
            \draw [help lines] (1.9,0) -- (1.9,1.8);
            \draw [help lines] (0,1.3) -- (0.9,1.3);
            \draw [help lines] (0,2.2) -- (2.2,2.2);
            \draw [help lines] (0,2.4) -- (2.2,2.4);
            \draw [help lines] (0,1.8) -- (1.9,1.8);
            \draw [fill](1.5,0) circle [radius=0.03];
            \draw [fill](1.9,0) circle [radius=0.03];
            \draw [fill](0.9,0) circle [radius=0.03];
            \end{tikzpicture}
        \end{adjustbox}
        \label{fig:WorkloadDecline2Subcase3}
        }
        \caption{Two subcases I, II, III of $W_\rho(t;R_{\rm o})$, when $\sigma^*(j)=\sigma_m$ and $\sigma^*(j+1)=\sigma_{m+1}$.}
        \label{fig:WorkloadDecline2}
    \end{figure*}
Therefore, according to Proposition~\ref{prop:overshoot} and Fig.~\ref{fig:WorkloadDecline2}, we have
\begin{align*}
&O_{T_i}(t;R_{\rm o})= O_{T_i}(b_j;R_{\rm o})
\nonumber\\
&~~~\forall~t\in[b_j,b_{j+1}],~i\in\{m+1,\ldots,M\}.
\end{align*}
Therefore, according to~\eqref{eq:OvershootIneqaulity1}
\begin{align}
 &o_{T_i}(t)\leq  \bar{f}(T_i),  
 \nonumber\\
&~~~\forall~t\in[b_j,b_{j+1}],~i\in\{m+1,\ldots,M\}.
\label{eq:OvershootIneqaulityCase21}   
\end{align}
On the other hand, for all subcases as 
\begin{align*}
W(t;R_{\rm o})>T_i~~\forall~t\in[b_j,b_{j+1}],~~i\in\{1,2,\ldots,m-1\}.
\end{align*}
Therefore, according to Proposition~\ref{prop:overshoot} and Fig.~\ref{fig:WorkloadDecline2}, we have
\begin{align*}
&O_{T_i}(t;R_{\rm o})= O_{T_i}(b_j;R_{\rm o})+(t-b_j)
\nonumber\\
&~~~\forall~t\in[b_j,b_{j+1}],~i\in\{1,2,\ldots,m-1\}.
\end{align*}
Therefore, for $i\in\{1,2,\ldots,m-1\}$
\begin{align}
\max_{t \in[b_{j},b_{j+1}]}& o_{T_i}(t)=o_{T_i}(b_{j+1})
\label{eq:maxOvershootRatioCase2j+1}
\end{align}
But as $\sigma^*(j+1)=\sigma_{m+1}$ is chosen using~\eqref{eq:istar appendix} we have 
\begin{align}
&o_{T_{i}}(b_{j+1}) \leq \bar{f}(T_{i})-\epsilon_{i,j+1}(\sigma_{m+1})
\nonumber\\
&\hspace{8em} \forall i \in \{ 1,2,\ldots,m-1 \},
\label{eq:OvershootIneqaulityassumptionCase2}
\end{align}
Therefore according to~\eqref{eq:maxOvershootRatioCase2j+1} and~\eqref{eq:OvershootIneqaulityassumptionCase2}
\begin{align}
&o_{T_i}(t)<\bar{f}(T_i)
\nonumber\\
&~~~\forall~t\in[b_j,b_{j+1}],~i\in\{1,2,\ldots,m-1\}.
\label{eq:OvershootIneqaulityCase23}    
\end{align}
In subcases I and II , as 
\begin{equation*}
    W(t;R_{\rm o})\geq T_m~~~~\forall~t\in[\mu,b_{j+1}],
\end{equation*}
where $\mu$ is defined as follows:
\begin{equation*}
 W(\mu;R_{\rm o})= T_m,~~~\mu\in[b_j,b_{j+1}].   
\end{equation*}
Therefore, 
\begin{align*}
O_{T_m}(t;R_{\rm o})= \left\lbrace\begin{array}{ll}
   O_{T_m}(b_j;R_{\rm o})  & t\in[b_j,\mu] \\
   O_{T_m}(\mu;R_{\rm o})+(t-\mu)  & t\in(\mu,b_{j+1}]
\end{array}\right.
\end{align*}
Hence, 
\begin{align*}
\argmax_{t \in[b_{j},b_{j+1}]}& o_{T_m}(t)\in\{b_{j},b_{j+1}\}.
\end{align*}
But as $\sigma^*(j+1)=\sigma_{m+1}$ is chosen using~\eqref{eq:istar appendix}, therefore 
\begin{align*}
o_{T_m}(b_{j+1}) &\leq \bar{f}(T_{m})-\epsilon_{m,j+1}(\sigma_{m+1})
=\bar{f}(T_{m+1})<\bar{f}(T_{m})
\end{align*}
Also as we have~\eqref{eq:OvershootIneqaulity1} therefore,
\begin{align}
o_{T_m}(t)  \leq \bar{f}(T_{m}),~~~\forall~t\in[b_j,b_{j+1}].
\label{eq:OvershootIneqaulityCase22}
\end{align}
For subcase III, on the other hand 
\begin{align*}
W(t;R_{\rm o})<T_m~~\forall~t\in[b_j,b_{j+1}].
\end{align*}
Therefore, it can easily be shown 
\begin{align}
o_{T_m}(t)  \leq \bar{f}(T_{m})~~~\forall~t\in[b_j,b_{j+1}].
\label{eq:OvershootIneqaulityCase24}
\end{align}
Therefore, using~\eqref{eq:OvershootIneqaulityCase21},~\eqref{eq:OvershootIneqaulityCase23},~\eqref{eq:OvershootIneqaulityCase22} and~\eqref{eq:OvershootIneqaulityCase24} we have 
\begin{align}
o_{T_m}(t) \leq \bar{f}(T_{m})
&~\forall~t\in[b_j,b_{j+1}],\nonumber\\
&~i\in\{1,\ldots,M\}.
\label{eq:OvershootIneqaulityCase2final}
\end{align}
\end{proof}
\end{appxlemma}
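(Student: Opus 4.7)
The plan is to prove the claim by induction on the packet index $j$. For the base case $j=1$, no packets have departed prior to $b_0 := 0$, so $O_{T_i}(t; R_{\rm o}) = 0$ on $[0, b_1]$ up to the single overshoot contribution accrued during packet~$1$'s own departure; a direct check using Proposition~\ref{prop:overshoot} and the definition of $\cI_1$ (which is either empty, giving $\sigma^*(1)=\sigma_1$, or satisfies its defining inequality at $i^*$) yields $o_{T_i}(b_1(\sigma_{i^*})) \leq \bar{f}(T_i)$ for every $i$.

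For the inductive step, suppose the bound holds up to $t = b_j(\sigma^*(j))$ and set $\sigma^*(j) = \sigma_m$. A key structural observation, which I would establish first, is that the successor index is tightly constrained: $\sigma^*(j+1) \in \{\sigma_1, \ldots, \sigma_{m+1}\}$. This follows because $W_\rho(b_j; R_{\rm o}) \leq \sigma_m + \delta = T_m$ (the departure-induced climb is at most $\delta$), and the spacing constraint \eqref{eq:T_i} then gives $\sigma_{m+1} = T_{m+1} - \delta > T_m \geq W_\rho(\ts_{j+1}; R_1)$, so $\sigma_{m+1} \in \cB_{j+1}$. Two cases then arise, corresponding to $\sigma^*(j+1) \leq \sigma_m$ and $\sigma^*(j+1) = \sigma_{m+1}$.

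In both cases I would apply Proposition~\ref{prop:overshoot} together with the piecewise-linear description of $W_\rho(t; R_{\rm o})$ provided by \eqref{eq:Stochastic Regulator eq 1}--\eqref{eq:Stochastic Regulator eq 6} to track $O_{T_i}(\cdot; R_{\rm o})$ on $[b_j, b_{j+1}]$. For indices $i$ at which $W_\rho(t; R_{\rm o}) < T_i$ throughout the interval, $O_{T_i}$ is constant and the inductive hypothesis already suffices. For indices at which $W_\rho(t; R_{\rm o}) \geq T_i$ throughout, $O_{T_i}$ grows at unit rate and the supremum of $o_{T_i}$ on $[b_j, b_{j+1}]$ occurs at $b_{j+1}$, where the defining inequality of $\cI_{j+1}$ applied at $\sigma^*(j+1)$ delivers the bound. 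In the $\sigma^*(j+1) = \sigma_{m+1}$ branch, the geometry of $W_\rho(t;R_{\rm o})$ on $[b_j,b_{j+1}]$ breaks into three configurations (pure climb, climb--dip--climb ending above $T_m$, and descent ending below $T_m$), each of which admits the same two-endpoint argument.

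The main obstacle I anticipate is the treatment of the "crossing" indices $i$ at which $W_\rho(t; R_{\rm o})$ passes through $T_i$ strictly inside $[b_j, b_{j+1}]$; these are exactly the indices where the increment of $O_{T_i}$ accrued between $b_j$ and the crossing is not yet visible when $\sigma^*(j+1)$ is committed. The crossing time is $t_{j+1}(i) = b_j + (W_\rho(b_j; R_{\rm o}) - T_i)/\rho$, and the associated excess overshoot contribution is what forces the specific form of $\epsilon_{i,j}$ in \eqref{eq:epsilonij}: the first branch is the worst-case inflation $(W_\rho(b_j;R_{\rm o}) - T_i)/(\rho b_j)$ scaled by $(1 - \bar{f}(T_i))$, while the second branch at $i = \ell-1$ absorbs the grid-step gap $\bar{f}(T_{\ell-1}) - \bar{f}(T_\ell)$ induced by moving $\sigma^*(j+1)$ one step up. I would verify the bound at these crossing indices by an explicit comparison of the accrued increment to the reserved slack $\epsilon_{i,j}(\sigma_m)$ guaranteed by the inductive hypothesis, and then close the induction.
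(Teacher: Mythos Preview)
Your proposal is correct and follows essentially the same approach as the paper: induction on $j$, the observation that $\sigma^*(j+1)\in\{\sigma_1,\ldots,\sigma_{m+1}\}$, the two-case split on whether $\sigma^*(j+1)\leq\sigma_m$ or $\sigma^*(j+1)=\sigma_{m+1}$ (the latter further subdivided into three workload configurations), and the three-way partition of the indices $i$ into ``always below $T_i$'' (handled by the inductive hypothesis), ``always above $T_i$'' (handled by the defining inequality of $\cI_{j+1}$ at $b_{j+1}$), and ``crossing'' indices (handled by the $\epsilon_{i,j}(\sigma_m)$ slack reserved at step $j$, with the crossing time $t_{j+1}(i)=b_j+(W_\rho(b_j;R_{\rm o})-T_i)/\rho$). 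One minor point: your strict inequality $\sigma_{m+1}>T_m$ should be $\sigma_{m+1}\geq T_m$ since \eqref{eq:T_i} only guarantees $T_{m+1}-T_m\geq\delta$, but this does not affect the argument because $\cB_{j+1}$ is defined with a non-strict inequality.
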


\begin{appxlemma}
\label{appxlem:linearupperbound}
If 
\begin{align}
    o_{T_{m}}(t) \leq c_1;~~o_{T_{m+1}}(t)\leq c_2,
    \label{eq:Ij_3at2T}
\end{align}
for $m\in\{1,2,\ldots,M-2\}$, and $t\in[b_{j-1},b_j(\sigma_\ell)]$. Then 
\begin{align}
    o_{\gamma}(t)\leq \bar{f}(T_m)-(\gamma-T_m)\frac{c_1-c_2}{T_{m+1}-\sigma_{m+1}}, 
    \label{eq:Ij_3at1stInterval}
\end{align}
for $\forall\gamma\in[\sigma_{m+1},T_{m+1})$ and 
\begin{equation}
o_{\gamma}(t) < c_1,~~~~~ \forall\gamma\in[T_m,\sigma_{m+1}). 
    \label{eq:Ij_3at2ndInterval}    
\end{equation}
\end{appxlemma}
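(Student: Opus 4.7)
The proof rests on one structural fact about the output workload $W_\rho(\cdot; R_{\rm o})$: by~\eqref{eq:Stochastic Regulator eq 4}--\eqref{eq:Stochastic Regulator eq 5} every local maximum of $W_\rho(\cdot; R_{\rm o})$ equals $W_\rho(t_j; R_{\rm o}) + \delta_j \le \sigma^*(j) + \delta$, with $\sigma^*(j) \in \Sigma = \{\sigma_1, \ldots, \sigma_M\}$. The possible peak clusters are therefore $[\sigma_i, T_i]$, separated by gaps of width $T_{i+1} - T_i - \delta \ge 0$ by~\eqref{eq:T_i}; in particular no peak can land in the open interval $(T_m, \sigma_{m+1})$. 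On that strip the workload is always strictly monotone with slope $C-\rho$ (ascent) or $-\rho$ (descent), and every excursion entering it from below must exit from above at level at least $\sigma_{m+1}$. This pins down how $o_\gamma(t)$ depends on $\gamma$ inside $[T_m, T_{m+1}]$.

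For $\gamma \in [T_m, \sigma_{m+1})$, each excursion above $T_m$ in $[b_{j-1}, b_j(\sigma_\ell)]$ crosses $\gamma$ exactly twice, contributing the extra amount $(\gamma - T_m)\bigl(1/(C-\rho) + 1/\rho\bigr)$ to $O_{T_m}(t) - O_\gamma(t)$. Summing and dividing by $t$ gives $o_\gamma(t) = o_{T_m}(t) - (\gamma - T_m)\,N(t)\bigl(1/(C-\rho) + 1/\rho\bigr)/t$, where $N(t)$ counts up-crossings of $T_m$, which settles~\eqref{eq:Ij_3at2ndInterval}. For~\eqref{eq:Ij_3at1stInterval} with $\gamma \in [\sigma_{m+1}, T_{m+1})$, I would split excursions above $T_m$ into type (a), with peak in $[\sigma_{m+1}, T_{m+1}]$ and hence contributing zero to $O_{T_{m+1}}(t)$, and type (b), with peak above $T_{m+1}$, contributing an extra $(T_{m+1} - \gamma)\bigl(1/(C-\rho)+1/\rho\bigr)$ to $O_\gamma(t) - O_{T_{m+1}}(t)$. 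Crudely upper-bounding each type (a) contribution by $(T_{m+1} - \gamma)\bigl(1/(C-\rho)+1/\rho\bigr)$ as well, the excess $o_\gamma(t) - o_{T_{m+1}}(t)$ is at most a $\gamma$-independent nonnegative coefficient times $(T_{m+1} - \gamma)$; matching with the identity from the previous paragraph at $\gamma = \sigma_{m+1}$ and invoking $o_{T_m}(t) \le c_1$, $o_{T_{m+1}}(t) \le c_2$ bounds the coefficient by $(c_1 - c_2)/\delta$ with $\delta = T_{m+1} - \sigma_{m+1}$. The baseline $\bar{f}(T_m)$ in the statement emerges by upgrading $c_1$ to its ceiling $\bar{f}(T_m)$ in the final step, using that $c_1 \le \bar{f}(T_m)$ in every invocation of the lemma.

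The main obstacle is the bookkeeping connecting the coefficient to the hypothesis gap $c_1 - c_2$. The type (a) excursions contribute to $O_{T_m}(t)$ in amounts that depend on their individual peak heights, not just their count, so one has to invoke the worst-case concentration of mass into the sub-strip $[\sigma_{m+1}, T_{m+1})$ of width $\delta$ to justify the slope $(c_1 - c_2)/\delta$ rather than a slope normalised by the full width $T_{m+1} - T_m$. A secondary subtlety is the degenerate case $T_{m+1} - T_m = \delta$, in which $[T_m, \sigma_{m+1}) = \emptyset$ so~\eqref{eq:Ij_3at2ndInterval} is vacuous and~\eqref{eq:Ij_3at1stInterval} must be established directly from the bump geometry without passing through the intermediate identity at $\gamma = \sigma_{m+1}$.
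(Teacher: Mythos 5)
Your overall route is the same as the paper's: write $O_{\gamma}(t)=O_{\gamma,T_{m+1}}(t)+O_{T_{m+1}}(t)$, bound the strip contribution linearly in $T_{m+1}-\gamma$ using the fixed slopes $C-\rho$ and $-\rho$, and interpolate between $c_1$ and $c_2$; this is exactly the paper's ``complete fluctuation unit'' argument with $\Delta t$ and $\Delta\tau$. However, two of your steps do not hold up. First, the structural premise is false: a local maximum of $W_\rho(\cdot;R_{\rm o})$ equals $\sigma^*(j)+\delta_j$ only when packet $j$ is actually delayed; when $W_\rho(\tilde{s}_j;R_1)\leq\sigma^*(j)$, equation~\eqref{eq:Stochastic Regulator eq 4} gives $W_\rho(t_j;R_{\rm o})=W_\rho(\tilde{s}_j;R_1)$, so the peak $W_\rho(t_j;R_{\rm o})+\delta_j$ can land anywhere, including inside $(T_m,\sigma_{m+1})$. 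Hence an excursion above $T_m$ need not cross a level $\gamma\in(T_m,\sigma_{m+1})$ at all, and your exact identity for $o_\gamma$ on that range fails. The conclusion~\eqref{eq:Ij_3at2ndInterval} survives only because all it needs is the monotonicity $O_\gamma(t)\leq O_{T_m}(t)$ for $\gamma\geq T_m$, which is how the paper argues it.

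Second, and more seriously, the ``matching at $\gamma=\sigma_{m+1}$'' step does not bound the coefficient. You have an upper bound $o_\gamma(t)-o_{T_{m+1}}(t)\leq K\,(T_{m+1}-\gamma)$ and another inequality of the same direction at $\gamma=\sigma_{m+1}$; two upper bounds cannot be matched to produce the needed upper bound $K\leq (c_1-c_2)/(T_{m+1}-\sigma_{m+1})$. What is required is a two-sided per-excursion comparison: an excursion with peak $p$ contributes at most $(\min\{p,T_{m+1}\}-\gamma)^+\bigl(1/(C-\rho)+1/\rho\bigr)$ to $O_{\gamma,T_{m+1}}$ and at least $(\min\{p,T_{m+1}\}-T_m)\bigl(1/(C-\rho)+1/\rho\bigr)$ to $O_{T_m,T_{m+1}}$, and the ratio of these is maximized at $p=T_{m+1}$, where it equals $(T_{m+1}-\gamma)/(T_{m+1}-T_m)\leq(T_{m+1}-\gamma)/(T_{m+1}-\sigma_{m+1})$. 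Summing over excursions gives $O_{\gamma,T_{m+1}}(t)\leq\frac{T_{m+1}-\gamma}{T_{m+1}-\sigma_{m+1}}\,O_{T_m,T_{m+1}}(t)$, which is the paper's bound~\eqref{eq:O_gamma_Tm+1_bound}; then $o_\gamma(t)\leq\lambda\,o_{T_m}(t)+(1-\lambda)\,o_{T_{m+1}}(t)\leq\lambda c_1+(1-\lambda)c_2$ with $\lambda=(T_{m+1}-\gamma)/(T_{m+1}-\sigma_{m+1})$ yields~\eqref{eq:Ij_3at1stInterval}. You correctly identify this worst-case concentration of the strip mass as the main obstacle, but the proof as written leaves precisely that step open.
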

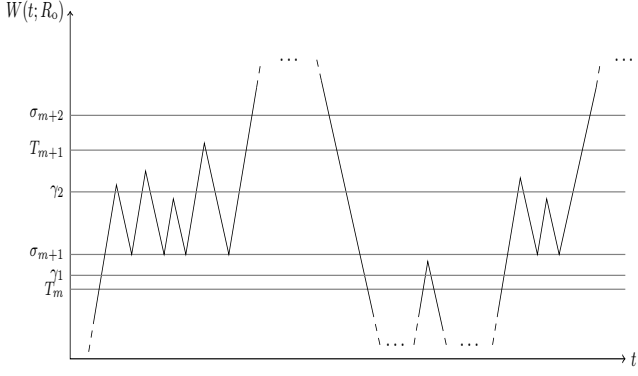
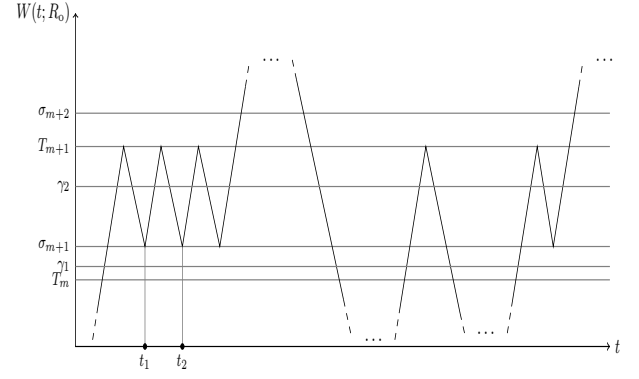
\begin{figure*}
\centering
    \subfigure[$O_{\gamma_1,T_{m+1}}(t)$ is not at its maximum value for a constant $O_{T_{m},T_{m+1}}(t)=c$]{
    \centering
    \begin{adjustbox}{width=0.47\textwidth,height=2in}
    \begin{tikzpicture}
    \draw [<->] (0,5)--(0,0)--(12,0);
    \draw [help lines] (0,1)--(12,1);
    \draw [help lines] (0,1.5)--(12,1.5);
    \draw [help lines] (0,3)--(12,3);
    \draw [help lines] (0,3.5)--(12,3.5);
    \draw [dashed](0.4,0.1)--(0.5,0.5);
    \draw (0.5,.5)--(1,2.5)--(1.33,1.5)--(1.63,2.7)--(2.03,1.5)--(2.23,2.3)--(2.5,1.5)--(2.9,3.1)--(3.43,1.5)--(4.03,3.9);
    \draw [dashed] (4.03,3.9)--(4.13,4.3);
    \node at (4.73,4.3) {$\ldots$};
    \draw [dashed] (5.33,4.3)--(5.43,4);
    \draw (5.43,4)--(6.53,0.7);
    \draw [dashed] (6.53,0.7)--(6.73,0.1);
    \node at (7.03,0.2) {$\ldots$};
    \draw [dashed] (7.43,0.2)--(7.53,0.6);
    \draw (7.53,0.6)--(7.73,1.4)--(8.03,0.5);
    \draw [dashed] (8.03,0.5)--(8.13,0.2);
    \node at (8.63,0.2) {$\ldots$};
    \draw [dashed] (9.13,0.2)--(9.23,0.6);
    \draw (9.23,0.6)--(9.73,2.6)--(10.1,1.5)--(10.3,2.3)--(10.57,1.5)--(11.37,3.9);
    \draw [dashed] (11.37,3.9)--(11.47,4.3);
    \node at (11.97,4.3) {$\ldots$};
    \draw [help lines] (0,2.4)--(12,2.4);
    \draw [help lines] (0,1.2)--(12,1.2);
    \node [left] at (0,2.4) {\footnotesize{$\gamma_2$}};
    \node [left] at (0,1.2) {\footnotesize{$\gamma_1$}};
    \node [left] at (0,5) {\small{$W(t;R_{\rm o})$}};
    \node [right] at (12,0) {\small{$t$}};
    \node [left] at (0,1) {\footnotesize{$T_m$}};
    \node [left] at (0,1.5) {\footnotesize{$\sigma_{m+1}$}};
    \node [left] at (0,3) {\footnotesize{$T_{m+1}$}};
    \node [left] at (0,3.5) {\footnotesize{$\sigma_{m+2}$}};
    \end{tikzpicture}
    \end{adjustbox}
    \label{fig:workload_fluctuationCase1}
    }
    \hspace{1em}
    \subfigure[$O_{\gamma_1,T_{m+1}}(t)$ is at its maximum value for a constant $O_{T_{m},T_{m+1}}(t)=c$]{
    \centering
    \begin{adjustbox}{width=0.45\textwidth,height=2in}
    \begin{tikzpicture}
    \draw [<->] (0,5)--(0,0)--(12.5,0);
    \draw [help lines] (0,1)--(12.5,1);
    \draw [help lines] (0,1.5)--(12.5,1.5);
    \draw [help lines] (0,3)--(12.5,3);
    \draw [help lines] (0,3.5)--(12.5,3.5);
    \draw [dashed](0.4,0.1)--(0.5,0.5);
    \draw (0.5,.5)--(1.125,3)--(1.625,1.5)--(2,3)--(2.5,1.5)--(2.875,3)--(3.375,1.5)--(3.975,3.9);
    \draw [dashed] (3.975,3.9)--(4.075,4.3);
    \node at (4.575,4.3) {$\ldots$};
    \draw [dashed] (5.07,4.3)--(5.17,4);
    \draw (5.17,4)--(6.27,0.7);
    \draw [dashed] (6.27,0.7)--(6.47,0.1);
    \node at (6.97,0.1) {$\ldots$};
    \draw [dashed] (7.47,0.1)--(7.57,0.5); 
    \draw (7.57,0.5)--(8.195,3)--(8.995,0.5);
    \draw [dashed] (8.995,0.5)--(9.1,0.2);
    \node at (9.6,0.2) {$\ldots$};
    \draw [dashed] (10.1,0.2)--(10.2,0.6);
    \draw (10.2,0.6)--(10.8,3)--(11.175,1.5)--(11.775,3.9);
    \draw [dashed](11.775,3.9)--(11.875,4.3);
    \node at (12.375,4.3) {$\ldots$};
    \draw [help lines] (0,2.4)--(12.5,2.4);
    \draw [help lines] (0,1.2)--(12.5,1.2);
    \node [left] at (0,2.4) {\footnotesize{$\gamma_2$}};
    \node [left] at (0,1.2) {\footnotesize{$\gamma_1$}};
    \node [left] at (0,5) {\small{$W(t;R_{\rm o})$}};
    \node [right] at (12.5,0) {\small{$t$}};
    \node [left] at (0,1) {\footnotesize{$T_m$}};
    \node [left] at (0,1.5) {\footnotesize{$\sigma_{m+1}$}};
    \node [left] at (0,3) {\footnotesize{$T_{m+1}$}};
    \node [left] at (0,3.5) {\footnotesize{$\sigma_{m+2}$}};
    \draw [help lines] (1.625,0)--(1.625,1.5);
    \draw [help lines] (2.5,0)--(2.5,1.5);
    \node [below] at (1.625,0) {\footnotesize{$t_1$}};
    \node [below] at (2.5,0) {\footnotesize{$t_2$}};
    \draw [fill](1.625,0) circle [radius=0.04];
    \draw [fill](2.5,0) circle [radius=0.04];
    \end{tikzpicture}
    \end{adjustbox}
    \label{fig:workload_fluctuationCase2}
    }
    \caption{Fluctuation of the $W(t;R_{\rm o})$ between $T_{m}$ and $T_{m+1}$}
    \label{fig:workload_fluctuation}
\end{figure*}
\begin{proof}
We prove this Lemma for two following cases
\setcounter{case}{0}
\begin{case}
$\gamma\in[\sigma_{m+1},T_{m+1})$.
\end{case}
We know according to~\eqref{eq:Ij_3at2T}
\begin{align}
 O_{T_{m}}(t;R_{\rm o})\leq t c_1;~~O_{T_{m+1}}(t;R_{\rm o})\leq t c_2. 
\label{eq:overshootbounds}
\end{align}
For the simplification of the proof we extend the concept of the overshoot to the overshoot duration with respect to two threshold values.
\begin{definition}
\label{def:Overshoot Duration for two threshold}
Given two threshold values $\zeta_2>\zeta_1 >0$ and a traffic stream $R$,
a {\em limited overshoot interval} with respect to $R$, $\zeta_1$ and $\zeta_2$ is a maximal interval of time $\kappa$
such that $\zeta_2>W_\rho(\tau; R) \geq \zeta_1$ for all $ \tau \in \kappa$.  Let $|\kappa|$ denote the
length of interval $\kappa$.  Let $\mathcal{O}_{[\zeta_1, \zeta_2)}(t)$ denote the set of limited overshoot intervals
contained in $[0, t]$.  Then the {\em limited overshoot duration} up to time~$t$ is defined as 
\begin{align}
O_{\zeta_1,\zeta_2}(t;R) = \sum_{\kappa \in \mathcal{O}_{[\zeta_1, \zeta_2)}(t)} |\kappa| .
 \label{eq:total overshoot duration for two threshold} 
\end{align}
\end{definition}
According to the Definition~\ref{def:Overshoot Duration} and~\ref{def:Overshoot Duration for two threshold} it is obvious that 
\begin{equation*}
 O_{\zeta_2}(t;R)=O_{\zeta_1,\zeta_2}(t;R)+O_{\zeta_1}(t;R)  
\end{equation*}
On the other hand, for a fixed $O_{T_{m}}(t;R)-O_{T_{m+1}}(t;R)=O_{T_m,T_{m+1}}(t;R)=c$, for any $\gamma\in[T_M,T_{M+1}]$, $O_{\gamma}(t;R_{\rm o})$ is maximized when  $O_{\gamma,T_{m+1}}(t;R)$ is at its maximum value. But we should note that as shown in Fig.~\ref{fig:deterministic regulator delay and workload} and according to equations~\eqref{eq:Stochastic Regulator eq 4}-\eqref{eq:Stochastic Regulator eq 6} the workload $W(t;R_{\rm o})$ can fluctuate between $T_M$ and $T_{M+1}$ as shown in Fig.~\ref{fig:workload_fluctuation}. It can be seen by comparing Fig.~\ref{fig:workload_fluctuationCase1} and~\ref{fig:workload_fluctuationCase2} that $O_{\gamma,T_{m+1}}(t;R)$ is greater in Fig.~\ref{fig:workload_fluctuationCase2} in compare to Fig.~\ref{fig:workload_fluctuationCase1}. In other words, we should have the fluctuation of $W(t;R_{\rm o})$ between $T_{m}$ and $T_{m+1}$ in units of the complete fluctuation as shown in Fig.~\ref{fig:workload_completefluctuation}. By considering the increasing slope of $W(t;R_{\rm o})$ as $C-\rho$ and the decreasing slope of $-\rho$, it can easily be shown that
\begin{figure}
    \centering
    \begin{tikzpicture}
    \draw [<->] (0,4)--(0,0)--(3,0);
    \draw [help lines] (0,1)--(3,1);
    \draw [help lines] (0,1.5)--(3,1.5);
    \draw [help lines] (0,3)--(3,3);
    \draw [help lines] (0,3.5)--(3,3.5);
    \draw (0.75,1.5)--(1.125,3)--(1.625,1.5);
    \draw [help lines] (0,2.4)--(3,2.4);
    \draw [help lines] (0,1.2)--(3,1.2);
    \node [left] at (0,2.4) {\footnotesize{$\gamma_2$}};
    \node [left] at (0,1.2) {\footnotesize{$\gamma_1$}};
    \node [left] at (0,4) {\small{$W(t;R_{\rm o})$}};
    \node [right] at (3,0) {\small{$t$}};
    \node [left] at (0,1) {\footnotesize{$T_m$}};
    \node [left] at (0,1.5) {\footnotesize{$\sigma_{m+1}$}};
    \node [left] at (0,3) {\footnotesize{$T_{m+1}$}};
    \node [left] at (0,3.5) {\footnotesize{$\sigma_{m+2}$}};
    \draw [help lines] (0.75,0)--(0.75,1.5);
    \draw [help lines] (1.625,0)--(1.625,1.5);
    \node [below] at (0.75,0) {\footnotesize{$t_1$}};
    \node [below] at (1.625,0) {\footnotesize{$t_2$}};
    \draw [help lines] (0.975,0)--(0.975,2.4);
    \draw [help lines] (1.325,0)--(1.325,2.4);
    \node [below] at (0.975,0) {\footnotesize{$\tau_1$}};
    \node [below] at (1.325,0) {\footnotesize{$\tau_2$}};
    \draw [fill](0.75,0) circle [radius=0.04];
    \draw [fill](0.975,0) circle [radius=0.04];
    \draw [fill](1.325,0) circle [radius=0.04];
    \draw [fill](1.625,0) circle [radius=0.04];
    \end{tikzpicture}
    \caption{One unit of complete fluctuation of $W(t;R{\rm o})$ between $T_{m}$ and $T_{m+1}$}
    \label{fig:workload_completefluctuation}
    \end{figure}
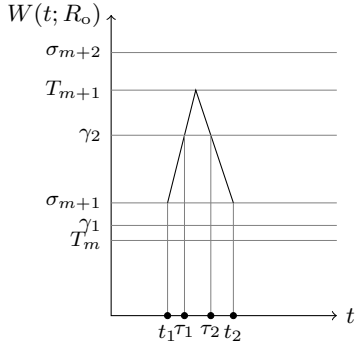
\begin{align}
&\Delta t=t_2-t_1=(T_{m+1}-\sigma_{m+1})\left(\frac{1}{\rho}+\frac{1}{C-\rho}\right) 
\label{eq:CompleteFluctuationDuration}
\\
&\Delta \tau=\tau_2-\tau_1=(T_{m+1}-\gamma_2)\left(\frac{1}{\rho}+\frac{1}{C-\rho}\right) 
\label{eq:FluctuationDuration}
\end{align}
Therefore, if $O_{T_m,T_{m+1}}(t;R)=c$, in order to maximize $O_{\gamma,T_{m+1}}(t;R)$ we should have $n_1$ complete fluctuation interval, where $n_1$ is
\begin{equation}
    n_1=\left\lfloor\frac{c}{\Delta t}\right\rfloor.
    \label{eq:n}
\end{equation}
For example in Fig.~\ref{fig:workload_fluctuationCase2}, $n_1=6$.
Therefore, 
\begin{equation}
O_{\gamma,T_{m+1}}(t;R)\leq \frac{c\Delta \tau}{\Delta t}
\label{eq:O_gamma_Tm+1_bound}
\end{equation}
This upper bound is tight and can happen for $M$ being at its maximum value, such that $\sigma_{m+1}\approx T_m$, and for the fluctuation of $W(t;R_{\rm o})$ as in Fig.~\ref{fig:workload_fluctuationCase2}. Hence,
\begin{align}
&o_{\gamma}(t)=\frac{O_{\gamma,T_{m+1}}(t;R)+O_{T_{m+1}}(t;R)}{t}\nonumber\\
&\leq \frac{c\Delta \tau}{t\Delta t}+o_{T_{m+1}}(t)=o_{T_{m}}(t)
\nonumber\\
&-(\gamma-\sigma_{m+1})\frac{o_{T_{m}}(t)-o_{T_{m+1}}(t)}{T_{m+1}-\sigma_{m+1}}
\label{eq:O_gamma_bound}
\end{align}
It can easily be shown with the constraint of~\eqref{eq:Ij_3at2T}, we have 
\begin{equation}
 \frac{O_{\gamma}(t;R)}{t}\leq c_1- (\gamma-\sigma_{m+1})\frac{c_1-c_2}{T_{m+1}-\sigma_{m+1}}
\end{equation}
\begin{case}
$\gamma\in[T_m,\sigma_{m+1})$.
\end{case}
It can be seen in Fig.~\ref{fig:workload_fluctuationCase1} and Fig.~\ref{fig:workload_fluctuationCase2}, that
\begin{equation}
 O_{\gamma}(t;R)  < O_{T_m}(t;R) 
 \label{eq:overshootapproximate}
\end{equation}
Therefore, 
\begin{equation}
 o_{\gamma}(t)< o_{T_m}(t)\leq f(T_m)  
 \label{eq:overshootratioapproximate}
\end{equation}
If we choose $M$ at its maximum level the bounding function can be a linear function between any two points $T_m$ and $T_{m+1}$.
\end{proof}

\begin{appxcorollary}
\label{appxcorollary:1}
Let 
\begin{align}
    o_{T_{m}}(t) \leq  \bar{f}(T_{m});~~o_{T_{m+1}}(t) \leq \bar{f}(T_{m+1}),
\end{align}
for $m\in\{1,2,\ldots,M-2\}$, and $t\in[b_{j-1},b_j(\sigma_\ell)]$ and $M$ is chosen as the maximum value such that $T_i\approx\sigma_{i+1}$ for $i\in\{1,2,\ldots,M-1\}$. Then 
\begin{align}
    o_{\gamma}(t)\leq \bar{f}(T_{m})-(\gamma-T_m)\frac{\bar{f}(T_{m})-\bar{f}(T_{m+1})}{T_{m+1}-T_m},
\end{align}
for $\forall\gamma\in[T_{m},T_{m+1})$. 
\end{appxcorollary}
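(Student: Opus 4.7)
The plan is to derive the corollary as a direct specialization of Lemma~\ref{appxlem:linearupperbound}. The hypotheses of the corollary match the hypotheses of the lemma once we set $c_1 := \bar{f}(T_m)$ and $c_2 := \bar{f}(T_{m+1})$: from $o_{T_m}(t) \leq \bar{f}(T_m)$ and $o_{T_{m+1}}(t) \leq \bar{f}(T_{m+1})$ on $[b_{j-1},b_j(\sigma_\ell)]$ we immediately get $o_{T_m}(t) \leq c_1$ and $o_{T_{m+1}}(t) \leq c_2$ on the same interval. So the two-branch conclusion of Lemma~\ref{appxlem:linearupperbound} applies with these constants.

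Next, I would exploit the ``$M$ maximal'' hypothesis to collapse the two branches into the single interval $[T_m,T_{m+1})$ appearing in the corollary. Recall from \eqref{eq:sigma_i} that $\sigma_i = T_i - \delta$, and from the spacing constraint \eqref{eq:T_i} that $T_{i+1} - T_i \geq \delta$ with $M_{\max} = \lfloor T/\delta\rfloor - 1$. When $M$ is taken at its maximum value, the spacing inequalities are tight, i.e., $T_{i+1} - T_i \to \delta$, and hence $\sigma_{m+1} = T_{m+1} - \delta \to T_m$. This is precisely the ``$T_i \approx \sigma_{i+1}$'' assertion in the hypothesis, and it is what allows us to absorb $\sigma_{m+1}$ into $T_m$.

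With this identification, the first branch of Lemma~\ref{appxlem:linearupperbound} states that for $\gamma \in [\sigma_{m+1}, T_{m+1})$,
\[
o_\gamma(t) \leq \bar{f}(T_m) - (\gamma - T_m)\frac{\bar{f}(T_m) - \bar{f}(T_{m+1})}{T_{m+1} - \sigma_{m+1}} ,
\]
and substituting $\sigma_{m+1} \approx T_m$ in both the lower endpoint of the range and the denominator yields exactly the inequality claimed in the corollary. The second branch of the lemma, which covers $\gamma \in [T_m, \sigma_{m+1})$, becomes vacuous under the same approximation because that interval shrinks to a point; its bound $o_\gamma(t) < c_1 = \bar{f}(T_m)$ is in any event consistent with the corollary's bound evaluated at $\gamma = T_m$.

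The main obstacle, such as it is, is purely a matter of interpreting the symbol ``$\approx$.'' One route is to read the corollary as a limiting statement ($M = M_{\max}$ so that $T_{i+1} - T_i = \delta$ and $\sigma_{i+1} = T_i$ exactly), in which case the argument above is a clean algebraic substitution into Lemma~\ref{appxlem:linearupperbound}. Alternatively, one could carry an explicit error term $T_m - \sigma_{m+1}$ through the denominator and track the resulting discrepancy, which would go to zero as $\delta/(T_{m+1}-T_m) \to 0$. Either way, all of the analytic content already resides in Lemma~\ref{appxlem:linearupperbound}, and the corollary follows without additional work on the workload fluctuation structure.
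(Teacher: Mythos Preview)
Your proposal is correct and is exactly the intended derivation: the paper states the corollary immediately after Lemma~\ref{appxlem:linearupperbound} with no separate proof, and the closing sentence of that lemma's proof (``If we choose $M$ at its maximum level the bounding function can be a linear function between any two points $T_m$ and $T_{m+1}$'') signals precisely the specialization $c_1=\bar f(T_m)$, $c_2=\bar f(T_{m+1})$, $\sigma_{m+1}\approx T_m$ that you carry out.
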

\begin{remark}
For the case that $M$ is not chosen as the maximum possible value and $T_i<\sigma_{i+1}$ for some $i\in\{1,2,\ldots,M-1\}$, by slightly modifying the definition of the $\bar{f}(\gamma)$, we can get a result similar to Corollary~\ref{appxcorollary:1}. 
In this modification, in the interval $[\sigma_{i+1}, T_{i+1})$ let 
\begin{align}
    l_i(\gamma) := f(T_{i+1}) + \omega_i (\gamma - T_{i+1})
\end{align}
represent the line connecting the points $(\sigma_{i+1}, f(\sigma_{i+1}))$ and $(T_{i+1}, f(T_{i+1}))$
with slope 
\begin{align}
\hat{\omega}_i := \frac{f(T_{i+1}) - f(\sigma_{i+1})}{T_{i+1} - \sigma_{i+1}}
\end{align} 
for $i=1, \ldots, M-2$.  If $f(\gamma) \geq l_i(\gamma)$
for all $\gamma \in [\sigma_{i+1}, T_{i+1})$ we set $\bar{f} = l_i$ in this interval.  Otherwise, we
set $\bar{f} = h_i$ on $[\sigma_{i+1}, T_{i+1})$, where
\begin{align}
    h_i(\gamma) = f(T_{i+1}) + f'(T_{i+1}) (\gamma - T_{i+1}) .
\end{align}
On the other hand, in the interval $[T_i,\sigma_{i+1})$ we set $\bar{f} = f(\sigma_{i+1})$. Similarly, we  set $\bar{f}(\gamma) = f(T)$ for
$\gamma \in [T_{M-1}, T_{M}]$ and $\bar{f}(\gamma) = 0$ for $\gamma > T_{M}$.  To summarize,
we define
\begin{align}
\bar{f}(\gamma) :=\left\{\begin{array}{ll}
    1, &  \gamma\in[0,T_1) , \\
   f(\sigma_{i+1}),  &  \gamma\in[T_i,\sigma_{i+1}) , \\
   f(T_{i+1}) + \hat{m}_i(\gamma-T_{i+1}),  &  \gamma\in [\sigma_{i+1},T_{i+1}) ,\\
    f(T) , &  \gamma\in [T_{M-1},T_{M}], \\
    0,         &  \gamma>T_M ,  
\end{array}\right.
\label{eq:f_barnew}
\end{align}
where the slopes $\hat{m}_i$ are defined by
\begin{align}
    \hat{m}_i =
    \left \{
      \begin{array}{ll}
         \hat{\omega}_i,        & \mbox{if $f \geq h_i$ on $[\sigma_{i+1}, T_{i+1})$} , \\
         f'(T_{i+1}), & \mbox{otherwise},
  \end{array}
  \right . 
\end{align}
for $i = 1, \ldots, M-2$. This modified $\Bar{f}(\gamma)$ is shown in Fig.~\ref{fig:fbarnew}.
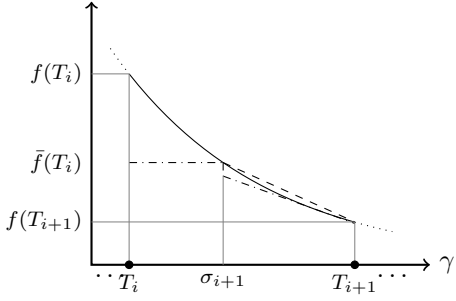
\begin{figure}
\centering
\begin{tikzpicture}
\draw [thick,<->] (4.5,0)--(0,0)--(0,3.5);
\node [right] at (4.5,0) {$\gamma$};
\draw [domain=0.5:3.5] plot (\x,{1.5*(0.8*exp(-1*(\x/2-1)))});
\draw [domain=0.25:0.5][dotted] plot (\x,{1.5*(0.8*exp(-1*(\x/2-1)))});
\draw [domain=3.5:4][dotted] plot (\x,{1.5*(0.8*exp(-1*(\x/2-1)))});
\draw [domain=0.5:3.5][dashed] (1.75,1.36)--(3.5,0.57);
\draw [domain=1.75:3.5][dashdotted] plot (\x,{-0.36*(\x-3.5)+0.55});
\draw [dashdotted] (1.75,1.36)--(1.75,1.18);
\draw [dashdotted] (0.5,1.36)--(1.75,1.36);
\draw [help lines] (3.5,0) -- (3.5,0.57);
\draw [help lines] (0.5,0) -- (0.5,2.54);
\draw [help lines] (0,0.57) -- (3.5,0.57);
\draw [help lines] (0,2.54) -- (0.5,2.54);
\draw [help lines] (1.75,0) -- (1.75,1.18);
\draw [fill](3.5,0) circle [radius=0.05];
\draw [fill](0.5,0) circle [radius=0.05];
\node [below] at (3.5,0){\begin{footnotesize}{$T_{i+1}$}\end{footnotesize}};
\node [below] at (0.5,0){\begin{footnotesize}{$T_{i}$}\end{footnotesize}};
\node [left] at (0,0.59){\begin{footnotesize}{$f(T_{i+1})$}\end{footnotesize}};
\node [left] at (0,2.54){\begin{footnotesize}{$f(T_{i})$}\end{footnotesize}};
\node [left] at (0,1.36){\begin{footnotesize}{$\bar{f}(T_{i})$}\end{footnotesize}};
\node [below] at (1.75,0){\begin{footnotesize}{$\sigma_{i+1}$}\end{footnotesize}};
\node [below] at (0.25,0) {$\ldots$};
\node [below] at (4,0) {$\ldots$};
\end{tikzpicture}
\caption{Modified definition of $\bar{f}(\gamma)$ when $M<M_{\max}$.}
\label{fig:fbarnew}
\end{figure}
\end{remark}
\begin{appxcorollary}
\label{appxcorollary:2}
Let 
\begin{align}
    o_{T_{m}}(t) \leq  \bar{f}(T_{m});~~o_{T_{m+1}}(t) \leq \bar{f}(T_{m+1}),
\end{align}
for $m\in\{1,2,\ldots,M-2\}$, and $t\in[b_{j-1},b_j(\sigma_\ell)]$ and $\bar{f}(\gamma)$ is defined as~\eqref{eq:f_barnew}. Then 
\begin{align}
    o_{\gamma}(t)\leq \bar{f}(\sigma_{m+1})-(\gamma-T_m)\frac{\bar{f}(\sigma_{m+1})-\bar{f}(T_{m+1})}{T_{m+1}-\sigma_{m+1}}, 
\end{align}
for $\forall\gamma\in[\sigma_{m+1},T_{m+1})$ and  
\begin{align}
    o_{\gamma}(t)\leq \bar{f}(\sigma_{m+1}),
\end{align}
for $\forall\gamma\in[T_{m},\sigma_{m+1})$.
\end{appxcorollary}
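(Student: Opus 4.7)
The plan is to treat this corollary as a straightforward adaptation of Lemma~\ref{appxlem:linearupperbound} to the modified piecewise bounding function $\bar{f}$ of \eqref{eq:f_barnew}. The key bookkeeping observation is that on the flat segment $[T_m,\sigma_{m+1})$ the modified $\bar{f}$ satisfies $\bar{f}(T_m)=f(\sigma_{m+1})=\bar{f}(\sigma_{m+1})$, so the hypothesis $o_{T_m}(t)\leq\bar{f}(T_m)$ can equivalently be read as $o_{T_m}(t)\leq\bar{f}(\sigma_{m+1})$, and the two conclusions take this common value at $\gamma=\sigma_{m+1}$.

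For the second assertion, $\gamma\in[T_m,\sigma_{m+1})$, I would simply appeal to the monotonicity of the overshoot set in the threshold: raising the threshold can only shrink the overshoot set, so $O_\gamma(t;R_{\rm o})\leq O_{T_m}(t;R_{\rm o})$ whenever $\gamma\geq T_m$. Dividing by $t$ and applying the hypothesis gives $o_\gamma(t)\leq o_{T_m}(t)\leq\bar{f}(T_m)=\bar{f}(\sigma_{m+1})$, as required.

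For the first assertion, $\gamma\in[\sigma_{m+1},T_{m+1})$, I would replay the complete-fluctuation argument used in the proof of Lemma~\ref{appxlem:linearupperbound}. The regulator dynamics \eqref{eq:Stochastic Regulator eq 4}--\eqref{eq:Stochastic Regulator eq 6} are unchanged, and the burst parameter $\sigma_{m+1}=T_{m+1}-\delta$ is still the value to which the output workload is reset on a departure under $\sigma^*(j)=\sigma_{m+1}$. Hence, for fixed $c=O_{\sigma_{m+1},T_{m+1}}(t;R_{\rm o})$, the quantity $O_{\gamma,T_{m+1}}(t;R_{\rm o})$ is maximized when the workload oscillates in complete triangular fluctuations between $\sigma_{m+1}$ and $T_{m+1}$; by the duration computations \eqref{eq:CompleteFluctuationDuration}--\eqref{eq:FluctuationDuration}, the fraction of each such fluctuation spent above $\gamma$ is $(T_{m+1}-\gamma)/(T_{m+1}-\sigma_{m+1})$. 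Summing this contribution with $o_{T_{m+1}}(t)$ via the decomposition $o_\gamma(t)=o_{T_{m+1}}(t)+O_{\gamma,T_{m+1}}(t;R_{\rm o})/t$ and substituting the two hypotheses yields a linear interpolation between $\bar{f}(\sigma_{m+1})$ and $\bar{f}(T_{m+1})$ on $[\sigma_{m+1},T_{m+1})$, which is the stated bound.

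The main obstacle I expect is to justify rigorously that the lower vertex of the worst-case fluctuation is really $\sigma_{m+1}$ under the modified $\bar{f}$: one must verify that the selection rule \eqref{eq:Ij_2}, together with the hypotheses on $o_{T_m}$ and $o_{T_{m+1}}$, precludes the regulator from ever resetting the output workload strictly above $\sigma_{m+1}$ in the regime considered, so that the triangular upper envelope between $\sigma_{m+1}$ and $T_{m+1}$ is tight. A secondary technicality is the precise form of the linear interpolation coefficient, which requires a careful accounting of the proportions in \eqref{eq:O_gamma_Tm+1_bound}--\eqref{eq:O_gamma_bound} adapted to the new piecewise definition of $\bar{f}$; this step is essentially notational rather than conceptual.
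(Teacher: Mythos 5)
Your proposal is correct and follows essentially the same route as the paper: the paper treats this corollary as an immediate instance of Lemma~\ref{appxlem:linearupperbound} with $c_1=\bar{f}(T_m)=\bar{f}(\sigma_{m+1})$ and $c_2=\bar{f}(T_{m+1})$ under the modified bounding function \eqref{eq:f_barnew}, and your two steps (threshold-monotonicity of the overshoot set on $[T_m,\sigma_{m+1})$, and the complete-fluctuation/linear-interpolation argument on $[\sigma_{m+1},T_{m+1})$) are exactly the two cases of that lemma's proof. The "obstacle" you flag about the lower vertex of the worst-case fluctuation is already absorbed into Lemma~\ref{appxlem:linearupperbound} itself, so you could simply cite it rather than re-derive it.
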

\begin{proof}[Proof of Theorem~\ref{thm:overshoot_soln2}]
In Lemma~\ref{appxlem:overshoot_soln2_discreet T_i} we showed if $\sigma^*(j)$ is chosen using~\eqref{eq:sigma_j_soln1 appendix} then 
\begin{align}
&o_{T_i}(t)\leq \bar{f}(T_i)
\nonumber\\
&~~~\forall~t\in[b_{j-1},b_{j}(\sigma^*(j))],~i\in\{1,2,\ldots,M\}.
\label{eq:overshootdiscreetTbound}
\end{align}
On the other hand, we showed in Corollary~\ref{appxcorollary:1} that if we have ~\eqref{eq:overshootdiscreetTbound} and $M=M_{\max}$, defined in~\eqref{eq:M_max_value}, then 
\begin{align}
    o_{\gamma}(t)\leq \bar{f}(T_{i})-(\gamma-T_i)\frac{\bar{f}(T_{i})-\bar{f}(T_{i+1})}{T_{i+1}-T_i},
\end{align}
for $\forall\gamma\in[T_{i},T_{i+1})$ and $\forall i\in\{1,\ldots,M\}$. On the other hand, in Corollary~\ref{appxcorollary:2} we showed, if if we have ~\eqref{eq:overshootdiscreetTbound} and $M<M_{\max}$, then with the modified definition of $\Bar{f}(\gamma)$ in~\eqref{eq:f_barnew}, \begin{align}
    o_{\gamma}(t)\leq \bar{f}(\sigma_{i+1})-(\gamma-T_i)\frac{\bar{f}(\sigma_{i+1})-\bar{f}(T_{i+1})}{T_{i+1}-\sigma_{i+1}}, 
\end{align}
for $\forall\gamma\in[\sigma_{i+1},T_{i+1})$ and  
\begin{align}
    o_{\gamma}(t)\leq \bar{f}(\sigma_{i+1}),
\end{align}
for $\forall\gamma\in[T_{i},\sigma_{i+1})$ and $\forall i\in\{1,\ldots,M\}$. Therefore, for $M=M_{\max}$ if for all $i \in \{ 1,2,\ldots,M-1 \}$ and all $\gamma \in [T_i,T_{i+1}]$,
\begin{align}
f(\gamma)\geq   \bar{f}(T_{i})-(\gamma-T_i)\frac{\bar{f}(T_{i})-\bar{f}(T_{i+1})}{T_{i+1}-T_i},
\label{eq:f_fbar_inequality1}
\end{align}
with $\bar{f}(\gamma)$ defined in~\eqref{eq:f_bar}, and for $M<M_{\max}$ if for all $i \in \{ 1,2,\ldots,M-1 \}$ and all $\gamma \in [\sigma_{i+1},T_{i+1})$
\begin{align}
    f(\gamma)\geq \bar{f}(\sigma_{i+1})-(\gamma-T_i)\frac{\bar{f}(\sigma_{i+1})-\bar{f}(T_{i+1})}{T_{i+1}-\sigma_{i+1}}, 
    \label{eq:f_fbar_inequality2}
\end{align}
and if for all $\gamma \in [T_{i},\sigma_{i+1})$
\begin{align}
    f(\gamma)\geq \bar{f}(\sigma_{i+1}),
    \label{eq:f_fbar_inequality3}
\end{align}
with $\bar{f}(\gamma)$ defined in~\eqref{eq:f_barnew}, then 
\begin{align}
o_{\gamma}(t) \leq f(\gamma),
~~\forall~t\in[b_{j-1},b_{j}(\sigma^*(j))],~~ \forall \gamma \in [T_1, T].
\end{align}
But definition of $\bar{f}(\gamma)$ assures inequalities in~\eqref{eq:f_fbar_inequality1}--\eqref{eq:f_fbar_inequality3}. 
\end{proof}

\section{Proof of Theorem~\ref{thm:reducing complexity}}
\label{appx:reducing complexity}
\numberwithin{equation}{section}
\setcounter{equation}{0}
In order to prove Theorem~\ref{thm:reducing complexity} we first establish the following lemma:
\begin{appxlemma}
\label{appxlem:4}
Let $\mathcal{B}_j$ be as define in~\eqref{eq:cBj}, let $k=\min\mathcal{B}_j$ and $\mathcal{J}_j$ be as defined in~\eqref{eq:Jj}. If $m\in\mathcal{J}_j$ and $m<k-1$ then 
\begin{align}
\frac{O_{T_{m}}(b_j(\sigma_\ell);R_{\rm o})}{b_j(\sigma_\ell)}+&
\frac{W_\rho(b_{j}(\sigma_\ell);R_{\rm o})-T_m}{\rho b_{j}(\sigma_\ell)}(1-\bar{f}(T_m))
\nonumber\\
&  \leq \bar{f}(T_m), 
\label{eq:OvershootRatioPlusEpsInequality}
\end{align}
for $\forall\ell\in\{m+1,\ldots,k\}$.
\end{appxlemma}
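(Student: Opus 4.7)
The plan is to establish the inequality for $\ell = k$ by recognizing that it is essentially the definition of $m \in \mathcal{J}_j$, then use a scaling identity to transfer the bound to every $\ell \in \{m+1, \ldots, k-1\}$. First I would set up notation: let $W_1 := W_\rho(\tilde{s}_j; R_1)$, $W_a := W_\rho(b_{j-1}; R_{\rm o})$, $\delta_j := (1 - \rho/C)L_j$, and write $b_\ell := b_j(\sigma_\ell)$, $O_\ell := O_{T_m}(b_\ell; R_{\rm o})$. The $\ell = k$ case unfolds the definition of $\mathcal{J}_j$: since $m < k-1$, the formula $\epsilon_{m,j}(\sigma_k)$ from \eqref{eq:epsilonij} is the ``upper branch,'' giving exactly the second term in the lemma's inequality.

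Next I would establish a uniform formula for the overshoot on the whole interval $[b_{j-1}, b_\ell]$ when $\ell \in \{m+1, \ldots, k\}$, namely
\begin{align*}
O_\ell = O_{T_m}(b_{j-1}; R_{\rm o}) + (b_\ell - b_{j-1}) ,
\end{align*}
i.e., that $W_\rho(\cdot; R_{\rm o}) \geq T_m$ throughout. For this I rely on three observations: (i) $k - 1 \notin \mathcal{B}_j$ gives $\sigma_{k-1} < W_1$, and since $m \leq k-2$, the spacing condition \eqref{eq:T_i} yields $T_m \leq \sigma_{k-1} < W_1$; (ii) for $\ell \geq m+1$, $\sigma_\ell \geq \sigma_{m+1} \geq T_m$; (iii) during $[b_{j-1}, b_\ell]$ the workload equals $W_a \geq W_1$ at $b_{j-1}$, decreases linearly to either $W_1$ (if $\ell = k$) or $\sigma_\ell$ (if $\ell < k$) by the governing equations \eqref{eq:Stochastic Regulator eq 1}--\eqref{eq:Stochastic Regulator eq 6}, then rises linearly to $W_1 + \delta_j$ or $\sigma_\ell + \delta_j$. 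All these values exceed $T_m$, so the workload never drops below $T_m$ on $[b_{j-1}, b_\ell]$, justifying the formula.

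From \eqref{eq:tj_R1} and \eqref{eq:bj} I then record the two relations that drive the argument:
\begin{align*}
 b_\ell - b_k = (W_1 - \sigma_\ell)/\rho, \qquad
 W_\rho(b_\ell; R_{\rm o}) = \sigma_\ell + \delta_j \quad (\ell < k), \qquad W_\rho(b_k; R_{\rm o}) = W_1 + \delta_j.
\end{align*}
Combined with the overshoot formula, this lets me express the LHS of \eqref{eq:OvershootRatioPlusEpsInequality} for $\ell$ purely in terms of $O_k$, $b_k$, $W_1$, and $\sigma_\ell$. A short algebraic rearrangement, with $r := 1 - \bar{f}(T_m)$, yields the clean identity
\begin{align*}
\mathrm{LHS}(\ell) - \bar{f}(T_m) = \frac{b_k}{b_\ell}\bigl[\mathrm{LHS}(k) - \bar{f}(T_m)\bigr].
\end{align*}

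The hypothesis $m \in \mathcal{J}_j$ says exactly $\mathrm{LHS}(k) \leq \bar{f}(T_m)$, and $b_k / b_\ell \in (0, 1]$ since $b_\ell \geq b_k$, so the identity transfers the inequality to every $\ell \in \{m+1, \ldots, k\}$. The main technical step, and the place where care is needed, is step two: verifying that the workload stays at or above $T_m$ throughout $[b_{j-1}, b_\ell]$ for every $\ell$ in the range, since this is what collapses the overshoot computation into the simple additive form and makes the scaling identity emerge. Everything else is an algebraic reduction.
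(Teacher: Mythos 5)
Your proposal is correct and follows essentially the same route as the paper's proof: it establishes the time-shift relation $b_j(\sigma_\ell)=b_j(\sigma_k)+(W_\rho(\tilde{s}_j;R_1)-\sigma_\ell)/\rho$, the corresponding additive shift in $O_{T_m}$ (justified by the workload staying at or above $T_m$ on the whole interval), and the workload decrease at $b_j(\sigma_\ell)$, then reduces the claim to the $\ell=k$ case, which is the definition of $m\in\mathcal{J}_j$. Your explicit scaling identity $\mathrm{LHS}(\ell)-\bar{f}(T_m)=\frac{b_k}{b_\ell}\bigl[\mathrm{LHS}(k)-\bar{f}(T_m)\bigr]$ is exactly the "some manipulations" the paper leaves implicit, and your verification that $\sigma_\ell\geq\sigma_{m+1}\geq T_m$ via the spacing condition supplies the detail the paper delegates to its figure.
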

\begin{proof}
Note that, 
\begin{align}
\frac{W_\rho(b_{j}(\sigma_\ell);R_{\rm o})-T_m}{\rho b_{j}(\sigma_\ell)}(1-f(T_m))=\epsilon_{m,j}(\sigma_\ell),    
\end{align}
for $\forall\ell\in\{m+2,\ldots,k\}$. According to~\eqref{eq:delay_regulator}, and as it is shown in Fig.~\ref{fig:OvershootRatioEquality}, it can easily be verified that
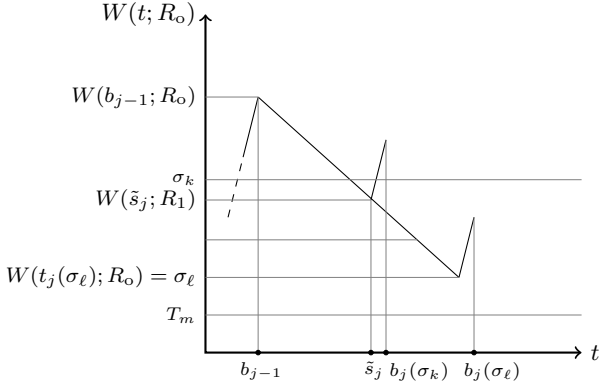
\begin{figure}[t]
    \centering
        \begin{adjustbox}{scale=1}
        \begin{tikzpicture}
           \draw [thick,<->] (5,0)--(0,0)--(0,4.5);
            \node [left] at (0,4.5) 
            {\begin{small}
            $W(t;R_{\rm o})$
            \end{small}};
            \node [left] at (0,3.4) 
            {\begin{footnotesize}
            $W(b_{j-1};R_{\rm o})$
            \end{footnotesize}};
            \node [right] at (5,0) {\small{$t$}};
            \node [below] at (0.7,0) 
            {\begin{scriptsize}
            $b_{j-1}$
            \end{scriptsize}};
            \node [right,below] at (2.2,0) 
            {\begin{scriptsize}
            $\Tilde{s}_j$
            \end{scriptsize}};
            \node [left] at (0,2.03) 
            {\begin{footnotesize}
            $W(\Tilde{s}_j;R_1)$
            \end{footnotesize}};
            \node [left] at (0,1) 
            {\begin{footnotesize}
            $W(t_j(\sigma_\ell);R_{\rm o})=\sigma_\ell$
            \end{footnotesize}};
            \node [below] at (3.77,0) 
            {\begin{scriptsize}
            $b_{j}(\sigma_\ell)$
            \end{scriptsize}};
            \node [left] at (0,2.3) 
            {\begin{scriptsize}
            $\sigma_k$
            \end{scriptsize}};
            \node [below] at (2.8,0) 
            {\begin{scriptsize}
            $b_j(\sigma_k)$
            \end{scriptsize}};
            \node [left] at (0,0.5) 
            {\begin{scriptsize}
            $T_m$
            \end{scriptsize}};
            \draw [dashed] (0.3,1.8)--(0.5,2.6);
            \draw (0.5,2.6)--(0.7,3.4);
            \draw (0.7,3.4)--(3.37,1);
            \draw (2.2,2.03)--(2.4,2.83);
            \draw (3.37,1)--(3.57,1.8);
            \draw [help lines] (0,1)--(3.37,1);
            \draw [help lines] (3.57,0)--(3.57,1.8);
            \draw [help lines] (2.4,0)--(2.4,2.83);
            \draw [help lines] (2.2,0)--(2.2,2.03);
            \draw [help lines] (0,2.3)--(5,2.3);
            \draw [help lines] (0,2.03)--(2.2,2.03);
            \draw [help lines] (0.7,0) -- (0.7,3.4);
            \draw [help lines] (0,3.4) -- (0.7,3.4);
            \draw [help lines] (0,1.5) -- (2.82,1.5);
            \draw [help lines] (0,0.5) -- (5,0.5);
            \draw [fill](2.2,0) circle [radius=0.03];
            \draw [fill](0.7,0) circle [radius=0.03];
            \draw [fill](2.4,0) circle [radius=0.03];
            \draw [fill](3.57,0) circle [radius=0.03];
            \end{tikzpicture}
        \end{adjustbox}
        \caption{Relation between $b_j(\sigma_\ell)$ and $b_j(\sigma_k)$ for $\ell\in\{m+2,\ldots,k\}$ with $\min{\mathcal{B}_j}=k$.}
        \label{fig:OvershootRatioEquality}
    \end{figure}
\begin{equation}
    b_j(\sigma_\ell)=b_j(\sigma_k)+\frac{W(\tilde{s}_j;R_1)-\sigma_\ell}{\rho} \label{eq:bjincrease}
\end{equation}
Therefore, using Proposition~\ref{prop:overshoot}, we have
\begin{equation}
O_{T_{m}}(b_j(\sigma_\ell);R_{\rm o})=O_{T_{m}}(b_j(\sigma_k);R_{\rm o})+\frac{W(\tilde{s}_j;R_1)-\sigma_\ell}{\rho} 
\label{eq:OvershootIncrease}
\end{equation}
Also we have
\begin{equation}
W_\rho(b_{j}(\sigma_\ell);R_{\rm o})=W_\rho(b_{j}(\sigma_k);R_{\rm o})- (W(\tilde{s}_j;R_1)-\sigma_\ell)
\label{eq:WorkloadDecrease}
\end{equation}
Therefore, using~\eqref{eq:bjincrease},~\eqref{eq:OvershootIncrease} and~\eqref{eq:WorkloadDecrease} and $m\in \mathcal{J}_j$ after some manipulations we can show~\eqref{eq:OvershootRatioPlusEpsInequality} holds.
\end{proof}
\begin{proof}[Proof of Theorem~\ref{thm:reducing complexity}]
Based on $m$, defined in~\eqref{eq:m}, we have two following cases: 
\setcounter{case}{0}
\begin{case}
$m=k-1$.
\end{case}
In this case, as
$m\in\mathcal{J}_j$, therefore according to~\eqref{eq:Jj}
\begin{equation}
\frac{O_{T_{\ell}}(b_j(\sigma_k);R_{\rm o})}{b_j(\sigma_k)}\leq \bar{f}(T_{\ell})-\epsilon_{\ell,j}(\sigma_k), 
\label{eq:JjCase1}
\end{equation}
for $\ell=1,2,\ldots,k-1$.
Therefore, according to~\eqref{eq:epsilonij}
\begin{equation}
\frac{O_{T_{m}}(b_j(\sigma_k);R_{\rm o})}{b_j(\sigma_k)}\leq \bar{f}(T_m)-\epsilon_{m,j}(\sigma_k)=\bar{f}(T_k)    
\end{equation}
Hence, according to~\eqref{eq:Kj}, $m+1\in\mathcal{K}_j$. Hence, $\sigma^*(j)$ derived using Theorem~\ref{thm:reducing complexity} is $\sigma^*(j)=\sigma_k$. On the other hand, according to~\eqref{eq:JjCase1} and~\eqref{eq:Ij_2}, $\sigma^*(j)$ derived using Theorem~\ref{thm:overshoot_soln2} will be also $\sigma^*(j)=\sigma_k$.
\begin{case}
$m<k-1$.
\end{case}
 Lets assume $\sigma^*(j)$ derived using Theorem~\ref{thm:reducing complexity} and $\sigma^*(j)=\sigma_n$. We will show $\sigma^*(j)$ derived using Theorem~\ref{thm:overshoot_soln2} will be also $\sigma^*(j)=\sigma_n$. In this case according to~\eqref{eq:Kj} and~\eqref{eq:implementation_3}, if $\ell\in\{1,2,\ldots,n-1\}$ then $\ell \in\mathcal{J}_j$ and $\ell<k-1$. Therefore, according to Lemma~\ref{appxlem:4}
\begin{align}
 \frac{O_{T_{\ell}}(b_j(\sigma_n);R_{\rm o})}{b_j(\sigma_n)}&\leq \bar{f}(T_\ell)-
 \nonumber\\
&\frac{W_\rho(b_{j}(\sigma_n);R_{\rm o})-T_\ell}{\rho b_{j}(\sigma_n)}(1-\bar{f}(T_\ell)), 
\label{eq:Condition1Case2}
\end{align}
for $l=1,2,\ldots,n-1$. On the other hand, as $n\in\mathcal{K}_j$, according to~\eqref{eq:Kj} and~\eqref{eq:epsilonij}
\begin{equation}
\frac{O_{T_{n-1}}(b_j(\sigma_n);R_{\rm o})}{b_j(\sigma_n)}\leq \bar{f}(T_{n-1})=\bar{f}(T_n)-\epsilon_{n-1,j}(\sigma_{n}) 
\label{eq:Condition2Case2}
\end{equation}
Therefore, according to~\eqref{eq:Condition1Case2},~\eqref{eq:Condition2Case2} and~\eqref{eq:Condition1Case2}, $\sigma^*(j)$ derived using Theorem~\ref{thm:overshoot_soln2} will be also $\sigma^*(j)=\sigma_n$.
\end{proof}

\section{Proof of Theorem~\ref{thm:overshoot_soln1}}
\label{appx:overshoot_soln1}

\subsection{Proof of Theorem~\ref{thm:overshoot_soln1}, Part I}
\label{appx_subsec:overshoot_soln1_I}

In this section we prove the following lemma, which is a preliminary version of Theorem~\ref{thm:overshoot_soln1}. Then using the results in this appendix, we prove Theorem~\ref{thm:overshoot_soln1} in the next section. We also provide some details about the practical implementation of Algorithm~\ref{alg:Stochastic regulator} in the next section.

\begin{appxlemma}
\label{appxlem:overshoot_soln1}
Assume that $T_M$ is chosen sufficiently large such that for every packet~$j$ the set 
\begin{equation}
\cB_j = \left \{ 1 \leq \ell \leq M : \sigma_\ell \geq W_\rho(\tilde{s}_j ;  R_{\rm 1} ) \right \},
\end{equation}
is non-empty. 
Set
\begin{align}
\cI_j = \left \{  2 \leq \ell \leq \min \cB_j : o_{T_{\ell-1}}(b_j(\sigma_\ell))\leq \bar{f}(T_{\ell})\right \}
\end{align}
where $t_j(\sigma_\ell)$ and $b_j(\sigma_\ell)$ are given by~\eqref{eq:tj_R1}  and \eqref{eq:bj}, respectively.  Set
\begin{align}
  i^*  = \left \{
    \begin{array}{ll}
        \max \cI_j,  &  \cI_j \neq \emptyset , \\
        1,                   & \mbox{otherwise.}
    \end{array}
    \right .
\end{align}
Let
\begin{align}
    \sigma^*(j) = \sigma_{i^*} .
 \label{eq:sigma_j_soln1}
\end{align}
If 
\begin{equation}
b_{j}\geq \frac{L}{\epsilon\rho}+\frac{T_M-\sigma_1}{\rho}+\frac{L}{C},
\label{eq:b_jlowerbound}
\end{equation}
where $\epsilon > 0$ is given by 
\begin{equation}
\epsilon=\min_{2\leq k \leq M}[f(T_{k-1})-f(T_k)],
\label{eq:epsilonappendix}    
\end{equation}  
then
\begin{equation}
  o_{T_{i^*-1}}(t)  \leq \bar{f}(T_{i^*-1}),
   ~~\forall~t\in[b_{j-1},b_j]. 
   \label{eq:canonical_constraintPreliminaryTheorem}
\end{equation}
\end{appxlemma}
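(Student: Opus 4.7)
The approach is to unfold the definition of $i^*$, reduce the claim to an estimate involving the slack $\bar f(T_{i^*-1})-\bar f(T_{i^*})$, and then control the overshoot ratio across $[b_{j-1},b_j]$ by exploiting the piecewise-linear workload dynamics of the output stream. First, I dispose of the trivial case $\cI_j=\emptyset$ (so $i^*=1$) with the convention $T_0:=0$ and $\bar f(T_0):=1$, which makes the conclusion $o_0(t)\le 1$ vacuous. For the main case $i^*\ge 2$, membership $i^*\in\cI_j$ directly gives
\[
O_{T_{i^*-1}}(b_j;R_{\rm o}) \;\le\; b_j\,\bar f(T_{i^*}),
\]
while the construction of $\bar f$ in \eqref{eq:f_bar} combined with the definition of $\epsilon$ in \eqref{eq:epsilonappendix} yields a knot-to-knot gap of size at least $\epsilon$, i.e.\ $\bar f(T_{i^*-1}) - \bar f(T_{i^*}) \ge \epsilon$. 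Since $\bar f\le 1$, this is equivalent to $\bar f(T_{i^*})/\bar f(T_{i^*-1})\le 1-\epsilon$.

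Next, I split $[b_{j-1},b_j]$ into two phases using \eqref{eq:Stochastic Regulator eq 1}--\eqref{eq:Stochastic Regulator eq 6}. On the departure phase $[t_j,b_j]$, \eqref{eq:Stochastic Regulator eq 5} puts $W_\rho(\cdot;R_{\rm o})$ in $[\sigma_{i^*},\sigma_{i^*}+\delta_j]$; the spacing constraint $T_{i^*}-T_{i^*-1}\ge\delta$ together with $\sigma_{i^*}=T_{i^*}-\delta$ forces $W_\rho(\cdot;R_{\rm o})\ge T_{i^*-1}$ throughout, so $O_{T_{i^*-1}}$ grows at unit rate. An elementary $O(t)/t$ computation (using $do/dt=(1-o)/t\ge 0$) then shows $o_{T_{i^*-1}}$ is non-decreasing on this subinterval, so $o_{T_{i^*-1}}(t)\le o_{T_{i^*-1}}(b_j)\le \bar f(T_{i^*})\le\bar f(T_{i^*-1})$. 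On the pre-departure phase $[b_{j-1},t_j]$, \eqref{eq:Stochastic Regulator eq 1} and \eqref{eq:Stochastic Regulator eq 6} imply $W_\rho(\cdot;R_{\rm o})$ is non-increasing; $O_{T_{i^*-1}}$ grows at unit rate on $[b_{j-1},\tau^*]$ (with $\tau^*$ the first crossing of $T_{i^*-1}$, satisfying $\tau^*-b_{j-1}\le(T_M-\sigma_1)/\rho$) and is constant on $[\tau^*,t_j]$. Hence $o_{T_{i^*-1}}$ is unimodal on $[b_{j-1},t_j]$ with maximum attained at $\tau^*$.

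It remains to bound $o_{T_{i^*-1}}(\tau^*)$. Two observations close the argument: (a) the overshoot increment on $[\tau^*,b_j]$ is at most $L_j/C$ (during $[\tau^*,t_j]$ it is zero, and on $[t_j,b_j]$ it grows at unit rate for duration $L_j/C$), so $O_{T_{i^*-1}}(\tau^*;R_{\rm o})\le b_j\bar f(T_{i^*})-L_j/C$; (b) $\tau^*\ge b_j-(T_M-\sigma_1)/\rho-L/C$, since the decreasing-then-increasing phase contributes at most $(T_M-\sigma_1)/\rho+L/C$ to $b_j-\tau^*$. Plugging the hypothesis $b_j\ge L/(\epsilon\rho)+(T_M-\sigma_1)/\rho+L/C$ into (b) yields $\tau^*\ge L/(\epsilon\rho)$, and combining with (a) and the slack inequality $\bar f(T_{i^*})\le(1-\epsilon)\bar f(T_{i^*-1})$ gives $o_{T_{i^*-1}}(\tau^*)=O_{T_{i^*-1}}(\tau^*)/\tau^*\le\bar f(T_{i^*-1})$, as required.

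The main obstacle is the bookkeeping in the long-idle subcase, where $b_j-b_{j-1}$ is unbounded yet the overshoot increment over a single packet cycle is nevertheless bounded by $(T_M-\sigma_1)/\rho+L/C$ because $W_\rho(\cdot;R_{\rm o})$ falls to $T_{i^*-1}$ in controlled time. One must also verify carefully the degenerate situations where packet~$j$ experiences no delay (so $W_\rho(t_j;R_{\rm o})$ may lie below $T_{i^*-1}$); in that case the pre-departure phase already gives a stronger bound and the argument for $[t_j,b_j]$ is again straightforward since $O_{T_{i^*-1}}$ either stays constant or grows only after a further crossing. Extracting exactly the constant $L/(\epsilon\rho)$ that appears in the hypothesis, rather than a coarser multiple of $(T_M-\sigma_1)/(\epsilon\rho)$, requires the tight splitting at $\tau^*$ used above.
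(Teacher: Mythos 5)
Your overall skeleton (isolate the down-crossing time of $T_{i^*-1}$, use the slack $\bar{f}(T_{i^*-1})-\bar{f}(T_{i^*})\ge\epsilon$, and convert a lower bound on $b_{j-1}$ into one on $b_j$ via $b_j-b_{j-1}\le (T_M-\sigma_1)/\rho+L/C$) matches the paper's, but the quantitative step that actually consumes the hypothesis on $b_j$ does not close. First, your observation (a) is backwards: an \emph{upper} bound of $L_j/C$ on the overshoot increment over $[\tau^*,b_j]$ gives $O_{T_{i^*-1}}(\tau^*;R_{\rm o})\ge O_{T_{i^*-1}}(b_j;R_{\rm o})-L_j/C$, a \emph{lower} bound on $O_{T_{i^*-1}}(\tau^*;R_{\rm o})$; the inequality you want would require the increment to be \emph{at least} $L_j/C$, which fails whenever the workload does not return above $T_{i^*-1}$ by time $b_j$ (the situations of Figs.~\ref{fig:Lemma1Case1subcase4}--\ref{fig:Lemma1Case1subcase5}). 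Second, and more seriously, even with the correct monotonicity bound $O_{T_{i^*-1}}(\tau^*;R_{\rm o})\le O_{T_{i^*-1}}(b_j;R_{\rm o})\le b_j\bar{f}(T_{i^*})$, your denominator comparison uses the gap $b_j-\tau^*\le (T_M-\sigma_1)/\rho+L/C$; the requirement $(b_j-\tau^*)\bar{f}(T_{i^*-1})\le \epsilon b_j$ then forces $b_j\gtrsim\bigl[(T_M-\sigma_1)/\rho+L/C\bigr]/\epsilon$, which is strictly larger than the hypothesized $L/(\epsilon\rho)+(T_M-\sigma_1)/\rho+L/C$ whenever $T_M-\sigma_1\gg L$. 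So the stated constant is not recovered.

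The missing idea is the one the paper's Lemma~\ref{appxlem:bjlowerbound} turns on: compare $o_{T_{i^*-1}}$ at the down-crossing $\eta=\tau^*$ with its value at the subsequent \emph{up-crossing} $\nu$, not at $b_j$. Since $O_{T_{i^*-1}}$ is constant on $[\eta,\nu]$ and $o_{T_{i^*-1}}$ is nondecreasing on $[\nu,b_j]$, one gets $O_{T_{i^*-1}}(\eta;R_{\rm o})=\nu\,o_{T_{i^*-1}}(\nu)\le \nu\bar{f}(T_{i^*})$, hence $o_{T_{i^*-1}}(\eta)\le\bar{f}(T_{i^*})\bigl(1+(\nu-\eta)/\eta\bigr)$. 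The crucial point is that $\nu-\eta\le\delta/\rho+\delta/(C-\rho)=L/\rho$: the dip below $T_{i^*-1}$ only occurs when $i^*=\min\cB_j$, in which case $W_\rho(\tilde{s}_j;R_1)>\sigma_{i^*-1}=T_{i^*-1}-\delta$, so the workload descends at most $\delta$ below $T_{i^*-1}$ before turning around. This replaces your $(T_M-\sigma_1)/\rho+L/C$ by $L/\rho$ in the relevant ratio and yields exactly $b_{j-1}\ge L/(\epsilon\rho)$, which the bound $b_j-b_{j-1}\le(T_M-\sigma_1)/\rho+L/C$ then converts into the stated condition on $b_j$. Note also that the case you set aside as degenerate (packet~$j$ undelayed, $i^*=\min\cB_j$) is in fact the \emph{only} case in which the dip occurs and the size condition on $b_j$ is needed; when $i^*<\min\cB_j$ the workload stays at or above $\sigma_{i^*}\ge T_{i^*-1}$ throughout $[b_{j-1},b_j]$, so $o_{T_{i^*-1}}$ is maximized at $b_j$ and membership in $\cI_j$ suffices with no lower bound on $b_j$.
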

By comparing~\eqref{eq:canonical_constraintPreliminaryTheorem} and~\eqref{eq:canonical_constraint}, we can see Lemma~\ref{appxlem:overshoot_soln1} guarantees satisfying the constraint in~\eqref{eq:canonical_constraint} only for one specific value $\gamma=T_{i^*-1}$ rather than $\forall\gamma\in[0,T]$.
Proof of Lemma~\ref{appxlem:overshoot_soln1} is based on the following three lemmas.

\begin{appxlemma}
\label{appxlem:maximumovershootbjsigmak}
Let $\cB_j$ be as defined in~\eqref{eq:cBj} and let $k=\min \cB_j$. Let assume $k>1$. Set $b_j=b_j(\sigma_k)$. Then
\begin{align}
     \argmax_{t\in[b_{j-1},b_j]} o_{T_{k-1}}(t)
      \in \{ b_{j-1}, b_j, \eta, \nu \}
\end{align}
where $\eta \in [b_{j-1}, t_j]$ and $\nu \in [t_j , b_j]$ 
are determined by
\begin{align}
    W(\eta; R_{\rm o}) = W(\nu; R_{\rm o}) = T_{k-1} .
    \label{eq:EtaandNu}
\end{align}
\end{appxlemma}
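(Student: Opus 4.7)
The plan is to analyze the monotonicity of the overshoot ratio $o_{T_{k-1}}(t)=O_{T_{k-1}}(t;R_{\rm o})/t$ on $[b_{j-1},b_j]$ and to locate its maximum by inspecting where its derivative changes sign.

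First, I would pin down the shape of $W_\rho(\cdot;R_{\rm o})$ on $[b_{j-1},b_j]$. Because $k=\min\cB_j$, we have $\sigma_k\ge W_\rho(\tilde s_j;R_1)$, so Proposition~\ref{prop:deterministic delay w buffer} applied with $\sigma=\sigma_k$ gives $t_j=\tilde s_j$. Equations~\eqref{eq:Stochastic Regulator eq 1}, \eqref{eq:Stochastic Regulator eq 2}, \eqref{eq:Stochastic Regulator eq 4}, and \eqref{eq:Stochastic Regulator eq 5} then show that $W_\rho(\cdot;R_{\rm o})$ is a continuous piecewise-linear function that decreases with slope $-\rho$ (clipped at $0$) on $[b_{j-1},t_j]$ and increases with slope $C-\rho$ on $[t_j,b_j]$. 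Since $W_\rho$ is monotone on each of these two sub-intervals, it crosses the level $T_{k-1}$ at most twice in $[b_{j-1},b_j]$: once downward at some $\eta\in[b_{j-1},t_j]$ and once upward at some $\nu\in[t_j,b_j]$, with $W_\rho(\eta;R_{\rm o})=W_\rho(\nu;R_{\rm o})=T_{k-1}$, whenever such crossings exist.

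Next, because $O_{T_{k-1}}(t;R_{\rm o})=\int_0^t \mathbf{1}\{W_\rho(\tau;R_{\rm o})\ge T_{k-1}\}\,\mathrm{d}\tau$, the quotient rule gives, at points of differentiability,
\[
o_{T_{k-1}}'(t)=\frac{t\,\mathbf{1}\{W_\rho(t;R_{\rm o})\ge T_{k-1}\}-O_{T_{k-1}}(t;R_{\rm o})}{t^2}.
\]
On any open sub-interval where $W_\rho(t;R_{\rm o})\ge T_{k-1}$ we get $o_{T_{k-1}}'(t)=(t-O_{T_{k-1}}(t;R_{\rm o}))/t^2\ge 0$, since the total overshoot duration in $[0,t]$ cannot exceed $t$; hence $o_{T_{k-1}}$ is non-decreasing there. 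On any open sub-interval where $W_\rho(t;R_{\rm o})<T_{k-1}$ we get $o_{T_{k-1}}'(t)=-O_{T_{k-1}}(t;R_{\rm o})/t^2\le 0$; hence $o_{T_{k-1}}$ is non-increasing there.

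Finally, I would partition $[b_{j-1},b_j]$ into the at most three monotone pieces delimited by the points of $\{b_{j-1},\eta,\nu,b_j\}$ that actually exist. Note that $t_j$ need not be a breakpoint: since $W_\rho$ is continuous at $t_j$, the workload lies on the same side of $T_{k-1}$ immediately before and after $t_j$, so the sign of $o_{T_{k-1}}'$ does not change there. Because $o_{T_{k-1}}$ is monotone on each such piece, the maximum over $[b_{j-1},b_j]$ must be attained at one of its endpoints, and therefore belongs to $\{b_{j-1},b_j,\eta,\nu\}$. The main obstacle is a careful case analysis of when $\eta$ or $\nu$ fails to exist, depending on the position of $W_\rho(b_{j-1};R_{\rm o})$, $W_\rho(\tilde s_j;R_1)$, and $W_\rho(\tilde s_j;R_1)+\delta_j$ relative to $T_{k-1}$; in every configuration the monotonicity argument reduces the argmax to a subset of the four claimed points.
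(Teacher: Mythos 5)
Your proposal is correct, and it reaches the conclusion by a genuinely different and more economical route than the paper. The paper's proof fixes the same two facts you start from ($t_j=\tilde s_j$ because $\sigma_k\geq W_\rho(\tilde s_j;R_1)$, and the V-shaped workload with slopes $-\rho$ and $C-\rho$), but then splits into the cases $\tilde s_j=s_j$ versus $\tilde s_j=b_{j-1}$, further subdivides according to whether $T_k-T_{k-1}>\delta$ or $=\delta$ and where $W_\rho(\tilde s_j;R_1)$ and $W_\rho(\tilde s_j;R_1)+\delta_j$ sit relative to $T_{k-1}$, and in each of the resulting subcases writes $o_{T_{k-1}}(t)$ explicitly as a ratio of affine functions of $t$ (via Proposition~\ref{prop:overshoot}) to read off the location of the maximum. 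Your observation that $o_{T_{k-1}}'(t)$ has the sign of $\mathbf{1}\{W_\rho(t;R_{\rm o})\geq T_{k-1}\}\cdot t-O_{T_{k-1}}(t;R_{\rm o})$, hence is nonnegative wherever the workload is above the threshold (since $O_{T_{k-1}}(t;R_{\rm o})\leq t$) and nonpositive wherever it is below, collapses all of these subcases into a single monotonicity statement on the at most three pieces cut out by $\eta$ and $\nu$; the paper's explicit formulas such as \eqref{eq:RatioCase11}--\eqref{eq:RatioCase15} are just instances of this. What the paper's longer route buys is the explicit identification of \emph{which} of the four candidate points is the argmax in each configuration (e.g., exactly $\eta$ in subcase 4, exactly $b_{j-1}$ in subcase 5), information that is reused later in the proof of Lemma~\ref{appxlem:bjlowerbound}; your argument establishes only membership in the four-point set, which is all the lemma as stated asserts. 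Two minor points to make explicit if you write this up: the quotient-rule computation should be phrased almost everywhere (the workload is piecewise linear, so $O_{T_{k-1}}(\cdot;R_{\rm o})$ is Lipschitz and the a.e.\ sign of the derivative does give monotonicity on each closed piece), and the degenerate configuration $W_\rho(t_j;R_{\rm o})=T_{k-1}$, where $\eta=\nu=t_j$, should be noted as still covered by your partition.
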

\begin{proof}
According to Proposition~\ref{prop:overshoot}, $O_{T_{k-1}}(t;R_{\rm o})$ is related to $O_{T_{k-1}}(b_{j-1};R_{\rm o})$ over the interval $t\in[b_{j-1},b_j]$ as follows,
\begin{align}
 O_{T_{k-1}}(t;R_{\rm o})=\left\lbrace\begin{array}{l}
  O_{T_{k-1}}(b_{j-1};R_{\rm o}) +\beta(b_{j-1},t,T_{k-1})    \\t\in[b_{j-1},t_j]
  \\
    O_{T_{k-1}}(t_j;R_{\rm o})+\alpha(t_{j},t,T_{k-1}) \\
     t\in[t_{j},b_j] 
 \end{array}\right.  
 \label{eq:TTildecases}
\end{align} 
We can have one of the two following cases based on $\tilde{s}_j$
\begin{case}
$\tilde{s}_j=s_j$.
\label{case:case1}
\end{case}
In this case $s_j>b_{j-1}$. According to~\eqref{eq:tj_R1} and~\eqref{eq:Stochastic Regulator eq 2}, $t_j=\tilde{s}_j$ and $W(t_j;R_{\rm o})=W(\tilde{s}_j;R_1)$. In this case using~\eqref{eq:Stochastic Regulator eq 1}-\eqref{eq:Stochastic Regulator eq 5} we have 
\begin{equation}
 W(t;R_{\rm o})=W(t;R_1)   ~~~~\forall ~t\in[b_{j-1},b_j]
 \label{eq:WRoeqWR1}
\end{equation}
If $T_{k}-T_{k-1}>\delta$, for $W(t;R_{\rm o})$ on the interval $t\in[b_{j-1},b_j]$ we can have one the five subcases shown depicted Fig.~\ref{fig:Lemma1Case1}. On the other hand, If $T_{k}-T_{k-1}=\delta$, then $T_{k-1}=\sigma_k$ and $W(t;R_{\rm o})$ on the interval $t\in[b_{j-1},b_j]$ will be like the four subcases shown in Figs.~\ref{fig:Lemma1Case1subcase2}-\ref{fig:Lemma1Case1subcase5}.

\begin{figure*}
    \centering
    \subfigure[subcase 1]{
        \centering
        \begin{adjustbox}{width=0.3\textwidth,height=1.7in}
        \begin{tikzpicture}
           \draw [thick,<->] (7.5,0)--(0,0)--(0,4);
            \node [left] at (0,4) {\begin{small}$W(t;R_{\rm o})$\end{small}};
            \node [left] at (0,3.3) {\begin{small}$\sigma_{k}$\end{small}};
            \node [left] at (0,1.3) {\begin{small}$T_{k-1}$\end{small}};
            \node [left] at (0,0.5) {\begin{small}$\sigma_{k-1}$\end{small}};
            \node [left] at (0,1.7) {\begin{small}$W(\tilde{s}_j;R_1)$\end{small}};
            \node [left] at (0,2.5) {\begin{small}$W(\tilde{s}_j;R_1)+\delta_j$\end{small}};
            \node [right] at (7.5,0) {$t$};
            \node [left,below] at (5,0) {\begin{small}$t_j=\tilde{s}_j$\end{small}};
            \node [right,below] at (5.9,0) {\begin{small}$b_j$\end{small}};
            \draw [dashed] (0.5,3.37)--(1.5,3);
            \draw (1.5,3)--(5,1.7)--(5.8,2.5);
            \draw [help lines] (5.8,0) -- (5.8,2.5);
            \draw [help lines] (0,1.7) -- (5,1.7);
            \draw [help lines] (5,0) -- (5,1.7);
            \draw [help lines] (0,2.5) -- (5.8,2.5);
            \draw [help lines] (0,3.3) -- (6.7,3.3);
            \draw [help lines] (6.7,1.3) -- (0,1.3);
            \draw [help lines] (0,0.5)--(6.7,0.5);
            \draw [fill](5,0) circle [radius=0.05];
            \draw [fill](5.8,0) circle [radius=0.05];
            \end{tikzpicture}
        \end{adjustbox}
        \label{fig:Lemma1Case1subcase1}
    }
    \subfigure[subcase 2]{
        \centering
        \begin{adjustbox}{width=0.3\textwidth,height=1.7in}
        \begin{tikzpicture}
           \draw [thick,<->] (7.5,0)--(0,0)--(0,4);
            \node [left] at (0,4) {\begin{small}$W(t;R_{\rm o})$\end{small}};
            \node [left] at (0,3.3) {\begin{small}$\sigma_{k}$\end{small}};
            \node [left] at (0,1.35) {\begin{small}$T_{k-1}$\end{small}};
            \node [left] at (0,0.5) {\begin{small}$\sigma_{k-1}$\end{small}};
            \node [left] at (0,1) {\begin{small}$W(\tilde{s}_j;R_1)$\end{small}};
            \node [left] at (0,1.8) {\begin{small}$W(\tilde{s}_j;R_1)+\delta_j$\end{small}};
            \node [right] at (7.5,0) {$t$};
            \node [left,below] at (5,0) {\begin{small}$t_j$\end{small}};
            \node [right,below] at (5.9,0) {\begin{small}$b_j$\end{small}};
            \node [right,below] at (4.08,0) {\begin{small}$\eta$\end{small}};
            \node [below] at (5.35,0) {\begin{small}$\nu$\end{small}};
            \draw [dashed] (0.5,2.45)--(1,2.3);
            \draw (1,2.3)--(5,1)--(5.8,1.8);
            \draw [help lines] (5.35,0) -- (5.35,1.35);
            \draw [help lines] (5.8,0) -- (5.8,1.8);
            \draw [help lines] (4.08,0) -- (4.08,1.3);
            \draw [help lines] (0,1.8) -- (5.8,1.8);
            \draw [help lines] (5,0) -- (5,1);
            \draw [help lines] (0,1) -- (5,1);
            \draw [help lines] (0,3.3) -- (6.7,3.3);
            \draw [help lines] (6.7,1.3) -- (0,1.3);
            \draw [help lines] (0,0.5)--(6.7,0.5);
            \draw [fill](5.35,0) circle [radius=0.05];
            \draw [fill](4.08,0) circle [radius=0.05];
            \draw [fill](5,0) circle [radius=0.05];
            \draw [fill](5.8,0) circle [radius=0.05];
            \end{tikzpicture}
        \end{adjustbox}
        \label{fig:Lemma1Case1subcase2}
    }
    \subfigure[subcase 3]{
        \centering
        \begin{adjustbox}{width=0.3\textwidth,height=1.7in}
        \begin{tikzpicture}
           \draw [thick,<->] (7.5,0)--(0,0)--(0,4);
            \node [left] at (0,4) {\begin{small}$W(t;R_{\rm o})$\end{small}};
            \node [left] at (0,3.3) {\begin{small}$\sigma_{k}$\end{small}};
            \node [left] at (0,1.2) {\begin{small}$T_{k-1}$\end{small}};
            \node [left] at (0,0.4) {\begin{small}$\sigma_{k-1}$\end{small}};
            \node [left] at (0,0.7) {\begin{small}$W(\tilde{s}_j;R_1)$\end{small}};
            \node [left] at (0,1.6) {\begin{small}$W(\tilde{s}_j;R_1)+\delta_j$\end{small}};
            \node [right] at (7.5,0) {$t$};
            \node [left,below] at (5,0) {\begin{small}$t_j$\end{small}};
            \node [below] at (6,0) {\begin{small}$b_j$\end{small}};
            \node [below] at (3.77,0) {\begin{small}$b_{j-1}$\end{small}};
            \node [below] at (5.7,0) {\begin{small}$\nu$\end{small}};
            \draw [dashed] (3.47,0.7)--(3.57,0.8);
            \draw (3.57,0.8)--(3.77,1)--(5,0.6)--(5.9,1.5);
            \draw [help lines] (5.7,0) -- (5.7,1.3);
            \draw [help lines] (5.9,0) -- (5.9,1.5);
            \draw [help lines] (3.77,0) -- (3.77,1);
            \draw [help lines] (0,1.5) -- (5.9,1.5);
            \draw [help lines] (5,0) -- (5,0.6);
            \draw [help lines] (0,0.6) -- (5,0.6);
            \draw [help lines] (0,3.3) -- (6.7,3.3);
            \draw [help lines] (6.7,1.3) -- (0,1.3);
            \draw [help lines] (0,0.5)--(6.7,0.5);
            \draw [fill](5.7,0) circle [radius=0.05];
            \draw [fill](3.77,0) circle [radius=0.05];
            \draw [fill](5,0) circle [radius=0.05];
            \draw [fill](5.9,0) circle [radius=0.05];
            \end{tikzpicture}
        \end{adjustbox}
        \label{fig:Lemma1Case1subcase3}
    }
        \newline
        
        \subfigure[subcase 4]{
        \centering
        \begin{adjustbox}{width=0.3\textwidth,height=1.7in}
        \begin{tikzpicture}
           \draw [thick,<->] (7.5,0)--(0,0)--(0,4);
            \node [left] at (0,4) {\begin{small}$W(t;R_{\rm o})$\end{small}};
            \node [left] at (0,3.3) {\begin{small}$\sigma_{k}$\end{small}};
            \node [left] at (0,1.4) {\begin{small}$T_{k-1}$\end{small}};
            \node [left] at (0,0.3) {\begin{small}$\sigma_{k-1}$\end{small}};
            \node [left] at (0,0.6) {\begin{small}$W(\tilde{s}_j;R_1)$\end{small}};
            \node [left] at (0,1.1) {\begin{small}$W(\tilde{s}_j;R_1)+\delta_j$\end{small}};
            \node [right] at (7.5,0) {$t$};
            \node [left,below] at (5,0) {\begin{small}$t_j=\tilde{s}_j$\end{small}};
            \node [right,below] at (5.8,0) {\begin{small}$b_j$\end{small}};
            \node [right,below] at (2.85,0) {\begin{small}$\eta$\end{small}};
            \draw [dashed] (0.5,2.05)--(1,1.9);
            \draw (1,1.9)--(5,0.6)--(5.5,1.1);
            \draw [help lines] (5.5,0) -- (5.5,1.1);
            \draw [help lines] (2.85,0) -- (2.85,1.3);
            \draw [help lines] (0,1.1) -- (5.5,1.1);
            \draw [help lines] (5,0) -- (5,0.6);
            \draw [help lines] (0,0.6) -- (5,0.6);
            \draw [help lines] (0,3.3) -- (6.7,3.3);
            \draw [help lines] (6.7,1.3) -- (0,1.3);
            \draw [help lines] (0,0.5)--(6.7,0.5);
            \draw [fill](2.85,0) circle [radius=0.05];
            \draw [fill](5,0) circle [radius=0.05];
            \draw [fill](5.5,0) circle [radius=0.05];
            \end{tikzpicture}
        \end{adjustbox}
        \label{fig:Lemma1Case1subcase4}
    }
    \subfigure[subcase 5]{
        \centering
        \begin{adjustbox}{width=0.3\textwidth,height=1.7in}
        \begin{tikzpicture}
           \draw [thick,<->] (7.5,0)--(0,0)--(0,4);
            \node [left] at (0,4) {\begin{small}$W(t;R_{\rm o})$\end{small}};
            \node [left] at (0,3.3) {\begin{small}$\sigma_{k}$\end{small}};
            \node [left] at (0,1.4) {\begin{small}$T_{k-1}$\end{small}};
            \node [left] at (0,0.3) {\begin{small}$\sigma_{k-1}$\end{small}};
            \node [left] at (0,0.6) {\begin{small}$W(\tilde{s}_j;R_1)$\end{small}};
            \node [left] at (0,1.1) {\begin{small}$W(\tilde{s}_j;R_1)+\delta_j$\end{small}};
            \node [right] at (7.5,0) {$t$};
            \node [below] at (5,0) {\begin{small}$t_j$\end{small}};
            \node [below] at (5.5,0) {\begin{small}$b_j$\end{small}};
            \node [below] at (3.77,0) {\begin{small}$b_{j-1}$\end{small}};
            \draw [dashed] (3.47,0.7)--(3.57,0.8);
            \draw (3.57,0.8)--(3.77,1)--(5,0.6)--(5.5,1.1);
            \draw [help lines] (5.5,0) -- (5.5,1.1);
            \draw [help lines] (3.77,0) -- (3.77,1);
            \draw [help lines] (0,1.1) -- (5.5,1.1);
            \draw [help lines] (5,0) -- (5,0.6);
            \draw [help lines] (0,0.6) -- (5,0.6);
            \draw [help lines] (0,3.3) -- (6.7,3.3);
            \draw [help lines] (6.7,1.3) -- (0,1.3);
            \draw [help lines] (0,0.5)--(6.7,0.5);
            \draw [fill](3.77,0) circle [radius=0.05];
            \draw [fill](5,0) circle [radius=0.05];
            \draw [fill](5.5,0) circle [radius=0.05];
            \end{tikzpicture}
        \end{adjustbox}
        \label{fig:Lemma1Case1subcase5}
    }
    \caption{Different cases of $W(t;R_{\rm o})$ on the interval $[b_{j-1},b_j]$ with $\sigma=\sigma_k$, $t_j=\tilde{s}_j$ and $T_k-T_{k-1}>\delta$.}
    \label{fig:Lemma1Case1}
\end{figure*}
According to~\eqref{eq:WRoeqWR1}, in subcase~\ref{fig:Lemma1Case1subcase1}, $W(t;R_{\rm o})>T_{k-1}$ for $\forall t\in[b_{j-1},b_j]$. Hence, using~\eqref{eq:TTildecases},~\eqref{eq:alpha_increment}, and \eqref{eq:beta_increment} we have 
\begin{equation}
o_{T_{k-1}}(t)=\frac{O_{T_{k-1}}(b_{j-1};R_{\rm o})+(t-b_{j-1})}{b_{j-1}+(t-b_{j-1})},
    \label{eq:RatioCase11}    
\end{equation}
for $\forall ~t\in[b_{j-1},b_j]$. As $O_{\gamma}(t;R_{\rm o})< t$ for $\forall \gamma\in[0,T]$, it can be easily verified that in this case
\begin{equation}
    \argmax_{t\in[b_{j-1},b_j]} o_{T_{k-1}}(t)=b_j.
    \label{eq:maxRatioCase11}
\end{equation}
On the other hand, for subcase~\ref{fig:Lemma1Case1subcase2}, as $W(t;R_{\rm o})>T_{k-1}$ for $\forall t\in\{[b_{j-1},\eta]\cup[\nu,b_{j}]\}$, where $\eta$ and $\nu$ are defined in~\eqref{eq:EtaandNu}, we have 
\begin{align}
o_{T_{k-1}}(t)=\left\lbrace\begin{array}{ll}
  \frac{O_{T_{k-1}}(b_{j-1};R_{\rm o})+(t-b_{j-1})}{b_{j-1}+(t-b_{j-1})}   & t\in[b_{j-1},\eta] \\
    \frac{O_{T_{k-1}}(\eta;R_{\rm o})}{\eta+(t-\eta)} & t\in[\eta,\nu]\\
    \frac{O_{T_{k-1}}(\eta;R_{\rm o})+(t-\nu)}{\nu+(t-\nu)} & t\in [\nu,b_j]
\end{array}\right.
\label{eq:RatioCase12}
\end{align}
Therefore, it can be easily verified that in this case 
\begin{equation}
    \argmax_{t\in[b_{j-1},b_j]} o_{T_{k-1}}(t)\in\{\eta,b_j\}.
    \label{eq:maxRatioCase12}
\end{equation}
For subcase~\ref{fig:Lemma1Case1subcase3}, as $W(t;R_{\rm o})>T_{k-1}$ for $\forall t\in[\nu,b_{j}]$, we have
\begin{align}
o_{T_{k-1}}(t)=
\left\lbrace\begin{array}{ll}
  \frac{O_{T_{k-1}}(b_{j-1};R_{\rm o})}{b_{j-1}+(t-b_{j-1})}   & t\in[b_{j-1},\nu] \\
    \frac{O_{T_{k-1}}(b_{j-1};R_{\rm o})+(t-\nu)}{\nu+(t-\nu)} & t\in[\nu,b_j]
\end{array}\right.
\label{eq:RatioCase13}
\end{align}
Therefore, it can be easily verified that in this case 
\begin{equation}
    \argmax_{t\in[b_{j-1},b_j]} o_{T_{k-1}}(t)\in\{b_{j-1},b_j\}.
    \label{eq:maxRatioCase13}
\end{equation}
For subcase~\ref{fig:Lemma1Case1subcase4}, as $W(t;R_{\rm o})>T_{k-1}$ for $\forall t\in[b_{j-1},\eta]$, we have
\begin{align}
o_{T_{k-1}}(t)=
\left\lbrace\begin{array}{ll}
  \frac{O_{T_{k-1}}(b_{j-1};R_{\rm o})+(t-b_{j-1})}{b_{j-1}+(t-b_{j-1})}   & t\in[b_{j-1},\eta] \\
    \frac{O_{T_{k-1}}(\eta;R_{\rm o})}{\eta+(t-\eta)} & t\in[\eta,b_j]
\end{array}\right.
\label{eq:RatioCase14}
\end{align}
Therefore, it can be easily verified that in this case 
\begin{equation}
    \argmax_{t\in[b_{j-1},b_j]} o_{T_{k-1}}(t)=\eta.
    \label{eq:maxRatioCase14}
\end{equation}
For subcase~\ref{fig:Lemma1Case1subcase5}, as $W(t;R_{\rm o})<T_{k-1}$ for $\forall t\in[b_{j-1},b_{j}]$, we have
\begin{align}
o_{T_{k-1}}(t)=
  \frac{O_{T_{k-1}}(b_{j-1};R_{\rm o})}{b_{j-1}+(t-b_{j-1})},
\label{eq:RatioCase15}
\end{align}
for $\forall t\in[b_{j-1},b_j]$. Therefore, it can be easily verified that in this case 
\begin{equation}
    \argmax_{t\in[b_{j-1},b_j]} \frac{O_{T_{k-1}}(t;R_{\rm o})}{t}=b_{j-1}.
    \label{eq:maxRatioCase15}
\end{equation}
On the other hand , when $T_k-T_{k-1}=\delta$, we have the subcases similar to the subcases~\ref{fig:Lemma1Case1subcase2}-\ref{fig:Lemma1Case1subcase5}. Therefore, we will have the same relations as~\eqref{eq:maxRatioCase12}-\eqref{eq:maxRatioCase15}.
\begin{case}
$\tilde{s}_j=b_{j-1}$
\label{case:case3}
\end{case}
In this case $s_j<b_{j-1}$. According to~\eqref{eq:tj_R1} and~\eqref{eq:Stochastic Regulator eq 2}, $t_j=\tilde{s}_j=b_{j-1}$ and $W(t_j;R_{\rm o})=W(\tilde{s}_j;R_1)$. In this case using~\eqref{eq:Stochastic Regulator eq 2}-\eqref{eq:Stochastic Regulator eq 5} we have 
\begin{equation}
 W(t;R_{\rm o})=W(t;R_1)   ~~~~\forall ~t\in[b_{j-1},b_j]
 \label{eq:WRoeqWR1Case3}
\end{equation}
If $T_k-T_{k-1}>\delta$, for $W(t;R_{\rm o})$ on the interval $t\in[b_{j-1},b_j]$ we can have one the four subcases shown depicted Fig.~\ref{fig:Lemma1Case3}. On the other hand, If $T_k-T_{k-1}=\delta$, then $T_{k-1}=\sigma_k$ and we can have one the two subcases shown in Fig.~\ref{fig:Lemma1Case3subcase3} and~\ref{fig:Lemma1Case3subcase4}.
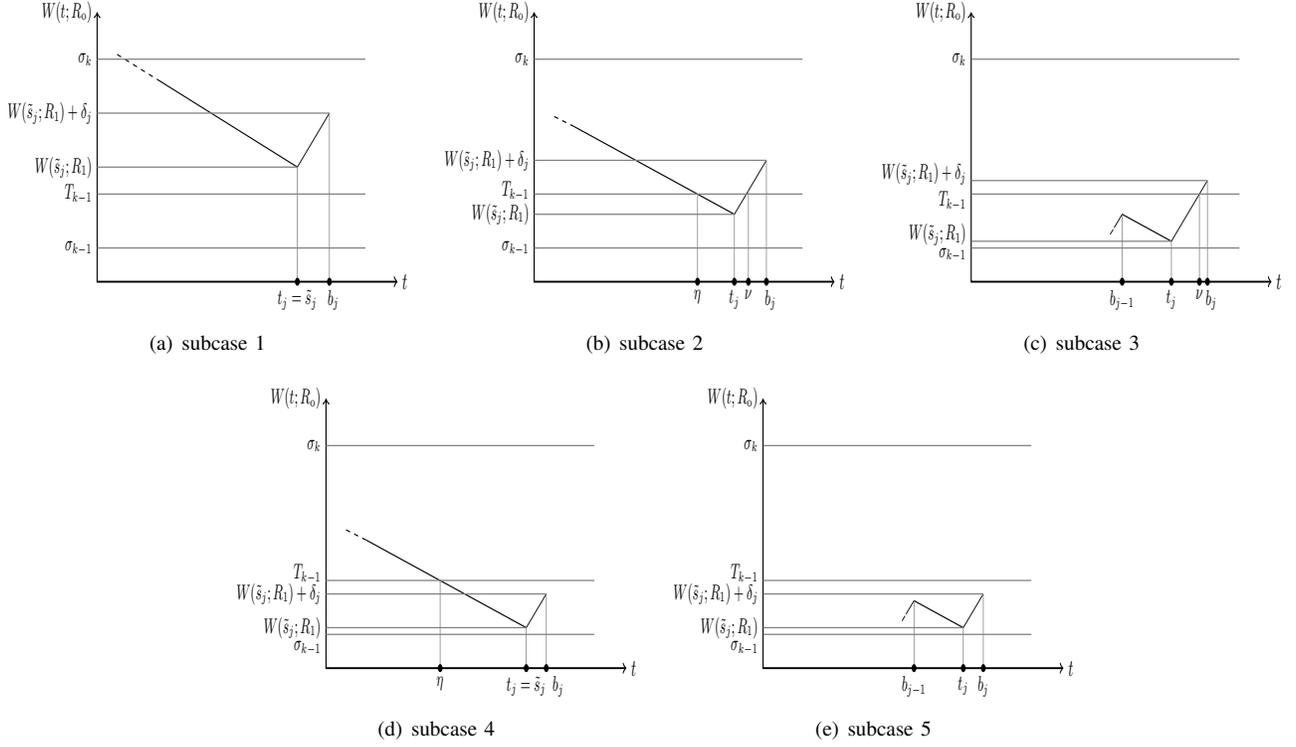
\begin{figure*}
    \centering
    \subfigure[subcase 1]{
        \centering
        \begin{adjustbox}{width=0.45\textwidth,height=1.7in}
        \begin{tikzpicture}
           \draw [thick,<->] (5,0)--(0,0)--(0,3.3);
           \node [left] at (0,3.3) {\begin{small}$W(t;R_{\rm o})$\end{small}};
            \node [left] at (0,2.8) {\begin{footnotesize}$W(\tilde{s}_j;R_1)+\delta_j$\end{footnotesize}};
            \node [left] at (0,2.1) {\begin{footnotesize}$W(\tilde{s}_j;R_{\rm 1})$\end{footnotesize}};
            \node [left] at (0,0.4) {\begin{footnotesize}$\sigma_{k-1}$\end{footnotesize}};
            \node [left] at (0,0.9) {\begin{footnotesize}$T_{k-1}$\end{footnotesize}};
            \node [left] at (0,2.5) {\begin{footnotesize}$\sigma_k$\end{footnotesize}};
            \node [right] at (5,0) {$t$};
            \node [left,below] at (2.8,0) {\begin{footnotesize}$t_j$\end{footnotesize}};
            \node [below] at (3.6,0) {\begin{footnotesize}$b_{j}$\end{footnotesize}};
            \draw [dashed] (2.1,1.55)--(2.4,1.8);
            \draw (2.4,1.8)--(3,2.3)--(3.6,2.8);
            \draw [help lines] (2.8,0) -- (2.8,2.1);
            \draw [help lines] (3.6,0) -- (3.6,2.8);
            \draw [help lines] (0,2.8) -- (3.6,2.8);
            \draw [help lines] (0,2.5) -- (4.8,2.5);
            \draw [help lines] (2.8,2.1) -- (0,2.1);
            \draw [help lines] (0,0.9)--(4.8,0.9);
            \draw [help lines] (0,0.4)--(4.8,0.4);
            \draw [fill](3.6,0) circle [radius=0.05];
            \draw [fill](2.8,0) circle [radius=0.05];
            \end{tikzpicture}
        \end{adjustbox}
        \label{fig:Lemma1Case3subcase1}
    }
    \subfigure[subcase 2]{
        \centering
        \begin{adjustbox}{width=0.45\textwidth,height=1.7in}
        \begin{tikzpicture}
           \draw [thick,<->] (5,0)--(0,0)--(0,3.3);
           \node [left] at (0,3.3) {\begin{small}$W(t;R_{\rm o})$\end{small}};
            \node [left] at (0,2.3) {\begin{footnotesize}$W(\tilde{s}_j;R_1)+\delta_j$\end{footnotesize}};
            \node [left] at (0,1.8) {\begin{footnotesize}$W(\tilde{s}_j;R_{\rm 1})$\end{footnotesize}};
            \node [left] at (0,0.4) {\begin{footnotesize}$\sigma_{k-1}$\end{footnotesize}};
            \node [left] at (0,0.9) {\begin{footnotesize}$T_{k-1}$\end{footnotesize}};
            \node [left] at (0,2.6) {\begin{footnotesize}$\sigma_k$\end{footnotesize}};
            \node [right] at (5,0) {$t$};
            \node [left,below] at (3,0) {\begin{footnotesize}$t_j$\end{footnotesize}};
            \node [below] at (3.6,0) {\begin{footnotesize}$b_{j}$\end{footnotesize}};
            \draw [dashed] (2.1,1.05)--(2.4,1.3);
            \draw (2.4,1.3)--(3,1.8)--(3.6,2.3);
            \draw [help lines] (3,0) -- (3,1.8);
            \draw [help lines] (3.6,0) -- (3.6,2.3);
            \draw [help lines] (0,2.3) -- (3.6,2.3);
            \draw [help lines] (0,2.6) -- (4.8,2.6);
            \draw [help lines] (3,1.8) -- (0,1.8);
            \draw [help lines] (0,0.9)--(4.8,0.9);
            \draw [help lines] (0,0.4)--(4.8,0.4);
            \draw [fill](3.6,0) circle [radius=0.05];
            \draw [fill](3,0) circle [radius=0.05];
            \end{tikzpicture}
        \end{adjustbox}
        \label{fig:Lemma1Case3subcase2}
    }
    \newline
    \subfigure[subcase 3]{
        \centering
        \begin{adjustbox}{width=0.45\textwidth,height=1.7in}
        \begin{tikzpicture}
           \draw [thick,<->] (5,0)--(0,0)--(0,3.3);
           \node [left] at (0,3.3) {\begin{small}$W(t;R_{\rm o})$\end{small}};
            \node [left] at (0,1.8) {\begin{footnotesize}$W(\tilde{s}_j;R_1)+\delta_j$\end{footnotesize}};
            \node [left] at (0,1.1) {\begin{footnotesize}$W(\tilde{s}_j;R_{\rm 1})$\end{footnotesize}};
            \node [left] at (0,0.4) {\begin{footnotesize}$\sigma_{k-1}$\end{footnotesize}};
            \node [left] at (0,1.4) {\begin{footnotesize}$T_{k-1}$\end{footnotesize}};
            \node [left] at (0,2.6) {\begin{footnotesize}$\sigma_k$\end{footnotesize}};
            \node [right] at (5,0) {$t$};
            \node [left,below] at (2.3,0) {\begin{footnotesize}$t_j$\end{footnotesize}};
            \node [below] at (3.1,0) {\begin{footnotesize}$b_j$\end{footnotesize}};
            \node [below] at (2.62,0) {\begin{footnotesize}$\nu$\end{footnotesize}};
            \draw [dashed] (1.6,0.55)--(1.9,0.8);
            \draw (1.9,0.8)--(2.5,1.3)--(3.1,1.8);
            \draw [help lines] (2.3,0) -- (2.3,1.1);
            \draw [help lines] (3.1,0) -- (3.1,1.8);
            \draw [help lines] (0,1.8) -- (3.1,1.8);
            \draw [help lines] (0,2.6) -- (4.8,2.6);
            \draw [help lines] (2.3,1.1) -- (0,1.1);
            \draw [help lines] (0,1.4)--(4.8,1.4);
            \draw [help lines] (0,0.4)--(4.8,0.4);
            \draw [help lines] (2.62,0)--(2.62,1.4);
            \draw [fill](3.1,0) circle [radius=0.05];
            \draw [fill](2.3,0) circle [radius=0.05];
            \draw [fill](2.62,0) circle [radius=0.05];
        \end{tikzpicture}
        \end{adjustbox}
        \label{fig:Lemma1Case3subcase3}
    }
        \subfigure[subcase 4]{
        \centering
        \begin{adjustbox}{width=0.45\textwidth,height=1.7in}
        \begin{tikzpicture}
           \draw [thick,<->] (5,0)--(0,0)--(0,3.3);
           \node [left] at (0,3.3) {\begin{small}$W(t;R_{\rm o})$\end{small}};
            \node [left] at (0,1.4) {\begin{footnotesize}$W(\tilde{s}_j;R_1)+\delta_j$\end{footnotesize}};
            \node [left] at (0,0.9) {\begin{footnotesize}$W(\tilde{s}_j;R_{\rm 1})$\end{footnotesize}};
            \node [left] at (0,0.4) {\begin{footnotesize}$\sigma_{k-1}$\end{footnotesize}};
            \node [left] at (0,1.8) {\begin{footnotesize}$T_{k-1}$\end{footnotesize}};
            \node [left] at (0,2.6) {\begin{footnotesize}$\sigma_k$\end{footnotesize}};
            \node [right] at (5,0) {$t$};
            \node [left,below] at (2.1,0) {\begin{footnotesize}$t_j$\end{footnotesize}};
            \node [below] at (2.7,0) {\begin{footnotesize}$b_j$\end{footnotesize}};
            \draw [dashed] (1.1,0.15)--(1.4,0.4);
            \draw (1.4,0.4)--(2.1,0.9)--(2.7,1.4);
            \draw [help lines] (2.1,0) -- (2.1,0.9);
            \draw [help lines] (2.7,0) -- (2.7,1.4);
            \draw [help lines] (0,1.4) -- (2.7,1.4);
            \draw [help lines] (0,2.6) -- (4.8,2.6);
            \draw [help lines] (2.1,0.9) -- (0,0.9);
            \draw [help lines] (0,1.8)--(4.8,1.8);
            \draw [help lines] (0,0.4)--(4.8,0.4);
            \draw [fill](2.7,0) circle [radius=0.05];
            \draw [fill](2.1,0) circle [radius=0.05];
            \end{tikzpicture}
        \end{adjustbox}
        \label{fig:Lemma1Case3subcase4}
    }
    \caption{Different cases of $W(t;R_{\rm o})$ on the interval $[b_{j-1},b_j]$ with $\sigma=\sigma_k$, $t_j=\tilde{s}_j=b_{j-1}$ and $T_k-T_{k-1}>\delta$.}
    \label{fig:Lemma1Case3}
\end{figure*}
As in subcase~\ref{fig:Lemma1Case3subcase1} and~\ref{fig:Lemma1Case3subcase2}, $W(t;R_{\rm o})>T_{k-1}$ for $\forall t\in[b_{j-1},b_j]$, we have 
\begin{equation}
o_{T_{k-1}}(t)=\frac{O_{T_{k-1}}(b_{j-1};R_{\rm o})+(t-b_{j-1})}{b_{j-1}+(t-b_{j-1})} 
 \label{eq:RatioCase31}
\end{equation}
for $\forall t\in[b_{j-1},b_j]$. Therefore, it can be easily verified that in this case 
\begin{equation}
    \argmax_{t\in[b_{j-1},b_j]} o_{T_{k-1}}(t)=b_j.
    \label{eq:maxRatioCase31}
\end{equation}
For subcase~\ref{fig:Lemma1Case3subcase3}, as $W(t;R_{\rm o})>T_{k-1}$ for $\forall t\in[\nu,b_{j}]$, we have 
\begin{align}
o_{T_{k-1}}(t)=\left\lbrace\begin{array}{ll}
   \frac{O_{T_{k-1}}(b_{j-1};R_{\rm o})}{b_{j-1}+(t-b_{j-1})}   &   t\in[b_{j-1},\nu] \\
    \frac{O_{T_{k-1}}(b_{j-1};R_{\rm o})+(t-\nu)}{\nu+(t-\nu)} &  t\in [\nu,b_j]
\end{array} \right.
 \label{eq:RatioCase32}
\end{align}
Therefore, it can be easily verified that in this case 
\begin{equation}
    \argmax_{t\in[b_{j-1},b_j]} o_{T_{k-1}}(t)\in\{b_{j-1},b_j\}.
    \label{eq:maxRatioCase32}
\end{equation}
For subcase~\ref{fig:Lemma1Case3subcase4}, as $W(t;R_{\rm o})<T_{k-1}$ for $\forall t\in[b_{j-1},b_{j}]$, we have 
\begin{equation}
o_{T_{k-1}}(t)=\frac{O_{T_{k-1}}(b_{j-1};R_{\rm o})}{b_{j-1}+(t-b_{j-1})} 
 \label{eq:RatioCase33}
\end{equation}
Therefore, it can be easily verified that in this case 
\begin{equation}
    \argmax_{t\in[b_{j-1},b_j]} o_{T_{k-1}}(t)=b_{j-1}.
    \label{eq:maxRatioCase33}
\end{equation}
On the other hand , when $T_k-T_{k-1}=\delta$, we have the subcases similar to the subcases~\ref{fig:Lemma1Case3subcase3}-\ref{fig:Lemma1Case3subcase4}. Therefore, we will have the same relations as~\eqref{eq:maxRatioCase32}-\eqref{eq:maxRatioCase33}.
\end{proof}

\begin{appxlemma}
\label{appxlem:maximumovershootbjsigmal}
Let $\cB_j$ be as defined in~\eqref{eq:cBj} and $k=\min \cB_j$.  Assume $k>1$. Let $\ell \in \{ 2, \ldots, k-1 \}$ and set $b_j = b_j(\sigma_\ell)$. Then
\begin{align}
     \argmax_{t\in[b_{j-1},b_j]} o_{T_{\ell-1}}(t)=b_j .
\end{align}
\end{appxlemma}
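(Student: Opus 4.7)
The plan is to exploit two simple observations. First, from the spacing constraint $T_{i+1}-T_i \geq \delta$ in \eqref{eq:T_i} combined with $\sigma_\ell = T_\ell - \delta$ from \eqref{eq:sigma_i}, I obtain
\[
T_{\ell-1} \;\leq\; T_\ell - \delta \;=\; \sigma_\ell .
\]
Second, since $\ell < k = \min \cB_j$, the definition of $\cB_j$ in \eqref{eq:cBj} yields $\sigma_\ell < W_\rho(\tilde{s}_j; R_1)$, so equation \eqref{eq:Stochastic Regulator eq 4} forces $W_\rho(t_j; R_{\rm o}) = \sigma_\ell$, i.e., the regulator strictly delays the packet down to the floor $\sigma_\ell$.

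Next I would trace $W_\rho(t; R_{\rm o})$ over $[b_{j-1}, b_j]$ using \eqref{eq:Stochastic Regulator eq 1}, \eqref{eq:Stochastic Regulator eq 5}, and \eqref{eq:Stochastic Regulator eq 6}. On $[b_{j-1}, t_j]$ the workload decreases with slope $-\rho$ and lands at $\sigma_\ell$ at time $t_j$, so it stays $\geq \sigma_\ell$ on this subinterval (the $[\cdot]^+$ operator in \eqref{eq:Stochastic Regulator eq 1} never activates, because the workload at $t_j$ is already $\sigma_\ell > 0$). On $[t_j, b_j]$, \eqref{eq:Stochastic Regulator eq 5} says the workload rises with slope $C-\rho$ starting from $\sigma_\ell$, so again $W_\rho(t; R_{\rm o}) \geq \sigma_\ell$. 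Combined with the first observation,
\[
W_\rho(t; R_{\rm o}) \;\geq\; \sigma_\ell \;\geq\; T_{\ell-1}, \quad \forall t \in [b_{j-1}, b_j],
\]
so by \eqref{eq:alpha_increment}--\eqref{eq:beta_increment} and Proposition~\ref{prop:overshoot} the overshoot duration accumulates at unit rate throughout:
\[
O_{T_{\ell-1}}(t; R_{\rm o}) \;=\; O_{T_{\ell-1}}(b_{j-1}; R_{\rm o}) + (t - b_{j-1}), \quad \forall t \in [b_{j-1}, b_j].
\]

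The final step is a one-line calculus argument. Setting $A := b_{j-1} - O_{T_{\ell-1}}(b_{j-1}; R_{\rm o}) \geq 0$ (a cumulative overshoot duration on $[0, b_{j-1}]$ cannot exceed $b_{j-1}$), the overshoot ratio becomes $o_{T_{\ell-1}}(t) = 1 - A/t$, which is non-decreasing in $t$. Hence the maximum over $[b_{j-1}, b_j]$ is attained at $t = b_j$.

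In contrast to Lemma~\ref{appxlem:maximumovershootbjsigmak}, which required extensive casework on the shape of $W_\rho(t; R_{\rm o})$ relative to $T_{k-1}$, the principal simplification here is that choosing $\sigma_\ell$ strictly below $W_\rho(\tilde{s}_j; R_1)$ pins the output workload from below by $\sigma_\ell \geq T_{\ell-1}$ throughout the interval, so no subcase analysis is needed. The only mild obstacle I anticipate is verifying cleanly that the two scenarios $\tilde{s}_j = s_j$ and $\tilde{s}_j = b_{j-1}$ both reduce to the same linearly decreasing workload profile on $[b_{j-1}, t_j]$; once that is settled, the rest is essentially a direct computation.
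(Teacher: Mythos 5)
Your proof is correct and follows essentially the same route as the paper's: both arguments reduce to showing $W_\rho(t;R_{\rm o})\geq \sigma_\ell \geq T_{\ell-1}$ throughout $[b_{j-1},b_j]$ (using $\sigma_\ell < W_\rho(\tilde{s}_j;R_1)$ to pin $W_\rho(t_j;R_{\rm o})=\sigma_\ell$), so that the overshoot duration grows at unit rate and the ratio $1-A/t$ is non-decreasing. The only difference is presentational: the paper handles $\tilde{s}_j=s_j$ and $\tilde{s}_j=b_{j-1}$ as two explicit cases with figures, whereas you observe directly that both yield the same decreasing-then-increasing workload profile, which is a legitimate streamlining rather than a different method.
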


\begin{proof}
As it was mentioned before, $O_{T_{\ell-1}}(t;R_{\rm o})$ can be determined using $O_{T_{\ell-1}}(b_{j-1};R_{\rm o})$ over the interval $t\in[b_{j-1},b_j]$ according to~\eqref{eq:TTildecases}. Similarly, we can have one of the two following cases based on $\tilde{s}_j$:
\setcounter{case}{0}
\begin{case}
$\tilde{s}_j=s_j$
\label{case:case2}
\end{case}
In this case $s_j>b_{j-1}$. With $\sigma=\sigma_\ell$, $t_j$ is derived using~\eqref{eq:tj_R1}. According to~\eqref{eq:Stochastic Regulator eq 2}, $W(t_j;R_{\rm o})=\sigma_\ell$. Hence, in this case, when $T_k-T_{k-1}>\delta$, $W(t;R_{\rm o})$ is as shown in Fig.~\ref{fig:Lemma1Case2}. On the other hand, if $T_k-T_{k-1}=\delta$, $W(t;R_{\rm o})$ will be same as in Fig.~\ref{fig:Lemma1Case2}, except $T_{\ell-1}=\sigma_\ell$. In this case, as $W(t;R_{\rm o})\geq T_{\ell-1}$ for $\forall t\in [b_{j-1},b_j]$, we have
\begin{align}
o_{T_{\ell-1}}(t)=
  \frac{O_{T_{\ell-1}}(b_{j-1};R_{\rm o})+(t-b_{j-1})}{b_{j-1}+(t-b_{j-1})},
\label{eq:RatioCase2}
\end{align}
for $\forall t\in[b_{j-1},b_j]$. Therefore, it can be easily verified that in this case 
\begin{equation}
    \argmax_{t\in[b_{j-1},b_j]} o_{T_{\ell-1}}(t)=b_{j}.
    \label{eq:maxRatioCase2}
\end{equation}
\begin{figure}[t]
    \centering
        \begin{adjustbox}{width=0.9\columnwidth,height=1.7in}
        \begin{tikzpicture}
           \draw [thick,<->] (6.0,0)--(0,0)--(0,4);
            \node [left] at (0,4) {\begin{small}$W(t;R_{\rm o})$\end{small}};
            \node [left] at (0,3.3) {\begin{footnotesize}$\sigma_{k}$\end{footnotesize}};
            \node [left] at (0,1.1) {\begin{footnotesize}$W(t_j;R_{\rm o})=\sigma_{\ell}$\end{footnotesize}};
            \node [left] at (0,0.5) {\begin{footnotesize}$T_{\ell-1}$\end{footnotesize}};
            \node [left] at (0,1.7) {\begin{footnotesize}$W(t_j;R_{\rm o})+\delta_j$\end{footnotesize}};
            \node [left] at (0,2.5) {\begin{footnotesize}$W(\tilde{s}_j;R_1)$\end{footnotesize}};
            \node [right] at (6.0,0) {$t$};
            \node [below] at (2.28,0) {\begin{footnotesize}$\tilde{s}_j$\end{footnotesize}};
            \node [below] at (2.78,0) {\begin{footnotesize}$\tilde{a}_j$\end{footnotesize}};
            \node [below] at (5,0) {\begin{footnotesize}$b_j$\end{footnotesize}};
            \node [right,below] at (4.5,0) {\begin{footnotesize}$t_j$\end{footnotesize}};
            \draw (5.8,3.8)--(3.4,3.8)--(3.4,3)--(5.8,3)--(5.8,3.8);
            \draw (5.7,3.6)--(5,3.6);
            \draw [dashdotted](5.7,3.2)--(5,3.2);
            \node [left] at (5,3.6) {\begin{footnotesize}$W(t;R_{\rm o})$\end{footnotesize}};
            \node [left] at (5,3.2) {\begin{footnotesize}$W(t;R_{\rm 1})$\end{footnotesize}};
            \draw [dashed] (0.5,3.63)--(1.5,3);
            \draw [dashdotted] (2.28,2.5)--(2.78,3.2)--(5,1.8);
            \draw (1.5,3)--(4.5,1.1)--(5,1.8);
            \draw [help lines] (2.78,0) -- (2.78,3.2);
            \draw [help lines] (4.5,0) -- (4.5,1.1);
            \draw [help lines] (0,1.8) -- (5,1.8);
            \draw [help lines] (5,0) -- (5,1.8);
            \draw [help lines] (0,2.5) -- (5.8,2.5);
            \draw [help lines] (2.28,0) -- (2.28,2.5);
            \draw [help lines] (0,3.3) -- (3.3,3.3);
            \draw [help lines] (0,1.1) -- (5.7,1.1);
            \draw [help lines] (0,0.5)--(5.7,0.5);
            \draw [fill](2.78,0) circle [radius=0.05];
            \draw [fill](2.28,0) circle [radius=0.05];
            \draw [fill](5,0) circle [radius=0.05];
            \draw [fill](4.5,0) circle [radius=0.05];
            \end{tikzpicture}
        \end{adjustbox}
        \caption{$W(t;R_{\rm o})$, when $\sigma=\sigma_\ell$, $\tilde{s}_j=s_j$ and $T_k-T_{k-1}>\delta$.}
        \label{fig:Lemma1Case2}
    \end{figure}
\begin{case}
$\tilde{s}_j=b_{j-1}$
\label{case:case4}
\end{case}
In this case $s_j<b_{j-1}$. With $\sigma=\sigma_\ell$, $t_j$ is derived using~\eqref{eq:tj_R1}. According to~\eqref{eq:Stochastic Regulator eq 2}, $W(t_j;R_{\rm o})=\sigma_\ell$. Hence, in this case, when $T_k-T_{k-1}>\delta$, $W(t;R_{\rm o})$ is as shown in Fig.~\ref{fig:Lemma1Case4}. On the other hand, if $T_k-T_{k-1}=\delta$, $W(t;R_{\rm o})$ will be same as in Fig.~\ref{fig:Lemma1Case4}, except $T_{\ell-1}=\sigma_\ell$. In this case, as $W(t;R_{\rm o})\geq T_{\ell-1}$ for $\forall t\in [b_{j-1},b_j]$, we have
\begin{align}
o_{T_{\ell-1}}(t)=
  \frac{O_{T_{\ell-1}}(b_{j-1};R_{\rm o})+(t-b_{j-1})}{b_{j-1}+(t-b_{j-1})},
\label{eq:RatioCase4}
\end{align}
for $\forall t\in[b_{j-1},b_j]$. Therefore, it can be easily verified that in this case 
\begin{equation}
    \argmax_{t\in[b_{j-1},b_j]} o_{T_{\ell-1}}(t)=b_{j}.
    \label{eq:maxRatioCase4}
\end{equation}
\begin{figure}[t]
    \centering
        \begin{adjustbox}{width=0.9\columnwidth,height=1.7in}
        \begin{tikzpicture}
           \draw [thick,<->] (6.0,0)--(0,0)--(0,4);
            \node [left] at (0,4) {\begin{small}$W(t;R_{\rm o})$\end{small}};
            \node [left] at (0,3.3) {\begin{footnotesize}$\sigma_{k}$\end{footnotesize}};
            \node [left] at (0,1.1) {\begin{footnotesize}$W(t_j;R_{\rm o})=\sigma_{\ell}$\end{footnotesize}};
            \node [left] at (0,0.5) {\begin{footnotesize}$T_{\ell-1}$\end{footnotesize}};
            \node [left] at (0,1.7) {\begin{footnotesize}$W(t_j;R_{\rm o})+\delta_j$\end{footnotesize}};
            \node [left] at (0,2.5) {\begin{footnotesize}$W(\tilde{s}_j;R_1)$\end{footnotesize}};
            \node [right] at (6.0,0) {$t$};
            \node [below] at (2.28,0) {\begin{footnotesize}$\tilde{s}_j$\end{footnotesize}};
            \node [below] at (2.78,0) {\begin{footnotesize}$\tilde{a}_j$\end{footnotesize}};
            \node [below] at (5,0) {\begin{footnotesize}$b_j$\end{footnotesize}};
            \node [right,below] at (4.5,0) {\begin{footnotesize}$t_j$\end{footnotesize}};
            \draw (5.8,3.8)--(3.4,3.8)--(3.4,3)--(5.8,3)--(5.8,3.8);
            \draw (5.7,3.6)--(5,3.6);
            \draw [dashdotted](5.7,3.2)--(5,3.2);
            \node [left] at (5,3.6) {\begin{footnotesize}$W(t;R_{\rm o})$\end{footnotesize}};
            \node [left] at (5,3.2) {\begin{footnotesize}$W(t;R_{\rm 1})$\end{footnotesize}};
            \draw [dashed] (1.28,1.1)--(1.78,1.8);
            \draw [dashdotted] (2.28,2.5)--(2.78,3.2)--(5,1.8);
            \draw (1.78,1.8)--(2.28,2.5)--(4.5,1.1)--(5,1.8);
            \draw [help lines] (2.78,0) -- (2.78,3.2);
            \draw [help lines] (4.5,0) -- (4.5,1.1);
            \draw [help lines] (0,1.8) -- (5,1.8);
            \draw [help lines] (5,0) -- (5,1.8);
            \draw [help lines] (0,2.5) -- (5.8,2.5);
            \draw [help lines] (2.28,0) -- (2.28,2.5);
            \draw [help lines] (0,3.3) -- (3.3,3.3);
            \draw [help lines] (0,1.1) -- (5.7,1.1);
            \draw [help lines] (0,0.5)--(5.7,0.5);
            \draw [fill](2.78,0) circle [radius=0.05];
            \draw [fill](2.28,0) circle [radius=0.05];
            \draw [fill](5,0) circle [radius=0.05];
            \draw [fill](4.5,0) circle [radius=0.05];
            \end{tikzpicture}
        \end{adjustbox}
        \caption{$W_\rho(t;R_{\rm o})$, when $\sigma=\sigma_\ell$, $\tilde{s}_j=b_{j-1}$ and $T_k-T_{k-1}>\delta$.}
        \label{fig:Lemma1Case4}
    \end{figure}
\end{proof}

\begin{appxlemma}
\label{appxlem:bjlowerbound}
Let $\cB_j$ be as defined in~\eqref{eq:cBj} and let $k=\min \cB_j$. Let assume $k>1$ and assume that $b_j$ satisfies the following lower bound
\begin{equation}
b_{j}\geq \frac{L}{\epsilon\rho}+\frac{T_M-\sigma_1}{\rho}+\frac{L}{C},
\label{eq:b_jnewlowerbound}
\end{equation}
where $\epsilon > 0$ is given in~\eqref{eq:epsilonappendix}. 
Let $\ell \in \{ 2, \ldots, k \}$ and set $b_j = b_j(\sigma_\ell)$.  Then
\begin{align}
    o_{T_{\ell-1}}(b_j ) \leq \bar{f}(T_{\ell}) ,
    \label{eq:upper bound at complete departure for whole time interval}
\end{align}
implies 
\begin{align}
    o_{T_{\ell-1}}(t) \leq \bar{f}(T_{\ell-1}), ~~~~ \forall t \in [b_{j-1}, b_j] .
    \label{eq:lhs fraction}
\end{align}
\end{appxlemma}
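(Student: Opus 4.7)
The plan is to exploit Lemmas~\ref{appxlem:maximumovershootbjsigmak} and~\ref{appxlem:maximumovershootbjsigmal} to reduce $\sup_{t\in[b_{j-1},b_j]}o_{T_{\ell-1}}(t)$ to a supremum over a finite candidate set, then bound each candidate using the tight hypothesis at $b_j$ together with the lower bound \eqref{eq:b_jnewlowerbound}. I would first dispose of the easy case $\ell\in\{2,\ldots,k-1\}$: Lemma~\ref{appxlem:maximumovershootbjsigmal} gives $\argmax_{t\in[b_{j-1},b_j]} o_{T_{\ell-1}}(t) = b_j$, so $o_{T_{\ell-1}}(t)\le o_{T_{\ell-1}}(b_j)\le \bar{f}(T_\ell)\le \bar{f}(T_{\ell-1})$ by the hypothesis and the monotonicity of $\bar{f}$.

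For the harder case $\ell = k$, Lemma~\ref{appxlem:maximumovershootbjsigmak} localizes the argmax to the four points $\{b_{j-1},b_j,\eta,\nu\}$. The bound at $b_j$ is immediate. Since the workload lies below $T_{k-1}$ throughout $[\eta,\nu]$ by construction of $\eta$ and $\nu$, the overshoot duration $O_{T_{k-1}}(\cdot;R_{\rm o})$ is constant on that sub-interval, and because $\nu>\eta$ we have
\[
o_{T_{k-1}}(\nu)=\frac{O_{T_{k-1}}(\eta;R_{\rm o})}{\nu}\le \frac{O_{T_{k-1}}(\eta;R_{\rm o})}{\eta}=o_{T_{k-1}}(\eta),
\]
collapsing the candidate at $\nu$ onto the one at $\eta$. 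Hence only $t\in\{b_{j-1},\eta\}$ remain to be estimated.

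At each such $t$, monotonicity of $O_{T_{k-1}}(\cdot;R_{\rm o})$ and the hypothesis yield $O_{T_{k-1}}(t;R_{\rm o})\le O_{T_{k-1}}(b_j;R_{\rm o})\le b_j\,\bar{f}(T_k)$, so $o_{T_{k-1}}(t)\le (b_j/t)\bar{f}(T_k)$. A short rearrangement shows that the desired inequality $o_{T_{k-1}}(t)\le \bar{f}(T_{k-1})$ follows once $(b_j-t)/t\le \epsilon$, using $\bar{f}(T_k)\le 1$ together with $\bar{f}(T_{k-1})-\bar{f}(T_k)\ge \epsilon$ (from \eqref{eq:epsilonappendix} and the construction of $\bar{f}$ on the knot grid).

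Establishing the residual estimate $(b_j-t)/t\le \epsilon$ is the main technical obstacle. I would verify it in two steps. First, using the governing equations~\eqref{eq:Stochastic Regulator eq 1}--\eqref{eq:Stochastic Regulator eq 6} in the case $\ell=k$ (where Proposition~\ref{prop:deterministic delay w buffer} gives $t_j=\tilde{s}_j$), I would argue that $b_j-t\le (T_M-\sigma_1)/\rho+L/C$: at $t=\eta$ the workload decreases at rate $\rho$ from $T_{k-1}\le T_M$ down to $W_\rho(\tilde{s}_j;R_1)\ge \sigma_1$ and is then followed by at most the transmission time $L/C$; for $t=b_{j-1}$, once the workload drops below $T_{k-1}$ any long idle portion of $[b_{j-1},\tilde{s}_j]$ contributes nothing to the overshoot and can be absorbed by sharpening the estimate of $O_{T_{k-1}}(b_j;R_{\rm o})-O_{T_{k-1}}(b_{j-1};R_{\rm o})$ rather than of $b_j-b_{j-1}$. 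Second, I would invoke the hypothesis \eqref{eq:b_jnewlowerbound} to conclude $t\ge b_j-[(T_M-\sigma_1)/\rho+L/C]\ge L/(\epsilon\rho)$. Dividing the two estimates then yields $(b_j-t)/t\le \epsilon$, closing the proof. The idle-time bookkeeping at $t=b_{j-1}$ is the single step where care is required, since $b_j-b_{j-1}$ itself is not uniformly bounded; the rest is routine algebra.
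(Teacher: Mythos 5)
Your overall skeleton matches the paper's: dispose of $\ell\in\{2,\ldots,k-1\}$ via Lemma~\ref{appxlem:maximumovershootbjsigmal} and monotonicity of $\bar f$, reduce $\ell=k$ to the candidate points $\{b_{j-1},b_j,\eta,\nu\}$ via Lemma~\ref{appxlem:maximumovershootbjsigmak}, collapse $\nu$ onto $\eta$ because $O_{T_{k-1}}(\cdot;R_{\rm o})$ is constant on $[\eta,\nu]$, and then close with a quantitative estimate that consumes the lower bound \eqref{eq:b_jnewlowerbound}. However, the closing estimate as you state it does not work. You bound $O_{T_{k-1}}(t;R_{\rm o})\le O_{T_{k-1}}(b_j;R_{\rm o})\le b_j\bar f(T_k)$, so your loss factor is $b_j/t$ and you need $(b_j-t)/t\le\epsilon$. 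With $b_j-t\le (T_M-\sigma_1)/\rho+L/C$ and $t\ge L/(\epsilon\rho)$, "dividing the two estimates" gives $(b_j-t)/t\le \epsilon\cdot\bigl[(T_M-\sigma_1)+L\rho/C\bigr]/L$, which is $\le\epsilon$ only when $T_M-\sigma_1\le\delta$; since $T_M-\sigma_1=T_M-T_1+\delta$ and $T_M\gg T$, this factor is far larger than $1$, so the stated lower bound on $b_j$ is nowhere near sufficient for your inequality. Your argument would only prove the lemma under a much stronger hypothesis $b_j\gtrsim\bigl[(T_M-\sigma_1)/\rho+L/C\bigr]/\epsilon$.

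The paper avoids this by never comparing $O_{T_{k-1}}(\eta;R_{\rm o})$ to $O_{T_{k-1}}(b_j;R_{\rm o})$ directly. Instead it uses $o_{T_{k-1}}(\nu)=O_{T_{k-1}}(\eta;R_{\rm o})/\nu\le o_{T_{k-1}}(b_j)\le\bar f(T_k)$ (the ratio is nondecreasing on $[\nu,b_j]$ where the workload exceeds $T_{k-1}$), so that $o_{T_{k-1}}(\eta)\le(\nu/\eta)\bar f(T_k)=\bigl(1+(\nu-\eta)/\eta\bigr)\bar f(T_k)$. The loss factor now involves only $\nu-\eta\le\delta/\rho+\delta/(C-\rho)=L/\rho$, a single packet's workload dip, and $b_{j-1}\ge L/(\epsilon\rho)$ suffices; the $(T_M-\sigma_1)/\rho+L/C$ term in \eqref{eq:b_jnewlowerbound} serves only to convert that into a bound on $b_j$ via $b_j-b_{j-1}\le(T_M-\sigma_1)/\rho+L/C$ (which, incidentally, shows $b_j-b_{j-1}$ \emph{is} uniformly bounded when $k>1$, contrary to your closing remark; your proposed "idle-time bookkeeping" fix is gesturing at the right idea but is exactly the missing step, not a routine one). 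The same refinement is needed at the candidate $t=b_{j-1}$ in the subcases where $\eta$ does not exist.
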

\begin{proof}
 According to the definition~\eqref{eq:f_bar},
\begin{equation}
\bar{f}(T_{\ell-1})\geq \bar{f}(T_\ell) \text{~~for~~} \ell\in\{2,\ldots,M\}.
\label{eq:f(zeta)inequality}
\end{equation}
Therefore, for $\ell\in\{2,\ldots,k\}$ and $b_j=b_j(\sigma_\ell)$, then if
\begin{equation}
 \argmax_{t\in[b_{j-1},b_j]}o_{T_{\ell-1}}(t)=b_j,
 \label{eq:maxRatioattbj}
 \end{equation}
then, based on~\eqref{eq:f(zeta)inequality}, having~\eqref{eq:upper bound at complete departure for whole time interval} yields~\eqref{eq:lhs fraction} and no lower bound on $b_j$ is required.
 On the other hand, we will show for the cases that~\eqref{eq:maxRatioattbj} does not hold or
 \begin{equation}
 \argmax_{t\in[b_{j-1},b_j]}o_{T_{\ell-1}}(t)\ne b_j, 
 \label{eq:maxRationotatbj}
 \end{equation}
 if $b_j$ is greater than the lower bound in~\eqref{eq:b_jnewlowerbound}, then  having~\eqref{eq:upper bound at complete departure for whole time interval} yields~\eqref{eq:lhs fraction}. 
 As it was shown previously in Lemmas~\ref{appxlem:maximumovershootbjsigmak} and~\ref{appxlem:maximumovershootbjsigmal}, the only cases of having~\eqref{eq:maxRationotatbj} is when $b_j=b_j(\sigma_k)$. 
 When $b_j=b_j(\sigma_k)$ based on $\tilde{s}_j$ we can have two cases: 
\setcounter{case}{0}
\begin{case}
$\tilde{s}_j=s_j$
\end{case}
 This case is shown in Fig.~\ref{fig:Lemma1Case1}. 
As it is explained in Lemma~\ref{appxlem:maximumovershootbjsigmak}, in the four subcases~\ref{fig:Lemma1Case1subcase2}-\ref{fig:Lemma1Case1subcase5} we can have cases of having the maximum of the overshoot ratio function over the interval $[b_{j-1},b_j]$ at some $t\ne b_j$.
The overshoot ratio functions for these case, $o_{T_{k-1}}(t)$, are depicted in Fig.~\ref{fig:Lemma4Case1} for $t\in[b_{j-1},b_j]$. Theses figures are derived using~\eqref{eq:RatioCase12},~\eqref{eq:RatioCase13},~\eqref{eq:RatioCase14}, and~\eqref{eq:RatioCase15}. Note that, as it was mentioned in Lemma~\ref{appxlem:maximumovershootbjsigmak}, in subcases Fig.~\ref{fig:Lemma1Case1subcase2} and~\ref{fig:Lemma1Case1subcase3} we can have the  maximum of the overshoot function happening at $b_j$, these cases are however not considered in Fig.~\ref{fig:Lemma4Case1subcase1} and~\ref{fig:Lemma4Case1subcase3}, as if the overshoot ratio function is maximized at $b_j$, then~\eqref{eq:lhs fraction} holds for $\forall t\in[b_{j-1},b_j]$ and no lower bound in needed on $b_j$.
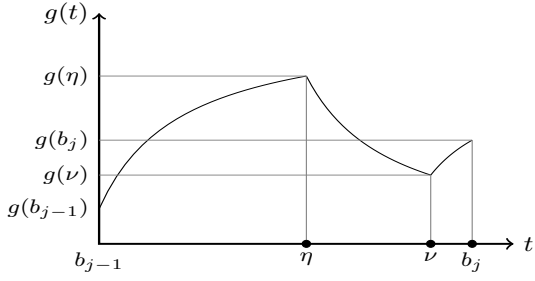
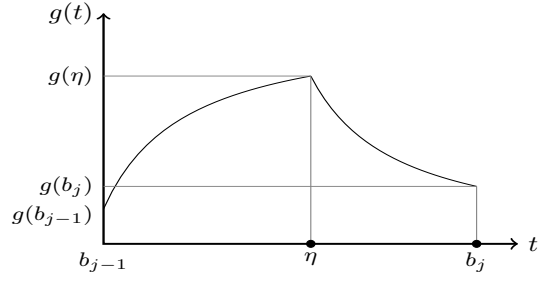
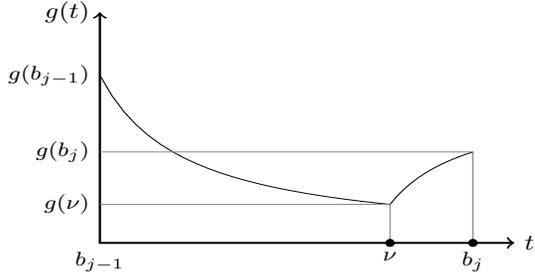
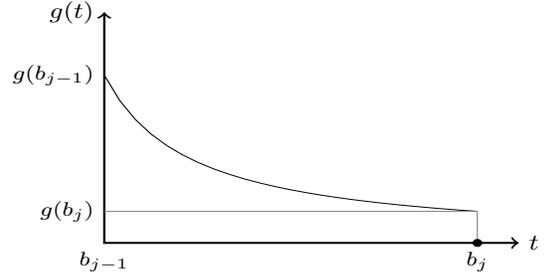
\begin{figure*}
    \centering
    \subfigure[Overshoot ratio function for subcase 2 in Fig.~\ref{fig:Lemma1Case1subcase2}]{
        \centering
        \begin{adjustbox}{width=0.4\textwidth,height=1.5in}
        \begin{tikzpicture}
            \draw [thick,<->] (5,0)--(0,0)--(0,3.3);
            \draw [domain=0:2.5] plot (\x, {(5+31.6*\x)/(10+10*\x)});
            \draw [domain=2.5:4] plot (\x, {30/(12.5+12*(\x-2.5))});
            \draw [domain=4:4.5] plot (\x, {(5+20*(\x-4))/(10+10*(\x-4))+0.483});
            \draw [help lines] (2.5,0) -- (2.5,2.4);
            \draw [help lines] (4,0) -- (4,0.9836);
            \draw [help lines] (4.5,0) -- (4.5,1.4830);
            \draw [fill](2.5,0) circle [radius=0.05];
            \draw [fill](4,0) circle [radius=0.05];
            \draw [fill](4.5,0) circle [radius=0.05];
            \draw [help lines] (0,2.4) -- (2.5,2.4);
            \draw [help lines] (0,0.9836) -- (4,0.9836);
            \draw [help lines] (0,1.4830) -- (4.5,1.4830);
            \node [below] at (0,0) {\begin{footnotesize}$b_{j-1}$\end{footnotesize}};
            \node [below] at (2.5,0) {\begin{footnotesize}$\eta$\end{footnotesize}};
            \node [below] at (4,0) {\begin{footnotesize}$\nu$\end{footnotesize}};
            \node [below] at (4.5,0) {\begin{footnotesize}$b_j$\end{footnotesize}};
            \node [left] at (0,0.5) {\begin{footnotesize}$g(b_{j-1})$\end{footnotesize}};
            \node [left] at (0,0.9836) {\begin{footnotesize}$g(\nu)$\end{footnotesize}};
            \node [left] at (0,1.4830) {\begin{footnotesize}$g(b_j)$\end{footnotesize}};
            \node [left] at (0,2.4) {\begin{footnotesize}$g(\eta)$\end{footnotesize}};
            \node [left] at (0,3.3) {\begin{small}$g(t)$\end{small}};
            \node [right] at (5,0) {\begin{small}$t$\end{small}};
            \end{tikzpicture}
        \end{adjustbox}
        \label{fig:Lemma4Case1subcase1}
    }
    \centering
    \subfigure[Overshoot ratio function for subcase 4 in Fig.~\ref{fig:Lemma1Case1subcase4}]{
        \centering
        \begin{adjustbox}{width=0.4\textwidth,height=1.5in}
        \begin{tikzpicture}
            \draw [thick,<->] (5,0)--(0,0)--(0,3.3);
            \draw [domain=0:2.5] plot (\x, {(5+31.6*\x)/(10+10*\x)});
            \draw [domain=2.5:4.5] plot (\x, {30/(12.5+12*(\x-2.5))});
            \draw [help lines] (2.5,0) -- (2.5,2.4);
            \draw [help lines] (4.5,0) -- (4.5,0.8219);
            \draw [fill](2.5,0) circle [radius=0.05];
            \draw [fill](4.5,0) circle [radius=0.05];
            \draw [help lines] (0,2.4) -- (2.5,2.4);
            \draw [help lines] (0,0.8219) -- (4.5,0.8219);
            \node [below] at (0,0) {\begin{footnotesize}$b_{j-1}$\end{footnotesize}};
            \node [below] at (2.5,0) {\begin{footnotesize}$\eta$\end{footnotesize}};
            \node [below] at (4.5,0) {\begin{footnotesize}$b_j$\end{footnotesize}};
            \node [left] at (0,0.4) {\begin{footnotesize}$g(b_{j-1})$\end{footnotesize}};
            \node [left] at (0,0.8219) {\begin{footnotesize}$g(b_j)$\end{footnotesize}};
            \node [left] at (0,2.4) {\begin{footnotesize}$g(\eta)$\end{footnotesize}};
            \node [left] at (0,3.3) {\begin{small}$g(t)$\end{small}};
            \node [right] at (5,0) {\begin{small}$t$\end{small}};
            \end{tikzpicture}
        \end{adjustbox}
        \label{fig:Lemma4Case1subcase2}
    }
    \newline
    \centering
    \subfigure[Overshoot ratio function for subcase 3 in Fig.~\ref{fig:Lemma1Case1subcase3}]{
        \centering
        \begin{adjustbox}{width=0.4\textwidth,height=1.5in}
        \begin{tikzpicture}
           \draw [thick,<->] (5,0)--(0,0)--(0,3.3);
            \draw [domain=0:3.5] plot (\x, {30/(12.5+12*\x)});
            \draw [domain=3.5:4.5] plot (\x, {(5+20*(\x-3.5))/(10+10*(\x-3.5))+0.0505});
            \draw [help lines] (3.5,0) -- (3.5,0.5505);
            \draw [help lines] (4.5,0) -- (4.5,1.3005);
            \draw [fill](3.5,0) circle [radius=0.05];
            \draw [fill](4.5,0) circle [radius=0.05];
            \draw [help lines] (0,0.5505) -- (3.5,0.5505);
            \draw [help lines] (0,1.3005) -- (4.5,1.3005);
            \node [below] at (0,0) {\begin{footnotesize}$b_{j-1}$\end{footnotesize}};
            \node [below] at (3.5,0) {\begin{footnotesize}$\nu$\end{footnotesize}};
            \node [below] at (4.5,0) {\begin{footnotesize}$b_j$\end{footnotesize}};
            \node [left] at (0,2.4) {\begin{footnotesize}$g(b_{j-1})$\end{footnotesize}};
            \node [left] at (0,0.5505) {\begin{footnotesize}$g(\nu)$\end{footnotesize}};
            \node [left] at (0,1.3005) {\begin{footnotesize}$g(b_j)$\end{footnotesize}};
            \node [left] at (0,3.3) {\begin{small}$g(t)$\end{small}};
            \node [right] at (5,0) {\begin{small}$t$\end{small}};
            \end{tikzpicture}
        \end{adjustbox}
        \label{fig:Lemma4Case1subcase3}
    }
    \centering
    \subfigure[Overshoot ratio function for subcase 5 in Fig.~\ref{fig:Lemma1Case1subcase5}]{
        \centering
        \begin{adjustbox}{width=0.4\textwidth,height=1.5in}
        \begin{tikzpicture}
           \draw [thick,<->] (5,0)--(0,0)--(0,3.3);
            \draw [domain=0:4.5] plot (\x, {30/(12.5+12*\x)});
            \draw [help lines] (4.5,0) -- (4.5,0.4511);
            \draw [fill](4.5,0) circle [radius=0.05];
            \draw [help lines] (0,0.4511) -- (4.5,0.4511);
            \node [below] at (0,0) {\begin{footnotesize}$b_{j-1}$\end{footnotesize}};
            \node [below] at (4.5,0) {\begin{footnotesize}$b_j$\end{footnotesize}};
            \node [left] at (0,2.4) {\begin{footnotesize}$g(b_{j-1})$\end{footnotesize}};
            \node [left] at (0,0.4511) {\begin{footnotesize}$g(b_j)$\end{footnotesize}};
            \node [left] at (0,3.3) {\begin{small}$g(t)$\end{small}};
            \node [right] at (5,0) {\begin{small}$t$\end{small}};
            \end{tikzpicture}
        \end{adjustbox}
        \label{fig:Lemma4Case1subcase4}
    }
    \caption{$o_{T_{k-1}}(t)$ for $\tilde{s}_j=s_j$ for the cases of $\argmax_{t\in[b_{j-1},b_j]}o_{T_{k-1}}\neq b_j$.}
    \label{fig:Lemma4Case1}
\end{figure*}

For the subcase Fig.~\ref{fig:Lemma4Case1subcase1} according to Fig.~\ref{fig:Lemma1Case1subcase2} we have
\begin{align*}
   T_{k-1}- W(\tilde{s}_j;R_{\rm o})\leq T_{k-1}-\sigma_{k-1}=\delta.
\end{align*}
Therefore,
\begin{align*}
    t_j-\eta\leq \frac{\delta}{\rho}.
\end{align*}
Similarly,
\begin{align*}
    \nu-t_j\leq \frac{\delta}{C-\rho}.
\end{align*}
Therefore,
\begin{align}
    \nu-\eta\leq \frac{\delta}{\rho}+\frac{\delta}{C-\rho}=\frac{L}{\rho}.
    \label{eq:nuetaabound}
\end{align}
 On the other hand, according to~\eqref{eq:upper bound at complete departure for whole time interval},~\eqref{eq:RatioCase12}, and Fig.~\ref{fig:Lemma4Case1subcase1}
\begin{align*}
    o_{T_{k-1}}(\nu)=\frac{O_{T_{k-1}}(\eta;R_{\rm o})}{\nu}\leq o_{T_{k-1}}(b_j)\leq \bar{f}(T_{k}).
\end{align*}
Therefore,
\begin{align}
\frac{O_{T_{k-1}}(\eta;R_{\rm o})}{\eta+(\nu-\eta)}\leq &\bar{f}(T_{k})
\nonumber\\
&\rightarrow o_{T_{k-1}}(\eta)
\leq \frac{\eta+(\nu-\eta)}{\eta}\bar{f}(T_{k})
\label{eq:Lemma4Case1Subcase1boundongeta}
\end{align}
In this subcase 
\begin{equation*}
    \argmax_{t\in[b_{j-1},b_j]} o_{T_{k-1}}(t)=\eta.
\end{equation*}
Therefore, we need to find a lower bound on $b_j$ such that $o_{T_{k-1}}(\eta)<\bar{f}(T_{k-1})$. 
 We know, 
\begin{align*}
    \frac{\bar{f}(T_{k})}{\eta}\leq \frac{\bar{f}(T_{k})}{b_{j-1}}.
\end{align*} 
Therefore, according to~\eqref{eq:Lemma4Case1Subcase1boundongeta},~\eqref{eq:nuetaabound}, and~\eqref{eq:epsilonappendix}, if 
\begin{align*}
   \frac{\bar{f}(T_{k})}{b_{j-1}}<\frac{\epsilon\rho}{L}
\end{align*}
then $o_{T_{k-1}}(\eta)<\bar{f}(T_{k-1})$. Therefore, if 
\begin{equation}
b_{j-1}> \frac{L}{\rho\epsilon}>\frac{L}{\rho\epsilon}\bar{f}(T_{k})   
\label{eq:Lemma4Case1Subcase1lowerboundonbj-1}
\end{equation}
then $o_{T_{k-1}}(\eta)<\bar{f}(T_{k-1})$. 
Since $T_M$ is chosen large enough such that $\cB_j\ne \emptyset$ for all $j$, 
we can assert that $W_\rho(b_{j-1};R_{\rm o})\leq T_M=\sigma_M+\delta$. On the
other hand, as $k =\min \cB_j > 1$, we have that $W_\rho(t_{j};R_{\rm o})\geq \sigma_1$. 
Using~\eqref{eq:Stochastic Regulator eq 1}, we have
\begin{align}
t_j-b_{j-1}= \frac{W(b_{j-1};R_{\rm o})-W(t_{j};R_{\rm o})}{\rho} \leq \frac{T_M-\sigma_1}{\rho} .
\end{align}
Therefore,
\begin{align}
b_{j}-b_{j-1} =\frac{L_j}{C}+t_j-b_{j-1}\leq \frac{T_M-\sigma_1}{\rho}+\frac{L}{C} . 
\label{eq:b_jb_j-1inequality}
\end{align}
Therefore, if 
\begin{equation}
b_{j}> \frac{L}{\rho\epsilon}+\frac{T_M-\sigma_1}{\rho}+\frac{L}{C} 
\label{eq:bjnewlowerboundcase1}
\end{equation}
then $o_{T_{k-1}}(\eta)<\bar{f}(T_{k-1})$.
Following the same arguments for the subcase Fig.~\ref{fig:Lemma4Case1subcase2}, we can have the same lower bound for $b_{j}$ as~\eqref{eq:bjnewlowerboundcase1}.

For the subcase Fig.~\ref{fig:Lemma4Case1subcase3}, following the same arguments we can show 
\begin{align*}
    t_j-b_{j-1}\leq \frac{\delta}{\rho},~~~~~\nu-t_j\leq \frac{\delta}{C-\rho}.
\end{align*}
Therefore,
\begin{align}
    \nu-b_{j-1}\leq \frac{\delta}{\rho}+\frac{\delta}{C-\rho}=\frac{L}{\rho}.
    \label{eq:nubj-1abound}
\end{align}
On the other hand, according to~\eqref{eq:upper bound at complete departure for whole time interval},~\eqref{eq:RatioCase13}, and Fig.~\ref{fig:Lemma4Case1subcase3}
\begin{align*}
    o_{T_{k-1}}(\nu)=\frac{O_{T_{k-1}}(b_{j-1};R_{\rm o})}{\nu}\leq o_{T_{k-1}}(b_j)\leq \bar{f}(T_{k}).
\end{align*}
Therefore,
\begin{align}
&\frac{O_{T_{k-1}}(b_{j-1};R_{\rm o})}{b_{j-1}+(\nu-b_{j-1})}\leq \bar{f}(T_{k})
\nonumber\\
&\rightarrow o_{T_{k-1}}(b_{j-1})
\leq \frac{b_{j-1}+(\nu-b_{j-1})}{b_{j-1}}\bar{f}(T_{k})
\label{eq:Lemma4Case1Subcase3boundongbj-1}
\end{align}
In this subcase 
\begin{equation*}
    \argmax_{t\in[b_{j-1},b_j]} o_{T_{k-1}}(t)=b_{j-1}.
\end{equation*}
Similarly, according to~\eqref{eq:Lemma4Case1Subcase3boundongbj-1},~\eqref{eq:nubj-1abound}, and~\eqref{eq:epsilonappendix}, if 
\begin{align*}
   \frac{\bar{f}(T_{k})}{b_{j-1}}<\frac{\epsilon\rho}{L}
\end{align*}
then $g(b_{j-1})<\bar{f}(T_{k-1})$. Therefore, if 
\begin{equation}
b_{j-1}> \frac{L}{\rho\epsilon}>\frac{L}{\rho\epsilon}\bar{f}(T_{k})   
\label{eq:Lemma4Case1Subcase4lowerboundonbj-1}
\end{equation}
then $g(b_{j-1})<\bar{f}(T_{k-1})$. Therefore the same lower bound on $b_j$ as~\eqref{eq:bjnewlowerboundcase1} will be achieved for this subcase. Following the same arguments for the subcase Fig.~\ref{fig:Lemma4Case1subcase4}, we can have the same lower bound on $b_{j}$. 
\begin{case}
$\tilde{s}_j=b_{j-1}$
\end{case}
This case is shown in Fig.~\ref{fig:Lemma1Case3}. As it is explained in Lemma~\ref{appxlem:maximumovershootbjsigmal}, in the two subcases~\ref{fig:Lemma1Case3subcase3} and~\ref{fig:Lemma1Case3subcase4} we can have cases of having the maximum of the overshoot ratio function, $o_{T_{k-1}}(t)$, over the interval $[b_{j-1},b_j]$ at $t= b_{j-1}$.
The overshoot ratio function for these two subcases is similar to Fig.~\ref{fig:Lemma4Case1subcase3} and~\ref{fig:Lemma4Case1subcase4} for $t\in[b_{j-1},b_j]$. Theses figures are derived using~\eqref{eq:RatioCase32} and~\eqref{eq:RatioCase33}. Note that, as it was mentioned in Lemma~\ref{appxlem:maximumovershootbjsigmal}, in subcase Fig.~\ref{fig:Lemma1Case3subcase3} we can have the maximum of the overshoot function happening at $b_j$, following the same argument as before, this case however is not considered in here.

For the subcase Fig.~\ref{fig:Lemma1Case3subcase3}, which its overshoot ratio function is depicted in Fig.~\ref{fig:Lemma4Case1subcase3}, by following the same argument as before, we have
\begin{align}
    \nu-b_{j-1}\leq \frac{\delta}{C-\rho}.
    \label{eq:nubj-1bound}
\end{align}
On the other hand, according to~\eqref{eq:upper bound at complete departure for whole time interval},~\eqref{eq:RatioCase33}, and Fig.~\ref{fig:Lemma4Case1subcase3}
\begin{align*}
    o_{T_{k-1}}(\nu)=\frac{O_{T_{k-1}}(b_{j-1};R_{\rm o})}{\nu}\leq o_{T_{k-1}}(b_j)\leq \bar{f}(T_{k}).
\end{align*}
Therefore,
\begin{align}
&\frac{O_{T_{k-1}}(b_{j-1};R_{\rm o})}{b_{j-1}+(\nu-b_{j-1})}\leq \bar{f}(T_{k})
\nonumber\\&\rightarrow o_{T_{k-1}}(b_{j-1})\leq \frac{b_{j-1}+(\nu-b_{j-1})}{b_{j-1}}\bar{f}(T_{k})
\label{eq:Lemma4Case2Subcase1boundongbj-1}
\end{align}
In this subcase 
\begin{equation*}
    \argmax_{t\in[b_{j-1},b_j]} o_{T_{k-1}}(t)=b_{j-1}.
\end{equation*}
Similarly, according to~\eqref{eq:Lemma4Case2Subcase1boundongbj-1},~\eqref{eq:nubj-1abound}, and~\eqref{eq:epsilonappendix}, if 
\begin{align*}
   \frac{\bar{f}(T_{k})}{b_{j-1}}<\frac{\epsilon(C-\rho)}{\delta}
\end{align*}
then $o_{T_{k-1}}(b_{j-1})<\bar{f}(T_{k-1})$. Therefore, if 
\begin{equation}
b_{j-1}> \frac{\delta}{\epsilon(C-\rho)}=\frac{L}{C\epsilon}>\frac{\delta}{\epsilon(C-\rho)}\bar{f}(T_{k})   
\label{eq:Lemma4Case2Subcase1lowerboundonbj-1}
\end{equation}
then $o_{T_{k-1}}(b_{j-1})<f(T_{k-1})$. As in this case $b_{j-1}=t_j$, the lower bound on $b_j$ in this case will be
\begin{equation}
    b_j>\frac{L}{C\epsilon}+\frac{L}{C}
\label{eq:bjnewlowerboundcase2}    
\end{equation}
Following the same arguments, we can have the same lower bound for $b_{j}$ as~\eqref{eq:bjnewlowerboundcase2} for the subcase Fig.~\ref{fig:Lemma1Case3subcase4}. Therefore, using~\eqref{eq:bjnewlowerboundcase1} and~\eqref{eq:bjnewlowerboundcase2} the lower bound for $b_j$ to ensure~\eqref{eq:lhs fraction} will be 
\begin{align}
b_j>& \max\{\frac{L}{\rho\epsilon}+\frac{T_M-\sigma_1}{\rho}+\frac{L}{C} , \frac{L}{C\epsilon}+\frac{L}{C}\} \nonumber\\
&=\frac{L}{\rho\epsilon}+\frac{T_M-\sigma_1}{\rho}+\frac{L}{C}.
\label{eq:Lemma4lowerboundonbj}
\end{align}
In our case study with $M=56$, the lower bound in~\eqref{eq:Lemma4lowerboundonbj} will be $b_j\geq 2.35\times 10^3$. Therefore, for $j\geq 220$, inequality~\eqref{eq:lhs fraction} holds. 
\end{proof}

\begin{proof}[Proof of Lemma~\ref{appxlem:overshoot_soln1}]
Let $\cB_j$ be as defined in Lemma~\ref{appxlem:overshoot_soln1} and let $k=\min\cB_j$. 
In lemma~\ref{appxlem:bjlowerbound} we showed if $k>1$, $\mathcal{I}_j\neq\emptyset$ and 
\begin{equation}
b_j> \frac{L}{\rho\epsilon}+\frac{T_M-\sigma_1}{\rho}+\frac{L}{C}.   
\end{equation}
Then 
\begin{equation}
\forall \ell\in \mathcal{I}_j:  ~~o_{T_{\ell-1}}(t)\leq \bar{f}(T_{\ell-1}),~~~~\forall~t\in[b_{j-1},b_j(\sigma_\ell)].    
\end{equation}
On the other hand, if $\mathcal{I}_j=\emptyset$ or $k=1$, which in turn means $\mathcal{I}_j=\emptyset$, then $\sigma^*(j)=\sigma_1$. But as $\bar{f}(T_0)=1$, therefore
\begin{equation}
o_{T_{0}}(t)\leq \bar{f}(T_{0}),~~~~\forall~t\in[b_{j-1},b_j(\sigma_1)].   \end{equation}
\end{proof}
In the next section we show if $t$ is sufficiently large, then the limited constraint in~\eqref{eq:canonical_constraintPreliminaryTheorem} can be extended to the desired constraint in~\eqref{eq:canonical_constraint}.
\mycomment{
\begin{appxlemma}
\label{appxlem:check_bj}
Let $k = \min \cB_j$, $k>1$ and assume that $b_j$ satisfies the following lower bound
\begin{equation}
    b_j\geq \frac{1}{\epsilon}\left(\frac{L}{C}+\frac{T_M-\sigma_1}{\rho}\right),
\label{eq:bj_lowerbound}
\end{equation}
where $\epsilon > 0$ is given by 
\begin{equation}
\epsilon=\min_{2\leq k \leq M}[f(T_{k-1})-f(T_k)].
\label{eq:epsilon}    
\end{equation}  
Let $\ell \in \{ 2, \ldots, k \}$ and set $b_j = b_j(\sigma_\ell)$.  Then
\begin{align}
    o_{T_{\ell-1}}(b_j ) \leq \bar{f}(T_{\ell}) ,
    \label{eq:upper bound at complete departure for whole time interval}
\end{align}
implies 
\begin{align}
    o_{T_{\ell-1}}(t) < f(T_{\ell-1}), ~~~~ \forall t \in [b_{j-1}, b_j] .
    \label{eq:lhs fraction}
\end{align}
\end{appxlemma}
\begin{proof}
 Let $t \in [b_{j-1}, b_j]$. We have
 \begin{align}
o_{T_{\ell-1}}(t) 
\leq\frac{O_{T_{\ell-1}}(b_{j-1};R_{\rm o}) + (t - b_{j-1})}{b_{j-1} + (t-b_{j-1})} =: h(t; b_{j-1}),  
 \end{align}
 for $t\in[b_{j-1},b_j]$. It can be seen that $h(t; b_{j-1})$ is an increasing function of $t$ in the interval $[b_{j-1}, b_j]$.  
 Thus, we have
 \begin{align}
     o_{T_{\ell-1}}(t) \leq h(b_j; b_{j-1}), 
     &~~~~ \forall t \in [b_{j-1},b_j] .
\label{eq:T_h}
 \end{align}
 On the other hand, 
 \begin{align}
 o_{T_{\ell-1}}(b_j)&=\frac{O_{T_{\ell-1}}(b_{j-1};R_{\rm o})+(b_j-b_{j-1})-|\mathcal{G}|}
   {b_{j-1}+(b_j-b_{j-1})} \nonumber \\
  &= h(b_j;b_{j-1}) - \frac{|\cG|}{b_j}
  \label{eq:OvershootRatioatbj}
 \end{align}
 where $\cG := \{ t\in [b_{j-1},b_j] :0< W(t;R_{\rm o})< T_{\ell-1}\}$
 and $|\cG|$ denotes the Lebesgue measure of $\cG$.
 Using \eqref{eq:T_h}, \eqref{eq:OvershootRatioatbj}, and 
 \eqref{eq:upper bound at complete departure for whole time interval} we have
 \begin{align}
   o_{T_{\ell-1}}(t) &\leq
      o_{T_{\ell-1}}(b_j)  + \frac{|\cG|}{b_j} 
    \leq \bar{f}(T_{\ell}) + \frac{b_j - b_{j-1}}{b_j}\nonumber \\
    &\leq f(T_{\ell}) + \frac{b_j - b_{j-1}}{b_j},
\label{eq:T_bj_ineq}
 \end{align}
where we have used the fact that $|\cG| \leq (b_j- b_{j-1})$, and $\bar{f}(T_{\ell})\leq f(T_{\ell})$ for $\forall~ T_\ell\in\{T_1,\ldots,T_{M-1}\}$.
Since $T_M$ is chosen large enough such that $\cB_j\ne \emptyset$ for all $j$, 
we can assert that $W_\rho(b_{j-1};R_{\rm o})\leq T_M=\sigma_M+\delta$. On the
other hand, as $k =\min \cB_j > 1$, we have that $W_\rho(t_{j};R_{\rm o})\geq \sigma_1$. 
Using~\eqref{eq:Stochastic Regulator eq 1}, we have
\begin{align}
t_j-b_{j-1}= \frac{W(b_{j-1};R_{\rm o})-W(t_{j};R_{\rm o})}{\rho} \leq \frac{T_M-\sigma_1}{\rho} .
\end{align}
Therefore,
\begin{align}
b_{j}-b_{j-1} =\frac{L_j}{C}+t_j-b_{j-1}\leq \frac{T_M-\sigma_1}{\rho}+\frac{L}{C} . 
\label{eq:b_jb_j-1inequality}
\end{align}
Since the right-hand side of \eqref{eq:b_jb_j-1inequality} is a constant we can
choose $b_j$ sufficiently large such that
\begin{align}
   \frac{b_{j}-b_{j-1}}{b_j} < \epsilon , 
   \label{eq:bj_eps}
\end{align}
where $\epsilon > 0$ is given by~\eqref{eq:epsilon}.
When $b_j$
satisfies~\eqref{eq:bj_lowerbound}, equations \eqref{eq:T_bj_ineq}, \eqref{eq:bj_eps},
and \eqref{eq:epsilon} yield the inequality~\eqref{eq:lhs fraction}. 
In our case study with $M=56$, which has the smallest value for $\epsilon=0.0089$, with $T_M=400$, $\sigma_M=0.1$, $\rho=0.654$, $L=10$, and $C=1$, the lower bound in~\eqref{eq:bj_lowerbound} will be $b_j\geq 7\times 10^4$. Therefore, for $j\geq 6125$, inequality ~\eqref{eq:lhs fraction} holds. 
\end{proof}
}

\subsection{Proof of Theorem~\ref{thm:overshoot_soln1}, Part II}
\label{appx_subsec:overshoot_soln1_II}

In this section we show in order to achieve the desired constraint in~\eqref{eq:canonical_constraint} rather than the preliminary one in~\eqref{eq:canonical_constraintPreliminaryTheorem}, we need to increase $b_j$ from the lower in bound in~\eqref{eq:b_jlowerbound} to sufficiently large values.
The proof of Theorem~\ref{thm:overshoot_soln1} is based on the next two lemmas.

\begin{appxlemma}
\label{appxlem:all_gamma}
The $(\sigma^*, \rho)$ regulator defined by
\eqref{eq:cBj}--\eqref{eq:basic_implementation} produces an output traffic
stream that satisfies
\begin{equation}
o_{T_i}(t)\leq \bar{f}({T_i})~~~~\text{for~~}\forall i\in\{1,\ldots,M\},
\label{eq:OvershootRatioBoundforallT}
\end{equation}
for sufficiently large~$t$.
\end{appxlemma}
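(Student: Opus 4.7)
The plan is to strengthen Lemma~\ref{appxlem:overshoot_soln1}, which controls only $o_{T_{i^*-1}}(t)$, into a bound on $o_{T_i}(t)$ for every $i\in\{1,\ldots,M\}$. I would proceed by strong induction on the packet index $j$, with induction hypothesis that $o_{T_i}(b_{j'}) \leq \bar{f}(T_i)$ holds for every $i$ and every $j'<j$ whose complete-departure time $b_{j'}$ exceeds the lower bound in~\eqref{eq:b_jlowerbound}. Fix a packet $j$ and let $\ell = i^*(j)$ denote the index chosen by~\eqref{eq:basic_implementation}, so that $\sigma^*(j)=\sigma_\ell$. The argument then splits on the position of $i$ relative to $\ell$.

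For $i \geq \ell$, the canonical construction forces $W_\rho(t_j;R_{\rm o}) \leq \sigma_\ell$, and since $W_\rho$ has slope $(C-\rho)$ on $[t_j,b_j]$, it cannot exceed $\sigma_\ell + \delta_j \leq T_\ell \leq T_i$ on that interval. Hence $\alpha(t_j,b_j,T_i) = 0$ by~\eqref{eq:alpha_increment}, and Proposition~\ref{prop:overshoot} gives $O_{T_i}(b_j;R_{\rm o}) = O_{T_i}(b_{j-1};R_{\rm o}) + \beta(b_{j-1},t_j,T_i)$. Because the workload decreases at slope $-\rho$ on $[b_{j-1},t_j]$ starting from at most $T_M$, the increment $\beta(b_{j-1},t_j,T_i) \leq (T_M-T_i)/\rho$ is a fixed constant. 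Dividing by $b_j$ and invoking the induction hypothesis together with~\eqref{eq:b_jb_j-1inequality} then forces $o_{T_i}(b_j) \leq \bar{f}(T_i)$ once $b_j$ is large enough. The case $i=\ell-1$ is exactly the content of Lemma~\ref{appxlem:overshoot_soln1}, so no new work is required.

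For $i \leq \ell-2$, the basic implementation in~\eqref{eq:Ij} performs no check at $T_i$, so the bound at $T_i$ cannot be enforced packet-by-packet. My plan is to exploit the slack $\epsilon>0$ in~\eqref{eq:epsilonappendix}, which guarantees $\bar{f}(T_i) \geq \bar{f}(T_{\ell-1})+(\ell-1-i)\epsilon$. For each $i$, I would locate a most recent prior packet $j'(i)<j$ at which $i^*(j'(i))\in\{i+1,i+2\}$; at such a packet Lemma~\ref{appxlem:overshoot_soln1} anchors $o_{T_i}(b_{j'(i)}) \leq \bar{f}(T_{i^*(j'(i))-1}) \leq \bar{f}(T_i)-\epsilon$. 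Between $b_{j'(i)}$ and $b_j$ the overshoot duration can accumulate by at most $b_j - b_{j'(i)}$, so the ratio grows by at most $(b_j-b_{j'(i)})/b_j$, which can be made smaller than $\epsilon$ for $b_j$ sufficiently large relative to $b_{j'(i)}$. Existence of such anchoring packets $j'(i)$ for each $i$ appeals to stationarity and ergodicity of $R_{\rm i}$, which ensures that every $\sigma^*$-type chosen by the regulator is realized infinitely often.

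The main obstacle is precisely this range $i \leq \ell-2$, where the basic implementation deliberately trades verification completeness for computational simplicity. Controlling $o_{T_i}$ there requires inferring the bound \emph{indirectly} from packets whose $i^*$ lies slightly above $i$, and quantitatively showing that the time gap $b_j-b_{j'(i)}$ grows strictly slower than $b_j$ itself so that the discrete slack $\epsilon$ in $\bar{f}$ absorbs the worst-case accumulation. A careful selection of the baseline packet index beyond which the induction begins, simultaneously accommodating all three cases above, is what makes the conclusion hold only for ``sufficiently large'' $t$ rather than for all $t\geq 0$.
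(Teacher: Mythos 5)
Your strategy for the hard range $i\leq \ell-2$ is essentially the paper's: anchor at a recent packet where the algorithm's check applied at (or just above) level $T_i$, exploit the slack $\bar{f}(T_i)-\bar{f}(T_{i+1})\geq\epsilon$, bound the subsequent accumulation of $O_{T_i}$ by the elapsed time, and invoke stationarity/ergodicity to control the gap to the anchor. The paper formalizes exactly this via the threshold-violation distance $\text{Dist}_{\zeta,\alpha}$ and $\text{dist}_j(T_i)$ in \eqref{eq:dist}, the sufficient condition \eqref{eq:sufficientconditionondist}, and the reset events $\xi_k(T_\ell)$ in \eqref{eq:xi_j_ell}: the slack buys a violation budget that grows \emph{linearly} in $b_j$ (see \eqref{eq:MinimumConsecutiveDistancetoViolate}), while ergodicity keeps the inter-reset gap bounded, so for $b_j$ large the budget is never exhausted. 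That quantitative comparison is precisely the step you flag as your ``main obstacle'' and leave unproven, so the crux of the lemma is acknowledged rather than established in your proposal.

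There is also a concrete gap in your case $i\geq\ell$. From $O_{T_i}(b_j;R_{\rm o})=O_{T_i}(b_{j-1};R_{\rm o})+\beta(b_{j-1},t_j,T_i)$ with $\beta\leq (T_M-T_i)/\rho$ and the induction hypothesis $O_{T_i}(b_{j-1};R_{\rm o})\leq \bar{f}(T_i)\,b_{j-1}$, you only obtain
\begin{equation*}
o_{T_i}(b_j)\;\leq\;\bar{f}(T_i)\,\frac{b_{j-1}}{b_j}+\frac{(T_M-T_i)/\rho}{b_j}\;\leq\;\bar{f}(T_i)+O(1/b_j),
\end{equation*}
which does not close the induction: you would need $\beta(b_{j-1},t_j,T_i)\leq \bar{f}(T_i)\,(b_j-b_{j-1})$, and since $\beta$ can be nearly all of $t_j-b_{j-1}$ while $\bar{f}(T_i)<1$, this can fail for individual packets no matter how large $b_j$ is. An induction hypothesis with no strict slack cannot absorb a positive additive increment. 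In the paper this case is not dispatched separately; it is folded into the same anchoring/distance argument (the workload can only be above $T_i\geq T_\ell$ because some earlier packet chose a larger $\sigma^*$, and the bound at $T_i$ is inherited from the check performed at that earlier downcrossing, again with the $\epsilon$ slack and the ergodicity estimate). So both of your ``easy'' reductions ultimately funnel into the one quantitative estimate you defer, and that estimate is the substance of the paper's proof.
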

\begin{proof}
In order to prove this lemma, we use Fig.~\ref{fig:WorkloadFluctuationNumericalExample} which shows the input workload, $W(t;R_{\rm i})$, and output workload, $W(t;R_{\rm o})$, for $t\in[2.5e4,3e4]$ for the numerical example in Section~\ref{sec:numerical} with $M=56$. The corresponding overshoot ratios, $o_{T_{i}}(t)$ for two values of $T_{16}$ and $T_{21}$ on the interval $[2.5e4,3e4]$ are shown in Fig.~\ref{fig:OvershootRatioViolationinTime}. As can be seen in Fig.~\ref{fig:OvershootRatioViolationinTime}, over the interval of $[2.5e4,3e4]$ there is violation of the constraints in~\eqref{eq:OvershootRatioBoundforallT} for some $t$, as 
\begin{equation*}
    o_{T_{17}}(t)>\bar{f}(T_{16}),~~~~\forall~t\in[t_3,t_6],
\end{equation*}
where $t_3=2.73e4$ and $t_6=2.87e4$ are shown in Fig.~\ref{fig:OvershootRatioViolationinTime}. By explaining what happens on the interval $[2.5e4,3e4]$ we can explain why this violation happens and how these violations are avoided when $t$ is sufficiently large. Note that, although this is just one specific example, it can act as a guideline and does not limit the scope of this proof. 
\begin{figure}
\centering
\includegraphics[width=0.95\columnwidth]{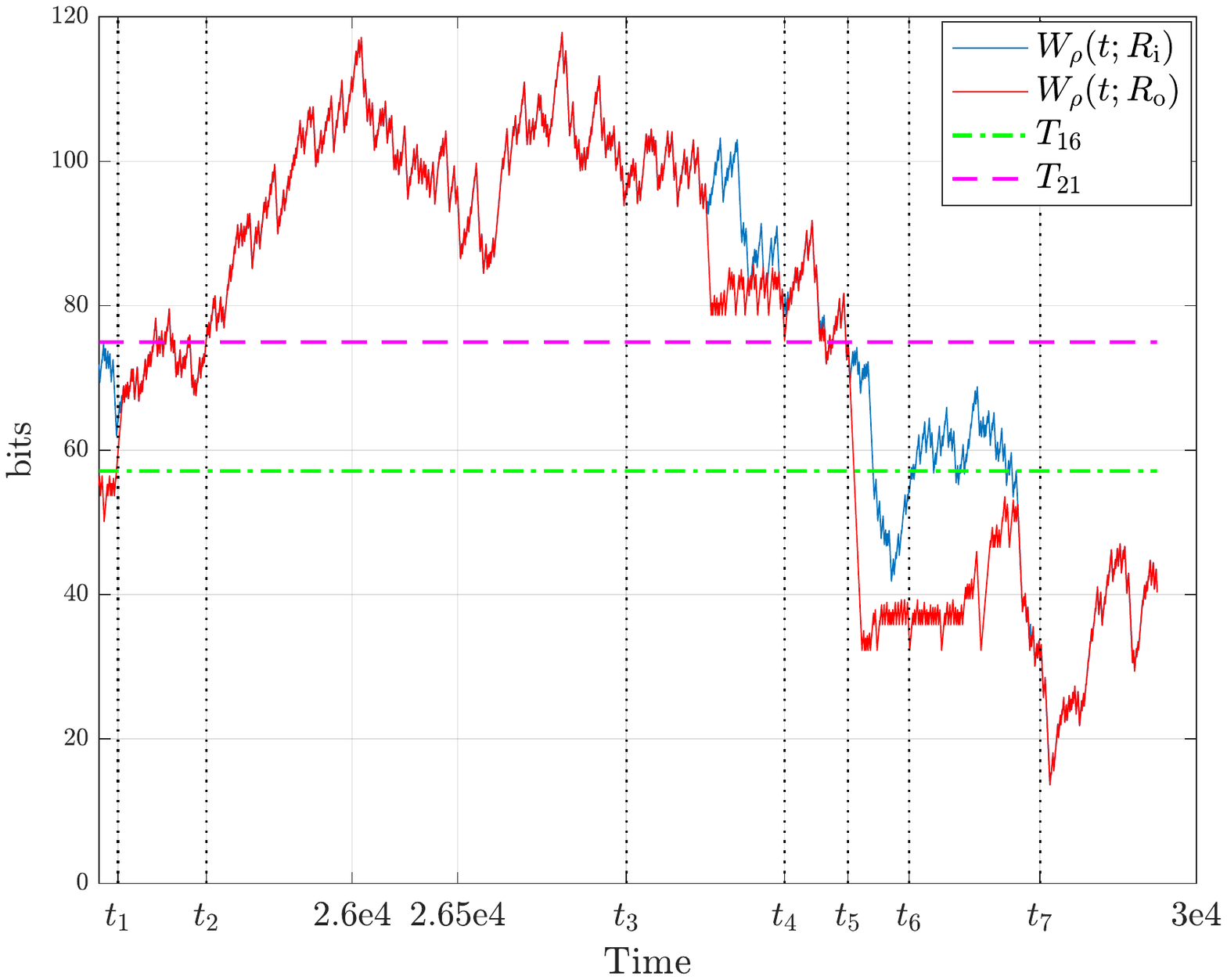}
\caption{Input and output workload for $t\in[2.5e4,3e4]$ for the example in Section~\ref{sec:numerical} for Algorithm~\ref{alg:Stochastic regulator} with $M=56$.}
\label{fig:WorkloadFluctuationNumericalExample}
\end{figure}
\begin{figure}
\centering
\includegraphics[width=0.95\columnwidth]{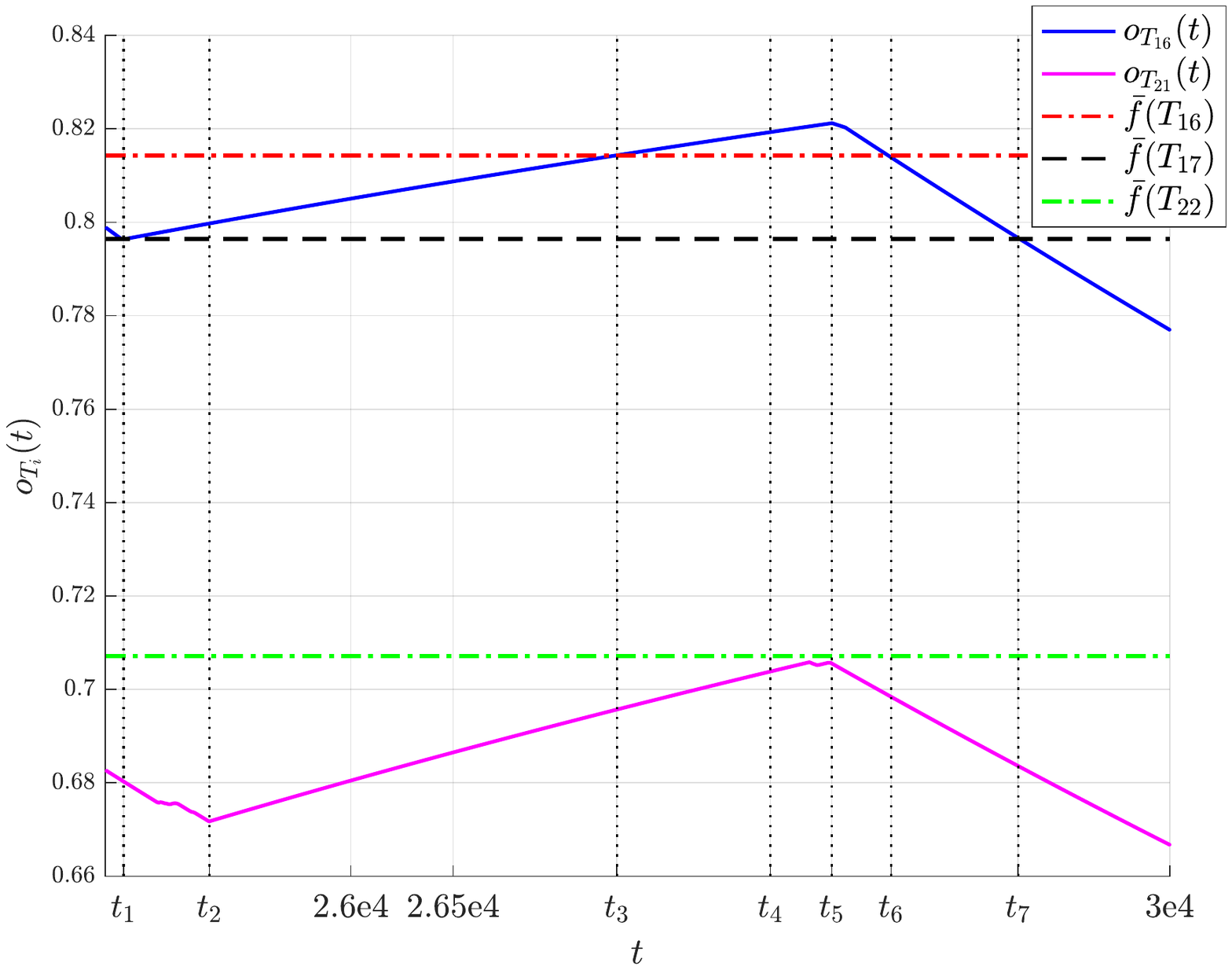}
\caption{Overshoot ratio, $o_{T_i}(t)$ for $t\in[2.5e4,3e4]$ and $T_i\in\{T_{16},T_{21}\}$ for the example in Section~\ref{sec:numerical} for Algorithm~\ref{alg:Stochastic regulator} with $M=56$.}
\label{fig:OvershootRatioViolationinTime}
\end{figure}

As it can be seen in Fig.~\ref{fig:WorkloadFluctuationNumericalExample}, at $t=t_1$ output workload increases above $T_{16}$, and after $t=t_5$ it decreases again to a level below $T_{16}$. According to Algorithm~\ref{alg:Stochastic regulator} and~\eqref{eq:Ij}, and as can be seen in Fig.~\ref{fig:OvershootRatioViolationinTime}, $o_{T_{16}}(t_1)\leq \bar{f}(T_{17})$. Therefore, at $t=t_1$, $\sigma^*(j)$ will be set to $\sigma^*(j)=\sigma_{17}$ and output workload will increase. As long as $o_{T_{16}}(b_j)\leq \bar{f}(T_{17})$ for $b_j>t_1$ and $W(\tilde{s}_j;R_1)<\sigma_{17}$, this process will continue and $\sigma^*(j)$ will be set to $\sigma^*(j)=\sigma_{17}$ till $W(\tilde{s}_j;R_1)$ is increased to $W(\tilde{s}_j;R_1)>\sigma_{17}$. At this point according to Algorithm~\ref{alg:Stochastic regulator} and~\eqref{eq:Ij}, $o_{T_{17}}(b_j)$  will be compared against $\bar{f}(T_{18})$ and if $o_{T_{17}}(t)\leq \bar{f}(T_{18})$, as in this example, $\sigma^*(j)$ will be set to $\sigma^*(j)=\sigma_{18}$. On the other hand, at $t=t_5$, $\min{\cB_j}=\sigma_{22}$. Therefore, $o_{T_{21}}(b_j)$  will be compared against $\bar{f}(T_{22})$ and if $o_{T_{21}}(b_j)> \bar{f}(T_{22})$, as in this example, $\sigma^*(j)$ will be set to a value less than $\sigma_{22}$. In this example $\sigma^*(j)$ is set to a $\sigma^*(j)=\sigma_{10}$ as $o_{T_{i-1}}(b_j(\sigma_i))> \bar{f}(T_{i})$ for $i=10,11,\ldots,22$. 

From the discussion above it can be understood when the output workload at the complete departure time, $W(b_j;R_{\rm o})$, increases above $\sigma_{i}$ for the $j$th packet, the overshoot ratio, $o_{T_{\ell}}(b_m(\sigma_{\ell+1}))$, will be compared against $\bar{f}(T_{\ell+1})$, for $\ell\in\{i,\ldots,M-1\}$ and $m>j$, as long as the output workload stays above $\sigma_{i}$. Therefore, during the interval that the workload is above $\sigma_{i}$, for all $m>j$ such that $W(b_m;R_{\rm o})>\sigma_i$, there is at least one $k\in\{i,\ldots,M-1\}$, such that $o_{T_{k}}(b_m(\sigma_{k+1}))\leq \bar{f}(T_{k+1})$. Based on this concepts we define threshold violation distance with respect to a threshold value, a bounding value and a traffic stream.
\begin{definition}
\label{def:distance}
Given a threshold value $\zeta >0$, a bounding value $\alpha>0$ and a traffic stream $R$,
{\em threshold violation distance} with respect to $R$, $\zeta$ and $\alpha$, is defined as the minimum time it takes such that the overshoot ratio reaches the bounding value $\alpha$. In other words,
\begin{equation}
    \label{eq:Dist}
    \text{Dist}_{\zeta,\alpha}(t;R):=\!\left\lbrace\begin{array}{ll}
        \!\hat{t}(\zeta)-t+\min_{R} dt, &
        \\
        \!\text{s.t:~~}o_\zeta(\hat{t}(\zeta)+dt)\!=\!\alpha,&\hspace{-2pt}\text{if}~o_\zeta(\hat{t}(\zeta))\leq \alpha, 
        \\
        \!0, & \hspace{-2pt}\text{otherwise},
    \end{array}
    \right.
\end{equation}
where $\hat{t}(\zeta)$ is defined as 
\begin{equation}
\hat{t}(\zeta)=  t+[\zeta-W_\rho(t;R)]^+/(C-\rho)
\label{eq:that}
\end{equation}
\end{definition}
Note that, in Definition~\ref{def:distance}, if the output workload is less than the threshold $\zeta$ then $\hat{t}(\zeta)$ will be the earliest time the output workload can increase to the threshold level $\zeta$. On the other hand, $\min_R dt$ is the minimum extra time the output workload needs to stay above $\zeta$ such that overshoot ratio with respect to $\zeta$ reaches the bounding value $\alpha$. Threshold violation distance for output traffic can be calculated using the following proposition.
\begin{proposition}
\label{prop:Distance}
For the output traffic if $o_\zeta(\hat{t}(\zeta))\leq \alpha$, then
\begin{align*}
\text{Dist}_{\zeta,\alpha}(t;R_{\rm o})=&\frac{\hat{t}(\zeta)-t}{1-\alpha}+\frac{\alpha t-O_\zeta(t;R_{\rm o})}{1-\alpha}\\
=&\frac{\hat{t}(\zeta)-t}{1-\alpha}+\frac{\alpha t-t o_\zeta(t)}{1-\alpha},
\end{align*}
where $\hat{t}(\zeta)$ can be derived according to~\eqref{eq:that} with $R$ replaced by $R_{\rm o}$.
\end{proposition}
\begin{proof}
According to~\eqref{eq:Dist} and~\eqref{eq:overshootratio}, if $o_\zeta(\hat{t}(\zeta))\leq \alpha$, then
\begin{equation}
 \frac{O_\zeta(t+\text{Dist}_{\zeta,\alpha}(t;R_{\rm o});R_{\rm o})}{t+\text{Dist}_{\zeta,\alpha}(t;R_{\rm o})}=\alpha   
 \label{eq:Disteq1}
\end{equation}
But clearly if $o_\zeta(\hat{t}(\zeta))\leq \alpha$, for the workload we need to have 
\begin{equation*}
W(t;R_{\rm o})>\zeta~~~~~\forall~t\in[\hat{t}(\zeta),t+\text{Dist}_{\zeta,\alpha}(t;R_{\rm o})].    
\end{equation*}
Therefore,
\begin{align}
&O_\zeta(t+\text{Dist}_{\zeta,\alpha}(t;R_{\rm o});R_{\rm o})=O_\zeta(\hat{t}(\zeta);R_{\rm o})+\text{Dist}_{\zeta,\alpha}(t;R_{\rm o})
\nonumber\\
&-(\hat{t}(\zeta)-t)=O_\zeta(t;R_{\rm o})+\text{Dist}_{\zeta,\alpha}(t;R_{\rm o})-(\hat{t}(\zeta)-t)
\label{eq:Disteq2}
\end{align}
Hence, using equations~\eqref{eq:that},~\eqref{eq:Disteq1} and~\eqref{eq:Disteq2}, Proposition~\ref{prop:Distance} can be derived.
\end{proof}
For the $j$th packet, we define $\text{dist}_j(T_i)$, for $1\leq i\leq M-1$, using definition of $\text{Dist}_{\zeta,\alpha}(t;R_{\rm o})$ for special values of $\zeta$, $\alpha$, and $t$ as follow:
\begin{align}
\text{dist}_j(T_i):=\left\lbrace\begin{array}{ll}
  \!\text{Dist}_{T_i,\Bar{f}(T_{i+1})}(b_j;R_{\rm o}),   & \text{for~~} i\in\mathcal{L}_j,\\
  \!\text{Dist}_{T_i ,\Bar{f}(T_{i})}(b_j;R_{\rm o}),  &  \text{for~~} i\in\mathcal{M}_j,
\end{array}\right.
\label{eq:dist}
\end{align}
where 
\begin{align}
    &\mathcal{L}_j=\{1\leq \ell\leq M-1:~T_\ell> \sigma^*(j) \},
    \label{eq:Lj}
    \\
    &\mathcal{M}_j=\{1\leq m\leq M-1:~T_m\leq \sigma^*(j) \}.
    \label{eq:Mj}
\end{align}
When the output workload is $W(b_j;R_{\rm o})$, $\text{dist}_j(T_i)$ for $i\in\mathcal{M}_j$, means the extra time the workload can be greater than $T_i$, such that the desired bound in~\eqref{eq:OvershootRatioBoundforallT} is violated. On the other hand, $\text{dist}_j(T_i)$ for $i\in\mathcal{L}_j$, means the time the workload can be greater than $T_i$, such that the constraint in~\eqref{eq:Ij} is violated. Note that, as $T_M$ is chosen large enough such that $W(t_j;R_{\rm o})$ is always less than $T_M$, therefore, $\text{dist}_j(T_M)$ can not be defined as the workload never goes beyond $T_M$.

For the example in Figs.~\ref{fig:WorkloadFluctuationNumericalExample} and~\ref{fig:OvershootRatioViolationinTime}, $\text{dist}_j(T_i)$ for $j=2191$ and $i=1,2\ldots,M-1$, is shown in Fig.~\ref{fig:dist}. For $j=2191$, $b_j$ is slightly less than $t_2$ In Fig.~\ref{fig:WorkloadFluctuationNumericalExample}. In Fig.~\ref{fig:dist}, $\text{dist}_{2191}(T_i)$ is shown in red if $T_i\leq\sigma^*(j)$ and is shown in blue if $T_i>\sigma^*(j)$. In other words,
 blue bars show how long the workload can stay above the corresponding $T_i$ according to Algorithm~\ref{alg:Stochastic regulator}. Red bars, however, show the longest time the workload can stay above the corresponding $T_i$ such that the desired upper bound at that $T_i$ is violated. Note that, if the blue bars are greater than the red bars for some $T_i$'s, then we can have the cases of the violations of the desired bound at the corresponding $T_i$'s for the red bars. This is actually the case in Fig.~\ref{fig:dist}. In this case,
 \begin{equation*}
     \text{dist}_{2191}(T_{21})=0.301e4>\text{dist}_{2191}(T_{16})=0.201e4
 \end{equation*}
 Therefore, as can be seen the workload is allowed to stay above $T_{21}$ according to Algorithm~\ref{alg:Stochastic regulator} on the interval $t\in[t_2,t_4]$, with $t_2=2.531e4$ and $t_4=2.805e4$. The length of this interval is $t_4-t_2=0.274e4$, which is greater $\text{dist}(t_1,T_{16})$. Therefore, although according to the Algorithm~\ref{alg:Stochastic regulator}, the output workload is allowed to stay above $T_{21}$, and no violation of~\eqref{eq:Ij} happens, the desired bound for $T_{16}$, however, as can be in seen in Fig.~\ref{fig:OvershootRatioViolationinTime}, is violated. 
\begin{figure}
\centering
\includegraphics[width=0.95\columnwidth]{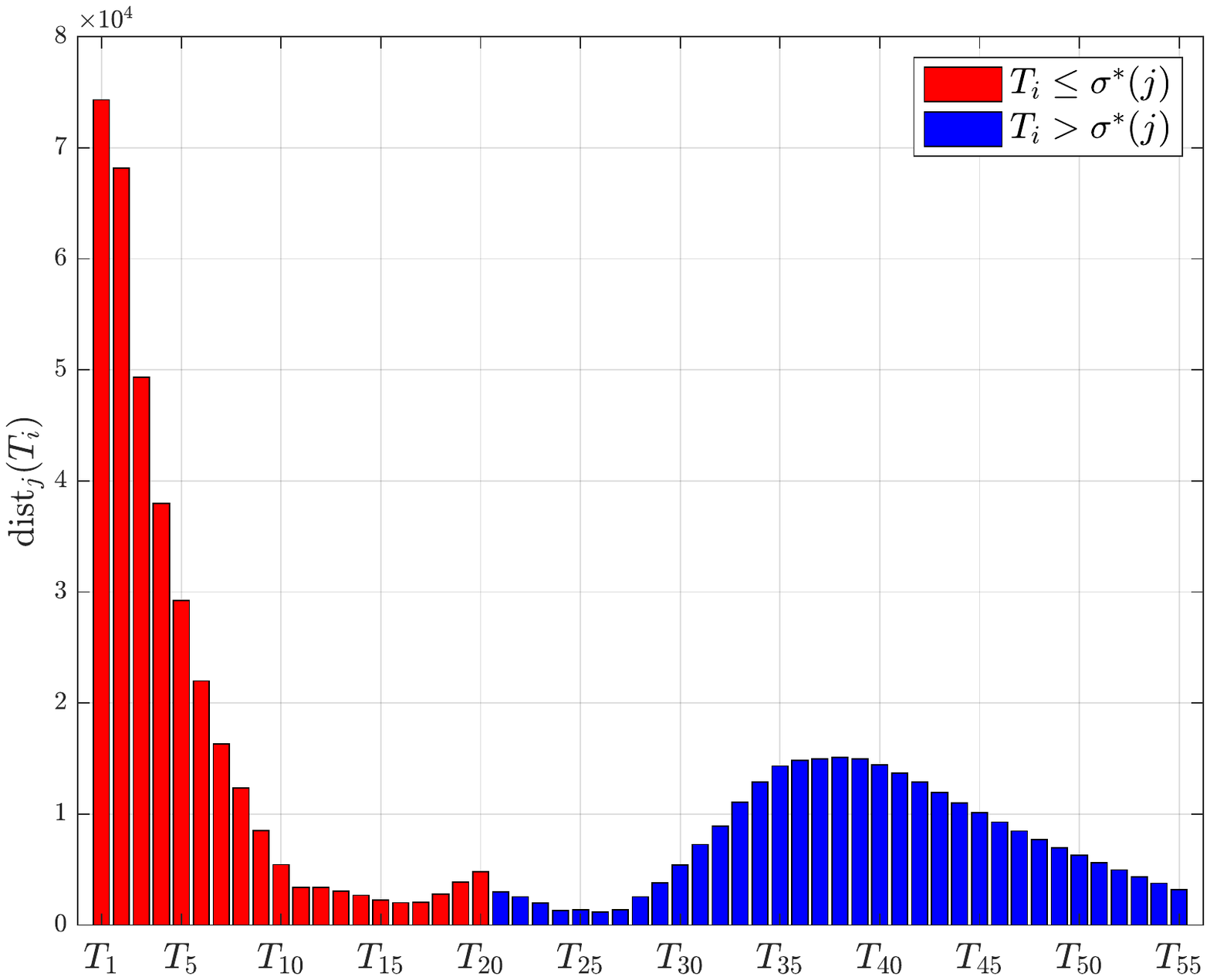}
\caption{$\text{dist}_j(T_i)$ for $i=1,2\ldots,M$ for the example in Section~\ref{sec:numerical} for Algorithm~\ref{alg:Stochastic regulator} with $M=56$, $j=2191$, $b_j=2.53e4$ and $\sigma^*(j)=71.54$.}
\label{fig:dist}
\end{figure}

On the other hand, for $j=8950$ and $b_j=11e4$, when $t$ is sufficiently large, $\text{dist}_{8950}(T_i)$ for $i=1,2\ldots,M$ is shown in Fig.~\ref{fig:distfinalt}. As we can see in Fig.~\ref{fig:distfinalt}, all the blue bars are less the red bars in this case.
\begin{figure}
\centering
\includegraphics[width=0.95\columnwidth]{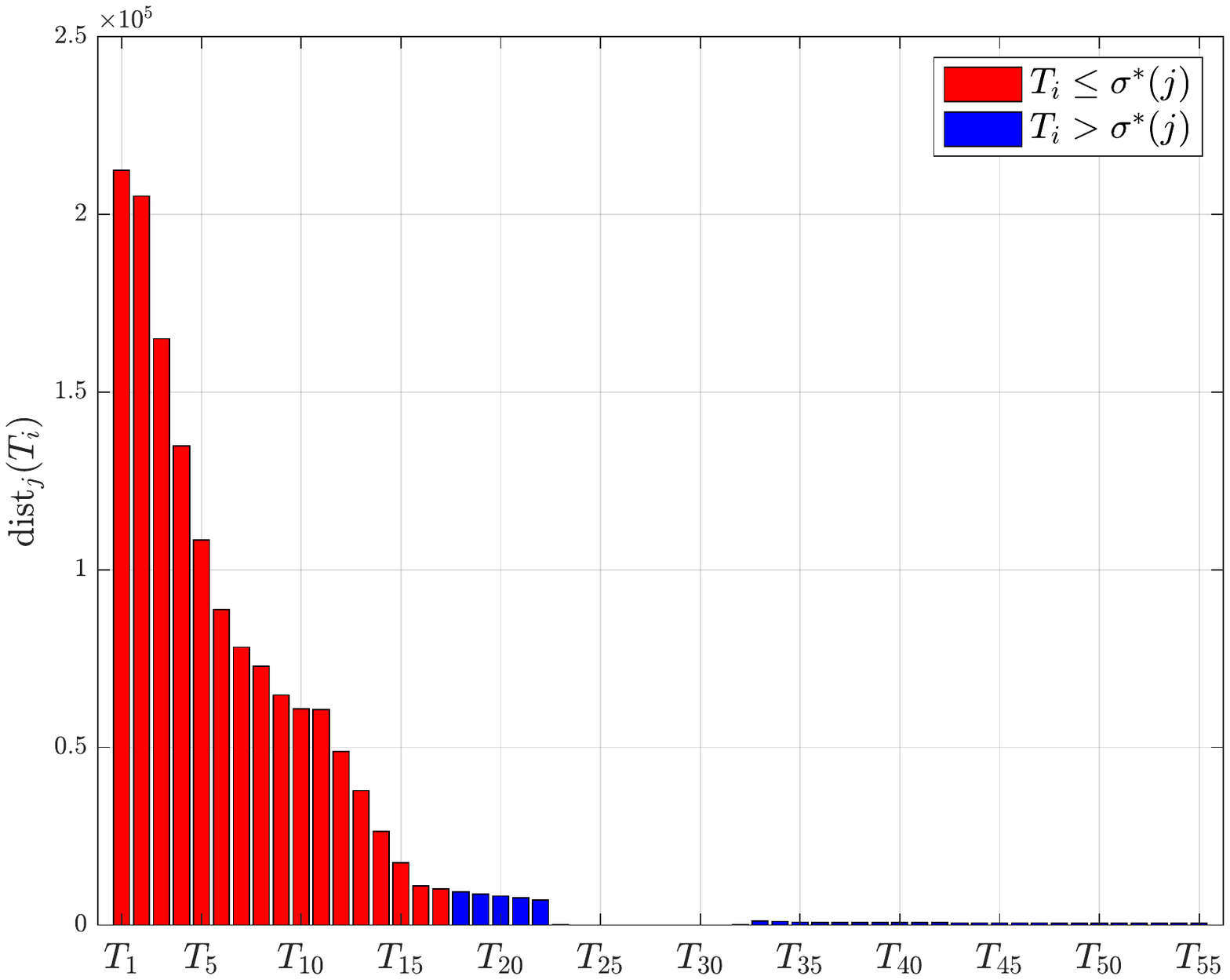}
\caption{$\text{dist}_j(T_i)$ for $i=1,2\ldots,M$ for the example in Section~\ref{sec:numerical} for Algorithm~\ref{alg:Stochastic regulator} with $M=56$, $j=8950$, $b_j=11e4$ and $\sigma^*(j)=60.82$.}
\label{fig:distfinalt}
\end{figure}

Based on the discussion for the specific example in Figs.~\ref{fig:dist} and~\ref{fig:distfinalt}, we can generalize these cases and present a sufficient condition on $\text{dist}_j(T_{i})$, for $i\in\{1,2,\ldots,M-1\}$, such that the desired bound in~\eqref{eq:OvershootRatioBoundforallT} is satisfied. If output workload is $W(b_j;R_{\rm o})$ and $b_{j}$ is sufficiently large enough, the sufficient condition to satisfy the desired constraint~\eqref{eq:OvershootRatioBoundforallT} is,
\begin{align}
&\text{dist}_j(T_{\ell}) \leq  \text{dist}_j(T_{m})~~\forall ~\ell\in\mathcal{L}_j,~\forall m\in\mathcal{M}_j,
\label{eq:sufficientconditionondist}
\end{align}
 if $\mathcal{M}_j\neq \emptyset$. Note that, if $\mathcal{M}_j= \emptyset$, then $\sigma^*(j)=\sigma_1$. In this case  $\bar{f}(\gamma)=1$ for $\gamma\in[T_0,T_1]$. Therefore, the desired bound of
 \begin{equation*}
 o_{T_0}(t)\leq \bar{f}(\gamma),     
 \end{equation*}
 will never be violated, independent of the duration of the interval that the workload stays above $T_0$. In the definition of $\text{dist}_j(T_i)$ and in the sufficient condition in~\eqref{eq:sufficientconditionondist}, we are just considering the complete departure times and we verify the sufficient condition at those moments. Ascertaining the sufficient condition at those complete departure times moment, however, can guarantee the desired condition in~\eqref{eq:OvershootRatioBoundforallT} is satisfied for all sufficiently large $t$. Because if after the departure of every packet we can assure the duration of the time that the workload stays above the $T_m$ for $\forall T_m\in\mathcal{M}_j$, is less than the time to violate the desired condition in~\eqref{eq:OvershootRatioBoundforallT}, then the desired condition in~\eqref{eq:OvershootRatioBoundforallT} is not only satisfied at the complete departure times, but also it is satisfied at all sufficiently large $t$. 

Note that, if $\mathcal{M}_j\neq \emptyset$, according to~\eqref{eq:Mj} and~\eqref{eq:Ij}, 
\begin{equation}
    \sigma^*(j)=\sigma_{\max\mathcal{M}_j+1}.
    \label{eq:Sigmastar_Mj}
\end{equation}
On the other hand, in the sufficient condition in~\eqref{eq:sufficientconditionondist}, if instead of all $m\in\mathcal{M}_j$, just $m=\max\mathcal{M}_j$ is considered the sufficient condition will be simplified as,
\begin{align}
&\text{dist}_j(T_{\ell}) \leq  \text{dist}_j(T_{m})~~\forall ~\ell\in\mathcal{L}_j,~ m=\max\mathcal{M}_j,
\label{eq:sufficientconditionondist1}
\end{align}
 if $\mathcal{M}_j\neq \emptyset$ and $b_j$ is sufficiently large enough.  This simplified sufficient condition can be explained as follows, when $\sigma^*(j)$ is set according to Algorithm~\ref{alg:Stochastic regulator} and according to~\eqref{eq:Sigmastar_Mj} and~\eqref{eq:Ij}, overshoot ratio at $b_j$ with respect to $T_{\max\mathcal{M}_j}$ is checked against $\Bar{f}(T_{\max\mathcal{M}_j+1})$ or 
 \begin{equation}
 o_{T_{\max\mathcal{M}_j}}(b_j)\leq \Bar{f}(T_{\max\mathcal{M}_j+1}).
 \label{eq:overshootRatioatmaxMj}
 \end{equation}
 If we make sure the duration of the time that the workload stays above $T_{\max\mathcal{M}_j}$ is less than the time to violate the desired upper bound or 
 \begin{equation}
 o_{T_{\max\mathcal{M}_j}}(j) \leq \Bar{f}(T_{\max\mathcal{M}_j+1})~~\text{for~}\forall t>b_j,   
 \end{equation}
 for all packets that $b_j$ is sufficiently large. Then the desired bound in~\eqref{eq:OvershootRatioBoundforallT} is never violated for sufficiently large $t$. It can easily be shown the sufficient condition in~\eqref{eq:sufficientconditionondist} and~\eqref{eq:sufficientconditionondist1} are equivalent.
 
 Using Proposition~\ref{prop:Distance}, we can simply $\text{dist}_j(T_i)$ in the two following cases:
\setcounter{case}{0}
\begin{case}
$i\in\mathcal{L}_j$
\end{case}
\begin{align}
\text{dist}_j(T_i)=
    \frac{b_j(\Bar{f}(T_{i+1})- o_{T_i}(b_j))}{1-\Bar{f}(T_{i+1})}+\frac{\hat{t}(T_i)-b_j}{1-\Bar{f}(T_{i+1})},  
\label{eq:distcase1}
\end{align} 
, if 
\begin{equation}
    b_j(\Bar{f}(T_{i+1})-o_{T_i}(b_j))>-(\hat{t}(T_i)-b_j)\Bar{f}(T_{i+1}).
    \label{eq:distcase1condition}
\end{equation} 
Otherwise, $\text{dist}_j(T_i)=0$. In~\eqref{eq:distcase1}, $\hat{t}(T_i)$ can be derived from~\eqref{eq:that}, with $\zeta=T_i$ and $t=b_j$.
\begin{case}
$i\in\mathcal{M}_j$
\end{case}
\begin{align}
\text{dist}_j(T_i)=
    \frac{b_j(\Bar{f}(T_{i})- o_{T_i}(t))}{1-\Bar{f}(T_{i})}+\frac{\hat{t}(T_i)-b_j}{1-\Bar{f}(T_{i})}, 
\label{eq:distcase2}
\end{align} 
, if 
\begin{equation}
    b_j(\Bar{f}(T_{i})-o_{T_i}(b_j))>-(\hat{t}(T_i)-b_j)\Bar{f}(T_{i}).
    \label{eq:distcase2condition}
\end{equation} 
Otherwise, $\text{dist}_j(T_i)=0$. In~\eqref{eq:distcase2}, $\hat{t}(T_i)$ can be derived from~\eqref{eq:that}, with $\zeta=T_i$ and $t=b_j$.

In order to show the desired condition in~\eqref{eq:OvershootRatioBoundforallT} is satisfied for sufficiently large values of $t$ if $\sigma^*(j)$ is chosen according to Algorithm~\ref{alg:Stochastic regulator}, we assume $b_j$ is sufficiently large and we show the sufficient condition in~\eqref{eq:sufficientconditionondist1} is satisfied. 

 According to~\eqref{eq:distcase1} and~\eqref{eq:distcase2}, if we have the following inequality, then the sufficient condition in~\eqref{eq:sufficientconditionondist1} is also satisfied,
\begin{align}
&\frac{b_j(\Bar{f}(T_{\ell+1})- o_{T_\ell}(b_j))}{1-\Bar{f}(T_{\ell+1})}+\frac{\hat{t}(T_\ell)-b_j}{1-\Bar{f}(T_{\ell+1})}
\nonumber \\
&\leq  \frac{b_j(\Bar{f}(T_{m})-o_{T_m}(b_j))}{1-\Bar{f}(T_{m})},
\nonumber\\
&\forall\ell\in\mathcal{L}_j,~ m=\max\mathcal{M}_j,~\mathcal{M}_j\neq \emptyset.
\label{eq:simplified-sufficient-inequality}
\end{align}
The second term in the LHS can be bounded according to~\eqref{eq:that} as follows,
\begin{align}
&\frac{\hat{t}(T_\ell)-b_j}{1-\Bar{f}(T_{\ell+1})}=\frac{[T_\ell-W_\rho(b_j;R_{\rm o})]^+}{(1-\Bar{f}(T_{\ell+1}))(C-\rho)}
\nonumber\\
&\leq \frac{T_{M-1}-W_\rho(b_j;R_{\rm o})}{(1-\Bar{f}(T_{M}))(C-\rho)}
\leq \frac{T_{M-1}-\sigma_1}{(1-\Bar{f}(T_{M}))(C-\rho)}:=c_0>0,
\label{eq:c0}
\end{align}
where we have used $W_\rho(t;R_{\rm o})> \sigma_1$ as $\mathcal{M}_j\neq \emptyset$. 
Therefore, we can simply the inequality in~\eqref{eq:simplified-sufficient-inequality} into a more conservative simplified inequality as follows,
\begin{align}
&\frac{b_j(\Bar{f}(T_{\ell+1})- o_{T_\ell}(b_j))}{1-\Bar{f}(T_{\ell+1})}+c_0
\leq  \frac{b_j(\Bar{f}(T_{m})-to_{T_m}(b_j))}{1-\Bar{f}(T_{m})},
\nonumber\\
&\forall\ell\in\mathcal{L}_j,~m=\max{\mathcal{M}_j},~\mathcal{M}_j\neq \emptyset.
\label{eq:simplified-sufficient-inequality2}
\end{align}
Note that, if the inequality~\eqref{eq:simplified-sufficient-inequality2} is satisfied, then inequality~\eqref{eq:simplified-sufficient-inequality} is also satisfied. By doing some manipulations we can reach the following inequality,
\begin{align}
\Bar{f}(T_{\ell+1})+\frac{c_2}{b_j c_1}-\frac{\bar{f}(T_m)-o_{T_m}(b_j)}{c_1}\leq o_{T_\ell}(b_j),
\label{eq:lovershootratiolowerbound}
\end{align}
where 
\begin{align}
    &c_1=\frac{1-\Bar{f}(T_m)}{1-\Bar{f}(T_{\ell+1})},
    \label{eq:c1}
    \\
    &c_2=c_0(1-\Bar{f}(T_m)).
    \label{eQ:c2}
\end{align}
In other words, for the sufficient condition in~\eqref{eq:sufficientconditionondist} to hold, the overshoot ratio, $o_{T_\ell}(b_j)$ for $\forall\ell\in\mathcal{L}_j$ should be higher than the lower bound specified in~\eqref{eq:lovershootratiolowerbound}. By considering the upper bound on $o_{T_m}(b_j)$ in~\eqref{eq:overshootRatioatmaxMj}, the lower bound in~\eqref{eq:lovershootratiolowerbound} can be simplified into a more conservative inequality as follows,
\begin{align}
\Bar{f}(T_{\ell+1})+\frac{c_2}{b_j c_1}-\frac{\epsilon_m}{c_1}\leq o_{T_\ell}(b_j),
\label{eq:lovershootratiolowerbound1}
\end{align}
where 
\begin{equation}
\epsilon_m:=\Bar{f}(T_m)- \Bar{f}(T_{m+1}).
\label{eq:epsilonm}
\end{equation}
Note that, if the lower bound in~\eqref{eq:lovershootratiolowerbound1} is satisfied, then the lower bound in~\eqref{eq:lovershootratiolowerbound} is also satisfied. 

For the numerical example in Figs.~\ref{fig:OvershootRatioViolationinTime}-\ref{fig:distfinalt}, the overshoot ratio, $o_{T_{16}}(t)$ and the lower bounds in~\eqref{eq:lovershootratiolowerbound} and~\eqref{eq:lovershootratiolowerbound1} are shown in Fig.~\ref{fig:OvershootRatioLowerBound}. As it was mentioned before, these lower bounds are sufficient conditions for the desired constraint in~\eqref{eq:OvershootRatioBoundforallT} to hold. The intervals on which the desired constraint is violated or,
\begin{equation}
 o_{T_{16}}(t)>\Bar{f}(T_{16}), 
 \label{eq:desired_constraint_violated_Interval}
\end{equation}
are shown in shaded blue areas.
Therefore, as it can be seen in Fig.~\ref{fig:OvershootRatioLowerBound}, there are some parts that these lower bounds are violated but the desired constraint in~\eqref{eq:OvershootRatioBoundforallT} is not violated. On the other hand, on the intervals that the desired constraint in~\eqref{eq:OvershootRatioBoundforallT} is violated, as it is shown the corresponding lower bounds are also violated. As it can be seen in Fig.~\ref{fig:OvershootRatioLowerBound}, we do not need a very large $b_j$ to satisfy the lower bound in~\eqref{eq:lovershootratiolowerbound}. However, for the more conservative lower bound in~\eqref{eq:lovershootratiolowerbound1}, a larger $b_j$ is necessary.
\begin{figure}
\centering
\includegraphics[width=0.95\columnwidth]{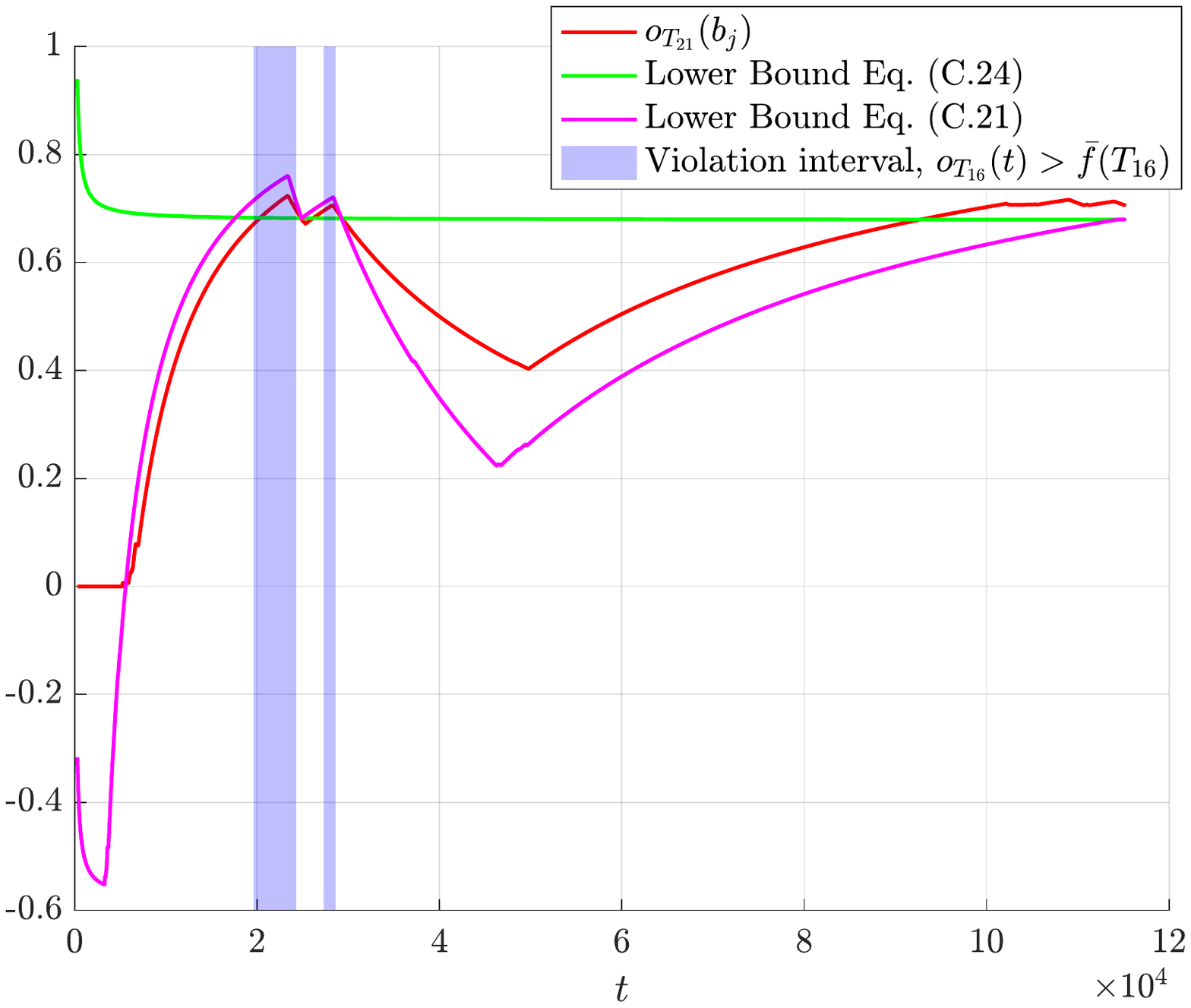}
\caption{$o_{T_{16}}(t)$ and the corresponding lower bounds in~\eqref{eq:lovershootratiolowerbound} and~\eqref{eq:lovershootratiolowerbound1}, for the example in Section~\ref{sec:numerical} for Algorithm~\ref{alg:Stochastic regulator} with $M=56$.}
\label{fig:OvershootRatioLowerBound}
\end{figure}

Now we show when $b_j$ is sufficiently large, the lower bound in~\eqref{eq:lovershootratiolowerbound1} holds. Let define the event $\xi_j(T_{\ell})$, for $\ell\in\{1,2,\ldots,M-1\}$ and the $k$th packet as,
\begin{equation}
 \xi_k(T_{\ell}):=\{W(\tilde{s}_k;R_1)\geq T_\ell \cap W(t_k;R_{\rm o})<T_\ell\}
 \label{eq:xi_j_ell}
\end{equation}
In other words, when the event $\xi_k(T_{\ell})$ occurs, the $k$th packet is delayed enough such that the output workload becomes less than $T_\ell$. The event $\xi_k(T_{\ell})$ occurs when 
\begin{equation}
    o_{T_\ell}(b_k(\sigma_{\ell+1}))>f(T_{\ell+1}).
\end{equation}
It can be easily shown between input traffic overshoot ratio, the internal traffic overshoot ration and the output traffic overshoot ratio we have the following relation
\begin{equation}
    o_{T_{\ell}}(t)\leq \frac{O_{T_{\ell}}(t;R_1)}{t}\leq \frac{O_{T_{\ell}}(t;R_{\rm i})}{t}
    \label{eq:OvershootRatioInequalities}
\end{equation}
Note that, due to ergodicity and stationarity of the input and internal traffic,
we have
\begin{align*}
 &\frac{O_{T_{\ell}}(t;R_{\rm i})}{t}\sim\Pr\{W_\rho(t;R_{\rm i})\geq T_\ell\},
 \\
&\frac{O_{T_{\ell}}(t;R_{\rm 1})}{t}\sim\Pr\{W_\rho(t;R_{\rm 1})\geq T_\ell\} 
\end{align*}
Therefore, it can be shown if the probability of the input traffic being greater than $T_\ell$ is greater than $\Bar{f}(T_\ell)$, then the probability of the event $\xi_j(T_\ell)$ is greater than zero. In other words,
\begin{equation}
    \text{if~~}\Pr\{W_\rho(t;R_{\rm i})\geq T_\ell\}\geq \Bar{f}(T_\ell) ~~\rightarrow~~\Pr\{\xi_j(T_\ell)\}>0.
\end{equation}
Let define $\Tilde{t}_{T_\ell}(t)$ for $\ell\in\{1,2,\ldots,M-1\}$ as the last time before $t$ that event $\xi_k(T_\ell)$ happened. In other words,
\begin{equation}
    \Tilde{t}_{T_\ell}(t)=\max \{b_k\leq t : \mathbf{1}_{\xi_k(T_\ell)}=1 \},
\end{equation}
for $\ell\in\{1,2,\ldots,M-1\}$, where 
\begin{equation}
 \mathbf{1}_{A}=\left\lbrace\begin{array}{ll}
   1   & \text{if event}~A~\text{occurs}, \\
   0   & \text{if event}~A~\text{does not occur}.
 \end{array}\right.   
\end{equation}
As the output workload is stationary and ergodic and the probability of the event is greater than zero, therefore, the interval between consecutive occurs of the events $\xi_k(T_\ell)$ is bounded. Next we show for the lower bound in~\eqref{eq:lovershootratiolowerbound1} to hold, $b_j-\Tilde{t}_{T_\ell}(b_j)$ should have an upper bound. In other words, we find the minimum value of $b_j-\Tilde{t}_{T_\ell}(b_j)$, such that the lower bound in~\eqref{eq:lovershootratiolowerbound1} is violated and then we verify that when $b_j$ is sufficiently large, $b_j-\Tilde{t}_{T_\ell}(b_j)$ will be always less than this minimum value.

Note that,
\begin{equation}
    W(t;R_{\rm o})\leq T_\ell,~~\forall t\in[\Tilde{t}_{T_\ell}(b_j),b_j].
    \label{eq:workloadoIntervalttildet}
\end{equation}
Therefore, according to~\eqref{eq:OvershootRatioInequalities}, in order to find the minimum value of $b_j-\Tilde{t}_{T_\ell}(b_j)$, such that the lower bound in~\eqref{eq:lovershootratiolowerbound1} is violated we consider
\begin{align}
&o_{T_\ell}(b_j)=\Bar{f}(T_{\ell+1})+\frac{c_2}{b_j c_1}-\frac{\epsilon_m}{c_1},
\\
&o_{T_\ell}(\Tilde{t}_{T_\ell}(b_j))=\bar{f}(T_{\ell+1}).
\end{align}
Therefore, according to~\eqref{eq:overshootratio} and~\eqref{eq:workloadoIntervalttildet}, 
\begin{equation}
    b_j-\Tilde{t}_{T_\ell}(b_j)=b_j\frac{\epsilon_m-c_2/b_j}{ c_1\Bar{f}(T_{\ell+1})}=O(b_j).
    \label{eq:MinimumConsecutiveDistancetoViolate}
\end{equation}
Therefore, the minimum time interval that needs to pass between $\Tilde{t}_{T_\ell}(b_j)$ and $b_j$, such that the lower bound in~\eqref{eq:lovershootratiolowerbound1} is violated is linearly proportional to $b_j$. But as $b_j$ increases the time interval between consecutive occurrences of the event $\xi_k(T_\ell)$, will be less than $O(b_j)$ with probability 1. Note that, $b_j-\Tilde{t}_{T_\ell}(b_j)$ is less the time interval between consecutive occurrences of the event $\xi_k(T_\ell)$. Therefore, $b_j-\Tilde{t}_{T_\ell}(b_j)$ will be always less then the upper bound derived in~\eqref{eq:MinimumConsecutiveDistancetoViolate}. Hence, the lower bound in~\eqref{eq:lovershootratiolowerbound1} is always met for sufficiently large values of $b_j$.

For the numerical example in Figs.~\ref{fig:OvershootRatioViolationinTime}-\ref{fig:distfinalt}, $b_j-\Tilde{t}_{T_{21}}(b_j)$ and the higher bound in~\eqref{eq:MinimumConsecutiveDistancetoViolate} are shown in Fig.~\ref{fig:DeclineEventPassedTimeUpperBound}. As it can be seen, when $b_j$ is sufficiently large, $b_j-\Tilde{t}_{T_{21}}(b_j)$ will be bounded by the higher bound in~\eqref{eq:MinimumConsecutiveDistancetoViolate}.
\begin{figure}
\centering
\includegraphics[width=0.95\columnwidth]{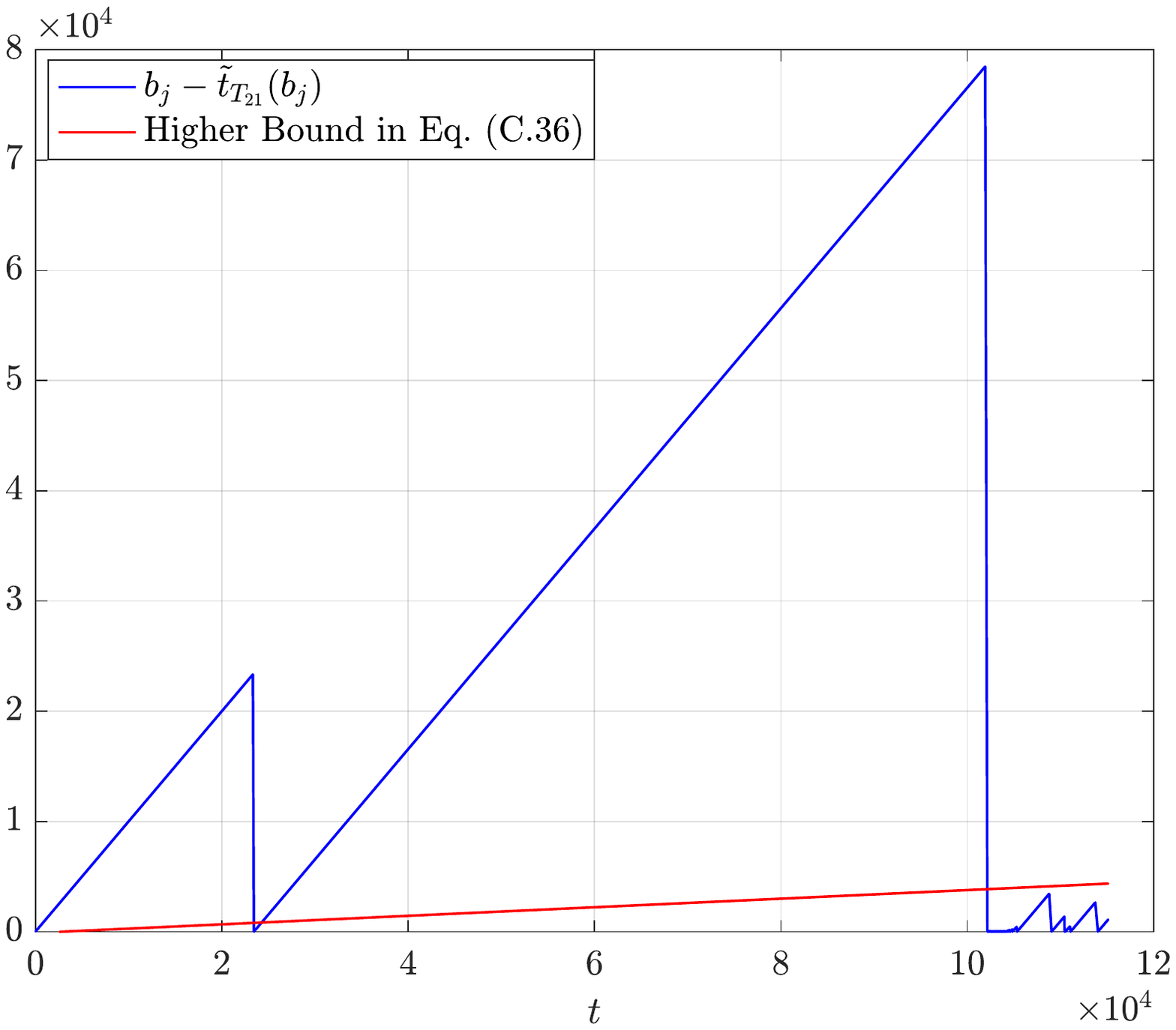}
\caption{$b_j-\Tilde{t}_{T_{21}}(b_j)$ and the higher bound in~\eqref{eq:MinimumConsecutiveDistancetoViolate} for the example in Section~\ref{sec:numerical} for Algorithm~\ref{alg:Stochastic regulator} with $M=56$.}
\label{fig:DeclineEventPassedTimeUpperBound}
\end{figure}
As it was mentioned in Appendix~\ref{appx_subsec:overshoot_soln1_I}, for this numerical example the lower bound on $b_j$, in order to satisfy the preliminary constraint in~\eqref{eq:canonical_constraintPreliminaryTheorem}, is $b_j\geq 2.35\times 10^3$. In this example, in order to satisfy the desired bound in~\eqref{eq:canonical_constraint}, however, the lower bound is increased to $b_j\geq 10^5$. 
\end{proof}

\begin{proof}[Proof of Theorem~\ref{thm:overshoot_soln1}]
In Lemma~\ref{appxlem:all_gamma}, we showed if $t$ is sufficiently large, then in a $(\sigma^*,\rho)$ traffic regulator defined by~\eqref{eq:cBj}--\eqref{eq:basic_implementation}, 
\begin{equation}
o_{T_i}(t)\leq \bar{f}({T_i})~~~~\text{for~~}\forall i\in\{1,\ldots,M\}.
\end{equation}
Therefore, using the same argument as in Appendix~\ref{appx:overshoot_soln2}, and using Corollary~\ref{appxcorollary:1} and the definition of $\bar{f}(\gamma)$ as~\eqref{eq:f_bar} for the case $M=M_{\max}$, or using Corollary~\ref{appxcorollary:2} and the definition of $\bar{f}(\gamma)$ as~\eqref{eq:f_barnew} for the case $M<M_{\max}$ we can show 
\begin{align}
o_{\gamma}(t) \leq  f(\gamma),
~~\forall~t\in[b_{j-1},b_{j}(\sigma^*(j))],~~ \forall \gamma \in [T_1, T].
\end{align}
\end{proof}
In practice the sufficiently large $t$ constraint for. Algorithm~\ref{alg:Stochastic regulator} is reasonable as we are approximating the overshoot probability with the overshoot ratio in~\eqref{eq:Prob_approx}, and this approximation is asymptotically valid. 
In Algorithm~\ref{alg:Stochastic regulator} we need to compute the index set $\cI_j$ in \eqref{eq:Ij}. The process for computing set $\cI_j$ is depicted in Fig.~\ref{fig:Thm1procedure} for two possible cases: 1) $\tilde{s}_j=s_j$ in Figs.~\ref{fig:Thm1procedureCase1a}--\ref{fig:Thm1procedureCase1c} and $\tilde{s}_j=b_{j-1}$ in Figs.~\ref{fig:Thm1procedureCase2a}--\ref{fig:Thm1procedureCase2c}. In Fig.~\ref{fig:Thm1procedure}, $k=\min \cB_j$. In the first step of computing $\cI_j$ according to~\eqref{eq:Ij}, 
$\ell$ is set to $k$ as in Figs.~\ref{fig:Thm1procedureCase1a} and~\ref{fig:Thm1procedureCase2a}. In these cases according to~\eqref{eq:tj_R1}, $t_j=\tilde{s}_j$. Then $o_{T_{\ell-1}}(b_j)$  is determined using Proposition~\ref{prop:overshoot}.  If the condition in~\eqref{eq:Ij} holds for $\ell=k$, then $k \in \cI_j$. Therefore, $\sigma^*$ will be set as $\sigma^*=\sigma_k$ and the algorithm will terminate at this step. Otherwise, in the next step we set $\ell=k-1$ as in Figs.~\ref{fig:Thm1procedureCase1b} and~\ref{fig:Thm1procedureCase2b}. In these cases, $t_j$ will be determined according to~\eqref{eq:tj_R1}. Again $O_{T_{\ell-1}}(b_j;R_{\rm o})$ will be determined using Proposition~\ref{prop:overshoot} and the condition in~\eqref{eq:Ij} is checked for $\ell=k-1$. If $k-1 \in \cI_j$, then using the same argument as before
we set $\sigma^*=\sigma_{k-1}$ and the algorithm will terminate at this step. Otherwise these steps are continued
as shown in Figs.~\ref{fig:Thm1procedureCase1c} and~\ref{fig:Thm1procedureCase2c} and the same process is repeated. If $\cI_j$ is determined to be empty, then we set $\sigma^*=\sigma_1$. 

\input{figs/Algorithm_Ilustration}

\section*{Acknowledgments}
The authors thank Prof.\ Yariv Ephraim for helpful comments and discussions on this work.


\end{document}